\documentclass[12pt, a4paper]{article}

\usepackage[utf8]{inputenc}
\usepackage[T1]{fontenc}
\usepackage[english]{babel}

\usepackage{amsmath}
\usepackage{amssymb}
\usepackage{amsthm}
\usepackage{mathtools}

\usepackage[left=2cm, right=2cm, top=2cm, bottom=2cm]{geometry}
\usepackage{fancyhdr}

\usepackage{enumitem}
\usepackage{graphicx}
\usepackage{wrapfig}
\usepackage{xcolor}
\usepackage{thmtools}
\usepackage{array}
\usepackage{booktabs}
\usepackage{float}
\usepackage{chngcntr}
\usepackage{subcaption}
\usepackage{multirow}
\usepackage{adjustbox}
\usepackage{caption}

\renewcommand{\arraystretch}{1.0}
\counterwithin{figure}{section}
\counterwithin{table}{section}
\numberwithin{equation}{section}

\usepackage{natbib}
\setcitestyle{authoryear,open={(},close={)}}

\usepackage[hidelinks]{hyperref}
\hypersetup{
    colorlinks=true,
    linkcolor=black,
    citecolor=black,
    filecolor=black,
    urlcolor=blue!60!black,
    pdftitle={Regimes in the Order Flow: Duration-Aware and Multivariate Bayesian Online Changepoint Detection for High-Frequency Markets (research internship report)},
    pdfauthor={Ramzi Jebali}
}

\definecolor{defblue}{RGB}{30, 100, 180}
\definecolor{notegray}{RGB}{90, 90, 90}
\definecolor{bglight}{RGB}{248, 249, 250}
\definecolor{propgreen}{RGB}{250, 50, 50}

\declaretheoremstyle[
    headfont=\bfseries\color{defblue},
    notefont=\normalfont\itshape,
    bodyfont=\normalfont,
    headpunct={.},
    postheadspace={0.5em},
    spaceabove=5pt,
    spacebelow=5pt,
    shaded={bgcolor=bglight, margin=0.5em}
]{defstyle}

\declaretheoremstyle[
    headfont=\bfseries\color{notegray},
    notefont=\normalfont\itshape,
    bodyfont=\normalfont,
    headpunct={.},
    postheadspace={0.5em},
    spaceabove=5pt,
    spacebelow=5pt,
    shaded={bgcolor=bglight, margin=0.5em}
]{notestyle}

\declaretheoremstyle[
    headfont=\bfseries\color{propgreen},
    notefont=\normalfont\itshape,
    bodyfont=\normalfont,
    headpunct={.},
    postheadspace={0.5em},
    spaceabove=5pt,
    spacebelow=5pt,
    shaded={bgcolor=bglight, margin=0.5em}
]{propstyle}

\declaretheorem[style=defstyle,  name=Definition,  numberwithin=section]{definition}
\declaretheorem[style=notestyle, name=Notation,    numberwithin=section]{notation}
\declaretheorem[style=propstyle, name=Proposition, numberwithin=section]{proposition}
\declaretheorem[style=defstyle,  name=Definition,  numbered=no]{definition*}
\declaretheorem[style=propstyle, name=Proposition, numbered=no]{proposition*}

\newcommand{\R}{\mathbb{R}}

\newcommand{\E}{\mathbb{E}}
\newcommand{\V}{\mathbb{V}}

\newcommand{\diag}{\text{diag}}
\newcommand{\tr}{\text{tr}}

\newcommand{\argmax}{\text{argmax}}

\newcommand{\indep}{\perp \!\!\! \perp}

\begin{document}

\pagenumbering{arabic}
\setcounter{page}{1}
\pagestyle{empty}

\begin{titlepage}
\thispagestyle{empty}

\noindent
\begin{minipage}[t]{0.48\textwidth}
    \raggedright
    \includegraphics[height=2.4cm]{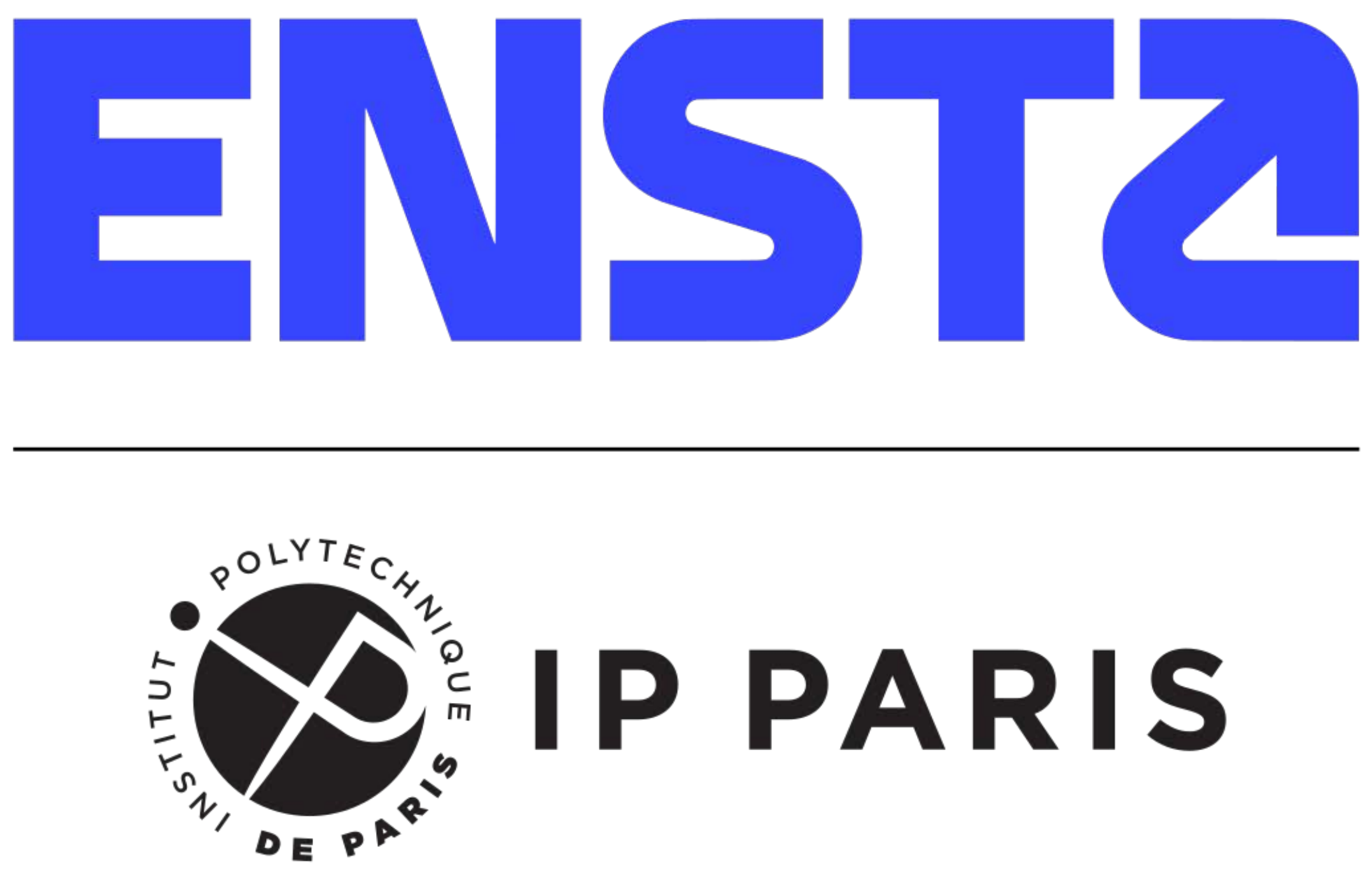}
\end{minipage}
\hfill
\begin{minipage}[t]{0.48\textwidth}
    \raggedleft
    \includegraphics[height=2.5cm]{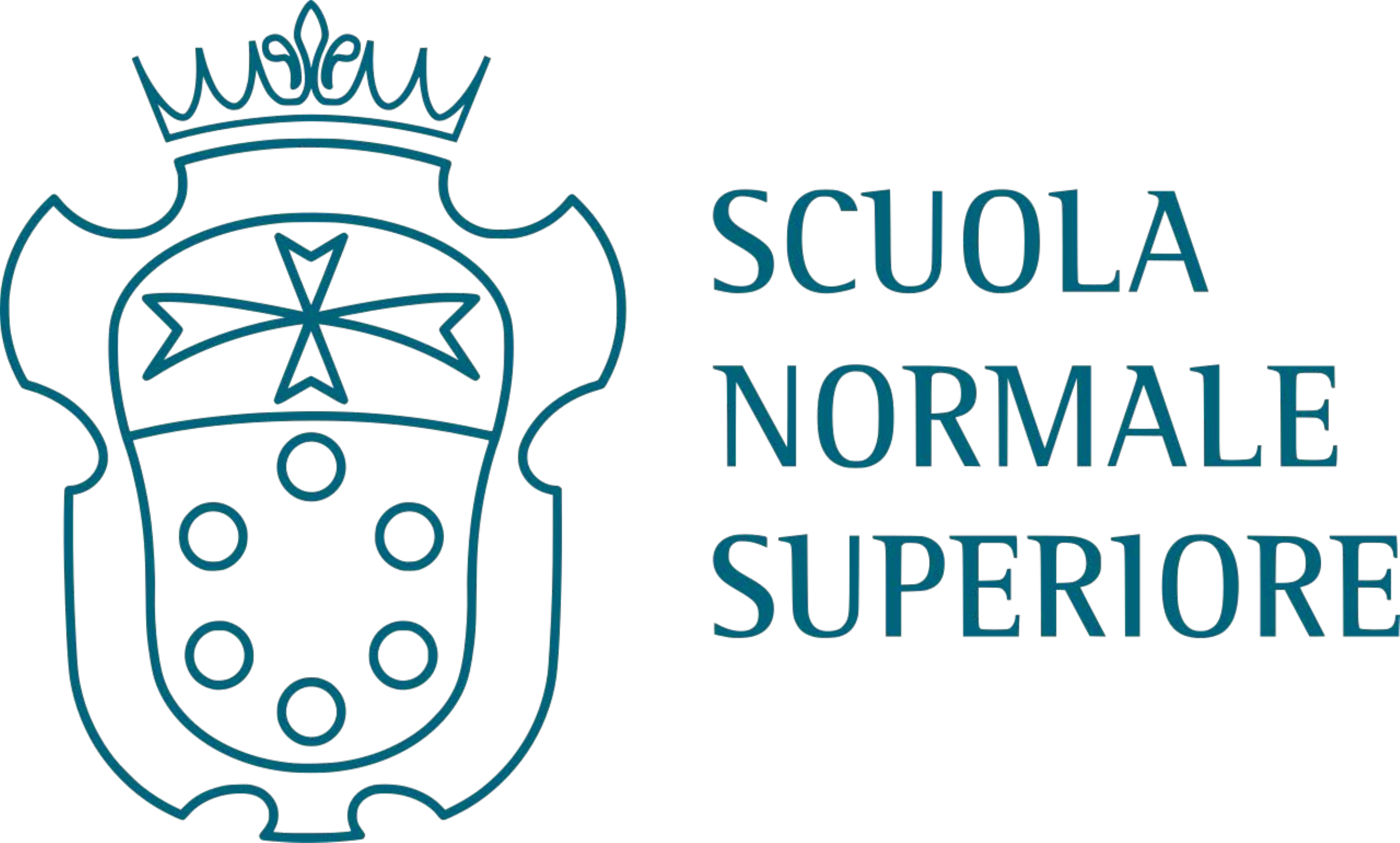}
\end{minipage}

\vspace{1cm}

\begin{center}
    {\large\scshape Research Project (PRe) --- Internship Report}\\[0.2cm]
    {\large ENSTA --- Institut Polytechnique de Paris}\\[0.2cm]

    \vspace{1cm}

    \rule{\textwidth}{1.2pt}\\[0.7cm]
    {\fontsize{22}{27}\selectfont\bfseries Regimes in the Order Flow}\\[0.7cm]
    {\fontsize{15}{19}\selectfont
     Duration-Aware and Multivariate Bayesian Online\\[2pt]
     Changepoint Detection for High-Frequency Markets}\\[0.4cm]
    \rule{\textwidth}{1.2pt}

    \vspace{1.1cm}

    {\large\textbf{Field of study:} Applied Mathematics --- Quantitative Finance}\\[0.15cm]

    \vspace{1.6cm}

    {\large\textbf{Ramzi Jebali}}\\[0.2cm]
    {\normalsize Master of Engineering, ENSTA, Class of 2027}

    \vspace{0.9cm}

    {\large\textbf{Host institution}}\\[0.25cm]
    Scuola Normale Superiore --- \textit{Quantitative Finance} research group\\
    Piazza dei Cavalieri, 7 --- 56126 Pisa, Italy

    \vspace{0.8cm}

    \begin{tabular}{@{}ll@{}}
      \textbf{Head of the research group} & Prof. Fabrizio \textsc{Lillo} \\[3pt]
      \textbf{Supervisor} & Dr Ioanna-Yvonni \textsc{Tsaknaki} \\[3pt]
      \textbf{Academic tutor} & Prof. Francesco \textsc{Russo} \\
    \end{tabular}

    \vfill

    {\large Internship carried out from May 18th to July 31st, 2026}
\end{center}
\end{titlepage}

\clearpage
\thispagestyle{empty}

\vspace*{-1.5cm}
\begin{center}
    {\fontsize{19}{21}\selectfont\bfseries Acknowledgements}
\end{center}
\vspace*{-0.1cm}

I would like to express my gratitude to all those who made this Research Project possible and who accompanied it throughout.

First, I am sincerely grateful to \textbf{Professor Francesco Russo}, academic tutor of this internship, without whom this experience would simply not have taken place. His availability, the clarity of his advice and his unfailing kindness deserve far more than the few lines I can devote to him here.

My thanks then go to \textbf{Professor Fabrizio Lillo}, professor at the Scuola Normale Superiore and head of the \textit{Quantitative Finance} research group, for welcoming me into his team, for the confidence he placed in me by entrusting me with an open-ended subject, and for the remarks that, on several occasions, redirected my work towards the right questions.

Finally, and most particularly, I thank \textbf{Dr Ioanna-Yvonni Tsaknaki}, researcher at the Scuola Normale Superiore and supervisor of this internship, for her guidance, her patience with my questions, the rigour she demanded at every stage, her incredible kindness. Her own work on online changepoint detection is the starting point of this report.

I also thank the PhD students and post-doctoral researchers of the group, whose seminars and informal discussions considerably broadened my view of the field, as well as the academic staff of ENSTA for organising the PRe.

The high-frequency data used in this report were obtained from LOBSTER (\url{https://php.lobsterdata.com/index.php}), a specialised limit-order-book data platform to which the Scuola Normale Superiore subscribes; access is restricted to subscribing institutions and the data are not publicly available. I thank the Scuola Normale Superiore for making them available for this project.

\begin{center}
    {\fontsize{19}{21}\selectfont\bfseries Abstract}
\end{center}
\vspace*{-0.1cm}

Financial markets alternate between periods of relative stability and instability, with structural breaks marking the transitions between these regimes. Identifying such breaks in real time is a central requirement for any trading or risk system operating at high frequency. This report studies Bayesian Online Changepoint Detection (BOCPD) and two extensions proposed in the literature, and applies them to the signed order flow of NASDAQ-listed equities. We first re-derive and implement the baseline algorithm of \cite{bocpd_2007}, whose conjugate exponential-family structure yields a closed-form predictive likelihood, and we assess it along three complementary criteria: a segmentation criterion (the covering metric), a point-forecast criterion (the mean squared error) and a fully probabilistic criterion (the cumulative predictive log-likelihood). We then replace the constant hazard rate of BOCPD --- and the geometric regime durations it implies --- by a run-length-dependent hazard derived from an explicit duration law within a hidden semi-Markov model, following the formulation of \cite{agudelo2020bayesian}. Calibrating Pareto and log-normal duration laws on NASDAQ order flow, we find that the duration-aware filter outperforms the constant-hazard baseline, the log-normal specification dominating both the geometric and the Pareto alternatives consistently across assets, months and calibration criteria; this is in agreement with the absence of a characteristic timescale in order flow. We finally implement the multivariate BOCPDMS framework of \cite{knoblauch2018bocpdms}, which combines a model universe with Bayesian vector autoregressions and online hyperparameter learning, and verify that its conjugacy extends to any known symmetric positive-definite noise matrix. Here the outcome is negative: on bivariate order flow the multivariate filter is outperformed by two independent univariate filters, whatever the noise specification. We trace this to the interaction between short regimes and heavy-tailed innovations, which turns the adaptivity of the model into a liability rather than an asset.

\vspace{0.3cm}
\noindent\textbf{Keywords:} Bayesian online changepoint detection; hidden semi-Markov model; hazard rate; market microstructure; order flow; Bayesian VAR; model selection; high-frequency data.

\clearpage
\pagestyle{fancy}
\markboth{Contents}{Contents}
\tableofcontents

\clearpage

\newpage
\section{Introduction}\label{sec:intro}

The modelling of order flow is central to the understanding of price formation in
electronic markets and to the design of execution algorithms. One of its most
robust empirical regularities is the persistence of the signed trade order flow:
buy trades tend to be followed by buy trades and sell trades by sell trades, over
horizons as long as several weeks. Documented independently by
\cite{lillo2004long} and \cite{bouchaud2004fluctuations}, this persistence takes
the form of an autocorrelation function decaying as a non-summable power law,
which is the formal signature of a long-memory process. It has been attributed
primarily to order splitting, the practice by which large investors execute a
single \emph{metaorder} through a long sequence of smaller child orders;
\cite{lillo2005theory} establish quantitatively the relationship between the
autocorrelation of order flow and the distribution of metaorder sizes.

From an econometric standpoint, a long-memory series can be generated,
at least approximately, by a regime-shift process in which each regime is
short-memory and the regime lengths are heterogeneous. This observation motivates
the description of order flow as a sequence of regimes, and suggests a natural
correspondence between a regime and the execution of a metaorder. It is the
approach followed by \cite{tsaknaki2025online}, who use Bayesian online
change-point detection to identify order-flow regimes on NASDAQ data and show that
regime information improves the prediction of both order flow and market impact.
The present report starts from that work.

Among the algorithms available for this task, we concentrate on those performing
detection \emph{online}, that is, using only the observations available up to the
current instant, and within a Bayesian framework, which is well suited to
quantifying the uncertainty attached to a change-point through a posterior
distribution rather than committing to a hard decision at each step. The reference
algorithm is that of \cite{bocpd_2007}, which recursively computes the posterior
distribution of the time elapsed since the last change-point --- the \emph{run
length} --- and remains tractable because conjugacy makes its predictive
likelihood available in closed form. Its constant hazard rate, however, carries an
implication that sits uneasily with the stylised fact recalled above: a constant
hazard forces regime durations to be geometric, a law possessing a characteristic
timescale, whereas order flow has none. The heterogeneity of regime lengths, which
is precisely what makes the regime-shift description of long memory work, is
therefore excluded by construction.

This report examines that tension. We re-derive and implement the baseline
algorithm in full, and equip it with
an evaluation framework measuring segmentation quality, point-forecast accuracy
and probabilistic calibration. We then replace the constant hazard rate by a
run-length-dependent one, derived from an explicit duration law within a hidden
semi-Markov model following the formulation of \cite{agudelo2020bayesian}, and
calibrate Pareto and log-normal duration laws on NASDAQ order flow. We finally
move from the univariate to the multivariate setting with the BOCPDMS framework of
\cite{knoblauch2018bocpdms}, which combines a model universe with Bayesian vector
autoregressions and online hyperparameter learning. The algorithms studied here
are due to their respective authors; the implementations used throughout, and the
experiments run with them, are ours.

The report is organised as follows. Section~\ref{sec:litreview} recalls the
Bayesian ingredients of BOCPD and derives the algorithm;
Section~\ref{sec:implementation} implements it on synthetic data and
Section~\ref{sec:metrics} introduces the evaluation metrics.
Section~\ref{sec:hsmm} moves from a constant to a run-length-dependent hazard
rate, Section~\ref{sec:microstructure} presents the microstructural facts that
motivate this choice, and Section~\ref{sec:realdata} reports the calibration and
out-of-sample results on NASDAQ data. Sections~\ref{sec:bocpdms}
to~\ref{sec:hyperlearning} present the BOCPDMS framework, its Bayesian vector
autoregressions and the online learning of its hyperparameters;
Sections~\ref{sec:synthexp} and~\ref{sec:realmulti} report the corresponding
experiments on synthetic and on real data. Section~\ref{sec:conclusion} concludes.
Proofs and additional material are collected in the appendices.

\newpage
\section{Background: Bayesian online changepoint detection}\label{sec:litreview}

Historically, Frequentist statistics provided fast, online algorithms to try to detect these shifts, but they struggled to quantify uncertainty rigorously. Conversely, Bayesian models handled uncertainty exceptionally well, but their heavy computational cost restricted them to offline analysis. The foundational paper by \cite{bocpd_2007} is ground-breaking because it brought the power of Bayesian probabilities into the real-time (online) world: it predicts the next data point using only the data observed up to the present moment, while providing a full probability distribution over the uncertainty of a changepoint.

\subsection{Foundations of Bayesian Inference and the Exponential Family}

In classical statistics (the Frequentist approach), parameters are estimated using Maximum Likelihood (ML), which yields a point estimate (a single numerical value). However, in the Bayesian framework, parameters are treated as random variables themselves. We maintain a probability distribution over the parameters, which reflects uncertainty. As more data points arrive, this distribution narrows, eventually peaking around the ML estimate.

To perform this Bayesian updating efficiently in real-time as new data arrives, we cannot afford to keep the entire historical dataset in the computer's active memory. We need a mathematical shortcut. This is why we strictly restrict our models to a specific class of probability distributions: the \textbf{Exponential Family}.

\vspace{-10pt}
\begin{definition}[Exponential family]\label{def:expfam}
Let $\pmb{x}$ be a $D$-dimensional vector (or a scalar if $D=1$). The exponential family of distributions over $\pmb{x}$, given natural parameters $\pmb{\eta}$, is defined as:
\begin{equation}
    p(\pmb{x}|\pmb{\eta}) = h(\pmb{x})g(\pmb{\eta})\exp\{\pmb{\eta}^T\pmb{u}(\pmb{x})\}
\end{equation}
where $h(\pmb{x})$ is the base measure, $\pmb{u}(\pmb{x})$ is the vector of sufficient statistics, and $g(\pmb{\eta})$ acts as the normalizing constant ensuring the probability integrates to $1$.
\end{definition}
\vspace{-20pt}
\begin{proposition}[Likelihood of i.i.d. data]\label{prop:iid_likelihood}
For a dataset $X = \{\pmb{x}_1, \dots, \pmb{x}_N\}$ where each observation is independent and identically distributed (i.i.d.), the joint likelihood under the exponential family is given by the product of individual probabilities:
\begin{equation*}
    p(X|\pmb{\eta}) = \prod_{n=1}^N p(\pmb{x}_n|\pmb{\eta}) = \Big(\prod_{n=1}^N h(\pmb{x}_n)\Big)g(\pmb{\eta})^N\exp\Big\{\pmb{\eta}^T\sum_{n=1}^N\pmb{u}(\pmb{x}_n)\Big\}
\end{equation*}
\end{proposition}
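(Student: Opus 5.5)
The plan is a direct computation with two steps: first factorise the joint density using independence, then substitute Definition~\ref{def:expfam} into each factor and regroup. No earlier result beyond the definition is needed.

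\textbf{Step 1 (factorisation).} Conditionally on the natural parameter $\pmb{\eta}$, the observations $\pmb{x}_1,\dots,\pmb{x}_N$ are assumed independent with the common density $p(\cdot|\pmb{\eta})$. The joint density of $X$ given $\pmb{\eta}$ is therefore the product of the marginals, which gives the first equality $p(X|\pmb{\eta})=\prod_{n=1}^N p(\pmb{x}_n|\pmb{\eta})$. It is worth stating that independence here is meant conditionally on $\pmb{\eta}$. In the Bayesian reading, $\pmb{\eta}$ is random and the $\pmb{x}_n$ are only exchangeable marginally, so this is the one conceptual point to make explicit.

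\textbf{Step 2 (regrouping).} Replace each factor by $h(\pmb{x}_n)g(\pmb{\eta})\exp\{\pmb{\eta}^T\pmb{u}(\pmb{x}_n)\}$. Because the product is finite and all terms are scalars, it can be rearranged into three groups.
\begin{itemize}
\item The base-measure terms collect into $\prod_{n} h(\pmb{x}_n)$.
\item The normalising constant does not depend on $n$, so it contributes $g(\pmb{\eta})^N$.
\item The exponentials combine through $\prod_n e^{a_n}=e^{\sum_n a_n}$. Linearity of the inner product then gives $\sum_n \pmb{\eta}^T\pmb{u}(\pmb{x}_n)=\pmb{\eta}^T\sum_n\pmb{u}(\pmb{x}_n)$.
\end{itemize}
Putting the three groups together yields the stated formula.

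There is no genuine obstacle. The only step requiring care is Step 1, namely making precise that ``i.i.d.'' means conditionally i.i.d.\ given $\pmb{\eta}$. It may also be worth adding a closing remark that $\sum_n \pmb{u}(\pmb{x}_n)$ together with $N$ is sufficient for $\pmb{\eta}$. This is what makes the online updates of the paper feasible without storing the data.
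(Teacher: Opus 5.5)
Your proof is correct and follows the paper's approach. The paper gives no separate proof: the displayed identity is itself the factorisation by independence followed by regrouping the $h$, $g$ and exponential factors, exactly as you do. Your remark that independence is meant conditionally on $\pmb{\eta}$ is a useful clarification that the paper leaves implicit.
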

\vspace{-5pt}
\noindent The reason we use the exponential family: it guarantees the existence of a \textbf{conjugate prior}.
\vspace{-5pt}
\begin{definition}[Conjugacy]\label{def:conjugacy}
In Bayesian probability theory, a prior distribution is said to be conjugate to a likelihood function if the resulting posterior distribution belongs to the same parametric family as the prior.
\end{definition}
\vspace{-20pt}
\begin{proposition}[Conjugate prior for the exponential family]\label{prop:conj_prior}
For any member of the exponential family, there exists a conjugate prior distribution over the parameters $\pmb{\eta}$ of the form:
\begin{equation*}
    p(\pmb{\eta} \mid \pmb{\chi}, \nu) = f(\pmb{\chi}, \nu) g(\pmb{\eta})^{\nu} \exp\left\{ \nu \pmb{\eta}^T \pmb{\chi} \right\}
\end{equation*}
where $\pmb{\chi}$ and $\nu$ act as the hyperparameters defining our prior belief before observing the data.
\end{proposition}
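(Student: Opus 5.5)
We prove the statement by direct verification: we take the proposed form as a prior, multiply it by the i.i.d.\ likelihood of Proposition~\ref{prop:iid_likelihood}, and check that the posterior has the same functional form in $\pmb{\eta}$ with updated hyperparameters. This is exactly what Definition~\ref{def:conjugacy} requires.

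\textbf{Step 1 (admissibility of the prior).} We define $f(\pmb{\chi},\nu)$ as the reciprocal of
\[
Z(\pmb{\chi},\nu)=\int g(\pmb{\eta})^{\nu}\exp\{\nu\pmb{\eta}^T\pmb{\chi}\}\,d\pmb{\eta},
\]
restricted to hyperparameters for which $0<Z<\infty$. On this admissible set the expression is a genuine density over $\pmb{\eta}$. This is the one point that needs care, and it is really a matter of stating hypotheses rather than proving them. For a regular (open natural parameter space) family this set contains, for instance, all $\nu>0$ with $\pmb{\chi}$ in the interior of the convex hull of the range of $\pmb{u}$; this is the Diaconis--Ylvisaker condition. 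We will state this as the standing assumption, since the statement itself does not specify the hyperparameter domain.

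\textbf{Step 2 (Bayes' rule).} Using Proposition~\ref{prop:iid_likelihood},
\[
p(\pmb{\eta}\mid X,\pmb{\chi},\nu)\propto p(X\mid\pmb{\eta})\,p(\pmb{\eta}\mid\pmb{\chi},\nu)\propto g(\pmb{\eta})^{\nu+N}\exp\Big\{\pmb{\eta}^T\Big(\nu\pmb{\chi}+\sum_{n=1}^N\pmb{u}(\pmb{x}_n)\Big)\Big\}.
\]
The factors $\prod_n h(\pmb{x}_n)$ and $f(\pmb{\chi},\nu)$ do not depend on $\pmb{\eta}$, so they are absorbed into the proportionality constant.

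\textbf{Step 3 (matching the form).} We set $\nu'=\nu+N$ and $\pmb{\chi}'=\big(\nu\pmb{\chi}+\sum_n\pmb{u}(\pmb{x}_n)\big)/\nu'$. The posterior kernel is then $g(\pmb{\eta})^{\nu'}\exp\{\nu'\pmb{\eta}^T\pmb{\chi}'\}$.

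It remains to check that the updated hyperparameters are admissible. The posterior is integrable because it is proportional to the likelihood times an integrable prior, and the likelihood is bounded in $\pmb{\eta}$ for fixed data. Hence the normaliser is exactly $f(\pmb{\chi}',\nu')$, and the posterior equals $p(\pmb{\eta}\mid\pmb{\chi}',\nu')$, which belongs to the same family.

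We also note the interpretation that motivates the online algorithm: $\nu$ acts as a pseudo-count of prior observations, and $\pmb{\chi}$ as their mean sufficient statistic. The updates are additive, so they can be carried out one observation at a time.
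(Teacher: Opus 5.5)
Your main line (Bayes' rule applied to the i.i.d.\ likelihood of Proposition~\ref{prop:iid_likelihood}, absorbing the $\pmb{\eta}$-free factors, and reading off $\nu' = \nu + N$) is exactly the paper's argument. The paper carries it out in Proposition~\ref{prop:posterior_update} rather than in a separate proof, and never discusses normalisability. Your update $\pmb{\chi}' = (\nu\pmb{\chi} + \sum_n \pmb{u}(\pmb{x}_n))/(\nu+N)$ is the correct one for the prior as stated. The paper's exponent $\pmb{\eta}^T(\pmb{\chi} + \sum_n \pmb{u}(\pmb{x}_n))$ and its rule $\pmb{\chi}_{\text{posterior}} = \pmb{\chi}_{\text{prior}} + \sum_n \pmb{u}(\pmb{x}_n)$ silently drop the factor $\nu$; they belong to the parametrisation $\exp\{\pmb{\eta}^T\pmb{\chi}\}$.

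The step that fails is your justification that the updated hyperparameters are admissible. You claim the likelihood is bounded in $\pmb{\eta}$ for fixed data, but this is false in general. Take $\mathcal{N}(\mu,\sigma^2)$ with both parameters unknown, with natural parameters $(\mu/\sigma^2, -1/(2\sigma^2))$, and a single observation $x_1$. Along $\mu = x_1$ the likelihood equals $(2\pi\sigma^2)^{-1/2}$, which blows up as $\sigma^2 \to 0$. This is exactly the one-observation regime of the Normal-Inverse-Gamma BVAR used later in the paper.

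The conclusion survives, but it should come from the Diaconis--Ylvisaker criterion you already invoked. First, $\nu' = \nu + N > 0$. Second, write $\pmb{\chi}' = \frac{\nu}{\nu+N}\pmb{\chi} + \frac{N}{\nu+N}\bar{\pmb{u}}$ with $\bar{\pmb{u}} = \frac{1}{N}\sum_n \pmb{u}(\pmb{x}_n)$. This is a convex combination that puts positive weight on the interior point $\pmb{\chi}$ of the relevant convex hull, and the remaining weight on a point of its closure (for data in the support of the base measure). Such a point lies in the interior, so $0 < Z(\pmb{\chi}',\nu') < \infty$. Tonelli applied to $\int p(X)\,dX = 1$ would only give a finite normaliser for almost every $X$. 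With this repair your proof is complete, and more careful than the paper's.
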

\vspace{-5pt}
\noindent To demonstrate the analytical power of this conjugacy, we must first recall Bayes' theorem, which is the fundamental engine of our sequential learning.
\vspace{-5pt}
\begin{proposition}[Bayes' Theorem]\label{prop:bayes}
Given the observed data $X$, Bayes' theorem allows us to update our prior belief about the parameters $\pmb{\eta}$ to obtain the posterior distribution:
$$ p(\pmb{\eta} \mid X) = \frac{p(X \mid \pmb{\eta}) p(\pmb{\eta})}{p(X)} \propto p(X \mid \pmb{\eta}) p(\pmb{\eta}) \ \ \Longleftrightarrow \ \ \textbf{Posterior} \propto \textbf{Likelihood} \times \textbf{Prior}$$
The marginal likelihood (or evidence) $p(X)$ acts as a normalizing constant.
\end{proposition}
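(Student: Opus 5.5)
The statement is Bayes' theorem, so the plan is to derive it directly from the definition of conditional probability (or conditional density) and then justify the proportionality form. The argument is a two-line identity, and the only care needed is in stating the assumptions under which the densities and the division make sense.

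\textbf{Step 1: the joint density factorises two ways.} I will write the joint density of data and parameters, $p(X,\pmb{\eta})$. By the definition of conditional densities, it factorises as
\begin{equation*}
p(X,\pmb{\eta}) = p(X\mid\pmb{\eta})\,p(\pmb{\eta}) = p(\pmb{\eta}\mid X)\,p(X).
\end{equation*}
Here $p(\pmb{\eta})$ is the prior, for instance the conjugate prior of Proposition~\ref{prop:conj_prior}. The term $p(X\mid\pmb{\eta})$ is the likelihood, given in the i.i.d.\ exponential-family case by Proposition~\ref{prop:iid_likelihood}.

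\textbf{Step 2: solve for the posterior.} Dividing by $p(X)$ gives the first equality of the statement. This requires $p(X)>0$, which holds for almost every observed dataset because $p(X)$ is itself a density. I also need $p(X)$ to be the marginal of the joint:
\begin{equation*}
p(X)=\int p(X\mid\pmb{\eta})\,p(\pmb{\eta})\,d\pmb{\eta}.
\end{equation*}
This follows by integrating the first factorisation over $\pmb{\eta}$ (Fubini/Tonelli, since everything is nonnegative).

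\textbf{Step 3: the proportionality and the normalisation.} Since $p(X)$ does not depend on $\pmb{\eta}$, the posterior is proportional in $\pmb{\eta}$ to likelihood times prior. Integrating $p(\pmb{\eta}\mid X)$ over $\pmb{\eta}$ gives $1$ exactly because of the expression for $p(X)$ in Step 2. This confirms that the evidence plays the role of the normalising constant.

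The only delicate point is measure-theoretic. For continuous $\pmb{\eta}$, conditional densities are defined only up to null sets, and the identity in Step 1 is really a statement about versions of the regular conditional distribution. I would simply assume that the joint law has a density with respect to a product reference measure. Under that assumption, Step 1 is exactly the definition of the conditional density, and the rest is routine algebra.
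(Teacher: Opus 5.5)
Your proof is correct. You factorise the joint density both ways, obtain $p(X)$ as the $\pmb{\eta}$-marginal by Tonelli, and note that $p(X)>0$ almost surely under the marginal law of $X$. This is the standard textbook derivation, and assuming a joint density with respect to a product reference measure is the right level of care.

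The paper gives no proof of this proposition. It is recalled as a standard fact, is not among the results proved in Appendix~\ref{app:proofs}, and is used directly in Proposition~\ref{prop:posterior_update} and the conjugate computations, so there is no paper argument to compare against. One useful remark: your product-reference-measure framing (e.g.\ counting measure times Lebesgue measure) also covers the mixed discrete/continuous case used later for the run-length posterior $P(r_t \mid x_{1:t}) = P(r_t, x_{1:t})/P(x_{1:t})$.
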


\noindent By applying this theorem to our exponential family likelihood and its conjugate prior, we can explicitly compute the updated posterior distribution:
\vspace{-5pt}
\begin{proposition}[Posterior Update Rule]\label{prop:posterior_update}
Given the likelihood of i.i.d. data $X$ and the conjugate prior, the posterior distribution over the natural parameters $\pmb{\eta}$ retains the exact same exponential form as the prior:
\vspace{-15pt}
\begin{align*}
    p(\pmb{\eta} \mid X, \pmb{\chi}, \nu) & \propto p(X \mid \pmb{\eta}) p(\pmb{\eta} \mid \pmb{\chi}, \nu)
    \propto g(\pmb{\eta})^{\nu+N} \exp\left\{ \pmb{\eta}^T \left( \pmb{\chi} + \sum_{n=1}^N \pmb{u}(\pmb{x}_n) \right) \right\}
\end{align*}
\vspace{-10pt}
Consequently, the new hyperparameters of the posterior are updated via simple additive rules:
\vspace{-3pt}
$$\nu_{\text{posterior}} = \nu_{\text{prior}} + N, \qquad
    \pmb{\chi}_{\text{posterior}} = \pmb{\chi}_{\text{prior}} + \sum_{n=1}^N \pmb{u}(\pmb{x}_n)$$
\end{proposition}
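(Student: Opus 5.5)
The plan is a direct computation: multiply the likelihood of Proposition~\ref{prop:iid_likelihood} by the conjugate prior of Proposition~\ref{prop:conj_prior}, apply Bayes' theorem (Proposition~\ref{prop:bayes}), and collect terms. Bayes' theorem gives
\[
p(\pmb{\eta}\mid X,\pmb{\chi},\nu)=\frac{p(X\mid\pmb{\eta})\,p(\pmb{\eta}\mid\pmb{\chi},\nu)}{p(X\mid\pmb{\chi},\nu)} .
\]
The denominator does not depend on $\pmb{\eta}$. I will therefore work up to proportionality in $\pmb{\eta}$ throughout and recover the normalising constant only at the end.

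First I substitute the two factors. From Proposition~\ref{prop:iid_likelihood}, the likelihood is $\big(\prod_n h(\pmb{x}_n)\big)\,g(\pmb{\eta})^N\exp\{\pmb{\eta}^T\sum_n\pmb{u}(\pmb{x}_n)\}$. The product of base measures depends only on the data, so it is absorbed into the proportionality constant. Likewise, the factor $f(\pmb{\chi},\nu)$ in the prior depends only on the hyperparameters and is absorbed too. What remains is a product of powers of $g(\pmb{\eta})$ and a product of exponentials. The powers add to give $g(\pmb{\eta})^{\nu+N}$. The exponentials combine into a single exponential whose argument is linear in $\pmb{\eta}$, namely $\pmb{\eta}^T\big(\nu\pmb{\chi}+\sum_n\pmb{u}(\pmb{x}_n)\big)$.

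The only delicate point is the parametrisation of $\pmb{\chi}$, and this is where I expect the real work to lie. With the prior written as $\exp\{\nu\pmb{\eta}^T\pmb{\chi}\}$, the additive update $\pmb{\chi}_{\text{post}}=\pmb{\chi}_{\text{prior}}+\sum_n\pmb{u}(\pmb{x}_n)$ stated in the proposition holds for the \emph{unnormalised} statistic $\tilde{\pmb{\chi}}:=\nu\pmb{\chi}$. I will state this identification explicitly. Equivalently, in the normalised parametrisation one reads off
\[
\nu_{\text{post}}=\nu+N,\qquad \pmb{\chi}_{\text{post}}=\frac{\nu\pmb{\chi}+\sum_n\pmb{u}(\pmb{x}_n)}{\nu+N},
\]
so that the posterior exponent is again of the form $\nu_{\text{post}}\pmb{\eta}^T\pmb{\chi}_{\text{post}}$. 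Either reading shows that the posterior kernel has exactly the functional form $g(\pmb{\eta})^{\nu'}\exp\{\pmb{\eta}^T\tilde{\pmb{\chi}}'\}$ of the prior family, which proves conjugacy in the sense of Definition~\ref{def:conjugacy}.

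Finally, I close the argument with normalisation. The posterior is a probability density, and the prior family of Proposition~\ref{prop:conj_prior} is normalised by $f(\cdot,\cdot)$. The proportionality constant must therefore equal $f(\pmb{\chi}_{\text{post}},\nu_{\text{post}})$, whenever the updated hyperparameters lie in the set where the prior is normalisable. This holds automatically here, because the posterior of a proper prior is proper. The posterior is thus exactly a member of the conjugate family with the updated hyperparameters.
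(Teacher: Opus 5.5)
Your argument is correct and matches the paper, whose proof is the inline computation in the statement: multiply the likelihood of Proposition~\ref{prop:iid_likelihood} by the conjugate prior, drop the $\pmb{\eta}$-free factors $\prod_n h(\pmb{x}_n)$ and $f(\pmb{\chi},\nu)$, and add exponents; your closing normalisation step is a harmless addition the paper omits. Your parametrisation remark is also right: the prior $\exp\{\nu\pmb{\eta}^T\pmb{\chi}\}$ gives the exponent $\pmb{\eta}^T\bigl(\nu\pmb{\chi}+\sum_{n}\pmb{u}(\pmb{x}_n)\bigr)$, so the paper's displayed kernel (which drops the factor $\nu$) and its additive rule for $\pmb{\chi}$ are literally valid only for $\tilde{\pmb{\chi}}=\nu\pmb{\chi}$, or in the normalised form you give.
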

\vspace{-12pt}
\begin{definition}[Sufficient Statistics]\label{def:suffstat}
The quantities $N$ (number of observations) and $\sum_{n=1}^N \pmb{u}(\pmb{x}_n)$ are called the \textbf{sufficient statistics}. They contain all the necessary information from the data $X$ required to perform inference on $\pmb{\eta}$.
\end{definition}

\noindent The concept of sufficient statistics is the cornerstone of Bayesian online learning. It proves that to continuously update our beliefs as new data arrives, we do not need to look back at the entire historical dataset. Maintaining and updating these running counters is sufficient, ensuring that the computational memory footprint remains minimal.

\subsection{Bayesian Inference in the Univariate Gaussian Case}

\textit{We now apply the framework to the specific case of the univariate Gaussian distribution.}

\subsubsection*{Case of unknown mean $\mu$ and known variance $\sigma^2$}

Let data $\mathcal{D} = \{x_1, \dots, x_N\}$ be drawn independently from $\mathcal{N}(\mu,\sigma^2)$. Our aim is to infer the unknown mean $\mu$ sequentially. We establish the formal update rules (proof in Appendix~\ref{app:gauss_update}):
\vspace{-5pt}
\begin{proposition}[Gaussian Conjugate Update]
If the prior distribution of the mean is Gaussian $p(\mu) = \mathcal{N}(\mu|\mu_0,\sigma^2_0)$, and the likelihood is Gaussian with known variance $\sigma^2$, the posterior distribution remains Gaussian $p(\mu|\mathcal{D}) = \mathcal{N}(\mu|\mu_N,\sigma_N^2)$. The sufficient statistics are given by:
\vspace{-6pt}
$$ \frac{1}{\sigma_N^2} = \frac{N}{\sigma^2}+\frac{1}{\sigma_0^2} \hspace{1cm}|\hspace{1cm} \mu_N = \frac{N\sigma_0^2}{N\sigma_0^2+\sigma^2}\mu_{ML}+\frac{\sigma^2}{N\sigma_0^2+\sigma^2}\mu_0 $$
\vspace{-6pt}
\label{prop:gauss_conj_update}
\end{proposition}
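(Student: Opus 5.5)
The plan is to apply Bayes' theorem (Proposition~\ref{prop:bayes}) directly and to identify the posterior by completing the square in $\mu$ inside the exponent, working throughout up to multiplicative constants that do not depend on $\mu$. By independence the likelihood factorises as in Proposition~\ref{prop:iid_likelihood}:
\[
p(\mathcal{D}\mid\mu)=\prod_{n=1}^N \mathcal{N}(x_n\mid\mu,\sigma^2)\propto \exp\Big\{-\frac{1}{2\sigma^2}\sum_{n=1}^N (x_n-\mu)^2\Big\}.
\]
Multiplying by the prior $\mathcal{N}(\mu\mid\mu_0,\sigma_0^2)$ then gives
\[
p(\mu\mid\mathcal{D})\propto \exp\Big\{-\frac{1}{2\sigma^2}\sum_{n=1}^N (x_n-\mu)^2-\frac{1}{2\sigma_0^2}(\mu-\mu_0)^2\Big\}.
\]

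Next I will expand the exponent as a polynomial in $\mu$, discarding the terms that do not involve $\mu$. Writing $\mu_{ML}=\frac1N\sum_n x_n$, the exponent becomes
\[
-\frac12\Big(\frac{N}{\sigma^2}+\frac1{\sigma_0^2}\Big)\mu^2+\Big(\frac{N\mu_{ML}}{\sigma^2}+\frac{\mu_0}{\sigma_0^2}\Big)\mu+\text{const}.
\]
This is a concave quadratic in $\mu$, so the posterior is Gaussian. This is conjugacy in the sense of Definition~\ref{def:conjugacy}, and it is consistent with Proposition~\ref{prop:posterior_update}, since the sufficient statistics are $N$ and $\sum_n x_n$.

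I will then match coefficients against the generic Gaussian exponent $-\frac{1}{2\sigma_N^2}\mu^2+\frac{\mu_N}{\sigma_N^2}\mu$. The $\mu^2$ coefficient gives $1/\sigma_N^2=N/\sigma^2+1/\sigma_0^2$ immediately. The linear coefficient gives
\[
\mu_N=\sigma_N^2\Big(\frac{N\mu_{ML}}{\sigma^2}+\frac{\mu_0}{\sigma_0^2}\Big).
\]
Since $\sigma_N^2=\sigma^2\sigma_0^2/(N\sigma_0^2+\sigma^2)$, substituting this expression yields exactly the stated convex combination
\[
\mu_N=\frac{N\sigma_0^2}{N\sigma_0^2+\sigma^2}\,\mu_{ML}+\frac{\sigma^2}{N\sigma_0^2+\sigma^2}\,\mu_0 .
\]

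The only step requiring care is the bookkeeping in completing the square. In particular, I must make sure the cross term $\sum_n x_n$ is correctly rewritten as $N\mu_{ML}$. The final algebraic simplification of $\mu_N$ into the weighted-average form is where errors are most likely, so I will verify it by clearing denominators, multiplying through by $N\sigma_0^2+\sigma^2$. No normalisation constant needs to be computed, because a density proportional to a Gaussian kernel in $\mu$ is that Gaussian.
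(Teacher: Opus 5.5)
Your proposal is correct and follows essentially the same route as the paper's proof in Appendix~\ref{app:gauss_update}. Both multiply the likelihood by the prior, expand the exponent as a quadratic in $\mu$ with coefficients $a = N/\sigma^2 + 1/\sigma_0^2$ and $b = N\mu_{ML}/\sigma^2 + \mu_0/\sigma_0^2$, and read off $\sigma_N^2 = 1/a$ and $\mu_N = b/a$; your explicit substitution $\sigma_N^2 = \sigma^2\sigma_0^2/(N\sigma_0^2+\sigma^2)$ spells out the convex-combination step that the paper leaves as ``identifying the terms''.
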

\vspace{-5pt}
\noindent As noted by \cite{bishop2006pattern}, the posterior mean $\mu_N$ is a convex combination between the prior mean $\mu_0$ and the maximum likelihood solution $\mu_{ML}$. If $N=0$ (no data), $\mu_N = \mu_0$. As $N\rightarrow\infty$, $\mu_N \rightarrow \mu_{ML}$ and $\sigma_N^2 \rightarrow 0$, meaning our uncertainty about the mean vanishes.

When predicting a new unseen data point $x_{N+1}$, we rely on the \textbf{predictive posterior distribution}, obtained by marginalizing over the unknown parameter, which is $\mu$ here:
$$ p(x_{N+1}|x_{1:N}) = \int p(x_{N+1}|\mu,x_{1:N})p(\mu|x_{1:N})d\mu = \int p(x_{N+1}|\mu)p(\mu|x_{1:N})d\mu $$
The last equality follows directly from conditional independence: $x_{N+1}\indep x_{1:N} \mid \mu$.
\vspace{-5pt}
\begin{proposition}[Gaussian Predictive Posterior]
Given the posterior distribution of the mean $\mathcal{N}(\mu_N, \sigma_N^2)$ and the known observation variance $\sigma^2$, the posterior predictive distribution is:
\vspace{-6pt}
$$ p(x_{N+1}|\mathcal{D}) = \int \mathcal{N}(x_{N+1}|\mu,\sigma^2)\mathcal{N}(\mu|\mu_N,\sigma_N^2)d\mu \ = \ \pmb{\mathcal{N}(\mu_N,\sigma_N^2+\sigma^2)} $$
\vspace{-6pt}
\label{prop:gaussian_pred_post}
\end{proposition}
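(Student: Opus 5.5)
The cleanest route avoids computing the convolution integral directly and uses a representation argument. Conditionally on $\mu$, I write the new observation as $x_{N+1} = \mu + \varepsilon$ with $\varepsilon \sim \mathcal{N}(0,\sigma^2)$. Since the noise of a fresh draw does not depend on the parameter, $\varepsilon$ is independent of $\mu$. Under the posterior, $\mu \sim \mathcal{N}(\mu_N,\sigma_N^2)$ by Proposition~\ref{prop:gauss_conj_update}. The conditional independence $x_{N+1}\indep x_{1:N}\mid\mu$ noted just before the statement ensures that conditioning on $\mathcal{D}$ leaves the law of $\varepsilon$ unchanged and independent of $\mu$. The predictive distribution is therefore the law of a sum of two independent Gaussians.

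The key steps, in order, are as follows.
\begin{enumerate}
\item Justify that under $p(\cdot\mid\mathcal{D})$ the pair $(\mu,\varepsilon)$ has the product density $\mathcal{N}(\mu\mid\mu_N,\sigma_N^2)\,\mathcal{N}(\varepsilon\mid 0,\sigma^2)$. This follows from writing the joint density $p(x_{N+1},\mu\mid\mathcal{D}) = \mathcal{N}(x_{N+1}\mid\mu,\sigma^2)\,\mathcal{N}(\mu\mid\mu_N,\sigma_N^2)$ and applying the change of variables $(x_{N+1},\mu)\mapsto(\varepsilon,\mu)$, which has unit Jacobian.
\item Use the fact that a sum of independent Gaussians is Gaussian. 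I would prove this via characteristic functions: $\E[e^{it(\mu+\varepsilon)}] = e^{it\mu_N - t^2\sigma_N^2/2}\, e^{-t^2\sigma^2/2}$, which is the characteristic function of $\mathcal{N}(\mu_N,\sigma_N^2+\sigma^2)$.
\item Conclude by uniqueness of the characteristic function.
\end{enumerate}
The mean $\mu_N$ and variance $\sigma_N^2+\sigma^2$ can be cross-checked with the laws of total expectation and total variance.

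As a self-contained alternative matching the integral form in the statement, I would compute $\int \mathcal{N}(x\mid\mu,\sigma^2)\mathcal{N}(\mu\mid\mu_N,\sigma_N^2)\,d\mu$ directly by completing the square in $\mu$. The exponent is $-\tfrac{(x-\mu)^2}{2\sigma^2}-\tfrac{(\mu-\mu_N)^2}{2\sigma_N^2}$. This is quadratic in $\mu$, with precision $\lambda = 1/\sigma^2+1/\sigma_N^2$ and linear coefficient $x/\sigma^2+\mu_N/\sigma_N^2$. Integrating the resulting Gaussian in $\mu$ gives a factor $\sqrt{2\pi/\lambda}$. The remaining $x$-dependent term is
\[
-\frac{x^2}{2\sigma^2}-\frac{\mu_N^2}{2\sigma_N^2}+\frac{1}{2\lambda}\Big(\frac{x}{\sigma^2}+\frac{\mu_N}{\sigma_N^2}\Big)^2,
\]
which should simplify to $-\tfrac{(x-\mu_N)^2}{2(\sigma^2+\sigma_N^2)}$.

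The only real obstacle is this algebraic simplification together with checking the normalising constant. The point to verify is that the coefficients of $x^2$, $x\mu_N$ and $\mu_N^2$ all reduce to $\pm 1/(\sigma^2+\sigma_N^2)$, using the identity $\lambda\sigma^2\sigma_N^2 = \sigma^2+\sigma_N^2$. For the constant, I would confirm that $\tfrac{1}{2\pi\sigma\sigma_N}\sqrt{2\pi/\lambda} = 1/\sqrt{2\pi(\sigma^2+\sigma_N^2)}$, which follows from the same identity. For this reason I would present the characteristic-function argument as the main proof and give the completing-the-square computation as a check in the appendix.
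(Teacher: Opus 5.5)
Your proposal is correct, and your main argument takes a different route from the paper. The paper proves the statement only by the direct computation you relegate to a check. It drops the normalising constants and groups the exponent as $A\mu^2 - 2B\mu + C$. It then completes the square and notes that the Gaussian integral in $\mu$ contributes $\sqrt{2\pi/A}$, which does not depend on $x_{N+1}$. Finally it simplifies $C - B^2/A$ to $(x_{N+1}-\mu_N)^2/(\sigma^2+\sigma_N^2)$ and identifies the result as a Gaussian kernel. Your secondary computation is exactly this. The only difference is that you verify the normalising constant explicitly, whereas the paper infers it from the fact that the left-hand side is a probability density in $x_{N+1}$; either is legitimate.

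Your primary argument instead writes $x_{N+1} = \mu + \varepsilon$ and uses the unit-Jacobian change of variables to show that the joint density factorises, so $\mu$ and $\varepsilon$ are independent under the posterior. You then read off the law of the sum from the product of characteristic functions. This buys transparency and generality. The additivity of variances, which the paper only remarks on after the statement, becomes the proof itself, and the argument extends with no algebra to the multivariate or any linear-Gaussian predictive.

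The price is heavier machinery: characteristic functions and their uniqueness theorem. There is also one small link you should state explicitly. Uniqueness identifies the law of $x_{N+1}$, so it shows that the integral in the statement (the marginal density obtained from your first-step joint density) equals the $\mathcal{N}(\mu_N,\sigma_N^2+\sigma^2)$ density almost everywhere. Continuity of both sides in $x_{N+1}$, which follows from dominated convergence, then gives the pointwise identity. The paper's elementary route matches the integral form of the statement directly. It also reuses the completing-the-square technique of the Gaussian conjugate update it has just derived, which suits its expository flow.
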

\vspace{-5pt}
\noindent\textit{This demonstrates that the total variance of our prediction is the sum of our uncertainty about the mean ($\sigma_N^2$) and the inherent noise of the data ($\sigma^2$).} The proof is given in Appendix~\ref{app:pred_post}.

\subsection{Bayesian Online Changepoint Detection}

\textit{We now introduce the foundational algorithm for detecting structural breaks. The core idea is to use the sequential Gaussian framework we just derived.}

In statistics, we assume that a data sequence is "generated" by an underlying probability distribution, which is defined by \textbf{generative parameters}. A \textbf{changepoint} is defined as an abrupt, structural shift in these generative parameters.

The work of \cite{bocpd_2007} is groundbreaking because it brought exact Bayesian inference into a real-time (online) setting.

\subsubsection*{BOCPD Foundational assumptions}
To achieve exact, real-time inference, the model relies on two strong assumptions:
\vspace{-5pt}
\begin{itemize}[itemsep=-2pt]
    \item \textbf{(a) Intra-regime i.i.d.:} Within any given regime $\rho$, the data points $x_t$ are independent and identically distributed (i.i.d.) given the generative parameters $\eta_\rho$: $x_t \sim P(x_t | \eta_\rho)$
    \item \textbf{(b) Inter-regime independence:} The generative parameters are completely i.i.d. from one regime to the next. When a changepoint occurs, the algorithm forgets its "history" and draws brand new parameters $\eta_{\rho+1}$ blindly from the base prior distribution $P(\eta)$.
\end{itemize}

\noindent We use the BOCPD algorithm to detect changes in the mean of the data, assuming the observations are normally distributed with a fixed variance $\sigma^2$ and a regime-switching mean $\mu$ governed by a Gaussian prior with mean $\mu_0$ and standard deviation $\sigma_0$.

\vspace{-5pt}
\subsubsection*{Mathematical breakdown of the algorithm}

\begin{definition}[Run Length]\label{def:runlength}
The duration of the current regime is tracked by a discrete state variable called the \textbf{run length}, denoted $r_t$. At time $t$, it evolves via a Bernoulli process:
\vspace{-5pt}
$$ p(r_t | r_{t-1}) = \begin{cases}
1/h & \text{if } r_t = 0 \text{ (a changepoint occurs)} \\
1 - 1/h & \text{if } r_t = r_{t-1} + 1 \text{ (the regime survives)}
\end{cases} $$
\vspace{-5pt}
$1/h$ is the \textbf{hazard rate}, the prior probability of a changepoint occurring at the next time step.
\end{definition}

The algorithm's objective is to compute the \textbf{posterior distribution of the run length}: $ P(r_t | x_{1:t}) = P(r_t, x_{1:t}) / P(x_{1:t})$. To compute it, we first compute recursively the \textbf{joint distribution} $P(r_t, x_{1:t})$ and then, we compute the \textbf{evidence} $P(x_{1:t}) = \sum_{r_{t}} P(r_t, x_{1:t})$.

\begin{proposition}[Recursive Message Passing]\label{prop:message_passing}
By applying Bayes' chain rule and the conditional independence of the run length, the joint distribution decomposes recursively into three distinct computational blocks:
\begin{align*}
    P(r_t, x_{1:t}) &= \sum_{r_{t-1}} P(r_t, x_t | r_{t-1}, x_{1:t-1}) P(r_{t-1}, x_{1:t-1}) \\
    &= \sum_{r_{t-1}} \underbrace{P(r_t | r_{t-1})}_{\text{Hazard Rate}} \underbrace{P(x_t | r_{t-1}, x_{t-1}^{(r)})}_{\text{UPM}} \underbrace{P(r_{t-1}, x_{1:t-1})}_{\text{Previous Message}}
\end{align*}
\end{proposition}

Usually, in Bayesian statistics, evaluating this predictive probability requires calculating a complex integral over all possible values of the unknown parameter. However, by exploiting the conjugacy property, we can get a \textbf{closed-form UPM} (proofs of the last two propositions in Appendix~\ref{app:proofs}).
\vspace{-5pt}
\begin{proposition}[Closed-form UPM for the Gaussian Case]\label{prop:upm_gaussian}
If the data within a regime is modeled as a Gaussian with an unknown mean and a known variance $\sigma^2$, and we use a conjugate Gaussian prior, the UPM evaluates to a closed-form Gaussian distribution:
$$p(x_t \mid r_{t-1}, x_{t-1}^{(r)}) = \pmb{\mathcal{N}(x_t \mid \mu_{r_{t-1}}, \sigma^2 + \sigma^2_{r_{t-1}})}$$
where $\mu_{r_{t-1}}$ and $\sigma^2_{r_{t-1}}$ are the updated sufficient statistics using strictly the $r_{t-1}$ observations of the current regime, using the Gaussian conjugate update (Proposition \ref{prop:gauss_conj_update}).
\end{proposition}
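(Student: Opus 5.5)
The plan is to reduce the UPM to the Gaussian predictive posterior of Proposition~\ref{prop:gaussian_pred_post}, applied to the data of the current regime only. Conditioning on $r_{t-1}$ fixes which past observations belong to the current regime. By assumption (b), a changepoint at time $t-r_{t-1}$ means the regime mean $\mu$ was drawn afresh from the base prior $\mathcal{N}(\mu_0,\sigma_0^2)$, independently of everything observed before. Hence, given $r_{t-1}$, the observations before the window $x_{t-1}^{(r)}$ carry no information about $\mu$, and the conditional law of $\mu$ depends only on $x_{t-1}^{(r)}$.

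First, I will write the UPM as the marginalisation
\[
p(x_t \mid r_{t-1}, x_{t-1}^{(r)}) = \int p(x_t \mid \mu)\, p(\mu \mid r_{t-1}, x_{t-1}^{(r)})\, d\mu .
\]
This uses that, given $\mu$ and given that the regime survives, $x_t$ is independent of the past observations. That is assumption (a), the intra-regime i.i.d. property, exactly as in the display preceding Proposition~\ref{prop:gaussian_pred_post}.

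Second, I identify $p(\mu \mid r_{t-1}, x_{t-1}^{(r)})$. By Bayes' theorem (Proposition~\ref{prop:bayes}) with the Gaussian prior, and because the $r_{t-1}$ in-regime observations are i.i.d. $\mathcal{N}(\mu,\sigma^2)$, Proposition~\ref{prop:gauss_conj_update} applies with $N=r_{t-1}$. It gives the posterior $\mathcal{N}(\mu_{r_{t-1}},\sigma^2_{r_{t-1}})$, where
\[
\sigma^{-2}_{r_{t-1}} = r_{t-1}/\sigma^2 + \sigma_0^{-2},
\]
and $\mu_{r_{t-1}}$ is the corresponding convex combination of $\mu_0$ and the in-regime sample mean. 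For $r_{t-1}=0$ this reduces to the prior itself, which is consistent.

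Third, I substitute into the integral and invoke Proposition~\ref{prop:gaussian_pred_post}, the Gaussian convolution identity. This yields $\mathcal{N}(x_t\mid \mu_{r_{t-1}}, \sigma^2+\sigma^2_{r_{t-1}})$, which is the claim. If one wants the convolution step self-contained, I would complete the square in $\mu$, or argue that $x_t=\mu+\varepsilon$ with $\mu$ and $\varepsilon$ independent Gaussians, so the means and variances add.

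The only delicate point is the first step: justifying that conditioning on $r_{t-1}$ makes $\mu$ independent of observations older than the current regime. I would argue it from the generative model. Given the changepoint location, the regime parameter is a fresh draw from the prior, independent of earlier regimes' parameters and data, by assumption (b). The joint density then factorises into a pre-changepoint part and a current-regime part, and the former cancels in the posterior of $\mu$. Everything after that step is a direct appeal to the earlier propositions.
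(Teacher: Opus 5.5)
Your proposal is correct and follows essentially the same route as the paper: marginalise over the regime mean, identify its posterior as $\mathcal{N}(\mu_{r_{t-1}},\sigma^2_{r_{t-1}})$ via the conjugate update with $N=r_{t-1}$, then reuse the Gaussian convolution integral of Proposition~\ref{prop:gaussian_pred_post}. Your explicit appeal to assumptions (a) and (b), to justify that only the in-regime observations enter the posterior of $\mu$, is a useful addition, since the paper leaves that conditional-independence step implicit.
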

\vspace{-5pt}
This closed-form solution allows the algorithm to evaluate the likelihood of a new data point instantaneously, thus making true real-time detection feasible.

The algorithm yields the run length posterior distribution from which we can get the \textbf{posterior predictive distribution}, on which we could rely for prediction:
\vspace{-5pt}
$$P(x_{t+1} | x_{1:t}) = \sum_{r_t} P(x_{t+1} | r_t, x_t^{(r)}) P(r_t | x_{1:t})$$

\vspace{-5pt} This is Bayesian Model Averaging. Instead of making a hard decision about changepoint location, the algorithm maintains a full probability distribution over all possible run lengths. The final prediction $P(x_{t+1} | x_{1:t})$ is a weighted average of every run length's prediction, weighted by the posterior belief $P(r_t | x_{1:t})$ in that scenario. This way the model deals with uncertainty.

\section{Implementation of the BOCPD algorithm}\label{sec:implementation}

\textit{Now we translate the theoretical BOCPD framework into practice using it on synthetic data.}

\subsection{Generation of simulated data}

To rigorously test the BOCPD algorithm, we need to generate synthetic data whose ground truth is perfectly known. The design of this generative process must mirror the probabilistic assumptions that the algorithm itself postulates.

\subsubsection*{The importance of a coherent generative process}

As established, the algorithm models the run length $r_t$ as a Bernoulli process: at every time step $t$, there is a probability $1/h$ that a changepoint occurs. The duration of any regime is therefore a geometric random variable with parameter $1/h$.

\begin{definition}[Bernoulli Generative Process]\label{def:bernoulli_gen}
Given a time horizon $T$, a hazard rate $1/h_{\text{gen}}$, a prior distribution $\mathcal{N}(\mu_0, \sigma_0^2)$ for the regime means, and a known observation variance $\sigma^2$, the data $x_{1:T}$ is generated as follows. At each time step $t$, a Bernoulli trial with success probability $1/h_{\text{gen}}$ is performed:
\begin{itemize}[itemsep=-2pt]
    \item If the trial succeeds (with probability $1/h_{\text{gen}}$), a changepoint occurs at $t$. A new regime mean $\mu_{\text{new}}$ is drawn independently from $\mathcal{N}(\mu_0, \sigma_0^2)$.
    \item If the trial fails (with probability $1 - 1/h_{\text{gen}}$), the current regime continues with its existing mean.
\end{itemize}
Within any regime with mean $\mu_\rho$, each observation is drawn independently: $ x_t \sim \mathcal{N}(\mu_\rho, \sigma^2) $
\end{definition}

\subsection{Model Initialization}

Now that we have simulated data we can work with, we initialize the BOCPD algorithm with three core components, to try to infer the locations of the changepoints (marked by red lines in the figures):
\vspace{-5pt}
\begin{enumerate}[itemsep=-2pt]
    \item \textbf{The known observation variance ($\sigma^2$)}
    \item \textbf{The prior distribution parameters ($\mu_0, \sigma_0^2$)}
    \item \textbf{The hazard rate of the algorithm ($1/h_{\text{algo}}$).}
\end{enumerate}

\subsection{The Run Length Posterior}

At each time step $t$, the primary output of the BOCPD algorithm is the conditional distribution of the run length given the past observations: $P(r_t \mid x_{1:t})$.
\vspace{-5pt}
\begin{definition}[Run Length Posterior Matrix]
The Run Length Posterior Matrix is a two-dimensional grid of size $T \times T$. The horizontal axis represents the absolute time $t$, and the vertical axis represents the potential run length $r_t$. Because the run length can never exceed the current time ($r_t \le t$), this matrix naturally forms a lower-triangular structure, where all upper-triangular entries are strictly equal to zero.
\end{definition}
\vspace{-5pt}
To interpret this matrix visually, we plot it as a heatmap where the intensity of the color represents the probability mass. Reading it is quite intuitive:
\vspace{-5pt}
\begin{itemize}[itemsep=-4pt]
    \item If a regime is stable, its age increases by exactly one step at each time tick ($r_t = r_{t-1} + 1$). This survival translates into a clear, upward diagonal line of high probability.
    \item The moment the algorithm receives a data point that is highly improbable under the current regime, the probability shifts heavily toward the birth of a new regime ($r_t = 0$). The diagonal line abruptly drops ($r_t = 0$).
\end{itemize}
In an ideal world, the drops of these diagonal lines should coincide with the red lines that mark the changepoints of our generated dataset.

\subsection{The impact of the hazard rate}

We deliberately deceive the algorithm by setting a very high prior probability for changepoints ($1/h = 1/10$) and then run it with the true average frequency of our generative model ($h_{\text{algo}} = h_{\text{data}}$).

\begin{figure}[H]
    \centering
    \begin{subfigure}[b]{0.49\linewidth}
        \includegraphics[width=\linewidth]{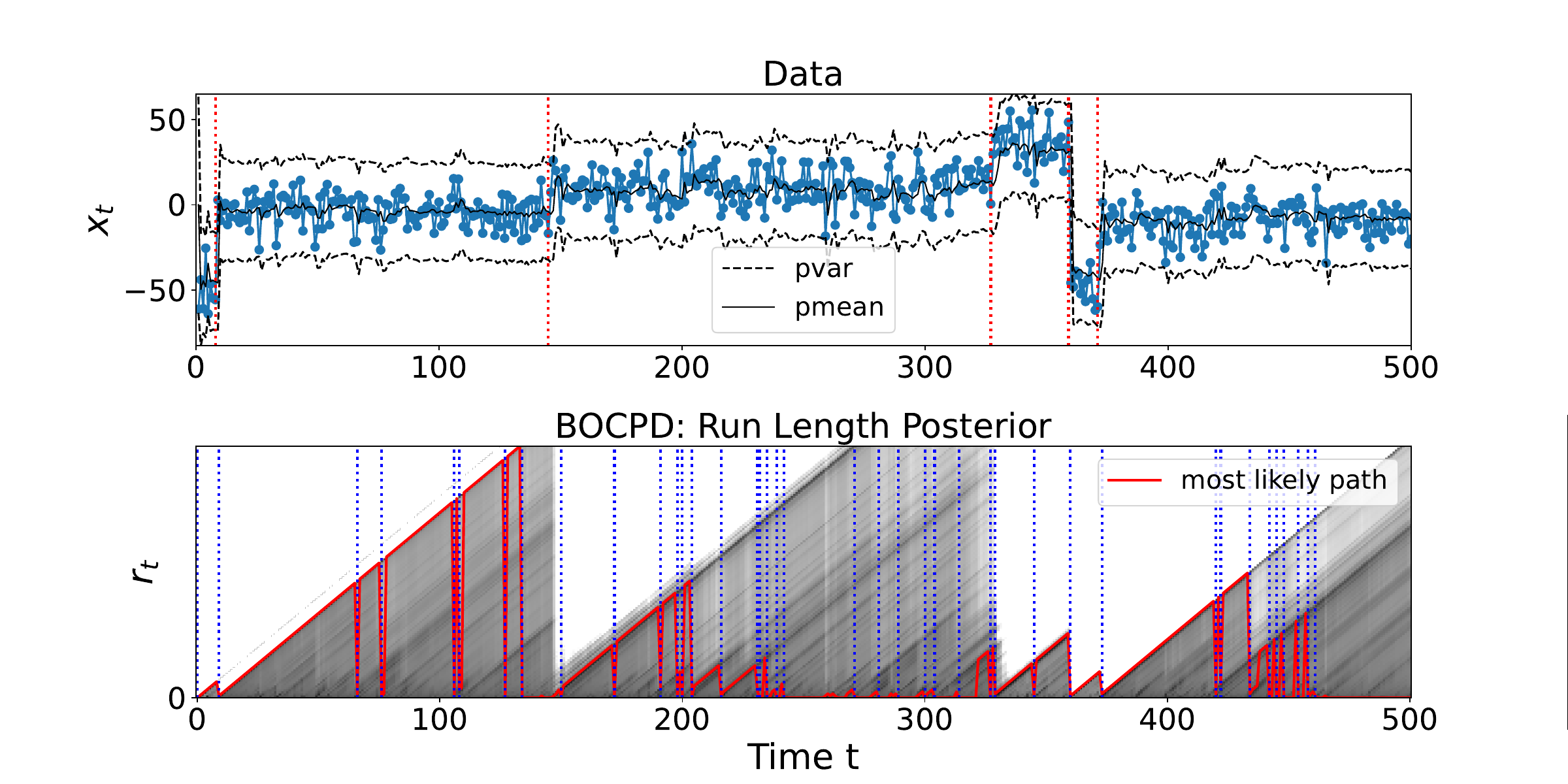}
        \caption{Hyper-reactive hazard rate ($h_{\text{algo}} = 10$).}
    \end{subfigure}
    \hfill
    \begin{subfigure}[b]{0.49\linewidth}
        \includegraphics[width=\linewidth]{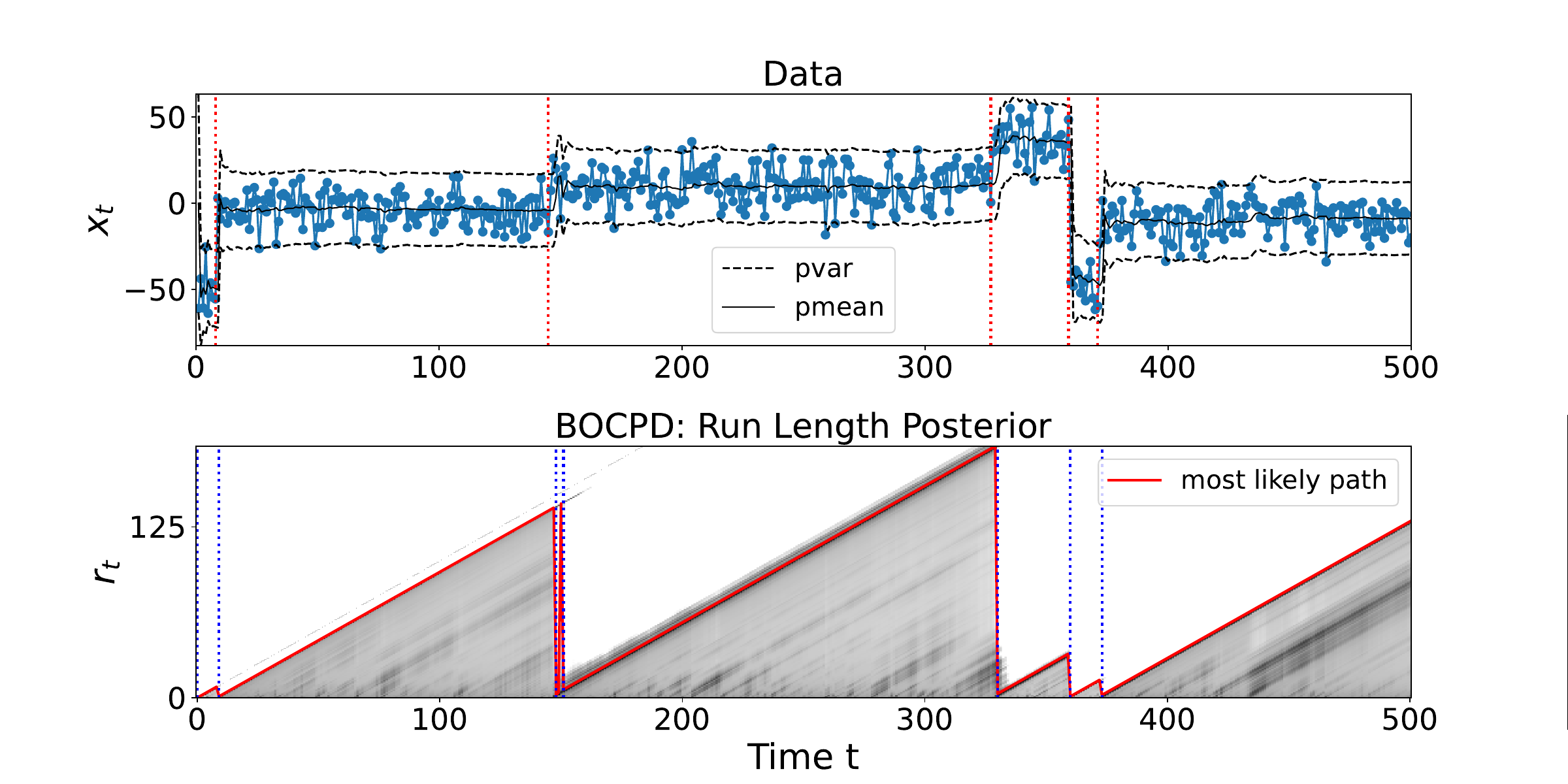}
        \caption{Optimal hazard rate ($h_{\text{algo}} = 100$).}
    \end{subfigure}
    \caption{Run Length Posterior matrix under a mis-specified and a well-specified hazard rate.}
    \label{fig:hazard_impact}
\end{figure}
\vspace{-3pt}
This highlights one practical difficulty of the BOCPD model: the choice of $h$. If it is too large, the algorithm may become hyper-sensitive to local noise, resulting in numerous false positive detections. With an adequate value, it strikes a good balance. However, it may still miss the subtlest jump as changes buried within the local noise variance are difficult to infer.

In our simulation, choosing an optimal $h$ is easy because we own the generative model. However, when we switch to real data, the true frequency of regimes is unknown.

\section{Performance Evaluation Metrics}\label{sec:metrics}

Now let us establish a rigorous mathematical framework to evaluate the detection performance.

The core problem with point-wise metrics such as the $\mathcal{F}_1$-Score is that they reduce the segmentation problem to a simple ``hit or miss'' question at individual points, discarding all information about how well the algorithm guesses the shape and duration of the regimes.

\subsection{Covering Metric}

Instead of measuring if the algorithm detected a given changepoint, it measures how well the algorithm's segmentation of the time axis agrees with the true segmentation.

A set of changepoints $\{\tau_1, \dots, \tau_K\}$ on a time series of length $T$ induces a \textbf{partition} of the time axis $[0, T]$ into $K+1$ contiguous intervals (the regimes). We denote the ground truth partition $\mathcal{S} = \{A_1, \dots, A_{K+1}\}$ and the algorithm's partition $\mathcal{S}' = \{B_1, \dots, B_{K'+1}\}$. We assess how well each true regime $A_i$ is ``covered'' by the best-matching detected regime $B_j$.

\begin{definition}[Jaccard Index]
For two intervals $A = [a_1, a_2)$ and $B = [b_1, b_2)$ on the real line, the Jaccard Index measures their degree of overlap relative to their combined extent:
\begin{equation}
    J(A, B) = \frac{|A \cap B|}{|A \cup B|} = \frac{\max(0, \; \min(a_2, b_2) - \max(a_1, b_1))}{|A| + |B| - |A \cap B|} \in [0, 1]
\end{equation}
The Jaccard Index equals $1$ if and only if $A = B$ (perfect overlap), and equals $0$ if the two intervals are completely disjoint.
\end{definition}

For each true regime $A_i \in \mathcal{S}$, we find the detected regime $B_j \in \mathcal{S}'$ that maximizes the Jaccard overlap. The Covering Metric then aggregates these best-match scores across all true regimes, weighting each by its relative length $|A_i|/T$ so that the reconstruction quality of long, important regimes contributes more than that of short ones.

\begin{definition}[Covering Metric]\label{def:covering_metric}
Let $\mathcal{S} = \{A_1, \dots, A_{K+1}\}$ be the ground truth partition and $\mathcal{S}' = \{B_1, \dots, B_{K'+1}\}$ be the partition induced by the algorithm's detected changepoints. The Covering Metric is defined as:
\vspace{-13pt}
\begin{equation}
    C(\mathcal{S}, \mathcal{S}') = \frac{1}{T}\sum_{i=1}^{K+1} |A_i| \cdot \max_{B_j \in \mathcal{S}'} J(A_i, B_j) \;\in [0, 1]
\end{equation}

\vspace{-5pt} \noindent $C = 1$ indicates a perfect reconstruction of the ground truth partition, $C = 0$ no overlap.
\end{definition}

\subsection{Mean Squared Error}

It is not all about \textbf{clustering} (i.e. segmenting correctly). We may also do predictions, so we introduce a metric that evaluates the \textbf{forecasting} capability.

The recursive inference of the BOCPD algorithm produces a one-step-ahead forecast at each time step. Recall that we can get from the algorithm the predictive posterior distribution:
$$P(x_{t+1} | x_{1:t}) = \sum_{r_t} P(x_{t+1} | r_t, x_t^{(r)}) \; P(r_t | x_{1:t})$$

\vspace{-5pt} Each UPM term $P(x_{t+1} | r_t, x_t^{(r)})$ is a Gaussian $\mathcal{N}(\mu_{r_t}, \sigma^2 + \sigma^2_{r_t})$, whose mean $\mu_{r_t}$ is the posterior mean of the regime that has been running for $r_t$ steps. The overall predictive mean is therefore the posterior-weighted average of these regime-specific means.

\begin{definition}[Predictive Mean]\label{def:predictive_mean}
The \textbf{predictive mean} (one-step-ahead forecast) at time $t$, denoted $\hat{\mu}_t$, is the expected value of the next observation under the full posterior predictive distribution:
\vspace{-12pt}
\begin{equation}
    \hat{\mu}_t = \sum_{r_t=0}^{t} \mu_{r_t} \; P(r_t \mid x_{1:t})
\end{equation}
where $\mu_{r_t}$ is the posterior mean of the Gaussian UPM for run length $r_t$, and $P(r_t \mid x_{1:t})$ is the run length posterior.
\end{definition}
\vspace{-20pt}
\begin{definition}[Mean Squared Error]
The Mean Squared Error (MSE) of the one-step-ahead forecasts over the time horizon $\{0, \dots, T\}$ is defined as:
\vspace{-8pt}
\begin{equation}
    \text{MSE} = \frac{1}{T} \sum_{t=1}^{T} \big(\hat{\mu}_{t-1} - x_{t}\big)^2
\end{equation}
\end{definition}
\vspace{-5pt}
\textit{We now have two complementary tools: the Covering Metric judges the quality of the segmentation, while the MSE judges the quality of the prediction. Together, they provide a satisfying picture of an algorithm's performance.}

\section{From Constant to Time-Varying Hazard Rate}\label{sec:hsmm}

\subsection{The fundamental limitation of a constant hazard rate}

The regimes in the data span different durations. Yet the baseline BOCPD model assumes a single, constant hazard rate $1/h$ that governs the prior probability of a changepoint at every time step, leading to an irreconcilable dilemma:
\vspace{-5pt}
\begin{itemize}[itemsep=-2pt]
    \item If $h$ is calibrated to detect short, rapid regime changes (small $h$), the algorithm becomes hyper-reactive. It will successfully catch micro-regimes, but it will also fragment the long stable periods with dozens of false positives.
    \item If $h$ is calibrated to respect the long stable periods (large $h$), the algorithm becomes too conservative. It will correctly preserve the long regimes, but it may completely miss the short-lived regimes and the subtle jumps.
\end{itemize}
\vspace{-5pt}
The optimum is therefore not a satisfying solution: it is merely the ``least bad'' compromise in a trade-off that has no ideal resolution within the constant-hazard framework.

To handle data where regime lengths are inherently heterogeneous, as they are in real financial markets, we need to abandon the assumption of a constant hazard rate and move towards a \textbf{time-varying hazard rate} $h_t$.

\subsection{The structural flaw: the geometric constraint}

The standard BOCPD algorithm models the run length $r_t$ as evolving according to a Bernoulli process with a constant hazard rate $H = 1/h$. At every time step $t$, regardless of how long the current regime has already survived, the algorithm assumes a fixed probability $1/h$ that a changepoint will occur at the next step. This assumption imposes a structural constraint on the distribution of regime durations.

\begin{proposition}[Geometric regime durations under constant hazard]\label{prop:geometric}
If the hazard rate is constant $H(r) = p$ for all $r \geq 0$, where $p = 1/h \in (0,1)$, then the duration $d$ of any regime follows a Geometric distribution with parameter $p$:
\begin{equation}
    P(d = k) = (1-p)^{k-1} \, p, \quad k = 1, 2, 3, \dots
\end{equation}
with expected duration $\E[d] = 1/p = h$ and variance $\V[d] = (1-p)/p^2$.
\end{proposition}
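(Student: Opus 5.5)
The plan is to read the duration $d$ of a regime as the waiting time until the first changepoint after the regime starts, using the run-length dynamics of Definition~\ref{def:runlength}. Once a regime has started, the run length either resets to $0$ or grows by one. Under the hypothesis $H(r)=p$ for every $r$, these reset/survive decisions are Bernoulli trials with the same success probability $p$. Because the run-length process is Markov with transition law $p(r_t\mid r_{t-1})$, the trials are independent: conditioning on survival up to run length $r$ does not change the law of the next transition.

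First I will fix the convention that $d=k$ means the regime survives $k-1$ transitions and is terminated at the $k$-th. This matches the support $k\ge 1$ in the statement and is consistent with the Bernoulli generative process of Definition~\ref{def:bernoulli_gen}. By the Markov property and the chain rule,
\begin{equation*}
P(d=k)=\Big(\prod_{r=0}^{k-2}\big(1-H(r)\big)\Big)H(k-1),
\end{equation*}
which is the general survival--hazard identity. Substituting $H\equiv p$ gives $(1-p)^{k-1}p$. Summing the geometric series, $\sum_{k\ge1}(1-p)^{k-1}p=1$ because $p\in(0,1)$. So this is a proper distribution, and $d<\infty$ almost surely.

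For the moments I will differentiate the geometric series. From $\sum_{k\ge1}k q^{k-1}=(1-q)^{-2}$ with $q=1-p$, I get $\E[d]=p\cdot p^{-2}=1/p=h$. From $\sum_{k\ge1}k(k-1)q^{k-2}=2(1-q)^{-3}$, I get $\E[d(d-1)]=2(1-p)/p^2$. Then
\begin{equation*}
\V[d]=\E[d(d-1)]+\E[d]-\E[d]^2=\frac{2(1-p)}{p^2}+\frac1p-\frac1{p^2}=\frac{1-p}{p^2}.
\end{equation*}
Differentiating the power series term by term is justified because $|q|<1$.

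I expect the only delicate step to be the indexing convention in the first step. I need to make the offset between ``run length $r$'' and ``duration $k$'' explicit, so that the support is $k\ge1$ rather than $k\ge0$. Everything after that is a routine power-series computation.
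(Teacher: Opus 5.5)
Your proof is correct and follows the same route as the paper. The paper gives no formal proof; it only notes, in the HMM discussion of Section~\ref{sec:hsmm} and in Appendix~\ref{app:valid_bocpd}, that a regime survives $k-1$ steps with probability $1-p$ each and terminates at the $k$-th with probability $p$, which is your survival--hazard product with $H\equiv p$, and your power-series computation of $\E[d]=1/p$ and $\V[d]=(1-p)/p^2$ supplies the moment derivations that the paper states without proof.
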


\subsection{From HMM to HSMM}

Moving from a constant hazard rate toward a \textbf{time-varying hazard rate} requires switching from the implicit Hidden Markov Model (HMM) underlying standard BOCPD to a fundamentally different generative framework, the Hidden semi-Markov Model (HSMM).

\subsubsection*{Hidden Markov Models (HMM)}

\begin{definition}[Hidden Markov Model]
A Hidden Markov Model (HMM) is a doubly stochastic process where an underlying sequence of hidden states $\{z_t\}_{t \ge 1}$ governs the distribution of a sequence of continuous observations $\{x_t\}_{t \ge 1}$. The system is fully characterized by the hidden state sequence $\{z_t\}$, which satisfies the first-order Markov property $P(z_{t+1} \mid z_t, \dots, z_1) = P(z_{t+1} \mid z_t)$, and by the emission distributions $P(x_t \mid z_t)$.
\end{definition}
\vspace{-5pt}
The generative process of an HMM: at each step, the hidden state transitions based solely on the current state, and an observation is emitted from the corresponding distribution.

A critical consequence of this memoryless state transition rule is that the time the system spends residing in any given state is implicitly constrained. If the system is in state $i$ with a self-transition probability $a_{ii} = P(z_{t+1} = i \mid z_t = i)$, the probability that it remains in state $i$ for exactly $k$ consecutive steps before transitioning out is $P(d = k \mid z = i) = a_{ii}^{k-1}(1 - a_{ii})$, which is a Geometric distribution with parameter $1 - a_{ii}$.

\subsubsection*{Hidden semi-Markov Models (HSMM)}

To escape this constraint, we must abandon the Markov property on state transitions while keeping the conditional independence of the observations.

\begin{definition}[Hidden semi-Markov Model]\label{def:hsmm}
A Hidden semi-Markov Model (HSMM) is a generalization of the HMM in which the duration of each state is governed by an explicit distribution $P(d \mid z)$, rather than being implicitly determined by self-transition probabilities. The generative process is defined as follows:
\vspace{-5pt}
\begin{enumerate}[itemsep=-2pt]
    \item \textbf{State entry:} When a new regime begins, a hidden state $z$ is selected.
    \item \textbf{Duration draw:} A total duration $d$ is drawn directly from an explicit, state-specific duration distribution: $d \sim P(d \mid z)$.
    \item \textbf{Emission:} For exactly $d$ consecutive time steps, observations are emitted from the active state distribution $P(x \mid z)$. During this period, the state $z$ is locked and cannot change.
    \item \textbf{Transition:} After the $d$ observations have been emitted, the regime terminates deterministically, and the process returns to Step 1 to instantiate a new state.
\end{enumerate}
\end{definition}
\vspace{-3pt}
The prefix ``semi'' denotes that the Markov property is relaxed on the time dimension. In a standard HMM, the decision to stay or leave a state is made independently at every single time step. In an HSMM, the total duration is drawn once at the regime's birth, and the regime is then deterministically locked for that duration (making it non-Markovian on time).

\begin{notation}[HSMM state variables]
At each time step $t$, the full latent state of an HSMM is described by a triplet:
\vspace{-5pt}
\begin{itemize}[itemsep=-2pt]
    \item $z_t \in \{1, \dots, K\}$: the active hidden state index. In the context of standard changepoint detection (and our $K=1$ reduction), $z_t$ tracks the identity of the current regime $\rho$, meaning every changepoint instantiates a new set of parameters rather than jumping between pre-defined discrete classes.
    \item $d_t \in \mathbb{N}^*$: the total duration drawn at the birth of the current regime.
    \item $r_t \in \{0, 1, \dots, d_t - 1\}$: the run length, tracking the number of observations already emitted since the current regime began.
\end{itemize}
\end{notation}

\subsection{The HSMM-consistent data generation process}

To rigorously test our inference algorithms later, we must generate synthetic data from a generative process that is strictly consistent with the HSMM assumptions.
\vspace{-5pt}
\begin{definition}[HSMM Generative Process for Synthetic Data]\label{def:hsmm_gen}
Given a time horizon $T$, a prior distribution $\mathcal{N}(\mu_0, \sigma_0^2)$ for regime means, a known observation variance $\sigma^2$, and a chosen duration distribution, the data $x_{1:T}$ is generated as follows:
\vspace{-5pt}
\begin{enumerate}[itemsep=-2pt]
    \item \textbf{Initialize:} Set $t_{\text{start}} = 0$.
    \item \textbf{Draw a regime duration $d$:} Sample $d \sim P(d)$, rounded to the nearest integer and floored at $1$.
    \item \textbf{Draw regime mean:} Sample a regime mean $\mu_\rho \sim \mathcal{N}(\mu_0, \sigma_0^2)$.
    \item \textbf{Generate observations:} For $t = t_{\text{start}}, \dots, t_{\text{start}} + d - 1$, sample $x_t \sim \mathcal{N}(\mu_\rho, \sigma^2)$.
    \item \textbf{Advance:} Set $t_{\text{start}} \leftarrow t_{\text{start}} + d$. If $t_{\text{start}} < T$, return to Step 2.
\end{enumerate}
\end{definition}
\vspace{-3pt}
In this generative model, the total duration $d_t$ of a regime is predetermined at its birth. The regime terminates when its age reaches $r_t = d_t - 1$. However, from the perspective of an online inference algorithm receiving data tick-by-tick, this total duration $d_t$ is obviously unknown.

\subsection{From 3D Inference to 2D}

The fundamental challenge of online HSMM inference is the dimensionality of the latent space. An algorithm would seemingly need to maintain a posterior distribution over a three-dimensional object: $P(r_t, d_t, z_t \mid x_{1:t})$.

Updating this 3D joint distribution naively is impractical, and even with dynamic programming it costs $\mathcal{O}(K^2 + D^2 K)$ per step for exponential family UPMs \citep{agudelo2020bayesian}. However, a crucial insight, implemented in their \texttt{C++} codebase, enables a dimensionality reduction to a tractable 2D matrix $P(r_t, z_t \mid x_{1:t})$. It is made possible by the fact that the observation $x_t$ carries no information about the total duration $d_t$ of the segment.

\vspace{-5pt}
\begin{proposition}[Duration-agnostic UPM]\label{prop:cond_indep}
If the predictive likelihood is agnostic to the total segment duration, i.e. $P(x_t \mid r_t, d_t, z_t) = P(x_t \mid r_t, z_t)$, then $d_t$ enters the recursion only through the hazard function and can be marginalized out.
\end{proposition}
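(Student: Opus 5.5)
We write the 3D forward recursion for the joint message $P(r_t,d_t,z_t,x_{1:t})$ explicitly, and then sum over $d_t$. In the $K=1$ reduction every changepoint instantiates a fresh regime, so $z_t$ only records the regime identity. The hidden dynamics of Definition~\ref{def:hsmm} are deterministic given $d$. If $r_{t-1}<d_{t-1}-1$, then $r_t=r_{t-1}+1$ and $d_t=d_{t-1}$. If $r_{t-1}=d_{t-1}-1$, then $r_t=0$ and a fresh $d_t\sim P(d)$ is drawn, independent of the past. Mimicking Proposition~\ref{prop:message_passing}, the chain rule and the HSMM conditional independences give
\begin{equation*}
P(r_t,d_t,x_{1:t}) = \sum_{r_{t-1},d_{t-1}} P(r_t,d_t\mid r_{t-1},d_{t-1})\,P(x_t\mid r_t,d_t,x^{(r)}_{t-1})\,P(r_{t-1},d_{t-1},x_{1:t-1}).
\end{equation*}
The hypothesis $P(x_t\mid r_t,d_t,z_t)=P(x_t\mid r_t,z_t)$ lets us pull the UPM out of any sum over $d$. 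The UPM depends only on the segment's run length and its past data, exactly as in Proposition~\ref{prop:upm_gaussian}.

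Next we sum over $d_t$ and split into the two transition cases. \emph{Growth} ($r_t=r+1$):
\begin{equation*}
P(r_t{=}r{+}1,x_{1:t}) = \mathrm{UPM}\cdot\sum_{d>r+1} P(r_{t-1}{=}r,d_{t-1}{=}d,x_{1:t-1}).
\end{equation*}
\emph{Changepoint} ($r_t=0$):
\begin{equation*}
P(r_t{=}0,x_{1:t}) = \mathrm{UPM}_0\cdot\sum_{r} P(r_{t-1}{=}r,d_{t-1}{=}r{+}1,x_{1:t-1}).
\end{equation*}
In both cases the fresh draw of $d_t$ sums to one.

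The key step is to show that the joint message factorises as
\begin{equation*}
P(r_{t-1},d_{t-1},x_{1:t-1}) = P(r_{t-1},x_{1:t-1})\,P(d_{t-1}\mid d_{t-1}>r_{t-1}).
\end{equation*}
In words, conditional on the run length and the data, the total duration has the prior law truncated to $d>r$. We prove this by induction on $t$. At the birth of a segment ($r=0$), $d$ is drawn from the prior independently of everything. Along the growth path, every observation's likelihood is free of $d$ by hypothesis, so the data never update $d$. The only information about $d$ is the event that the segment survived, namely $d>r$.

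Substituting this factorisation into the two cases gives the survival ratio in the growth case:
\begin{equation*}
\sum_{d>r+1}P(d\mid d>r) = \frac{S(r+1)}{S(r)} = 1-H(r),
\end{equation*}
where $S(r)=P(d>r)$ and $H(r)=P(d=r+1)/S(r)$. In the changepoint case the remaining sum over $d$ gives $P(d=r+1\mid d>r)=H(r)$. We therefore obtain the 2D recursion
\begin{equation*}
P(r_t,x_{1:t}) = \sum_{r_{t-1}} H\text{-term}\cdot\mathrm{UPM}\cdot P(r_{t-1},x_{1:t-1}).
\end{equation*}
This is Proposition~\ref{prop:message_passing} with the run-length-dependent hazard $H(r_{t-1})$ in place of the constant $1/h$. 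So $d_t$ enters only through $H$ and has been marginalised out.

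The main obstacle is stating the inductive factorisation lemma cleanly. The difficulty is conditioning on a segment that started at a random time: one must check that the data of earlier segments are independent of the current $d$ given the changepoint location. The approach would be to condition on the last changepoint time $t-r$ and use the fact that durations are drawn i.i.d. at each regime birth.
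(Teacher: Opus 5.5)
Your proof is correct and follows the paper's own reasoning. The paper argues only informally, in the paragraph on passing from deterministic termination to a probabilistic hazard: since the observations carry no information about $d_t$, all the filter knows about the duration is survival to age $r$, so termination happens with probability $H(r)$. Your explicit 3D forward recursion, together with the lemma $P(d_{t-1}\mid r_{t-1},x_{1:t-1})=P(d\mid d>r_{t-1})$, is the rigorous version of that argument.

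The obstacle you flag is already handled by running your induction inside the 3D recursion. In the changepoint branch the fresh factor $P(d_t)$ multiplies a sum over the past that does not involve $d_t$. Your proof collapses $z_t$ to the $K=1$ case, whereas the statement is written with a general meta-state. That general case goes through identically by conditioning on $z_t$ and using $z$-specific hazards $H_{z}(r)$.
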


\vspace{-19pt}
\subsubsection*{From deterministic termination to probabilistic hazard}

\vspace{-5pt}
A question arises: \textit{if we discard the explicit total duration $d_t$ from the posterior, how does the algorithm retain the memory of the specific duration distribution chosen for the regimes?}

The answer lies in the fundamental difference between data generation (nature) and online inference (the algorithm). In the generative model, the total duration $d_t$ is drawn at birth; the regime's termination is therefore a deterministic event.

However, during online inference, the total duration $d_t$ is unknown. Because the algorithm does not know when the regime ends, it cannot make a deterministic decision. Instead, it maintains and evaluates a termination probability at every single time step. The algorithm answers the following question: \textit{Given that the current segment has survived up to age $r$, what is the probability that it terminates at the very next step?} This conditional probability is exactly the \textbf{hazard function}.

This represents the theoretical bridge between the 3D HSMM and the 2D BOSD framework, which is reflected in the \texttt{C++} codebase by \cite{agudelo2020bayesian} who implemented the exact 3D inference which maintains a tensor to track the full $(d, r, z)$ space, but also another class that operates on the 2D matrix $P(r_t, z_t \mid x_{1:t})$. The duration prior is thus compiled into a 1D hazard vector $H(r)$.

\subsection{The $K=1$ Reduction: our implementation}

The full BOSD (Bayesian Online Segment Detection) framework, as introduced by \cite{agudelo2020bayesian}, tracks the joint posterior $P(r_t, z_t \mid x_{1:t})$ over both the run length $r_t$ and a discrete hidden meta-state $z_t \in \{1, \dots, K\}$, where each meta-state possesses its own duration distribution $P(d \mid z)$ and its own emission model. We deliberately restrict the model to a single meta-state ($K=1$), which allows us to isolate the upgrade of the time-varying hazard rate.

\vspace{2pt}
\noindent\textbf{\textit{What the $K=1$ reduction preserves:}} Under the $K=1$ reduction, the posterior simplifies to a 1D vector, $P(r_t, z_t \mid x_{1:t}) \xrightarrow{K=1} P(r_t \mid x_{1:t})$, which is exactly the same shape as the standard BOCPD posterior. The following components of the original \cite{bocpd_2007} framework are fully preserved:

\vspace{-7pt}
\begin{itemize}[itemsep=-2pt]
    \item \textbf{The conjugate prior structure:} When a changepoint occurs, a new regime mean is drawn from the prior $\mathcal{N}(\mu_0, \sigma_0^2)$. The number of possible regime means remains infinite.
    \item \textbf{The Underlying Predictive Model:} The closed-form Gaussian predictive likelihood $P(x_t \mid r_{t-1}, x_{t-1}^{(r)}) = \mathcal{N}(\mu_{r_{t-1}}, \sigma^2 + \sigma^2_{r_{t-1}})$ is used without modification.
    \item \textbf{The sufficient statistics}.
\end{itemize}

\noindent\textbf{\textit{What the $K=1$ reduction changes:}} The only structural change is the replacement of the constant hazard rate $1/h$ by the time-varying hazard vector $H(r)$. The UPM and the message-passing structure remain identical.

\subsection{Deriving the Time-Varying Hazard Rate}

\begin{definition}[Hazard Rate function]\label{def:hazard}
Let $d$ be a discrete random variable representing the total duration of a regime. The \textbf{hazard rate function} is defined as the conditional probability that the regime terminates at step $r+1$, given that it has survived up to and including step $r$:
\vspace{-5pt}
\begin{equation}\label{eq:hazard}
    H(r) \;=\; P(d = r+1 \mid d > r) \;=\; \frac{P(d = r+1)}{P(d > r)}
\end{equation}
\end{definition}
\noindent The hazard functions of the duration laws used in this work (Gaussian, Pareto, log-normal, Poisson) are derived in Appendix~\ref{app:durations}.

\subsection{Benchmarking}

By comparing it against BOCPD with a constant hazard rate on HSMM-generated synthetic data, this model allows us to check the contribution of duration-awareness. The Monte Carlo validation of the regime-duration laws recovered by BOCPD and BOSD ($K=1$) is reported in Appendices~\ref{app:valid_bocpd} and~\ref{app:valid_bosd}.

\subsubsection*{A new metric: the cumulative Predictive log-Likelihood}

We introduce a third metric which is introduced in the paper of \cite{bocpd_2007}, and implemented in the \cite{agudelo2020bayesian} \texttt{C++} codebase. Recall that the BOCPD recursion produces the \textbf{predictive posterior distribution}.

This quantity is referred to as the \textit{one-step-ahead predictive likelihood} by \cite{agudelo2020bayesian}, which can be a bit confusing as we constantly refer to it as the \textbf{predictive posterior distribution}. Accumulating it over the whole sequence yields the desired metric, that we name the \textbf{Predictive log-Likelihood}.

\begin{definition}[Predictive log-Likelihood]\label{def:pred_loglik}
Given an observed sequence $x_{1:T}$, the \textbf{Predictive log-Likelihood} is
\vspace{-13pt}
\begin{equation}\label{eq:pred_loglik}
    \mathcal{L}(x_{1:T}) \;=\; \sum_{t=1}^{T} \log P(x_t \mid x_{1:t-1})
\end{equation}
\vspace{-8pt}
where $P(x_1 \mid x_{1:0}) := P(x_1)$ denotes the prior predictive likelihood at $t=1$.
\end{definition}

\noindent It rewards a model for assigning high probability to the data actually observed, and it does so using the \textit{entire} predictive posterior distribution $P(x_t \mid x_{1:t-1})$ at each step, not merely its mean. Unlike the Covering Metric, $\mathcal{L}(x_{1:t})$ requires no changepoint-labeling heuristic whatsoever.

\newpage
\section{Market microstructure and memory of order flow}\label{sec:microstructure}

\textit{Upgrading the hazard rate from a constant to a duration-dependent function requires specifying a distributional family for regime durations. We will see why it must accommodate heavy tails.}

Modern financial markets are \textbf{order-driven}: rather than a small number of designated dealers quoting prices (older ``quote-driven'' paradigm), every participant may submit orders that collectively determine the price \citep{cartea2015algorithmic}. The data structure that organises this is the \textbf{limit order book} (LOB). Most of the world's financial exchanges now operate a LOB mechanism to facilitate trade \citep{gould2013limit}.

The LOB is best understood through the two elementary actions available to any market participant: submitting an order and cancelling an order.
\vspace{-5pt}
\begin{definition}[Order]
An \textbf{order} $x = (p_x, \omega_x, t_x)$ submitted at time $t_x$ with price $p_x$ and size $\omega_x$ is a commitment to trade up to $|\omega_x|$ units of the asset at a price no worse than $p_x$. By convention, $\omega_x > 0$ denotes a sell order and $\omega_x < 0$ a buy order.
\end{definition}
\vspace{-5pt}
Two resolution parameters govern the granularity of trading. The \textbf{tick size} $\pi$ is the smallest permissible price increment. The \textbf{lot size} $\sigma$ is the smallest tradable quantity. We distinguish two categories of orders \citep{gould2013limit, cartea2015algorithmic}:

\textbf{Limit orders (LOs)} are \textit{passive} orders. A trader submitting a limit order specifies a price and a quantity, and the order joins the book without triggering an immediate trade. For a buy limit order, the specified price is typically at or below the current best buy price. Limit order submissions are said to \textit{provide liquidity} to the market.

\textbf{Market orders (MOs)} are \textit{aggressive} orders. A trader submitting a market order wishes to trade immediately at the best available price. A buy market order matches against the cheapest active sell orders; a sell market order matches against the most expensive active buy orders. Market order submissions are said to \textit{consume liquidity}.

\vspace{3pt} The active orders in the book partition naturally into the set of active buy orders $\mathcal{B}(t)$ (the \textit{bid side}) and the set of active sell orders $\mathcal{A}(t)$ (the \textit{ask side}).
\vspace{-2pt}
\begin{definition}[Bid price, ask price, mid price, and spread]
\label{def:prices}
Let $\mathcal{L}(t)$ be a limit order book at time $t$. The \textbf{bid price} is the highest stated price among active buy orders, $b(t) := \max_{x \in \mathcal{B}(t)} p_x$; the \textbf{ask price} is the lowest stated price among active sell orders, $a(t) := \min_{x \in \mathcal{A}(t)} p_x$. The \textbf{bid-ask spread} is $s(t) := a(t) - b(t)$, and the \textbf{mid price} is $m(t) := \frac{1}{2}[a(t) + b(t)]$.
\end{definition}
\vspace{3pt}
\begin{wrapfigure}{l}{0.42\textwidth}
  \centering
  \includegraphics[width=\linewidth]{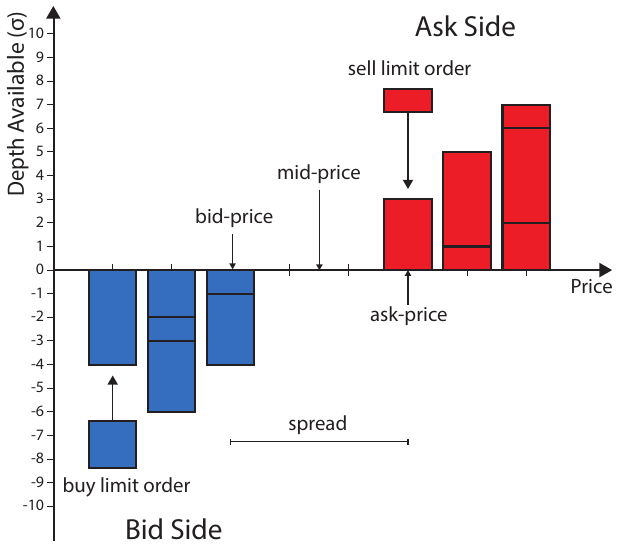}
  \caption{LOB: bid side, ask side, best prices, spread and depth profile.}
  \label{fig:lob}
\end{wrapfigure}

\noindent The bid price $b(t)$ is the best price at which one can immediately sell; the ask price $a(t)$ is the best price at which one can immediately buy. The spread $s(t)$ measures the cost of a round-trip transaction (buying and immediately reselling) and is a primary indicator of market liquidity.

Beyond the best prices, the \textbf{depth profile} describes the total quantity of active orders at each price level. When a large market order arrives whose size exceeds the depth available at the best price, it ``walks the book'': it consumes the entire depth at the best price level and then continues to match against orders at successively worse price levels, causing the best price to move. The interaction between aggressive orders and the available depth is the fundamental driver of price formation in electronic markets.

\subsection{Structure of order flow}

The \textbf{order flow} is the time series of market orders arriving at the exchange. Following the standard convention in the literature \citep{lillo2005theory, bouchaud2018trades, sato2023quantitative}, we encode the order flow as a sequence of signs:
\vspace{-5pt}
\begin{definition}[Order sign sequence]
\label{def:order_sign}
Let $\varepsilon(t) \in \{+1, -1\}$ denote the sign of the $t$-th market order, where $\varepsilon(t) = +1$ indicates a buy order and $\varepsilon(t) = -1$ indicates a sell order. The index $t \in \{1, 2, \ldots\}$ is the \textbf{tick time}: it is incremented by one unit at each transaction, regardless of the elapsed calendar time between transactions.
\end{definition}
\vspace{-5pt}
The choice of tick time rather than calendar time is deliberate and important. In calendar time, the order flow is highly irregular: there may be hundreds of transactions per second during peak activity and seconds of silence during quiet periods. In tick time, events are uniformly spaced by construction, which makes autocorrelation analysis meaningful, despite intraday seasonality patterns such as the well-documented U-shaped activity profile.
\vspace{-5pt}
\begin{definition}[Sign autocorrelation function]
\label{def:acf}
The autocorrelation function of the order flow is defined as $C(\tau) := \lim_{T \to \infty} \frac{1}{T} \sum_{t=1}^{T} \varepsilon(t)\, \varepsilon(t + \tau)$, where $\tau \geq 0$ is the time lag measured in ticks.
\end{definition}
\vspace{-5pt}
Because the order signs $\varepsilon(t)$ take values in $\{+1, -1\}$, their product $\varepsilon(t)\varepsilon(t+\tau)$ is strictly positive if both orders share the same direction (e.g., buy-buy), and negative otherwise. Consequently, the autocorrelation $C(\tau)$ essentially measures the excess probability that two transactions separated by $\tau$ ticks have the same sign. A strictly positive value, $C(\tau) > 0$, indicates persistence in the order flow.

\subsection{Long-range correlation in order flow}

In the early 2000s, two independent studies made a remarkable discovery. \cite{bouchaud2004fluctuations}, analysing the Paris Stock Exchange, and \cite{lillo2004long}, analysing the London Stock Exchange, both found that the autocorrelation function of the order sign sequence $\varepsilon(t)$ does not decay exponentially, but instead decays as a \textbf{power law}:
\begin{equation}\label{eq:lrc_empirical}
    C(\tau) \approx \frac{c_0}{\tau^{\gamma}}, \qquad \gamma \in (0, 1),
\end{equation}

The fact that $\gamma < 1$ has a profound mathematical consequence. The process has a non-integrable autocorrelation function: $\sum_{\tau=1}^{\infty} C(\tau) = \infty$. This divergence is the formal definition of a \textbf{long-memory} process. The correlation between distant events decays so slowly that arbitrarily distant past events still have a statistically significant influence on the present. This is different from the exponential decay of a short-memory process, where influence from the past fades over a characteristic timescale, which a long-memory process doesn't have.

Concretely, this means that buy orders tend to follow buy orders, and sell orders tend to follow sell orders, over time horizons as long as several weeks \citep{lillo2004long}. This persistence has been confirmed across many markets and asset classes: equities \citep{bouchaud2004fluctuations, lillo2004long}, foreign exchange, and cryptocurrency markets, making it a well-known fact of financial microstructure.

\vspace{3pt}
This property has consequences well beyond changepoint detection. A non-summable, power-law autocorrelation is exactly what characterises processes such as fractional Brownian motion with Hurst index $H > 1/2$ --- and such processes are not semimartingales, so that the classical Itô calculus does not apply and the theory of stochastic integration itself has to be rebuilt. This is the purpose of the \emph{calculus via regularization} introduced by \cite{russo_vallois_1993}, later extended to fractional Brownian motion through generalized covariations by \cite{gradinaru_russo_vallois_2003}.

\newpage
\section{Application to real market data}\label{sec:realdata}

We use high-frequency transaction data for two NASDAQ-listed stocks, Apple (AAPL) and Microsoft (MSFT), over the four months from August to November 2022, provided by the LOBSTER platform. For each stock, the raw record is the sequence of executed transactions, each characterised by its timestamp, its traded volume $V_k$ and its direction $\varepsilon_k \in \{+1, -1\}$, with $\varepsilon_k = +1$ for a buyer-initiated trade and $\varepsilon_k = -1$ for a seller-initiated one. Only transactions occurring during the continuous trading session are retained; opening and closing auctions are excluded.

The quantity we model is the \textbf{aggregated signed volume}. Rather than sampling in calendar time, which would make the series highly irregular because of the intraday variations of trading activity, we sample in \emph{volume clock}: transactions are grouped into non-overlapping buckets of $N$ consecutive executions, and the observation associated with bucket $t$ is the net signed volume traded within it,

\vspace{-19pt}
\begin{equation}
    x_t = \sum_{k=(t-1)N + 1}^{tN} \varepsilon_k \, V_k .
\end{equation}
A positive $x_t$ therefore indicates a bucket dominated by buyer-initiated volume, a negative one a bucket dominated by seller-initiated volume.

\vspace{3pt}\noindent\textit{NB:} to deal with the scale of the raw volumes without altering the mathematical topology of the posterior distribution, we apply a linear scaling factor of $10^{-3}$ to the series.

\vspace{-7pt}
\subsection{Calibration results}

We perform a hyper-parameter grid search over the 2022-08 dataset for each asset under two distinct selection criteria: minimizing the Mean Squared Error (MSE) of the one-step-ahead predictive posterior mean, and maximizing the cumulative Predictive log-Likelihood (Log-Lik).

The mean of the prior is fixed $\mu_0 = 0$. The variance is also fixed according to the global variance of a small sample of the dataset. So we perform the grid search over the variance of the normal prior $\sigma_0$ and the specific parameters of the algorithms.

\begin{table}[H]
    \centering \footnotesize
    \setlength{\tabcolsep}{10pt}
    \caption{Optimal parameters and resulting performance. \\ 
    MSFT ($N=542$): $2739$ points, mean $-0.062$, var $51.93$, model var $50$; \\
    AAPL ($N=887$): $2629$ points, mean $-5.777$, var $436.32$, model var $200$.}
    \label{tab:calib_all}
    \begin{tabular}{@{}lllccrr@{}}
        \toprule
        \textbf{Calibration} & \textbf{Asset} & \textbf{Model} & $\text{var}_0$ & \textbf{Parameters} & \textbf{MSE} & \textbf{Log-Lik} \\
        \midrule
        \multirow{9}{*}{\shortstack[l]{MSE-optimal \\ calibration}}
        & \multirow{3}{*}{MSFT}
            & BOCPD           & 30.0  & $H=1/6$                     & \textbf{0.8432} & $-9108.2$ \\
        &   & Pareto BOSD     & 20.0  & $\alpha=1.8,\ d_{\min}=3$   & 0.8552          & $-9103.8$ \\
        &   & Log-Normal BOSD & 30.0  & $s=1.0,\ \text{scale}=3$    & \textbf{0.8422} & $-9096.5$ \\
        \cmidrule(l){3-7}
        & \multirow{3}{*}{AAPL}
            & BOCPD           & 60.0  & $H=1/11$                    & 0.9213          & $-11679.4$ \\
        &   & Pareto BOSD     & 50.0  & $\alpha=1.05,\ d_{\min}=1$  & 0.9243          & $-11718.0$ \\
        &   & Log-Normal BOSD & 120.0 & $s=1.5,\ \text{scale}=1$    & \textbf{0.9075} & $-11474.0$ \\
        \midrule
        \multirow{9}{*}{\shortstack[l]{Log-Lik-optimal \\ calibration}}
        & \multirow{3}{*}{MSFT}
            & BOCPD           & 250.0 & $H=1/30$                    & 0.9577 & $-9007.4$ \\
        &   & Pareto BOSD     & 130.0 & $\alpha=1.05,\ d_{\min}=1$  & 0.9317 & $-9059.9$ \\
        &   & Log-Normal BOSD & 250.0 & $s=3.0,\ \text{scale}=2$    & 0.9116 & \textbf{$-8986.2$} \\
        \cmidrule(l){3-7}
        & \multirow{3}{*}{AAPL}
            & BOCPD           & 250.0 & $H=1/12$                    & 0.9814 & $-11323.3$ \\
        &   & Pareto BOSD     & 250.0 & $\alpha=1.05,\ d_{\min}=1$  & 0.9993 & $-11442.0$ \\
        &   & Log-Normal BOSD & 250.0 & $s=2.0,\ \text{scale}=1$    & 0.9209 & \textbf{$-11271.9$} \\
        \bottomrule
    \end{tabular}
\end{table}

\noindent\textbf{\textit{Calibration Summary:}} Log-Normal BOSD consistently achieves the best performance. This consistent dominance suggests that, among the duration laws considered, the log-normal's combination of a lighter initial hazard than the Pareto and heavier tail than the geometric distribution is best aligned with the persistence structure of these order-flow series.

\subsection{Out-of-Sample Comparison}

We now run the algorithms using their respective optimal calibrated parameters on every month for each asset, allowing us to perform out-of-sample comparisons. The corresponding run length posteriors are displayed in Appendix~\ref{app:rlpost}.

Since the true underlying change-points aren't known, the models can't be evaluated on their clustering capacity with the covering metric but we still have the mean squared error for forecasting capacity, and the predictive likelihood. For each dataset, we evaluate every model under both of its calibrated parameter sets obtained in the previous subsection.

\vspace{5pt}
\noindent\textbf{\textit{The Diebold-Mariano test:}} To assess whether the difference in forecasting accuracy between BOCPD and BOSD is statistically significant, we use the Diebold-Mariano test. For each time step, we compute the prediction error of both models and then look at the difference in their squared errors. If one model is truly better, this difference should be consistently positive or negative over time. This test checks whether the average of these loss differences is significantly different from zero by computing a test statistic and a p-value. A small p-value tells us the two models genuinely differ in accuracy, and the sign of the statistic tells us which one is better.

\begin{table}[H]
    \centering \small
    \setlength{\tabcolsep}{3.5pt}
    \renewcommand{\arraystretch}{0.95}
    \caption{Out-of-sample performance metrics, 2022-08 to 2022-11. For each asset, period and calibration criterion, the best value across the three models is in bold. $N_{cps}$: number of detected changepoints; $\overline{L}$: average regime length.}
    \label{tab:oos_all}
    \begin{tabular}{@{}ll rrrr rrrr@{}}
        \toprule
        & & \multicolumn{4}{c}{\textbf{MSE-calibrated}} & \multicolumn{4}{c}{\textbf{Log-Lik-calibrated}}\\
        \cmidrule(lr){3-6}\cmidrule(lr){7-10}
        \textbf{Period} & \textbf{Model} & $N_{cps}$ & $\overline{L}$ & MSE & Log-Lik & $N_{cps}$ & $\overline{L}$ & MSE & Log-Lik \\
        \midrule
        \multicolumn{10}{@{}l}{\textbf{MSFT} \textit{(N542)}}\\
        \multirow{3}{*}{2022-08}
          & BOCPD      & 86  & 31.5 & 0.8432 & $-9108.2$ & 63  & 42.8 & 0.9719 & $-9004.6$\\
          & Pareto     & 390 & 7.0  & 0.8552 & $-9103.8$ & 71  & 38.0 & \textbf{0.8675} & $-9070.1$\\
          & Log-Norm   & 66  & 40.9 & \textbf{0.8422} & \textbf{$-9096.5$} & 48 & 55.9 & 0.9475 & \textbf{$-8991.7$}\\
        \cmidrule(l){2-10}
        \multirow{3}{*}{2022-09}
          & BOCPD      & 178 & 15.1 & \textbf{0.8783} & $-9818.0$ & 99 & 27.0 & 1.1142 & $-9443.0$\\
          & Pareto     & 453 & 5.9  & 0.8922 & $-9962.4$ & 78 & 34.2 & \textbf{0.9162} & $-9803.6$\\
          & Log-Norm   & 118 & 22.7 & 0.8790 & \textbf{$-9811.6$} & 98 & 27.3 & 1.0505 & \textbf{$-9392.7$}\\
        \cmidrule(l){2-10}
        \multirow{3}{*}{2022-10}
          & BOCPD      & 153 & 20.3 & 0.8980 & $-10928.9$ & 100 & 30.9 & 1.0640 & $-10670.9$\\
          & Pareto     & 518 & 6.0  & 0.9141 & $-11042.9$ & 102 & 30.3 & \textbf{0.9183} & $-10926.0$\\
          & Log-Norm   & 117 & 26.4 & \textbf{0.8952} & \textbf{$-10918.4$} & 90 & 34.3 & 1.0210 & \textbf{$-10640.2$}\\
        \cmidrule(l){2-10}
        \multirow{3}{*}{2022-11}
          & BOCPD      & 89  & 22.0 & 0.8526 & $-6836.7$ & 56 & 34.8 & 0.9618 & $-6722.3$\\
          & Pareto     & 278 & 7.1  & 0.8507 & $-6837.5$ & 58 & 33.6 & \textbf{0.8632} & $-6797.5$\\
          & Log-Norm   & 82  & 23.9 & \textbf{0.8441} & \textbf{$-6814.5$} & 54 & 36.1 & 0.9375 & \textbf{$-6701.9$}\\
        \midrule
        \multicolumn{10}{@{}l}{\textbf{AAPL} \textit{(N887)}}\\
        \multirow{3}{*}{2022-08}
          & BOCPD      & 142 & 18.4 & 0.9250 & $-11522.9$ & 171 & 15.3 & 0.9649 & $-11388.0$\\
          & Pareto     & 54  & 47.8 & 0.9287 & $-11534.9$ & 61  & 42.4 & 0.9654 & $-11429.3$\\
          & Log-Norm   & 49  & 52.6 & \textbf{0.9119} & \textbf{$-11443.1$} & 104 & 25.0 & \textbf{0.9199} & \textbf{$-11353.9$}\\
        \cmidrule(l){2-10}
        \multirow{3}{*}{2022-09}
          & BOCPD      & 230 & 12.6 & 0.9040 & $-13071.6$ & 217 & 13.4 & 0.9279 & $-12966.9$\\
          & Pareto     & 97  & 29.7 & 0.9038 & $-13080.4$ & 111 & 26.0 & 0.9256 & $-12960.2$\\
          & Log-Norm   & 78  & 36.9 & \textbf{0.8984} & \textbf{$-12998.4$} & 143 & 20.2 & \textbf{0.8990} & \textbf{$-12912.7$}\\
        \cmidrule(l){2-10}
        \multirow{3}{*}{2022-10}
          & BOCPD      & 195 & 14.0 & 0.9761 & $-12847.3$ & 195 & 14.0 & 1.0451 & $-12564.6$\\
          & Pareto     & 35  & 76.3 & \textbf{0.9613} & $-12884.6$ & 77 & 35.2 & 1.0427 & $-12634.1$\\
          & Log-Norm   & 20  & 130.8& 0.9632 & \textbf{$-12700.2$} & 95 & 28.6 & \textbf{0.9780} & \textbf{$-12506.8$}\\
        \cmidrule(l){2-10}
        \multirow{3}{*}{2022-11}
          & BOCPD      & 124 & 12.7 & 0.8759 & $-6835.0$ & 130 & 12.1 & 0.8986 & $-6821.6$\\
          & Pareto     & 32  & 48.2 & 0.8787 & $-6839.0$ & 28  & 54.8 & \textbf{0.8724} & $-6815.7$\\
          & Log-Norm   & 13  & 113.5& \textbf{0.8751} & \textbf{$-6828.6$} & 66 & 23.7 & 0.8858 & \textbf{$-6813.3$}\\
        \bottomrule
    \end{tabular}
\end{table}

\newpage
\noindent\textbf{\textit{Summary of Algorithms' Performance:}} To facilitate a comprehensive review of the algorithms' performance, Table~\ref{tab:summary_mse} presents a consolidated view of the Global MSE across all out-of-sample datasets, under each of the two calibration criteria. For each dataset and criterion, the lowest MSE (indicating the most accurate forecast) is highlighted in bold.

\begin{table}[H]
    \centering \footnotesize
    \setlength{\tabcolsep}{5pt}
    \renewcommand{\arraystretch}{0.95}
    \caption{Summary of Global MSE across all out-of-sample datasets. \\ P-BOSD: Pareto BOSD; LN-BOSD: Log-Normal BOSD.}
    \label{tab:summary_mse}
    \begin{minipage}[t]{0.48\textwidth}
    \centering
    \textit{MSE-calibrated}\\[2pt]
    \begin{tabular}{@{}lccc@{}}
        \toprule
        \textbf{Dataset} & \textbf{BOCPD} & \textbf{P-BOSD} & \textbf{LN-BOSD} \\
        \midrule
        MSFT 2022-08 & 0.8432 & 0.8552 & \textbf{0.8422} \\
        MSFT 2022-09 & \textbf{0.8783} & 0.8922 & 0.8790 \\
        MSFT 2022-10 & 0.8980 & 0.9141 & \textbf{0.8952} \\
        MSFT 2022-11 & 0.8526 & 0.8507 & \textbf{0.8441} \\
        \cmidrule(r){1-4}
        AAPL 2022-08 & 0.9250 & 0.9287 & \textbf{0.9119} \\
        AAPL 2022-09 & 0.9040 & 0.9038 & \textbf{0.8984} \\
        AAPL 2022-10 & 0.9761 & \textbf{0.9613} & 0.9632 \\
        AAPL 2022-11 & 0.8759 & 0.8787 & \textbf{0.8751} \\
        \bottomrule
    \end{tabular}
    \end{minipage}\hfill
    \begin{minipage}[t]{0.48\textwidth}
    \centering
    \textit{Log-Lik-calibrated}\\[2pt]
    \begin{tabular}{@{}lccc@{}}
        \toprule
        \textbf{Dataset} & \textbf{BOCPD} & \textbf{P-BOSD} & \textbf{LN-BOSD} \\
        \midrule
        MSFT 2022-08 & 0.9719 & \textbf{0.8675} & 0.9475 \\
        MSFT 2022-09 & 1.1142 & \textbf{0.9162} & 1.0505 \\
        MSFT 2022-10 & 1.0640 & \textbf{0.9183} & 1.0210 \\
        MSFT 2022-11 & 0.9618 & \textbf{0.8632} & 0.9375 \\
        \cmidrule(r){1-4}
        AAPL 2022-08 & 0.9649 & 0.9654 & \textbf{0.9199} \\
        AAPL 2022-09 & 0.9279 & 0.9256 & \textbf{0.8990} \\
        AAPL 2022-10 & 1.0451 & 1.0427 & \textbf{0.9780} \\
        AAPL 2022-11 & 0.8986 & \textbf{0.8724} & 0.8858 \\
        \bottomrule
    \end{tabular}
    \end{minipage}
\end{table}

\section{From BOCPD to BOCPDMS: Model Selection}\label{sec:bocpdms}

\subsection{The problem with a single model}

In the standard BOCPD framework, we fix one model. For instance, we might decide upfront that ``the data within each regime is iid Gaussian with unknown mean.''

This is a very strong assumption, especially for multivariate financial data. The data could be far more complex than that and we might not be sure about the right structure. In the standard BOCPD, we pick one such model before running the algorithm and if we guess wrong, the algorithm's performance will suffer.

The BOCPDMS framework \citep{knoblauch2018bocpdms} addresses this by introducing a \textbf{model universe} $\mathcal{M} = \{m_1, \ldots, m_M\}$: a finite collection of candidate models. Instead of betting on a single model, the algorithm maintains a probability distribution over all models in $\mathcal{M}$ and lets the data decide, in real time, which one fits best.

\subsection{The new latent variable: the model index}

Recall that in standard BOCPD, the latent state at time $t$ is just the run length $r_t$. In BOCPDMS, the latent state becomes a \textbf{pair} $(r_t, m_t)$, where $r_t \in \{0, 1, 2, \ldots\}$ is the run length, the number of time steps since the last changepoint, and $m_t \in \mathcal{M}$ is the model index.

Basically, instead of yielding a 1D posterior $P(r_t \mid x_{1:t})$ as in BOCPD, the algorithm now provides a \textbf{2D posterior} $P(r_t, m_t \mid x_{1:t})$, a matrix where each entry tells us: ``given all the data observed so far, what is the probability that the current regime has been running for $r$ steps and is described by model $m$?''

\subsection{The model prior}

The model prior $q(m)$ is a distribution over $\mathcal{M}$ --- in the simplest case uniform, $q(m) = 1/M$. We introduce it through the generative perspective, then be precise about the inference process.

\vspace{5pt}
\textbf{Generative perspective.} If we simulate data, the model index evolves as a Markov chain coupled to the run length:
\vspace{-10pt}
\begin{equation}
    q(m_t \mid m_{t-1}, r_t) = \begin{cases}
        \mathbf{1}_{m_{t-1}}(m_t) & \text{if } r_t = r_{t-1} + 1 \quad \text{(no CP)} \\
        q(m_t) & \text{if } r_t = 0 \quad \text{(CP)}.
    \end{cases}
\end{equation}
In words: within a regime the active model is frozen (the indicator forces $m_t = m_{t-1}$); at a changepoint a new model is drawn from the model prior $q(m)$.

\vspace{5pt}
\textbf{What inference actually does.} The online algorithm never draws anything. It maintains $p(r_t, m_t \mid x_{1:t})$ over all pairs, and the transition rule steers the probability mass:
\begin{itemize}[itemsep=-2pt]
    \item \textbf{Growth branch ($r_t = r_{t-1}+1$):} the model index is carried over unchanged.
    \item \textbf{Changepoint branch ($r_t = 0$):} the probability mass is redistributed over the model universe in proportion to $q(m_t)$.
\end{itemize}

\noindent So $q(m)$ is not a mechanism for choosing a model; it is the weight with which each model is seeded at every potential changepoint. The algorithm does the actual selecting, by concentrating posterior mass on the model that explains the new regime best.

\subsection{The central recursive equation}

As in BOCPD, the heart of BOCPDMS is the message-passing recursive equation that updates, at each time step, the joint distribution which is $p(x_{1:t}, r_t, m_t)$ here.

\begin{adjustbox}{max width=\linewidth, center}
    $ p(x_{1:t}, r_t, m_t) = \sum_{m_{t-1}} \sum_{r_{t-1}} \Bigl\{ \underbrace{f_{m_t}(x_t \mid x_{1:(t-1)}, r_t)}_{\text{UPM for model } m_t} \; \underbrace{q(m_t \mid x_{1:(t-1)}, r_t, m_{t-1})}_{\text{model transition term}} \; \underbrace{p(r_t \mid r_{t-1})}_{\text{hazard rate}} \; \underbrace{p(x_{1:(t-1)}, r_{t-1}, m_{t-1})}_{\text{previous message}} \Bigr\} $
\end{adjustbox}

\begin{itemize}[itemsep=-2pt]
    \item $f_{m_t}(x_t \mid x_{1:(t-1)}, r_t)$ is the \textbf{Underlying Predictive Model} (UPM) for model $m_t$. ``Given that the regime has run length $r_t$ and is described by model $m_t$, how likely is the new observation $x_t$?'' As in the BOCPD framework, this is where conjugacy matters.
    \item $q(m_t \mid x_{1:(t-1)}, r_t, m_{t-1})$ is the \textbf{model transition term}: ``Keep the same model if no changepoint, and if there is a changepoint, give every model a given starting weight.''
    \item $p(r_t \mid r_{t-1})$ is the \textbf{hazard rate}.
    \item $p(x_{1:(t-1)}, r_{t-1}, m_{t-1})$ is the \textbf{message}, the joint distribution from the previous time step.
\end{itemize}

\noindent\textit{NB:} When $|\mathcal{M}| = 1$ (a single model), the sums over $m_t$ disappear, the model transition term becomes trivially 1, and we recover the recursion of the standard BOCPD.

\vspace{5pt}\noindent The recursion splits naturally into two cases.

\textbf{Growth probability ($r_t = r_{t-1} + 1$):} If the regime continues, the hazard rate is $1 - H(r_{t-1} + 1) = 1 - H(r_t)$ and the model stays the same ($m_t = m_{t-1}$):
\vspace{-5pt}
$$ p(x_{1:t}, r_t = r_{t-1}+1, m_t) = f_{m_t}(x_t \mid x_{1:(t-1)}, r_t) \cdot (1 - H(r_t)) \cdot p(x_{1:(t-1)}, r_{t-1}, m_t) $$

\vspace{-5pt}\textbf{Changepoint probability ($r_t = 0$):} If a changepoint occurs, the total probability mass is distributed across the new candidate models according to the prior $q(m_t)$:
\vspace{-5pt}
$$ p(x_{1:t}, r_t = 0, m_t) = f_{m_t}(x_t \mid x_{1:(t-1)}, r_t) \cdot q(m_t) \cdot \sum_{m_{t-1}} \sum_{r_{t-1}} \Big\{ H(r_{t-1}+1) \cdot p(x_{1:(t-1)}, r_{t-1}, m_{t-1}) \Big\} $$

\vspace{-8pt} Once we have the joint distribution $p(x_{1:t}, r_t, m_t)$ for all pairs of $(r_t, m_t)$, we compute the \textbf{evidence} $ p(x_{1:t}) = \sum_{m_t \in \mathcal{M}} \sum_{r_t = 0}^{t} p(x_{1:t}, r_t, m_t) $ which serves as the normalizing constant to calculate the proper posterior distribution from the joint distribution.

\subsection{The posteriors: what the algorithm gives us}

\hspace{1cm}\textbf{(a) The joint model-and-run-length posterior:} $p(r_t, m_t \mid x_{1:t}) = p(x_{1:t}, r_t, m_t)/p(x_{1:t})$. This is the full 2D posterior: an entry $(r, m)$ tells us the probability that the current regime has been running for $r$ steps and is described by model $m$, given everything we have observed.

\textbf{(b) The model posterior:} $p(m_t \mid x_{1:t}) = \sum_{r_t} p(r_t, m_t \mid x_{1:t})$. This is the marginal probability of each model, regardless of how long the current regime has been running: ``which model best describes the data?''

\textbf{(c) The global run-length distribution:} $p(r_t \mid x_{1:t}) = \sum_{m_t \in \mathcal{M}} p(r_t, m_t \mid x_{1:t})$. This is the direct analog of the BOCPD posterior $P(r_t \mid x_{1:t})$, and it can be used for changepoint detection in the same way.

\subsection{Prediction: Bayesian Model Averaging}

\cite{knoblauch2018bocpdms} call the one-step-ahead point forecast the \textbf{recursive forecast} $\widehat{\mathbf{x}}_{t+1} = \mathbb{E}(\mathbf{x}_{t+1} \mid \mathbf{x}_{1:t})$. It is exactly the object we called the \textbf{predictive mean} in the BOCPD framework, and we will keep calling it this way. The algorithm yields the posterior distribution from which we get the \textbf{posterior predictive distribution}:
\vspace{-5pt}
\begin{equation}\label{eq:post_pred}
    p(\mathbf{x}_{t+1} \mid \mathbf{x}_{1:t}) = \sum_{r_t} \sum_{m_t \in \mathcal{M}} p(\mathbf{x}_{t+1} \mid \mathbf{x}_{1:t}, r_t, m_t) \cdot p(r_t, m_t \mid \mathbf{x}_{1:t})
\end{equation}

\vspace{-8pt} Each hypothesis $(r_t, m_t)$ produces its own prediction, and the final prediction is a weighted average of all of them, weighted by how plausible each hypothesis is. To get a point forecast, we compute the predictive mean by taking the expectation of \eqref{eq:post_pred}:
\begin{align}
    \hat{\boldsymbol{\mu}}_{t+1} &= \mathbb{E}\!\left[\mathbf{x}_{t+1} \mid \mathbf{x}_{1:t}\right]
    = \int \mathbf{x}\; \sum_{r_t, m_t} p(\mathbf{x} \mid \mathbf{x}_{1:t}, r_t, m_t)\, p(r_t, m_t \mid \mathbf{x}_{1:t})\; d\mathbf{x} \nonumber \\
    &= \sum_{r_t, m_t} \underbrace{\mathbb{E}\!\left[\mathbf{x}_{t+1} \mid \mathbf{x}_{1:t}, r_t, m_t\right]}_{\text{mean of hypothesis } (r_t, m_t)} \cdot\; p(r_t, m_t \mid \mathbf{x}_{1:t}).
    \label{eq:pred_mean}
\end{align}

\textbf{Transposing the BOCPD predictive mean.} Recall the BOCPD case, where the predictive mean was $\hat{\mu}_t = \sum_{r_t} \mu_{r_t}\, P(r_t \mid x_{1:t})$: a posterior-weighted average of the regime-specific means $\mu_{r_t}$, each of which came out of the Gaussian UPM $\mathcal{N}(\mu_{r_t}, \sigma^2_{r_t} + \sigma^2)$.
\begin{align}
    \mathbb{E}\!\left[x_{t+1} \mid x_{1:t}\right]
    &= \int x \sum_{r_t} P(x \mid r_t, x_t^{(r)})\, P(r_t \mid x_{1:t})\, dx \nonumber \\
    &= \sum_{r_t} P(r_t \mid x_{1:t}) \underbrace{\int x\, P(x \mid r_t, x_t^{(r)})\, dx}_{\text{mean of } \mathcal{N}(\mu_{r_t},\, \sigma^2 + \sigma^2_{r_t})\ =\ \mu_{r_t}}
    \;=\; \sum_{r_t} \mu_{r_t}\, P(r_t \mid x_{1:t}).
\end{align}

\noindent In both frameworks the hypothesis-specific parameter is learned online and converges within a regime: $\mu_{r_t}$ starts at the prior mean $\mu_0$ and converges to the true regime level, exactly as $\hat{\mathbf{c}}_r$ starts at $\mathbf{0}$ and converges to the true $\mathbf{A}_l$. But there is a difference in what drives the forecast once learning has settled. In BOCPD, $\mu_{r_t}$ is essentially the running mean of the regime and it progressively flattens out onto the true level. In the BVAR, $\hat{\mathbf{c}}_r$ likewise settles, but the forecast $\hat{\mathbf{c}}_r^T\boldsymbol{\phi}_{t+1}$ never flattens, so the prediction keeps tracking the data no matter how long the regime has run. The BVAR forecast thus retains a permanent source of variation that the BOCPD forecast loses. This is the difference between predicting a \emph{level} and predicting a \emph{dynamic}.

\vspace{5pt}\textbf{The BVAR case.} Concretely, the mean of a multivariate Student-$t$ equals its location parameter as soon as $\nu_r = 2a_r > 1$: $\mathbb{E}[\mathbf{x}_{t+1} \mid \mathbf{x}_{1:t}, r_t, m_t] = \hat{\mathbf{c}}_r^{T}\,\boldsymbol{\phi}_{t+1}$.
\vspace{-5pt}
\begin{proposition}[Predictive mean]\label{prop:pred_mean_bocpdms}
The predictive mean, which is the first moment of the posterior predictive \eqref{eq:post_pred}, writes
\vspace{-8pt}
\begin{equation}
    \hat{\boldsymbol{\mu}}_{t+1} \;=\; \sum_{r_t} \sum_{m_t \in \mathcal{M}} \hat{\mathbf{c}}_r^{T}\,\boldsymbol{\phi}_{t+1} \cdot\; p(r_t, m_t \mid \mathbf{x}_{1:t}),
\end{equation}
\vspace{-5pt} a vector in $\mathbb{R}^S$: one forecast per component of the series.
\end{proposition}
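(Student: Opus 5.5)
The proof takes the posterior predictive \eqref{eq:post_pred} as a finite mixture and computes its first moment by linearity. This is exactly the argument already written out in \eqref{eq:pred_mean} and in the BOCPD transposition that follows it.

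\textbf{Step 1: exchange integral and sum.} We write
\[
\hat{\boldsymbol{\mu}}_{t+1}=\int \mathbf{x}\, p(\mathbf{x}\mid \mathbf{x}_{1:t})\,d\mathbf{x}
\]
and substitute the mixture \eqref{eq:post_pred}. At any finite time $t$ the run length satisfies $r_t\in\{0,\dots,t\}$ and $\mathcal{M}$ is finite, so the double sum has finitely many terms. The integral can therefore be moved inside the sum without any convergence argument, provided each component integral is finite. The posterior weights $p(r_t,m_t\mid \mathbf{x}_{1:t})$ do not depend on $\mathbf{x}$ and factor out. This yields the second line of \eqref{eq:pred_mean}, a weighted sum of component means $\mathbb{E}[\mathbf{x}_{t+1}\mid \mathbf{x}_{1:t},r_t,m_t]$.

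\textbf{Step 2: identify each component mean.} For each hypothesis $(r_t,m_t)$, the UPM of the conjugate BVAR is a multivariate Student-$t$ with location $\hat{\mathbf{c}}_r^{T}\boldsymbol{\phi}_{t+1}$ and $\nu_r=2a_r$ degrees of freedom. A multivariate Student-$t$ has a finite first moment equal to its location exactly when $\nu_r>1$. This is checked by writing it as a Gaussian scale mixture, $\mathbf{x}\mid w\sim\mathcal{N}(\text{loc}, \Sigma/w)$ with $w\sim\text{Gamma}(\nu/2,\nu/2)$, and applying the tower property: $\mathbb{E}[\mathbf{x}]=\mathbb{E}[\mathbb{E}[\mathbf{x}\mid w]]=\text{loc}$. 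Integrability holds because $\mathbb{E}[w^{-1/2}]<\infty$ iff $\nu>1$. The regressor vector $\boldsymbol{\phi}_{t+1}$ consists of lagged observations already in $\mathbf{x}_{1:t}$, so it is deterministic given the conditioning and the location is a fixed vector in $\mathbb{R}^S$. Substituting these means into Step 1 gives the stated formula. Each summand is an $S$-vector, so $\hat{\boldsymbol{\mu}}_{t+1}\in\mathbb{R}^S$. The index $r$ in $\hat{\mathbf{c}}_r$ should be read as the posterior coefficient of model $m_t$ at run length $r_t$.

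\textbf{Main obstacle: the moment condition $\nu_r=2a_r>1$ for every hypothesis.} This includes the changepoint branch $r_t=0$, where the UPM is the prior predictive. Since $a_r=a_0+r/2$ under the conjugate update, it suffices to require $a_0>1/2$. I would state this as the standing assumption on the prior hyperparameters, in the same way the paper assumes it just before the statement. Without this condition some components would have undefined means, and the exchange in Step 1 would fail.
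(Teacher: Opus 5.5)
Your proposal is correct and follows the paper's own argument. You substitute the finite mixture \eqref{eq:post_pred}, exchange the finite sum with the integral as in \eqref{eq:pred_mean}, and use the fact that each Student-$t$ UPM with $\nu_r = 2a_r > 1$ has mean equal to its location $\hat{\mathbf{c}}_r^{T}\boldsymbol{\phi}_{t+1}$; your scale-mixture argument justifies this last step in more detail than the paper does.

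One small correction on the obstacle you flag. With the paper's convention, a run length $r$ indexes a regime of $r+1$ observations and $a_r = a_0 + S(r+1)/2$, not $a_0 + r/2$. This gives $2a_r \geq 2a_0 + S > 1$ for any proper prior $a_0 > 0$, so the moment condition is automatic. Your requirement $a_0 > 1/2$ is sufficient, but it is only needed if you add an explicit prior-predictive branch, which \eqref{eq:post_pred} does not contain.
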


\section{Bayesian parameter learning inside BOCPDMS}\label{sec:paramlearning}

In the standard BOCPD in which we suppose the data is distributed over a Gaussian with unknown mean $\mu$ and known variance $\sigma^2$, the unknown mean is learned on-the-fly via conjugate Bayesian updating. The posterior parameters $\mu_r$ and $\sigma_r^2$ depend on the run length $r$, and the UPM computes a different predictive distribution for each run length hypothesis:
$$p(x_t \mid r_{t-1}, x_{t-1}^{(r)}) = \mathcal{N}(x_t \mid \mu_{r_{t-1}}, \sigma_{r_{t-1}}^2 + \sigma^2)$$
where $\mu_{r_{t-1}}$ is the posterior mean of the regime parameter and $\sigma_{r_{t-1}}^2$ is the posterior variance, obtained by observing data points within the current regime (more data $\Rightarrow$ less uncertainty).

This is the \textbf{Bayesian learning} that makes BOCPD powerful: the algorithm does not just detect changepoints, it simultaneously learns the parameters of the regime online.

\subsection{Our multivariate model: the Bayesian VAR}

\cite{knoblauch2018bocpdms} propose filling the model universe $\mathcal{M}$ with \textbf{Bayesian Vector Autoregressions (BVAR)}, which are much richer than simple AR(1) and naturally include the Bayesian learning. A BVAR model of lag $L$ for an $S$-dimensional time series $\mathbf{x}_t = (x_t^{(1)}, \ldots, x_t^{(S)})^T$ takes the form:

\vspace{5pt}

\noindent
\begin{minipage}{0.49\textwidth}
    \begin{equation}
        \sigma^2 \sim \text{InverseGamma}(a, b) \label{eq:bvar1}
    \end{equation}
\end{minipage}%
\hfill
\begin{minipage}{0.49\textwidth}
    \begin{equation}
        \boldsymbol{\varepsilon}_t \mid \sigma^2 \sim \mathcal{N}(\mathbf{0},\ \sigma^2 \cdot \boldsymbol{\Omega}) \label{eq:bvar2}
    \end{equation}
\end{minipage}

\noindent
\begin{minipage}{0.49\textwidth}
    \begin{equation}
        \mathbf{c} \mid \sigma^2 \sim \mathcal{N}(\boldsymbol{\beta}_0=0,\ \sigma^2 \cdot \mathbf{V}_c) \label{eq:bvar3}
    \end{equation}
\end{minipage}%
\hfill
\begin{minipage}{0.49\textwidth}
    \begin{equation}
        \mathbf{x}_t = \boldsymbol{\alpha} + \mathbf{B}\mathbf{Z}_t + \sum_{l=1}^{L} \mathbf{A}_l \mathbf{x}_{t-l} + \boldsymbol{\varepsilon}_t \label{eq:bvar4}
    \end{equation}
\end{minipage}

\begin{itemize}[itemsep=-4pt]
    \item \textbf{Equation~\eqref{eq:bvar4}} is the observation equation: the current observation $\mathbf{x}_t$ is a linear combination of the $L$ most recent past observations through the coefficient matrices $\mathbf{A}_1, \ldots, \mathbf{A}_L$, plus exogenous variables $\mathbf{Z}_t$, plus an intercept $\boldsymbol{\alpha}$, plus noise.
    \item \textbf{Equation~\eqref{eq:bvar3}} is the \textbf{prior on the parameters}. All the model parameters (intercept, exogenous coefficients, autoregressive coefficients) are stacked into a single big vector $\mathbf{c}$, and we put a Gaussian prior on it.
    \item \textbf{Equation~\eqref{eq:bvar2}} is the noise model: the innovation $\boldsymbol{\varepsilon}_t$ is Gaussian with covariance $\sigma^2 \boldsymbol{\Omega}$, where $\boldsymbol{\Omega}$ is a \emph{known} diagonal matrix. Being diagonal, it rules out contemporaneous (same-time) correlation between locations; a single scalar $\sigma^2$ then sets the overall noise level. \cite{knoblauch2018bocpdms} keep $\boldsymbol{\Omega}$ general-diagonal in the model definition, but the NIG variant used in their reference implementation fixes $\boldsymbol{\Omega} = \mathbf{I}_S$, so all locations share the same noise variance $\sigma^2$.
    \item \textbf{Equation~\eqref{eq:bvar1}} is the prior on $\sigma^2$: an Inverse-Gamma distribution with shape $a$ and scale $b$. With the Gaussian prior on $\mathbf{c}$, it forms a Normal-Inverse-Gamma conjugate setup.
\end{itemize}

\textbf{The matrices $\mathbf{A}_l$ are unknown and learned from the data.} In the BVAR, the algorithm observes data within a regime and progressively refines its estimate of $\mathbf{A}_l$ (and $\sigma^2$) via Bayesian updating. The run length $r_t$ matters: a regime that has lasted 5 time steps has a very rough estimate of $\mathbf{A}_l$, while one that has lasted 200 has a precise estimate.

\textbf{Our setting.} We assume $\mathbf{Z}_t = \mathbf{0}$, $\boldsymbol{\alpha} = \mathbf{0}$ (no exogenous variables and no intercept), so the only parameters are the autoregressive matrices $\mathbf{A}_1, \ldots, \mathbf{A}_L$. We gather them into a single $k \times S$ coefficient matrix $\mathbf{c}$ (with $k = LS$), whose transpose stacks the $\mathbf{A}_l$ side by side: $\mathbf{c}^T = (\mathbf{A}_1, \ldots, \mathbf{A}_L) \in \mathbb{R}^{S \times LS}$. Recovering the $\mathbf{A}_l$ from $\mathbf{c}$ is just reading off blocks of its transpose.

\vspace{5pt}\noindent\textit{NB:} \cite{knoblauch2018bocpdms} write $\mathbf{c}$ as a vector stacking the vectorised coefficients, $\mathbf{c} = (\text{vec}(\mathbf{A}_1)^T, \ldots, \text{vec}(\mathbf{A}_L)^T)^T$, so that recovering $\mathbf{A}_l$ is a reshape. We instead use the equivalent matrix layout of their reference implementation, in which $\mathbf{c}$ is $k \times S$ and $\mathbf{A}_l$ is read off by transposition. The two describe the same coefficients; we adopt the matrix form because it matches the code and makes the update $\hat{\mathbf{c}} = \mathbf{F}\mathbf{W}$ read directly as a matrix product.

\subsection{Why conjugacy matters (again)}

The reason we choose the BVAR formulation is \textbf{conjugacy}. Recall that conjugacy is what gives us a closed-form Bayesian update for BOCPD: the posterior has the same functional form as the prior, and updating just means adjusting a few numbers (the sufficient statistics).

In the univariate BOCPD case ($\mu$ unknown and $\sigma^2$ known), we had the prior $\mu \sim \mathcal{N}(\mu_0, \sigma_0^2)$, the likelihood $x_t \mid \mu \sim \mathcal{N}(\mu, \sigma^2)$, and the posterior $\mu \mid x_{1:N} \sim \mathcal{N}(\mu_N, \sigma_N^2)$ with simple update rules. The BVAR generalizes this to the multivariate regression setting:
\begin{itemize}[itemsep=-2pt]
    \item \textbf{Prior:} $\mathbf{c} \mid \sigma^2 \sim \mathcal{N}(\mathbf{0}, \ \sigma^2 \mathbf{V}_c)$ and $\sigma^2 \sim \text{InverseGamma}(a, b)$.
    \item \textbf{Likelihood:} $\mathbf{x}_t \mid \mathbf{c}, \sigma^2 \sim \mathcal{N}(\mathbf{X}_t \mathbf{c}, \ \sigma^2 \boldsymbol{\Omega})$ where $\mathbf{X}_t$ is the regressor matrix (defined below).
    \item \textbf{Posterior:} Normal-Inverse-Gamma, with updated sufficient statistics.
\end{itemize}

This conjugacy guarantees that the Bayesian update can be done in \textbf{closed form}. The UPM also has a closed form --- it is a multivariate Student-$t$ distribution. All updates are \textbf{additive}. The full derivation, valid for any known SPD $\boldsymbol{\Omega}$, is given in Appendix~\ref{app:spd}.

\subsection{The Bayesian update for BVAR parameters}

Now we detail what the sufficient statistics are. This is the multivariate analog of the precision and mean update rules we derived for the univariate Gaussian in the BOCPD section.

Suppose the current regime has run length $r$, so it contains the $r+1$ observations $\mathbf{x}_{t-r}, \ldots, \mathbf{x}_t$. At each step $i$, the \textbf{regressor vector} $\boldsymbol{\phi}_i \in \mathbb{R}^{k}$ ($k = LS$) stacks the $L$ lagged observations: $\boldsymbol{\phi}_i = (\mathbf{x}_{i-1}^T, \ldots, \mathbf{x}_{i-L}^T)^T \in \mathbb{R}^{LS}$.

We then define two matrices: the \textbf{response matrix} $\mathbf{Y}$ and the \textbf{regressor matrix} $\mathbf{X}$:
$$\mathbf{Y}_{(t-r):t} = \begin{pmatrix} \mathbf{x}_{t-r}^T \\ \vdots \\ \mathbf{x}_t^T \end{pmatrix} \in \mathbb{R}^{(r+1)\times S},
\qquad
\mathbf{X}_{(t-r):t} = \begin{pmatrix} \boldsymbol{\phi}_{t-r}^T \\ \vdots \\ \boldsymbol{\phi}_t^T \end{pmatrix} \in \mathbb{R}^{(r+1)\times k}.$$

Row $i$ of $\mathbf{X}$ holds the predictors, row $i$ of $\mathbf{Y}$ the target: $\mathbf{X}$ is essentially $\mathbf{Y}$ shifted by one step, since each observation first plays the role of a response, then becomes a regressor for later steps. Dropping the $(t-r){:}t$ subscripts for readability, the sufficient statistics are:
\begin{align}
    \mathbf{F}(r, t) &= \left(\mathbf{X}^T \mathbf{X} + \mathbf{V}_c^{-1}\right)^{-1} \label{eq:F_update}\\
    \mathbf{W}(r, t) &= \mathbf{X}^T \mathbf{Y} \label{eq:W_update}
\end{align}

When a new observation arrives, we do not need to recompute everything from scratch. The updates are incremental:
\begin{itemize}[itemsep=-4pt]
    \item $\mathbf{W}(r, t) = \mathbf{W}(r-1, t-1) + \boldsymbol{\phi}_t\, \mathbf{x}_t^T$ --- a simple additive update.
    \item $\mathbf{P}(r, t) = \mathbf{P}(r-1, t-1) + \boldsymbol{\phi}_t\, \boldsymbol{\phi}_t^T$, where $\mathbf{P}(r, t) = \mathbf{F}(r, t)^{-1}$ is the precision.
\end{itemize}

The principle is the same: accumulate evidence, refine the estimate. Each data point tightens the posterior, whether it is over a scalar mean $\mu$ or a matrix of autoregressive coefficients $\mathbf{A}_l$.

\textbf{From sufficient statistics to the tracked coefficients.} The MAP estimate of the coefficient matrix is a deterministic function of the two statistics above:
$$\hat{\mathbf{c}}(r, t) = \mathbf{F}(r, t)\, \mathbf{W}(r, t) = \big(\mathbf{X}^T\mathbf{X} + \mathbf{V}_c^{-1}\big)^{-1}\mathbf{X}^T\mathbf{Y},$$
a $k \times S$ matrix whose transpose recovers the autoregressive coefficients, $\hat{\mathbf{c}}^T = (\hat{\mathbf{A}}_1, \ldots, \hat{\mathbf{A}}_L)$. This is a precision-weighted blend of the prior mean (zero) and the data. When $r$ is small, $\mathbf{V}_c^{-1}$ dominates and $\hat{\mathbf{A}}_l \approx \mathbf{0}$; as $r$ grows, $\mathbf{X}^T\mathbf{X}$ dominates and $\hat{\mathbf{c}}$ approaches the unregularised least-squares fit (the prior washes out).

$\hat{\mathbf{c}}(r,t)$ is not maintained recursively: the conjugate update only ever touches the sufficient statistics $(\mathbf{P}, \mathbf{W})$, and $\hat{\mathbf{c}}$ is recomputed from them. It is precisely this quantity, at the most likely run length $r^\ast_t$, that we track through time.

\textbf{Learning the variance $\sigma^2$.} The same statistics also drive the variance. The Normal-Inverse-Gamma prior is conjugate, so the posterior of $\sigma^2$ given run length $r$ is again Inverse-Gamma, $\sigma^2 \mid r, t \sim \text{InverseGamma}(a_r, b_r)$. The updated shape and scale are
$$a_r = a_0 + \frac{S(r+1)}{2}, \qquad
b_r = b_0 + \tfrac{1}{2}\Big(\mathrm{YY}_r + \boldsymbol{\beta}_0^T\mathbf{V}_c^{-1}\boldsymbol{\beta}_0 - \hat{\mathbf{c}}_r^T\mathbf{P}_r\hat{\mathbf{c}}_r\Big),$$
where $\mathrm{YY}_r$ is the response cross-product and $\boldsymbol{\beta}_0$ the prior coefficient mean. In our setting the prior is centred at zero ($\boldsymbol{\beta}_0 = \mathbf{0}$), so the middle term drops and the scale simplifies to
$$b_r = b_0 + \tfrac{1}{2}\Big(\mathrm{YY}_r - \operatorname{tr}\big(\boldsymbol{\Omega}^{-1}\mathbf{W}_r^T\mathbf{P}_r^{-1}\mathbf{W}_r\big)\Big),$$
which is the exact form implemented in the reference code. Both $a_r$ and $b_r$ are pure functions of the running sufficient statistics, so $\sigma^2$ is learned online at no extra cost.

The full posterior of $\sigma^2$ is never collapsed to a point during inference --- it is marginalised out to produce the Student-$t$ UPM of the next subsection, which is the whole benefit of NIG conjugacy. However, if we want to visualise how the variance is learned within a regime, a natural summary is the posterior mean of the Inverse-Gamma read at the most likely run length $r^\ast_t$: $\widehat{\sigma^2}(r,t) = b_r/(a_r - 1)$ for $a_r > 1$.

\subsection{The UPM with Bayesian parameter learning}

At time $t$, the algorithm simultaneously maintains multiple hypotheses about the run length: ``maybe $r_t = 0$ (a changepoint just happened), maybe $r_t = 1$, \ldots, maybe $r_t = t$ (no changepoint has ever occurred).'' For each of these hypotheses, the parameters of the model are \textit{different}. If $r_t = 2$, the posterior mean is based on only 2 observations and is very uncertain. If $r_t = 100$, the posterior mean is very precise.

This is why, in the BOCPD, the posterior mean $\mu_r$ and variance $\sigma_r^2$ are indexed by $r$ (one entry per possible run length). When the UPM is evaluated, it returns a vector of $t$ different probabilities --- one for each run length hypothesis. In the univariate BOCPD:
$$ p(x_t \mid r_{t-1}, x_{t-1}^{(r)}) = \mathcal{N}(x_t \mid \mu_{r_{t-1}}, \sigma_{r_{t-1}}^2 + \sigma^2) \qquad \text{for } r = 0, 1, \ldots, t-1$$

For the BVAR, we maintain a collection of sufficient statistics, $\{\mathbf{F}(r, t), \mathbf{W}(r, t), a_r, b_r\}$, one for each possible run length $r$. At each time step, we compute a different predictive probability for each $r$ using the posterior parameters specific to that run length: $f_m(\mathbf{x}_t \mid \mathbf{x}_{1:(t-1)}, r_t)$.
\newpage
Thanks to conjugacy, the UPM has a closed-form solution: it is a \textbf{multivariate Student-$t$} distribution. Writing $\boldsymbol{\phi}_t$ for the regressor vector that predicts $\mathbf{x}_t$, and writing $h_r = \boldsymbol{\phi}_t^{T}\mathbf{F}(r,t)\,\boldsymbol{\phi}_t$, its degrees of freedom, location and scale depend on the sufficient statistics accumulated over the $r+1$ observations of the current regime:
\vspace{-8pt}
\begin{equation}\label{eq:student_t_upm}
    f_m(\mathbf{x}_t \mid \mathbf{x}_{1:(t-1)}, r_t) = \text{Student-}t_{2a_r}\!\left(\mathbf{x}_t \;\Big|\; \hat{\mathbf{c}}_r^T \boldsymbol{\phi}_t, \;\; \frac{b_r}{a_r}\,(1 + h_r)\,\boldsymbol{\Omega}\right).
\end{equation}

\vspace{-8pt}
\textit{NB:} The Student-$t$ shape is a consequence of marginalising the unknown variance $\sigma^2$, not of the multivariate setting. If $\sigma^2$ is known and only the mean $\mathbf{c}$ is integrated out, Normal--Normal conjugacy leaves a Gaussian predictive --- this is the standard BOCPD setup, and the reason the univariate example above is Gaussian. If instead $\sigma^2$ is also unknown and marginalised, Normal-Inverse-Gamma conjugacy yields a Student-$t$ predictive.

\section{Learning the hyperparameters online}\label{sec:hyperlearning}

\subsection{Hyperparameter optimization}

Our BVAR model learns $\mathbf{c}$ and $\sigma^2$ from the data. But the priors on these quantities depend on four \textbf{hyperparameters} $\boldsymbol{\nu}_m$ that the Bayesian update never touches (Table~\ref{tab:hyperparams}).

\begin{table}[H]
    \centering \small
    \caption{The hyperparameters $\boldsymbol{\nu}_m$.}
    \label{tab:hyperparams}
    \begin{tabular}{|c|l|c|}
        \hline
        \textbf{Hyperparameter} & \textbf{Role} & \textbf{Shape} \\ \hline
        $a_0$ & Shape of the Inverse-Gamma prior on $\sigma^2$ & scalar \\ \hline
        $b_0$ & Scale of the Inverse-Gamma prior on $\sigma^2$ & scalar \\ \hline
        $\mathbf{V}_c$ & Prior covariance on the AR coefficients & $(k, k)$ \\ \hline
        $\boldsymbol{\Omega}$ & Noise structure matrix & $(S, S)$ \\ \hline
    \end{tabular}
\end{table}

These parameters shape the priors before any data is seen. And the problem is: if we set them badly, the algorithm's performance gets worse. In the BOCPD and BOSD framework, the hyperparameters were the parameters of the prior of regime means ($\mu_0$ and $\sigma_0$) and they were tuned by hand or with an offline calibration pass. Both are painful: hand-tuning is guesswork, and offline calibration needs a separate training phase before we can even start.

\cite{knoblauch2018bocpdms} propose two approaches. The first is an offline approach, similar to what we've done to ``calibrate'' the BOCPD: run BOCPDMS $K$ times on a training set $\mathbf{x}_{1:T'}$, and find the hyperparameters that maximize the Predictive log-Likelihood.

The second approach is much more appealing because it is fully \textbf{online}. The idea is to update the hyperparameters at every time step using gradient ascent \citep{caron2012} and it has two major advantages:

\vspace{-8pt}
\begin{enumerate}[itemsep=-2pt]
    \item \textbf{No separate training phase:} inference and hyperparameter learning happen simultaneously, which means the algorithm can be ``cold-started'' without any pre-tuning.
    \item \textbf{No computational overhead:} the gradient can be computed recursively alongside the main BOCPDMS recursion, so the online complexity is unchanged.
\end{enumerate}

\vspace{-8pt}
\noindent This is a very promising feature: we would not need to manually choose the hyperparameters and the algorithm would learn them from the data itself.

\subsection{The idea: online gradient ascent on the hyperparameters}

This is actually a very simple idea: \textbf{gradient ascent}. At every time step, after we observe the new data point, we ask ``would a small change in the hyperparameters have made our prediction better?'' If yes, we nudge them in that direction.
\begin{equation}
    \boldsymbol{\nu}_{m, t+1} = \boldsymbol{\nu}_{m, t} + \alpha_t \, \nabla_{\boldsymbol{\nu}_{m,t}} \log p(\mathbf{x}_{t+1} \mid \mathbf{x}_{1:t}, \boldsymbol{\nu}_{m_{1:t}})
\end{equation}
\begin{itemize}[itemsep=-2pt]
    \item $\boldsymbol{\nu}_{m,t}$ is the current value of the hyperparameters of model $m$ at time $t$.
    \item $\log p(\mathbf{x}_{t+1} \mid \mathbf{x}_{1:t}, \boldsymbol{\nu}_m)$ is the \textbf{predictive log-likelihood}: how well the model predicted the incoming point using everything up to time $t$. High value = good prediction.
    \item $\nabla_{\boldsymbol{\nu}_m}$ is the gradient, and $\alpha_t$ is the \textbf{learning rate}: how big a step we take.
\end{itemize}

So, after seeing $\mathbf{x}_{t+1}$, we look back at the prediction we made for it, and move the hyperparameters a little in the direction that would have made that prediction more likely.

We need the gradient of the predictive log-likelihood which is a weighted sum, over all run lengths, of the UPM at each run length, weighted by how plausible that run length is:
\begin{equation}
    \log p(\mathbf{x}_{t+1} \mid \mathbf{x}_{1:t}, \boldsymbol{\nu}_m) = \log \sum_{r=0}^{t} \underbrace{p(\mathbf{x}_{t+1} \mid r_t, m, \boldsymbol{\nu}_m)}_{\text{UPM at run length } r} \cdot \underbrace{p(r_t, m \mid \mathbf{x}_{1:t})}_{\text{posterior weight}}
\end{equation}
Because the log-predictive is a logarithm of a sum, its gradient has a clean form:
\begin{equation}\label{eq:score_grad}
    \nabla_{\boldsymbol{\nu}_m} \log p(\mathbf{x}_{t+1} \mid \mathbf{x}_{1:t}, \boldsymbol{\nu}_m) = \frac{\displaystyle\sum_{r} p(r_t, m \mid \mathbf{x}_{1:t}) \cdot \nabla_{\boldsymbol{\nu}_m} \, p(\mathbf{x}_{t+1} \mid r_t, m, \boldsymbol{\nu}_m)}{\displaystyle\sum_{r} p(r_t, m \mid \mathbf{x}_{1:t}) \cdot p(\mathbf{x}_{t+1} \mid r_t, m, \boldsymbol{\nu}_m)}
\end{equation}

So the gradient is a \textbf{weighted average of the gradients of each individual UPM}, where the weights are the posterior probabilities. So the run lengths and models that the algorithm currently believes in the most contribute the most to the hyperparameter update which is intuitive: we trust the gradient coming from the hypotheses we currently find most plausible.

So, once we know how to differentiate a single Student-$t$ UPM with respect to a hyperparameter, we are done: we compute that derivative at every run length, average them with the posterior weights, and plug the result into the update.

\subsection{Which hyperparameters do we actually learn?}

In theory, we could learn all four. In practice, we do not. Optimizing the two scalars $a_0$ and $b_0$ is cheap and stable. Optimizing the full matrices $\mathbf{V}_c$ and $\boldsymbol{\Omega}$ is a different story: they are covariance matrices, so every gradient step would have to keep them positive-definite, which is fragile and expensive. So we restrict ourselves to the two Inverse-Gamma scalars.

\textbf{$\bullet \ b_0$ (the scale of the prior on $\sigma^2$):} This is the simplest and most impactful, because $b_0$ directly sets the a priori scale of the variance.
$$ b_r = b_0 + \tfrac{1}{2}\Big(\mathrm{YY}_r - \operatorname{tr}\big(\boldsymbol{\Omega}^{-1}\mathbf{W}_r^{T}\mathbf{P}_r^{-1}\mathbf{W}_r\big)\Big) \ \Longrightarrow\  \frac{\partial b_r}{\partial b_0} = 1 \ \Longrightarrow\  \frac{\partial \log f_r}{\partial b_0} = \frac{\partial \log f_r}{\partial b_r} $$

\textit{NB:} the general formula additionally carries a $\boldsymbol{\beta}_0^{T}\mathbf{V}_c^{-1}\boldsymbol{\beta}_0$ term that vanishes here because the coefficient prior is centred at zero ($\boldsymbol{\beta}_0=\mathbf{0}$).

\textbf{$\bullet \ a_0$ (the shape of the prior on $\sigma^2$):}
$$ a_r = a_0 + \tfrac{S(r+1)}{2} \ \Longrightarrow\  \frac{\partial a_r}{\partial a_0} = 1 \ \Longrightarrow\  \frac{\partial \log f_r}{\partial a_0} = \frac{\partial \log f_r}{\partial a_r} $$

\textit{NB:} The reference implementation of \cite{knoblauch2018bocpdms} increments the shape by $1/2$ per time step rather than per scalar observation, i.e. $a_r = a_0 + (r+1)/2$, while its scale $b_r$ accumulates $S$ squared residuals per step. The two are inconsistent for $S>1$: the posterior mean $b_r/(a_r-1)$ then converges to $S\sigma^2$ instead of $\sigma^2$. We verified this empirically on synthetic data with $S=2$ and $\sigma^2=1$, where the estimate settles at $2.0$.

\subsection{The core computation: differentiating the Student-t}

It now comes down to differentiating the log-density of a multivariate Student-$t$. For an $S$-dimensional Student-$t$ with $\nu$ degrees of freedom, location $\boldsymbol{\mu}$ and scale matrix $\boldsymbol{\Sigma}$, it is:
\begin{equation}\label{eq:student_t_logdensity}
    \log f(\mathbf{x} \mid \nu, \boldsymbol{\mu}, \boldsymbol{\Sigma}) = C(\nu) - \tfrac{1}{2}\log|\boldsymbol{\Sigma}| - \frac{\nu + S}{2}\log\!\Big(1 + \tfrac{1}{\nu}(\mathbf{x} - \boldsymbol{\mu})^{T}\boldsymbol{\Sigma}^{-1}(\mathbf{x} - \boldsymbol{\mu})\Big)
\end{equation}
with $C(\nu) = \log\Gamma(\tfrac{\nu+S}{2}) - \log\Gamma(\tfrac{\nu}{2}) - \tfrac{S}{2}\log(\nu\pi)$. In our BVAR, the density is evaluated at $\mathbf{x} = \mathbf{x}_t$, the location is $\boldsymbol{\mu}_r = \hat{\mathbf{c}}_r^{T}\boldsymbol{\phi}_t$, the degrees of freedom are $\nu_r = 2a_r$, and the scale is $\boldsymbol{\Sigma}_r = \frac{b_r}{a_r}(1 + h_r)\boldsymbol{\Omega}$ with $h_r = \boldsymbol{\phi}_t^{T}\mathbf{P}_r^{-1}\boldsymbol{\phi}_t$.

A quantity that will keep coming back is $Q_r = \tfrac{1}{\nu_r}(\mathbf{x} - \boldsymbol{\mu}_r)^{T}\boldsymbol{\Sigma}_r^{-1}(\mathbf{x} - \boldsymbol{\mu}_r)$. The good news is that $Q_r$ is computed when we evaluate the UPM, so we can grab it for free.

\begin{proposition}[$Q_r$ does not depend on $a_r$]\label{prop:Qr_indep_a}
Writing $\boldsymbol{\Sigma}_r^{-1} = \tfrac{a_r}{b_r}\mathbf{M}_r^{-1}$ with $\mathbf{M}_r = (1+h_r)\boldsymbol{\Omega}$ (a matrix that involves neither $a_r$ nor $b_r$), we get
$$Q_r = \frac{1}{2a_r}\cdot\frac{a_r}{b_r}\cdot K_r = \frac{K_r}{2b_r}, \qquad K_r := (\mathbf{x}-\boldsymbol{\mu}_r)^T\mathbf{M}_r^{-1}(\mathbf{x}-\boldsymbol{\mu}_r).$$
The factor $a_r$ cancels exactly. \textbf{$Q_r$ is a function of $b_r$ only}, never of $a_r$.
\end{proposition}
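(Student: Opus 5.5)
The plan is a direct substitution into the definition of $Q_r$, followed by a check of which quantities secretly depend on $a_r$.

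First, I would factor the scale matrix. From \eqref{eq:student_t_upm} we have $\boldsymbol{\Sigma}_r = \frac{b_r}{a_r}(1+h_r)\boldsymbol{\Omega} = \frac{b_r}{a_r}\mathbf{M}_r$ with $\mathbf{M}_r := (1+h_r)\boldsymbol{\Omega}$. Since $\boldsymbol{\Omega}$ is symmetric positive-definite and $1+h_r>0$, the matrix $\mathbf{M}_r$ is invertible. Here $h_r = \boldsymbol{\phi}_t^T\mathbf{P}_r^{-1}\boldsymbol{\phi}_t \ge 0$ because $\mathbf{P}_r = \mathbf{X}^T\mathbf{X}+\mathbf{V}_c^{-1}$ is positive-definite. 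Inverting a scalar multiple of a matrix then gives $\boldsymbol{\Sigma}_r^{-1} = \frac{a_r}{b_r}\mathbf{M}_r^{-1}$.

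Second, I would plug this and $\nu_r = 2a_r$ into $Q_r = \frac{1}{\nu_r}(\mathbf{x}-\boldsymbol{\mu}_r)^T\boldsymbol{\Sigma}_r^{-1}(\mathbf{x}-\boldsymbol{\mu}_r)$. This yields $\frac{1}{2a_r}\cdot\frac{a_r}{b_r}K_r = \frac{K_r}{2b_r}$, where $a_r>0$ cancels.

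The only real content, and the step I expect to need care, is justifying that $K_r$ and $\mathbf{M}_r$ contain no hidden dependence on $a_r$. I would argue this by inspecting each ingredient.
\begin{itemize}
\item $\boldsymbol{\mu}_r = \hat{\mathbf{c}}_r^T\boldsymbol{\phi}_t$ with $\hat{\mathbf{c}}_r = \mathbf{F}(r,t)\mathbf{W}(r,t)$. Both factors are built from $\mathbf{X}$, $\mathbf{Y}$ and $\mathbf{V}_c$ only, via \eqref{eq:F_update} and \eqref{eq:W_update}.
\item $h_r$ depends only on $\boldsymbol{\phi}_t$ and $\mathbf{P}_r$.
\item $\boldsymbol{\Omega}$ is a fixed known matrix.
\end{itemize}
None of these involves the Inverse-Gamma hyperparameters $a_0, b_0$, nor the updated $a_r, b_r$. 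Hence $K_r$ is independent of $(a_r, b_r)$ altogether, and $Q_r$ depends on the shape/scale pair only through $b_r$.

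I would close by remarking what this means for differentiation: $\partial Q_r/\partial a_r = 0$, and $\partial Q_r/\partial b_r = -Q_r/b_r$.
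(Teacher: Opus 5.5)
Your proposal is correct and follows the paper's argument: factor $\boldsymbol{\Sigma}_r = \frac{b_r}{a_r}\mathbf{M}_r$, invert it, and substitute together with $\nu_r = 2a_r$ so that $a_r$ cancels. Your extra check that $\boldsymbol{\mu}_r$, $h_r$ and $\boldsymbol{\Omega}$ carry no hidden dependence on $(a_r,b_r)$ spells out what the paper only asserts parenthetically, namely that $\mathbf{M}_r$ involves neither $a_r$ nor $b_r$.
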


\subsubsection*{Derivative with respect to $b_r$}

Since the scale is proportional to $b_r$ ($\partial \boldsymbol{\Sigma}_r/\partial b_r = \boldsymbol{\Sigma}_r/b_r$), differentiating each term of \eqref{eq:student_t_logdensity} with respect to $b_r$ (using $\partial Q_r/\partial b_r = -Q_r/b_r$ from Proposition~\ref{prop:Qr_indep_a}) gives:
\begin{equation}\label{eq:grad_b}
    \boxed{\dfrac{\partial \log f_r}{\partial b_r} = \dfrac{1}{2b_r}\left[\dfrac{(\nu_r+S)\,Q_r}{1+Q_r} - S\right]}
\end{equation}

\subsubsection*{Derivative with respect to $a_r$}

Because $a_r$ affects both $\nu_r=2a_r$ and the scale through $b_r/a_r$, we differentiate each of the three terms of \eqref{eq:student_t_logdensity} with respect to $a_r$, holding $b_r$ fixed:
\begin{itemize}[itemsep=-2pt]
    \item $C(2a_r)$ contributes $\psi(a_r+\tfrac{S}{2}) - \psi(a_r) - \frac{S}{2a_r}$;
    \item $-\tfrac12\log|\boldsymbol{\Sigma}_r|$, i.e.\ $-\tfrac{S}{2}\log(b_r/a_r)+\text{const}$, contributes $+\frac{S}{2a_r}$;
    \item the log-term contributes $-\log(1+Q_r)$, and \textit{nothing else} $\partial Q_r/\partial a_r = 0$.
\end{itemize}
The two terms in $S/(2a_r)$ cancel \textbf{exactly}, leaving a remarkably compact result:
\begin{equation}\label{eq:grad_a}
    \boxed{\dfrac{\partial \log f_r}{\partial a_r} = \psi\!\Big(a_r+\dfrac{S}{2}\Big) - \psi(a_r) - \log(1+Q_r)}
\end{equation}

\noindent This is the good, verified form from the reference code of \cite{knoblauch2018bocpdms}.

\newpage
\subsection{Can the noise matrix be non-diagonal?}\label{sec:omega}

Recall the noise model of the BVAR: $\boldsymbol{\varepsilon}_t \mid \sigma^2 \sim \mathcal{N}(\mathbf{0}, \sigma^2 \boldsymbol{\Omega})$, where \cite{knoblauch2018bocpdms} take $\boldsymbol{\Omega}$ to be a \emph{known} diagonal matrix, and where the reference implementation (and our experiments so far) fixes $\boldsymbol{\Omega} = \mathbf{I}_S$. But actually the diagonality restriction is not essential: 
\vspace{-5pt}
\begin{itemize}[itemsep=-2pt]
    \item \textbf{Any known SPD matrix $\boldsymbol{\Omega}$ preserves conjugacy.} The Normal-Inverse-Gamma machinery (closed-form posterior, additive sufficient statistics, Student-$t$ UPM) carries over to a full, non-diagonal $\boldsymbol{\Omega}$, as long as it is fixed. We sketch this below and derive it in Appendix~\ref{app:spd}; the underlying result is standard in Bayesian multivariate regression.
    \item \textbf{What breaks the framework is learning $\boldsymbol{\Omega}$}, not its shape. An unknown noise covariance requires a different conjugate family (Inverse-Wishart), which turns out to be compatible with the full BVAR but incompatible with the SSBVAR, and this, with two further arguments, is the reason behind the choice of \cite{knoblauch2018bocpdms}.
\end{itemize}

\vspace{-8pt}
\noindent This dividing line is exactly the one drawn by \cite{bishop2006pattern}: in Section~3.3 of his book, he develops conjugate Bayesian regression under a noise precision that is \emph{known}, and nothing in the derivation requires it to be isotropic or diagonal. Conversely, as soon as the covariance is unknown, the conjugate family must change: Gaussian-gamma in the univariate case, Gaussian-Wishart in the multivariate case (his Section~2.3.6).

\subsubsection*{Conjugacy holds for any known SPD Omega}

With $\boldsymbol{\Omega} = \mathbf{I}_S$ that \cite{knoblauch2018bocpdms} actually use, the noise is isotropic, and the prior on the stacked coefficient vector is $\mathbf{c} \mid \sigma^2 \sim \mathcal{N}(\mathbf{0}, \sigma^2 \mathbf{V}_c)$. Column by column, this is $S$ independent conjugate Bayesian linear regressions with known noise precision, exactly as in \cite{bishop2006pattern}, Section~3.3, so conjugacy (closed-form posterior, additive sufficient statistics, Student-$t$ predictive) is immediate.

\textbf{Where diagonality is (not) used.} Writing the posterior of $\mathbf{c}$ with Bayes' rule, and completing the square, we find that the posterior precision and mean are $\mathbf{P} = \mathbf{V}_c^{-1} + \mathbf{X}^T\mathbf{X}$ and $\hat{\mathbf{c}} = \mathbf{P}^{-1}\mathbf{X}^T\mathbf{Y}$, and that $\boldsymbol{\Omega}^{-1}$ factors out, leaving $\mathbf{P}$ and $\hat{\mathbf{c}}$ free of $\boldsymbol{\Omega}$. Diagonality of $\boldsymbol{\Omega}$ is used nowhere. \textbf{The NIG framework is conjugate for any fixed SPD $\boldsymbol{\Omega}$}, so the entire BOCPDMS machinery runs unmodified.

\textbf{Our implementation already supports it.} It is written for a general $\boldsymbol{\Omega}$ from the start. It places $\boldsymbol{\Omega}$ only in the noise-related quantities (the cross-product $\mathbf{x}_t^T\boldsymbol{\Omega}^{-1}\mathbf{x}_t$, the trace term in $b_r$, and the predictive scale) and keeps the precision $\mathbf{P}_r$ free of $\boldsymbol{\Omega}$. This is precisely the coupled-prior configuration; indeed, the fact that $\mathbf{P}_r$ is stored as a single $k \times k$ matrix is the signature of that choice, since the plain prior would not factor into one. For $\boldsymbol{\Omega} = \mathbf{I}_S$ it reduces to the plain prior of \cite{knoblauch2018bocpdms}, so our runs so far are unaffected by the distinction, and passing a non-diagonal SPD $\boldsymbol{\Omega}$ requires no structural change to the code.

\newpage
\section{Experiments on synthetic data}\label{sec:synthexp}

This section tests our BOCPDMS implementation through a series of controlled experiments on synthetic data. In all of them the model universe $\mathcal{M}$ contains a single, well-specified $\mathrm{VAR}(1)$ model, so the model-selection machinery is not yet exercised: we first isolate and validate the Bayesian-learning engine and the online hyperparameter optimisation.

\subsection*{Setup and reporting conventions}

The algorithm is never told the true coefficients, the true innovation variance, or the changepoint locations: it only receives the stream $\mathbf{x}_{1:T}$ and the prior specification. Unless stated otherwise, every experiment uses $S = 2$, $L = 1$, $T = 500$, $H = 1/150$.

\textbf{Hyperparameters.} In BOCPD the hyperparameters were $(\mu_0, \sigma_0)$: a prior mean and spread for the regime mean. Here they are less directly readable, so Table~\ref{tab:read_hyper} details what each one controls. The unknown scale $\sigma^2$ is shared by the noise and the coefficient prior; $\mathbf{V}_c$ and $\boldsymbol{\Omega}$ are shape matrices, held fixed (only $a_0, b_0$ are learned online).

\begin{table}[H]
    \centering \small
    \caption{Reading the hyperparameters $\boldsymbol{\nu}_m$.}
    \label{tab:read_hyper}
    \begin{tabular}{|c|p{4.1cm}|p{5.2cm}|c|}
        \hline
        & \textbf{What it fixes} & \textbf{How to read it} & \textbf{Value} \\ \hline
        $\boldsymbol{\Omega}$ & relative noise level of each component & $\mathbf{I}_S$: all components share one variance $\sigma^2$; $\mathrm{diag}(\omega_s)$ would give component $s$ the variance $\sigma^2\omega_s$ & $\mathbf{I}_S$  \\ \hline
        $a_0$ & concentration of the prior on $\sigma^2$ & the confidence: $a_r = a_0 + S(r+1)/2$, so each step adds $S/2$ & $2.0$ \\ \hline
        $b_0$ & a priori scale of $\sigma^2$ & sets the prior mean $\mathbb{E}[\sigma^2] = b_0/(a_0-1) $ & $0.5$ \\ \hline
        $\mathbf{V}_c$ & width of the coefficient prior ($\sim$ analogue of $\sigma_0$) & prior std $\sqrt{v}\,\sigma$ per coefficient, i.e. $10\sigma$ & $10^2\cdot\mathbf{I}_k$ \\ \hline
    \end{tabular}
\end{table}

\textbf{The learning figures.} The light curve is the estimate at $r^\ast_t = \argmax_{r,m} p(r_t, m_t \mid \mathbf{x}_{1:t})$ and the darker one is the posterior-averaged reading that weights every hypothesis by the model-specific run-length distribution:
\vspace{-5pt}
$$\bar{\mathbf{c}}_t = \sum_{r_t} \hat{\mathbf{c}}(r_t, t)\, p(r_t \mid m^\ast_t, \mathbf{x}_{1:t}),
\qquad \bar{\sigma}^2_t = \sum_{r_t} \frac{b_{r_t}}{a_{r_t}-1}\, p(r_t \mid m^\ast_t, \mathbf{x}_{1:t}).$$

\vspace{-5pt}\noindent Every panel of Figures~\ref{fig:exp1} to~\ref{fig:exp6_7} reports, for the corresponding experiment: (a) data; (b) run-length posterior; (c) online learning of $\hat{\mathbf{A}}$; (d) online learning of $\widehat{\sigma^2}$; (e) online hyperparameters learning.

\subsection{One change in the AR matrix (diagonal matrices)}

We simulate a bivariate VAR(1) of length $T = 500$ with a single changepoint at $t^\star = 250$:
$$\mathbf{x}_t = \mathbf{A}\,\mathbf{x}_{t-1} + \boldsymbol{\varepsilon}_t, \qquad \boldsymbol{\varepsilon}_t \overset{iid}{\sim} \mathcal{N}(\mathbf{0}, \sigma^2 \mathbf{I}_2), \qquad \sigma^2 = 1,$$
$$\mathbf{A}^{(1)} = \begin{pmatrix} 0.9 & 0 \\ 0 & 0.7 \end{pmatrix}
    \;\longrightarrow\;
    \mathbf{A}^{(2)} = \begin{pmatrix} 0.1 & 0 \\ 0 & 0.2 \end{pmatrix}.$$
Both matrices are diagonal, so the two components evolve independently: $x_t^{(1)} = a_{11}x_{t-1}^{(1)} + \varepsilon_t^{(1)}$ and $x_t^{(2)} = a_{22}x_{t-1}^{(2)} + \varepsilon_t^{(2)}$.

\begin{figure}[H]
    \centering
    \includegraphics[width=0.7\textwidth]{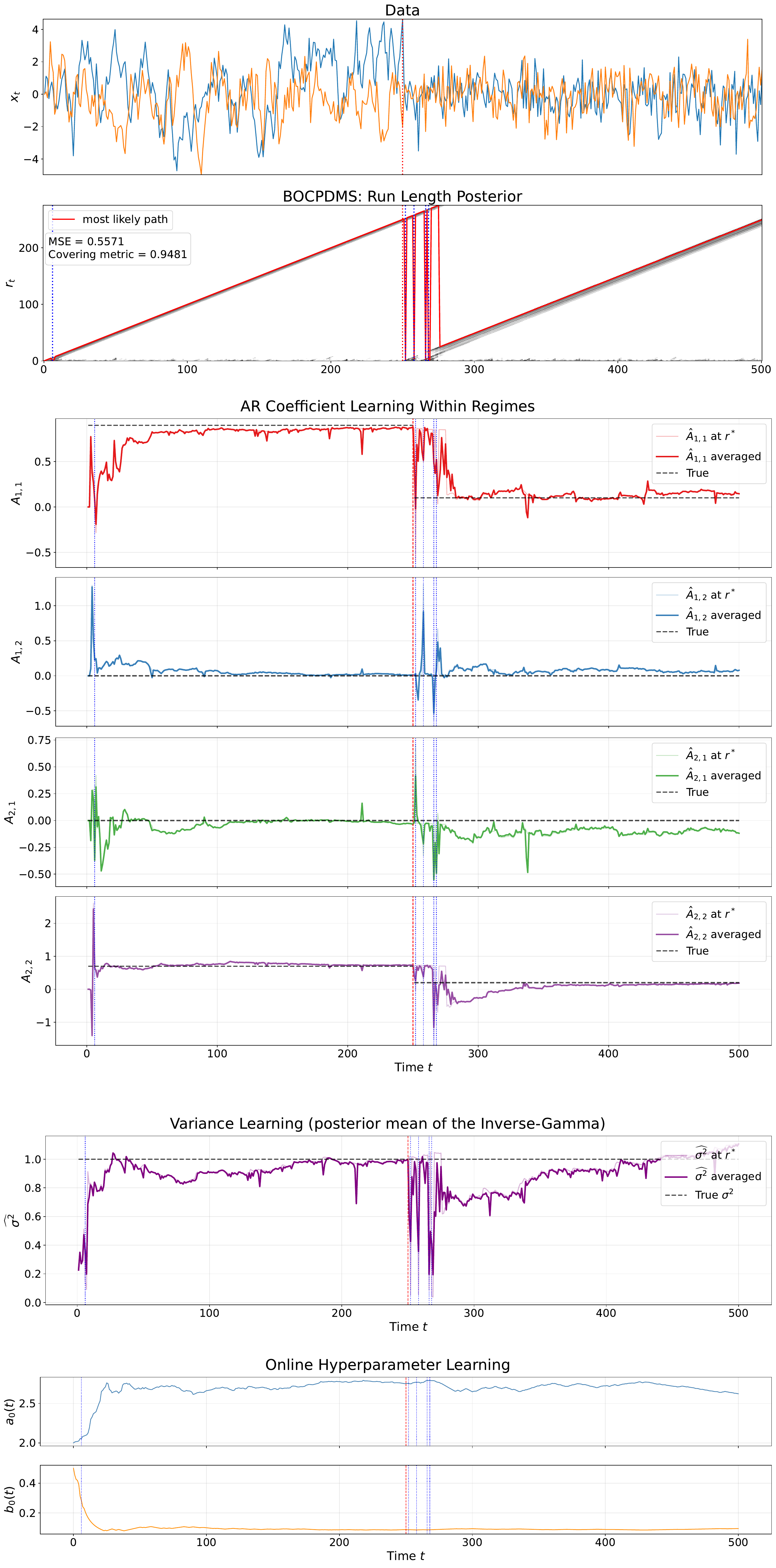}
    \caption{\textit{Experiment 1.} (a) data; (b) run-length posterior; (c) online learning of $\hat{\mathbf{A}}$; (d) online learning of $\widehat{\sigma^2}$; (e) online hyperparameters learning.}
    \label{fig:exp1}
\end{figure}

\subsection{One change in the AR matrix (cross-dependencies)}

Same setting, but the post-changepoint matrix now carries cross terms:
$$\mathbf{A}^{(1)} = \begin{pmatrix} 0.9 & 0 \\ 0 & 0.7 \end{pmatrix}
    \;\longrightarrow\;
    \mathbf{A}^{(2)} = \begin{pmatrix} 0.1 & 0.3 \\ 0.4 & 0.2 \end{pmatrix}.$$
See Figure~\ref{fig:exp2_3}(a).

\subsection{One change in the variance of the noise}

Same bivariate VAR(1), but now the autoregressive matrix is held constant ($\mathbf{A} = \diag(0.9, 0.7)$), and the innovation variance changes:
$$\boldsymbol{\varepsilon}_t \sim \mathcal{N}(\mathbf{0}, \sigma_{(1)}^2\mathbf{I}_2) \;\longrightarrow\; \mathcal{N}(\mathbf{0}, \sigma_{(2)}^2\mathbf{I}_2), \qquad \sigma_{(1)}^2 = 1.0,\ \sigma_{(2)}^2 = 5.0$$
See Figure~\ref{fig:exp2_3}(b).

\begin{figure}[H]
    \centering
    \begin{subfigure}[b]{0.48\linewidth}
        \includegraphics[width=\linewidth]{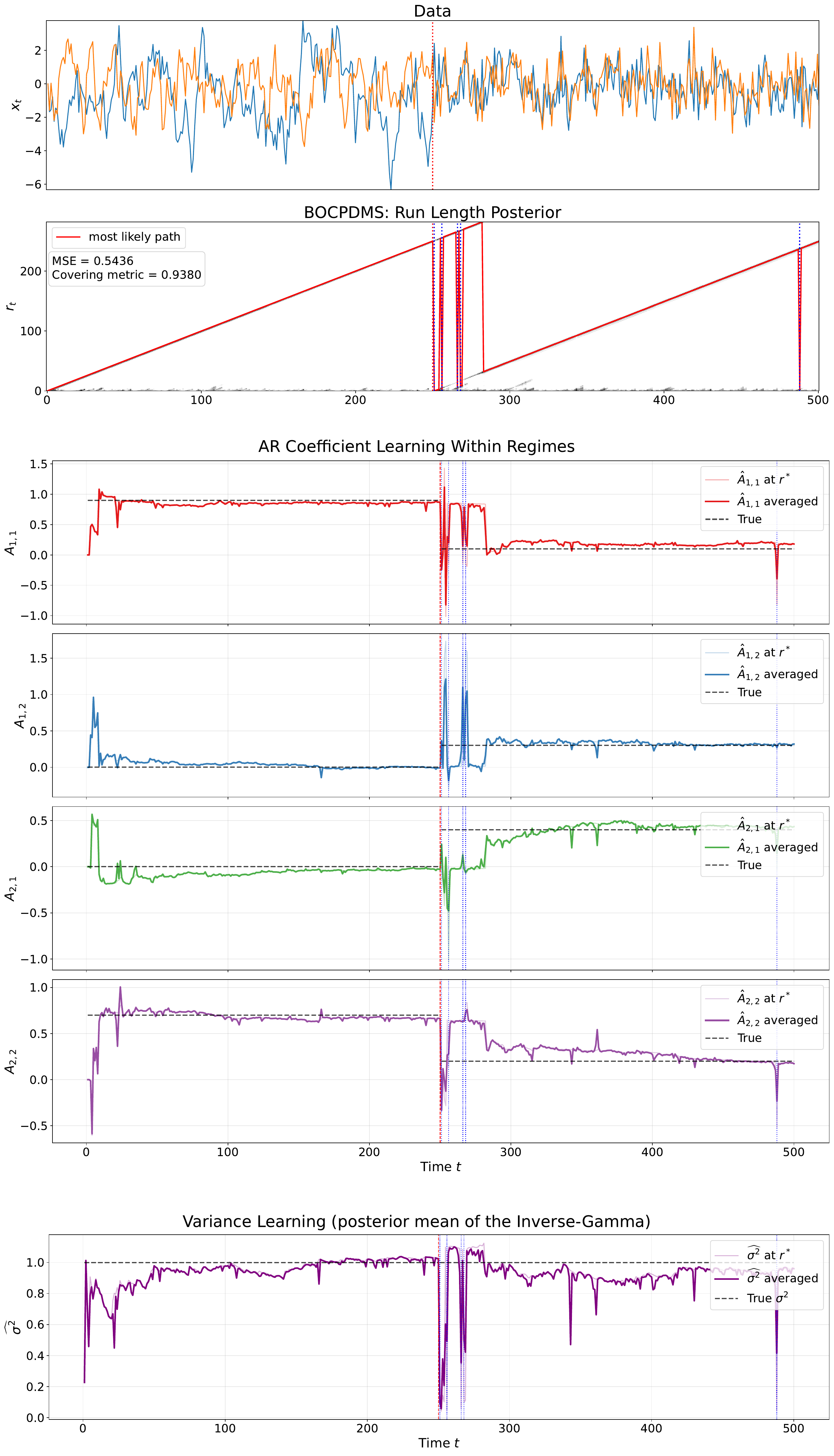}
        \caption{Exp. 2 --- AR matrix, cross-dependencies}
    \end{subfigure}\hfill
    \begin{subfigure}[b]{0.48\linewidth}
        \includegraphics[width=\linewidth]{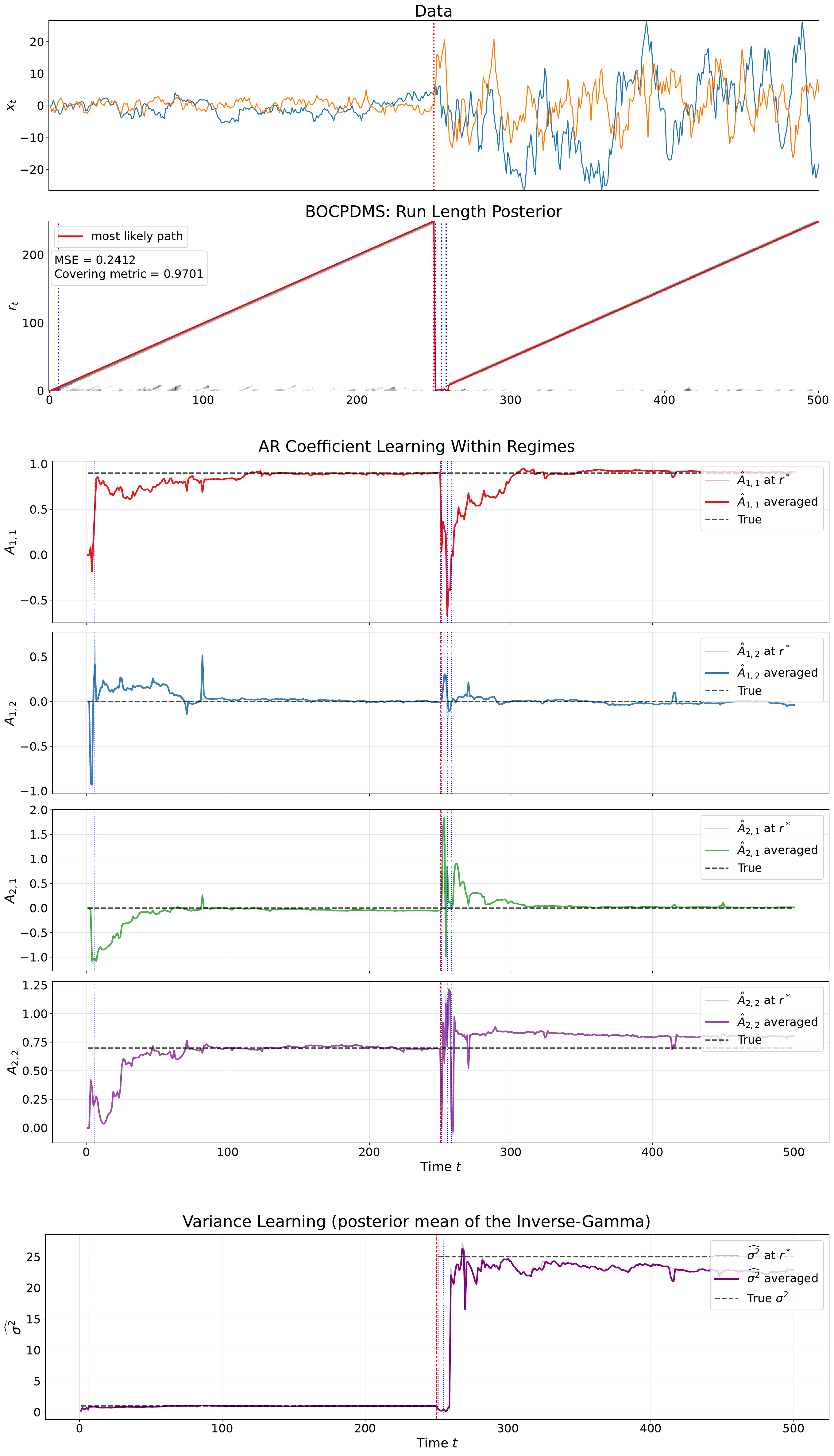}
        \caption{Exp. 3 --- noise variance}
    \end{subfigure}
    \caption{Experiments 2--3: a single changepoint at $t^\star = 250$, affecting the AR matrix with cross terms, then the innovation variance.}
    \label{fig:exp2_3}
\end{figure}

\newpage
\subsection{One change in the mean of the noise}

Same bivariate VAR(1), $\mathbf{A} = \diag(0.9, 0.7)$ held constant, and the innovation mean changes:
$$\boldsymbol{\varepsilon}_t \sim \mathcal{N}(\boldsymbol{\mu}_1, \sigma^2\mathbf{I}_2) \;\longrightarrow\; \mathcal{N}(\boldsymbol{\mu}_2, \sigma^2\mathbf{I}_2), \qquad \boldsymbol{\mu}_1 = (1.0, 1.0),\ \boldsymbol{\mu}_2 = (5.0, 5.0), \ \sigma = 1$$
See Figure~\ref{fig:exp4_5}(a).

\subsection{Multiple changepoints in the AR matrix}

We generate a bivariate VAR(1) process with $\boldsymbol{\varepsilon}_t \sim \mathcal{N}(\mathbf{0},\,\mathbf{I}_2)$. The changepoints follow a constant hazard $H = 1/h$, $h = 150$. At every changepoint a fresh AR matrix $\mathbf{A}$ is drawn with i.i.d.\ standard-Gaussian entries and then rescaled so that its spectral radius equals $0.9$, which guarantees the process stays stationary. Only $\mathbf{A}$ changes across regimes. The model is correctly specified in its hazard, of course. See Figure~\ref{fig:exp4_5}(b).

\begin{figure}[H]
    \centering
    \begin{subfigure}[b]{0.48\linewidth}
        \includegraphics[width=\linewidth]{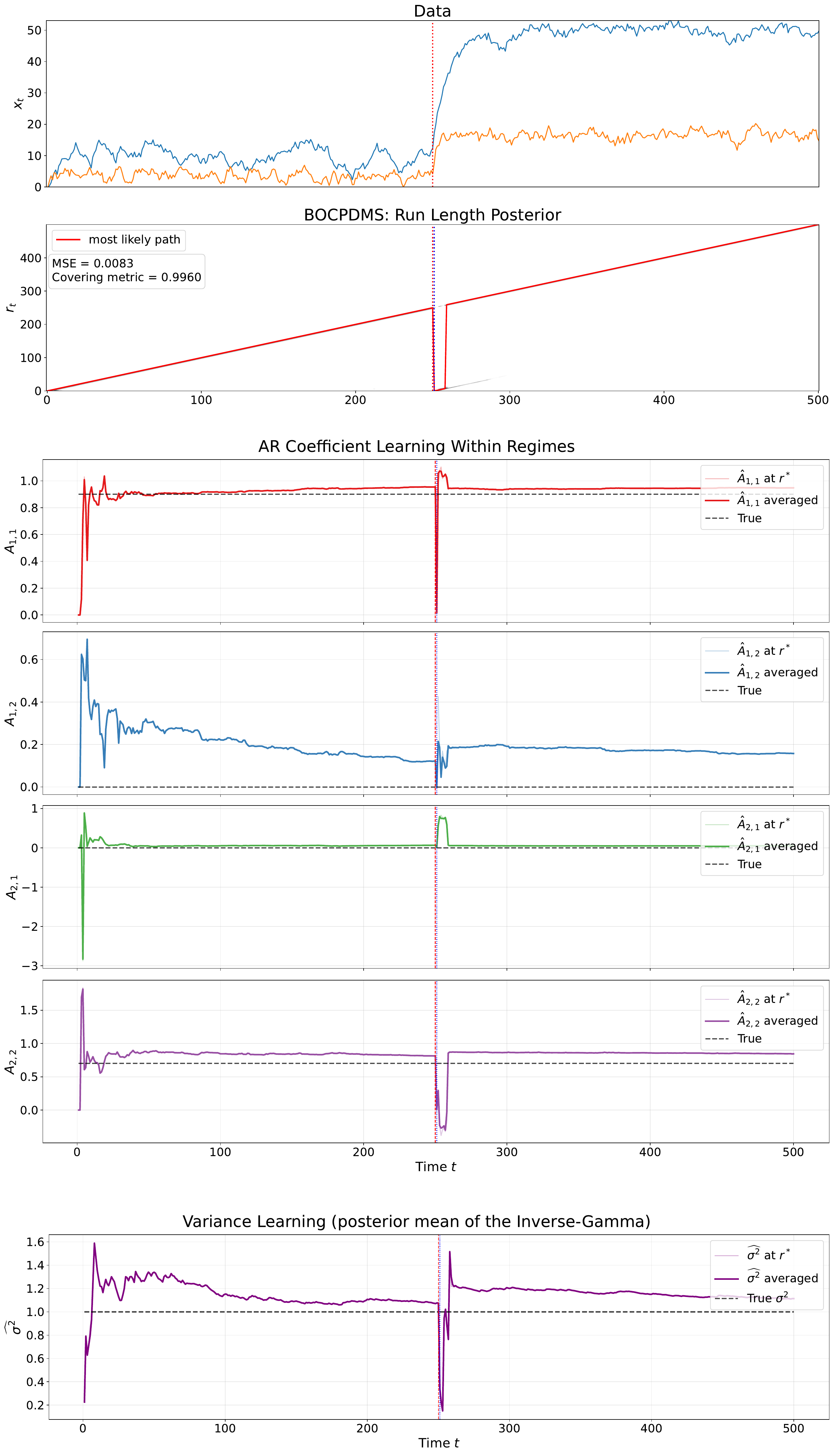}
        \caption{Exp. 4 --- noise mean, single changepoint}
    \end{subfigure}\hfill
    \begin{subfigure}[b]{0.48\linewidth}
        \includegraphics[width=\linewidth]{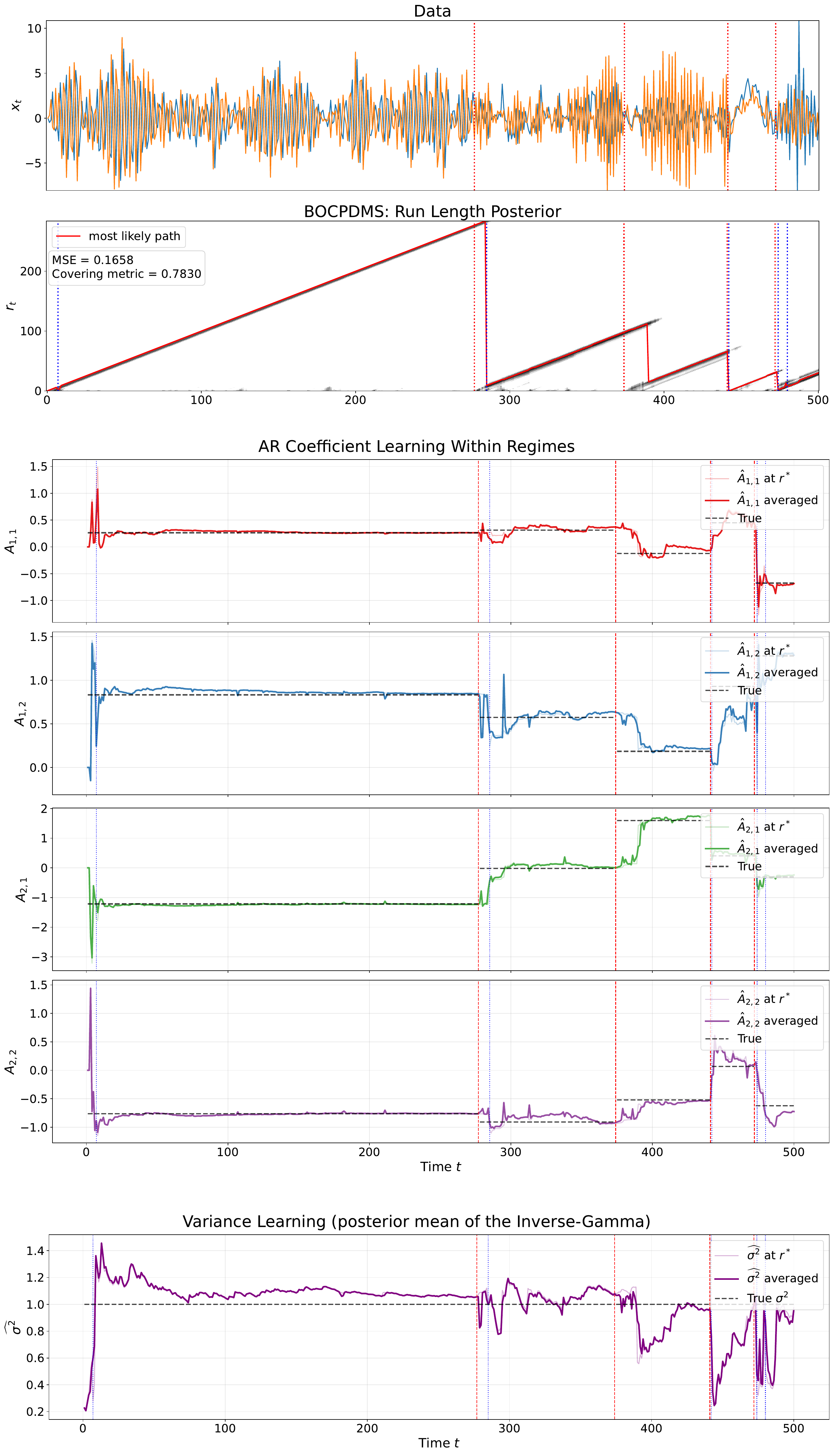}
        \caption{Exp. 5 --- AR matrix, multiple changepoints}
    \end{subfigure}
    \caption{Experiments 4--5: a single changepoint in the innovation mean, then multiple changepoints drawn with constant hazard $h=150$ in the AR matrix.}
    \label{fig:exp4_5}
\end{figure}

\newpage
\subsection{Multiple changepoints in the variance of the noise}

Identical generative setup to Experiment~5, but the breaks affect the innovation variance only, the AR matrix being held constant $\mathbf{A}=\diag(0.9,\,0.7)$. At every changepoint a new scalar standard deviation $\sigma$ is drawn from a Gaussian prior truncated to the positive half-line, and the innovation is $\boldsymbol{\varepsilon}_t \sim \mathcal{N}(\mathbf{0},\,\sigma^2\,\mathbf{I}_2)$; the change is thus isotropic across the two components. See Figure~\ref{fig:exp6_7}(a).

\subsection{Multiple changepoints in the mean of the noise}

Identical generative setup to Experiment~5, but the breaks affect the innovation mean only; the AR matrix $\mathbf{A}=\diag(0.9,\,0.7)$ and the noise level $\sigma = 1$ are held constant. At every changepoint a new mean vector $\boldsymbol{\mu} \in \mathbb{R}^2$ is drawn, each component i.i.d. from a Gaussian distribution, and $\boldsymbol{\varepsilon}_t \sim \mathcal{N}(\boldsymbol{\mu},\,\sigma^2\mathbf{I}_2)$.

\textit{NB:} the induced shift in the level of $\mathbf{x}_t$ is amplified by a per-component gain $1/(1-a)$. See Figure~\ref{fig:exp6_7}(b).

\begin{figure}[H]
    \centering
    \begin{subfigure}[b]{0.48\linewidth}
        \includegraphics[width=\linewidth]{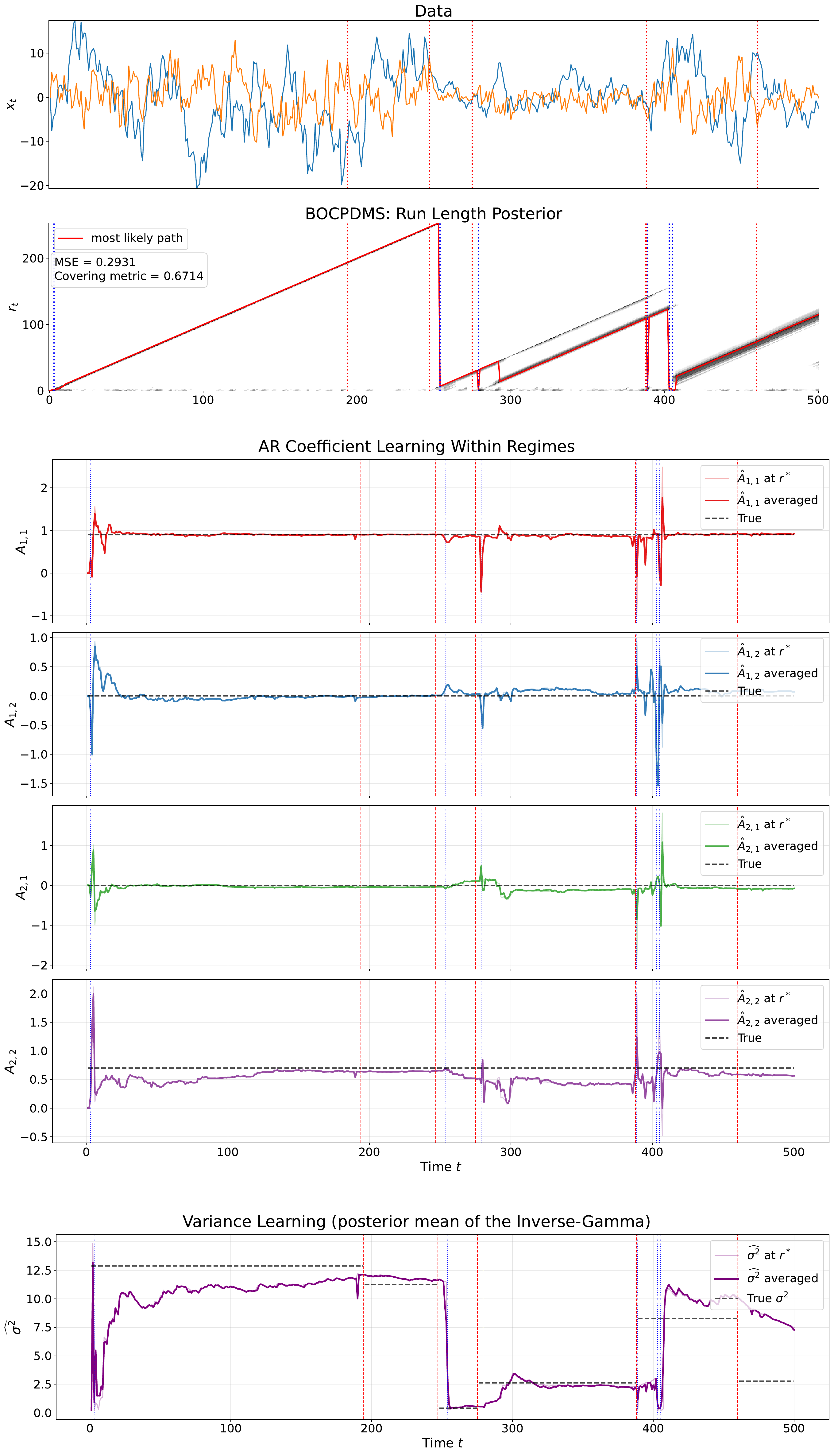}
        \caption{Exp. 6 --- noise variance}
    \end{subfigure}\hfill
    \begin{subfigure}[b]{0.48\linewidth}
        \includegraphics[width=\linewidth]{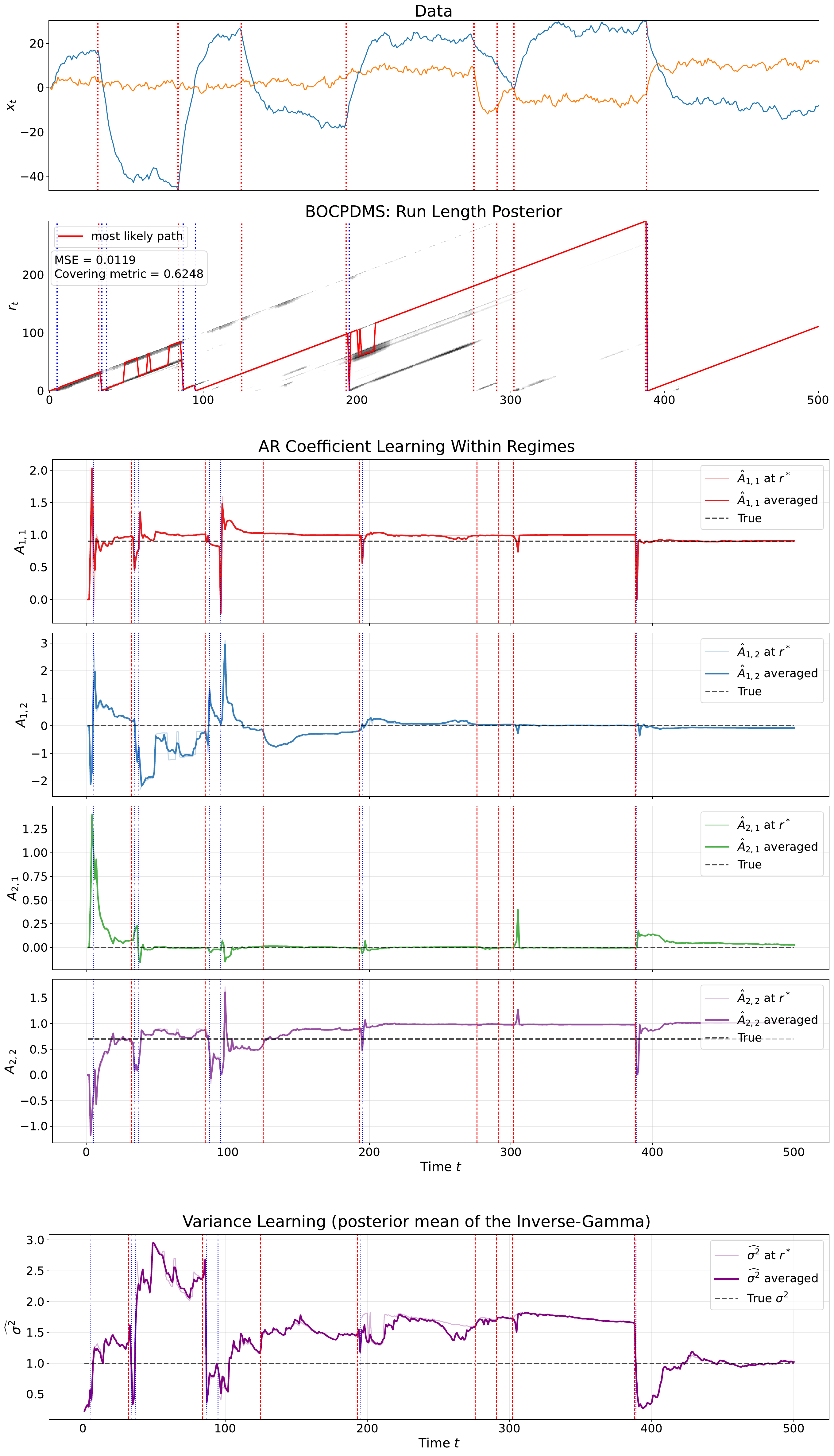}
        \caption{Exp. 7 --- noise mean}
    \end{subfigure}
    \caption{Experiments 6--7: multiple changepoints drawn with constant hazard $h=150$, affecting the innovation variance, then the innovation mean.}
    \label{fig:exp6_7}
\end{figure}

\newpage
\section{Experiments on real data}\label{sec:realmulti}

We now work with a bivariate series built from the same source and the same two stocks as in Section~\ref{sec:realdata}, but with a different sampling scheme. The multivariate model requires the two components to be observed at the same instants, which volume-clock sampling does not guarantee: buckets of $N$ executions of MSFT and of AAPL are not aligned in time. We therefore revert to calendar time and aggregate the signed volume over fixed $60$-second buckets, so that component $1$ and component $2$ of $\mathbf{x}_t = (x_t^{\text{MSFT}}, x_t^{\text{AAPL}})^T$ refer to the same minute of the same trading day. A trading day contains $390$ one-minute bars; over the month of June 2022 this gives $T = 8190$ observations, that is $21$ trading days.

\textbf{The two series are strongly dependent.} Their contemporaneous correlation is high, which is what makes a joint model worth considering at all: if the two flows were unrelated, a bivariate model could only lose against two separate univariate ones. Two further choices are fixed identically for every model, so that the comparison is fair:
\begin{itemize}[itemsep=-4pt]
    \item \textbf{No intercept.} Our BVAR has no intercept ($\boldsymbol{\alpha} = \mathbf{0}$), so to keep the comparison symmetric the frequentist baselines also run without one.
    \item \textbf{Scaling.} Same as in the experiments on the univariate baseline.
\end{itemize}

\subsection{The frequentist baseline: an expanding-window VAR(q)}

Some of our experiments compare BOCPDMS against a \textbf{VAR($q$) forecaster with no regimes}. This is a stripped-down baseline: no Bayesian prior, no changepoint detection, a single set of coefficients assumed valid over the whole sample.
A VAR($q$) writes the current observation as a linear combination of the $q$ most recent ones,
\begin{equation}
    \mathbf{x}_t = \mathbf{A}_1 \mathbf{x}_{t-1} + \cdots + \mathbf{A}_q \mathbf{x}_{t-q} + \boldsymbol{\varepsilon}_t = \mathbf{C}^T \boldsymbol{\phi}_t + \boldsymbol{\varepsilon}_t,
\end{equation}
with the same regressor vector $\boldsymbol{\phi}_t = (\mathbf{x}_{t-1}^T, \ldots, \mathbf{x}_{t-q}^T)^T \in \mathbb{R}^{k}$ ($k = qS$) and the same coefficient layout $\mathbf{C}^T = (\mathbf{A}_1, \ldots, \mathbf{A}_q)$ that we used for the BVAR. Each $\mathbf{A}_l$ is a $S \times S$ matrix, so the two stocks may influence each other across lags, the cross-dependence that Experiment~2 isolates.

\textbf{Estimation by OLS.} Given a window of data, we stack the predictors in a matrix $\mathbf{X}$ (row $i$ is $\boldsymbol{\phi}_i^T$), the targets in $\mathbf{Y}$ (row $i$ is $\mathbf{x}_i^T$), and fit the coefficients by ordinary least squares.

\textbf{The expanding window.} Rather than fitting once, we replay the position of a forecaster who, at each instant, only knows the past. At time $t$ we estimate $\hat{\mathbf{C}}$ on $\mathbf{x}_{1:(t-1)}$, forecast $\hat{\mathbf{x}}_t = \hat{\mathbf{C}}^T\boldsymbol{\phi}_t$, then observe the true $\mathbf{x}_t$ and record the error. We then move one step forward: the window \emph{grows} by one observation, and we repeat. Two properties follow:
\vspace{-4pt}
\begin{itemize}[itemsep=-2pt]
    \item \textbf{No look-ahead.} The forecast of $\mathbf{x}_t$ uses only strictly earlier data.
    \item \textbf{A refining estimate.} Early on, $\hat{\mathbf{C}}$ is fit on very few points and is noisy. We therefore skip the very first steps, where the estimate is meaningless, and start evaluating only at a time $t_{\min}$ large enough that $\mathbf{X}^T\mathbf{X}$ is invertible, i.e.\ $t_{\min} \geq q + k$.
\end{itemize}
\vspace{-9pt}

\subsection{Evaluation metrics}

All models are evaluated on their one-step-ahead forecasts over a \textbf{common window} $t \in \{t_{\min}, \ldots, T\}$, chosen so that every model is scored on exactly the same time points. We report the \emph{normalized} mean squared error
\begin{equation}\label{eq:mse}
    \mathrm{MSE} = \frac{\sum_c \overline{\mathrm{err}_c^2}}{\sum_c \mathrm{Var}(x_c)}, \qquad \mathrm{err}_{t,c} = x_{t,c} - \hat{\mu}_{t,c},
\end{equation}
where both the average and the variance are taken over the evaluation window, and the sums run over the $S$ components. We also report the \textbf{per-component} values $\overline{\mathrm{err}_c^2}/\mathrm{Var}(x_c)$.

\textbf{The normalizer is the realized variance on the evaluation window}. It is computed once and passed identically to every model.

\subsection{A scale problem}

Our first runs of the bivariate BOCPDMS produced an MSE \emph{well above} $1$ --- that is, worse than simply forecasting zero. So we compared the forecast to the target on two grounds. First, their correlation: at $\rho \approx 0.36$, it is nearly as high as that of the frequentist VAR ($\approx 0.42$), so the forecast is far from meaningless, it moves with the data. Second, its variance: the BOCPDMS forecast has a variance several times that of the data itself, whereas the frequentist VAR produces forecasts with about a fifth of the data variance. The forecast therefore points in roughly the right direction but is grossly over-sized, and it is this over-sizing, not a lack of predictive content, that produces the large error.

Two further observations confirm it. The largest BOCPDMS forecasts \textbf{exceed the largest value ever taken by the data}. The model predicts values the series never attains, which no reasonable forecast should do. And the top $1\%$ of squared errors carries a noticeably larger share of the total MSE than it does for the frequentist VAR on the same data.

This is because the signed data is heavy-tailed. Within a short regime, the autoregressive coefficient is estimated on few points and can be badly off; when it is then multiplied by an incoming spike, the product is a very big forecast. The frequentist VAR is better because its coefficient is a stable global fit. Here is a slightly counter-intuitive finding: \emph{the adaptivity of BOCPDMS, which is its strength on regime-switching data, becomes a liability when the innovations are heavy-tailed and the regimes are short.}

\textbf{Why the prior did not prevent it.} The coefficient prior is supposed to keep this under control. It failed because it acts on both series through a single scalar, $\mathbf{V}_c = v_c\mathbf{I}$, while on raw data the two series can have very different variances, here one carries roughly twice the variance of the other. The same $v_c$ therefore has very different effects on the two series, and no single value can suit both at once. This is what led us to put the two series on a common scale.

\vspace{-10pt}
\paragraph{Standardization.}
The remedy that follows is to put the two series on the same scale, so that the prior acts on them identically. We standardize each component separately,
\begin{equation}\label{eq:standardization}
    x'_{t,c} = \frac{x_{t,c}}{\sigma_c}, \qquad c = 1, \ldots, S,
\end{equation}

\vspace{-8pt}\noindent
with $\sigma_c$ the standard deviation of component $c$. This is an \textbf{invertible} transformation.

\textbf{Estimating $\sigma_c$ without looking ahead.} If the scale is computed on the whole sample, the model uses information from the future, which would flatter the results. We therefore estimate $\sigma_c$ on a \textbf{prefix sample} and apply them unchanged to the rest of the sample.

This raises a practical difficulty. Because the data are heavy-tailed, the sample standard deviation of a short prefix is itself an unstable quantity: a calm stretch can under-estimate the scale of the rest of the month by a factor of two. We observed exactly this with a short prefix. The remedy we adopt is simply to make the prefix \textbf{long enough to be representative}: we set $t_{\min}$ to several full trading days, estimate $\sigma_c$ on that stretch, and evaluate on the remainder.

\newpage
\subsection{Parameter choices}

All settings are fixed once and shared by the three experiments.
\begin{itemize}[itemsep=-4pt]
    \item \textbf{Prior $\mathbf{V}_c = 0.001\,\mathbf{I}$.} Selected by a sweep: it minimizes the error.
    \item \textbf{$a_0 = 2.0$, $b_0 = 0.5$, learned online}; their effect on this data is negligible.
    \item \textbf{Hazard $H=1/h$ constant, with $h=390/3=130$.}
    \item \textbf{$t_{\min} = 1170$ ($3$ trading days).} Burn-in of the most demanding baseline, and the window on which the scale and $\boldsymbol{\Omega}$ are estimated.
\end{itemize}

\noindent\textbf{The three specifications of $\boldsymbol{\Omega}$.} All are estimated on the prefix $\mathbf{x}_{1:t_{\min}}$ only. Writing $\mathbf{x}_i = (x_i^{\text{MSFT}}, x_i^{\text{AAPL}})^T$ for the prefix observations and $\bar{\mathbf{x}}$ for their mean:

\vspace{5pt}\textbf{Diagonal $\boldsymbol{\Omega}$ --- independent noise.} We keep the variance of each series and set the cross terms to zero.
On standardized data, it is $\mathbf{I}_2$, the series has unit variance on the prefix.

\vspace{5pt}\textbf{Full $\boldsymbol{\Omega}$ --- covariance of the observations.} We keep the cross term as well, i.e.\ the full empirical covariance matrix. This is the simplest way to take cross-correlation into consideration.

\vspace{-2pt}
\begin{adjustbox}{max width=\textwidth}
$\displaystyle
    \boldsymbol{\Omega}_{\mathrm{diag}} = \mathrm{diag}\Big(\mathrm{Var}(x^{\text{MSFT}}),\; \mathrm{Var}(x^{\text{AAPL}})\Big), \quad \mathrm{Var}(x^{c}) = \frac{1}{t_{\min}}\sum_{i=1}^{t_{\min}} \big(x_i^{c} - \bar{x}^{c}\big)^2,
    \qquad
    \boldsymbol{\Omega}_{\mathrm{full}} = \frac{1}{t_{\min}-1}\sum_{i=1}^{t_{\min}} (\mathbf{x}_i - \bar{\mathbf{x}})(\mathbf{x}_i - \bar{\mathbf{x}})^T .$
\end{adjustbox}

\vspace{5pt}\textbf{$\boldsymbol{\Omega}_{\mathrm{var}}$ --- covariance of the VAR innovations.} $\boldsymbol{\Omega}$ describes the covariance of the noise $\boldsymbol{\varepsilon}_t$, i.e.\ of what remains once the predictable part $\mathbf{A}\mathbf{x}_{t-1}$ has been removed, whereas the observations still contain that predictable part. So we estimate it from the residuals, and our frequentist baseline already produces them. We fit an expanding-window VAR on the prefix sample (the first $t_{\min} = 1170$ observations), and collect its one-step-ahead residuals $\hat{\boldsymbol{\varepsilon}}_t = \mathbf{x}_t - \hat{\mathbf{C}}_t^T\boldsymbol{\phi}_t$. Their empirical covariance gives $\boldsymbol{\Omega}_{\mathrm{var}} = \frac{1}{n}\sum_t \hat{\boldsymbol{\varepsilon}}_t \hat{\boldsymbol{\varepsilon}}_t^T$, where the sum runs over the retained residuals and $n$ is their number.

\vspace{5pt}\noindent\textit{NB:} we discard the first $50$ residuals. At the very start of the prefix the coefficients are fitted on a handful of points and are wildly off, so the residuals they produce are huge and would inflate the estimate. Without this precaution the estimated innovation variance of one series came out larger than its data variance.

\subsection{Results}

We report the MSE over the common window $[t_{\min}:T]$, with $t_{\min} = 1170$ and $T = 8190$.

Estimated on the sample $[\,0{:}1170]$: $\sigma = (6.33,\,11.09)$ with ratio AAPL/MSFT $= 1.75$, against $\sigma = (9.68,\,13.72)$ and ratio $1.42$ on the evaluation window, and $(9.28,\,13.37)$ with ratio $1.44$ over the whole month. The standardized series have variance $(2.34,\,1.53)$ on $[t_{\min}:T]$.

\textbf{Experiment 1 --- BOCPDMS versus expanding-window VAR($q$)} (Table~\ref{tab:var_q}).
\vspace{-6pt}
\begin{table}[H]
    \centering \small \small
    \caption{Expanding-window VAR($q$), $q = 1,\ldots,10$; best order $q^{\star} = 2$. For reference, BOCPDMS attains $0.8891$ with a VAR(1) and $\mathbf{0.8707}$ with a VAR(2). The BOCPDMS VAR(1) sits above the best frequentist VAR, while the BOCPDMS VAR(2) sits below it.}
    \label{tab:var_q}
    \begin{tabular}{|c|c||c|c||c|c||c|c||c|c|}
        \hline
        $q$ & MSE & $q$ & MSE & $q$ & MSE & $q$ & MSE & $q$ & MSE \\ \hline
        1 & 0.8815 & 3 & 0.9011 & 5 & 0.9179 & 7 & 0.9237 & 9  & 0.9222 \\ \hline
        2 & \textbf{0.8791} & 4 & 0.9112 & 6 & 0.9190 & 8 & 0.9208 & 10 & 0.9234 \\ \hline
    \end{tabular}
\end{table}

\newpage
\textbf{Experiments 2 and 3 --- univariate baseline and noise structure.} With the first $50$ residuals discarded, $\boldsymbol{\Omega}_{\mathrm{var}} = \begin{pmatrix} 0.820 & 0.481 \\ 0.481 & 0.754 \end{pmatrix}$. Table~\ref{tab:exp23} collects both experiments.
\vspace{-6pt}
\begin{table}[H]
    \centering \small \small
    \caption{Experiment 2 (top block): bivariate BOCPDMS against two independent univariate BOCPDs. Experiment 3 (bottom block): effect of the noise specification, with the best frequentist VAR as a reference. The two univariate BOCPDs detect $97$ (MSFT) and $84$ (AAPL) changepoints.}
    \label{tab:exp23}
    \begin{tabular}{|l|l|c|c|c|}
    \hline
     & \textbf{Model} & \textbf{MSE} & \textbf{MSFT} & \textbf{AAPL} \\ \hline
    \multicolumn{5}{|l|}{\textit{Experiment 2 --- multivariate vs. two univariate filters}}\\ \hline
    M4 & BOCPDMS VAR(2), bivariate & 0.8707 & 0.8582 & 0.8899 \\ \hline
    M1 & $2\times$ BOCPD, univariate, summed & \textbf{0.7705} & 0.7351 & 0.8247 \\ \hline
    \multicolumn{5}{|l|}{\textit{Experiment 3 --- contemporaneous noise structure}}\\ \hline
    M2 & BOCPDMS VAR(1), diagonal $\boldsymbol{\Omega}$ & 0.8891 & 0.8733 & 0.9134 \\ \hline
    M3 & BOCPDMS VAR(1), full $\boldsymbol{\Omega}$ (observations) & 0.8874 & 0.8727 & 0.9098 \\ \hline
    M4 & BOCPDMS VAR(2), diagonal $\boldsymbol{\Omega}$ & \textbf{0.8707} & 0.8582 & 0.8899 \\ \hline
    M5 & BOCPDMS VAR(2), full $\boldsymbol{\Omega}$ (observations) & 0.8916 & 0.8753 & 0.9164 \\ \hline
    M6 & BOCPDMS VAR(2), $\boldsymbol{\Omega} = \boldsymbol{\Omega}_{\mathrm{var}}$ (innovations) & 0.8877 & 0.8720 & 0.9118 \\ \hline
    M7 & VAR($q=2$) & 0.8791 & 0.8255 & 0.9611 \\ \hline
\end{tabular}
\end{table}

\textbf{Diebold--Mariano tests.} Pairwise comparisons under squared-error loss at horizon $h=1$ on $[t_{\min}:T]$ (Table~\ref{tab:dm}). Each cell gives the DM statistic and the $p$-value. A negative statistic favours the row model. Only the M1--M4 duel of Experiment~2 involves the univariate baseline; the remaining pairs are those of Experiment~3.

\begin{table}[H]
    \centering \small
    \caption{DM statistics (top) and $p$-values (bottom, smaller type), on each series.}
    \label{tab:dm}
    \begin{minipage}{0.48\textwidth}
        \centering
        \begin{tabular}{|c|c|c|c|c|c|c|}
            \hline
            \textbf{MSFT} & M2 & M3 & M4 & M5 & M6 & M7 \\ \hline
            M1 & & & $-0.78$ & & & \\
               & & & \scriptsize 0.434 & & & \\ \hline
            M2 & $+1.21$ & $+1.00$ & $-0.12$ & $+0.07$ & $+0.38$ & \\
               & \scriptsize 0.225 & \scriptsize 0.318 & \scriptsize 0.907 & \scriptsize 0.942 & \scriptsize 0.702 & \\ \hline
            M3 & & $+0.96$ & $-0.15$ & $+0.04$ & $+0.38$ & \\
               & & \scriptsize 0.336 & \scriptsize 0.884 & \scriptsize 0.966 & \scriptsize 0.705 & \\ \hline
            M4 & & & $-1.52$ & $-1.35$ & $+0.26$ & \\
               & & & \scriptsize 0.129 & \scriptsize 0.177 & \scriptsize 0.794 & \\ \hline
            M5 & & & & $+0.68$ & $+0.39$ & \\
               & & & & \scriptsize 0.494 & \scriptsize 0.694 & \\ \hline
            M6 & & & & & $+0.37$ & \\
               & & & & & \scriptsize 0.714 & \\ \hline
        \end{tabular}
    \end{minipage}\hfill
    \begin{minipage}{0.48\textwidth}
        \centering
        \begin{tabular}{|c|c|c|c|c|c|c|}
            \hline
            \textbf{AAPL} & M2 & M3 & M4 & M5 & M6 & M7 \\ \hline
            M1 & & & $-0.48$ & & & \\
               & & & \scriptsize 0.630 & & & \\ \hline
            M2 & $+0.94$ & $+1.26$ & $-0.14$ & $+0.07$ & $-0.48$ & \\
               & \scriptsize 0.347 & \scriptsize 0.206 & \scriptsize 0.888 & \scriptsize 0.946 & \scriptsize 0.634 & \\ \hline
            M3 & & $+1.15$ & $-0.34$ & $-0.09$ & $-0.52$ & \\
               & & \scriptsize 0.251 & \scriptsize 0.737 & \scriptsize 0.929 & \scriptsize 0.605 & \\ \hline
            M4 & & & $-1.43$ & $-1.20$ & $-0.72$ & \\
               & & & \scriptsize 0.154 & \scriptsize 0.229 & \scriptsize 0.473 & \\ \hline
            M5 & & & & $+0.78$ & $-0.48$ & \\
               & & & & \scriptsize 0.433 & \scriptsize 0.629 & \\ \hline
            M6 & & & & & $-0.52$ & \\
               & & & & & \scriptsize 0.602 & \\ \hline
        \end{tabular}
    \end{minipage}
\end{table}

None of the pairwise differences reaches conventional significance: on this sample, no model is demonstrably a better forecaster than another. The M1--M4 duel deserves a word, since a $12\%$ gap in MSE returning $p = 0.43$ is surprising at first sight. The loss differential is overwhelmingly carried by a handful of observations: the ten worst bars out of $7020$ account for about $70\%$ of $\sum|d_t|$. The test therefore measures the variance of those few points more than any systematic difference. Two robustness diagnostics support this reading: winsorizing the most extreme $0.5\%$ of $|d_t|$ makes the advantage of the univariate pair clearly significant ($\mathrm{DM} = 3.50$ on MSFT, $8.60$ on AAPL), and under absolute-error loss the same holds on AAPL ($\mathrm{DM} = 8.40$, $p < 10^{-4}$), MSFT remaining a draw ($0.32$, $p = 0.75$).

\newpage
\subsection{Implementation notes}
\label{sec:difficulties}

\hspace{1cm}\textbf{Computational cost.} The real datasets are far larger than the $500$-point synthetic series used in the controlled experiments, and each calibration run tests several thousand parameter combinations, so the grid searches for the optimal BOCPD and BOSD ($K=1$) parameters are infeasible in pure Python. We first parallelised the grid search over $7$ of the $8$ available cores, which divided the running time by roughly the same factor, but calibration runs still took hours. We therefore compiled the core filtering loops of both algorithms with \texttt{numba}, a just-in-time compiler that translates annotated Python functions into optimised machine code. \texttt{numba} supports \texttt{numpy} operations but not \texttt{scipy} ones, which the original loop used. Runs that previously took hours finished in about fifteen minutes on much larger grids. The same work proved even more necessary for BOCPDMS: on our bivariate real datasets ($T > 8000$), a single run takes minutes with the compiled kernel and is prohibitively long without it.

\textbf{The changepoint heuristic.} While working on BOSD ($K=1$) we could not explain some unsatisfactory results on synthetic data, where the covering metric is available. The main symptom was over-segmentation: far too many changepoints. Part of it comes from the Pareto hazard, whose sharp peak just after $d_{\min}$ (see Appendix~\ref{app:durations}) makes the filter hyper-vigilant at the start of a regime and inflates the chance of declaring a change (one of the motivations for introducing the log-normal specification, which has no such peak). But this did not remove the problem. Most of it turned out to be due to the detection heuristic. The original rule declares a changepoint whenever the most likely path falls below a fixed threshold. On synthetic data resembling the real series --- with $\sigma_0^2$ small relative to $\sigma^2$, so that regimes differ little and changepoints are masked by noise --- the most likely path hugs the horizontal axis, even though clear triangles are visible in the probability mass above it. With a threshold of $5$, a path dropping to $2$ and climbing back four times produces four spurious changepoints. We adopted a new heuristic that keeps the threshold but additionally requires an actual \emph{drop}, comparing the most likely run length at $t$ to its value at $t-1$. We experimented with other rules as well, some abandoning the most likely path entirely, aiming for something closer to what one reads by eye from the triangles and drops in the posterior mass. The difficulty is that the rule must remain \emph{online}. Methods to extract the segmentation from the full run-length posterior matrix exist, but they are offline. Since this was not the object of the present work, we kept the first satisfactory rule; it improved the results considerably.

\textbf{A pre-computed hazard vector.} Our implementation of the duration-aware algorithm pre-computes the hazard vector $H(r)$ before the filtering loop, whereas the reference \texttt{C++} codebase recomputes it at each step. The choice is equivalent for fixed duration parameters and strictly faster, but it rules out online re-estimation of the duration law and makes the memory footprint grow with $T$.

\textbf{An inconsistency in the reference implementation.} Validating the online learning of $\sigma^2$ on synthetic data with $\sigma^2 = 1$ gave a posterior mean converging to $2.0$. The cause was not our code but the shape convention of the reference implementation, $a_r = a_0 + (r+1)/2$, which is inconsistent with a scale accumulating $S$ residuals per step whenever $S>1$ (Section~\ref{sec:hyperlearning}).

\textbf{BOCPDMS.} We re-implemented the algorithm of \cite{knoblauch2018bocpdms} from the authors' reference code, but following the conventions of our own BOCPD codebase, and rewrote the plotting routines to display the online-learning outputs. The first runs on real data gave an MSE above $1$, which looked like an implementation error but was not: the standardization of the data fixed the symptom without altering the conclusion.

\newpage
\section{Conclusion}\label{sec:conclusion}

This report asked whether Bayesian Online Changepoint Detection captures the
regime structure of high-frequency order flow, and whether duration-awareness and
multivariate model selection improve on the baseline.

For the first extension the answer is positive. Replacing the constant hazard rate
by a run-length-dependent one, derived from an explicit duration law within a
hidden semi-Markov formulation, improves both segmentation and forecasting. The
log-normal specification dominates the geometric baseline and the Pareto
alternative on both assets and under both calibration criteria, and the Monte
Carlo validation confirms that each model recovers the duration law it assumes.
The financial motivation and the statistical result agree: order flow has no
characteristic timescale, and a duration law with a heavier tail than the
geometric describes it better. Reaching this conclusion also required work of a
different nature. The calibration grids were infeasible at the scale of the real
datasets until the core loops were parallelised and compiled, and the
over-segmentation observed initially turned out to originate in the detection
heuristic rather than in the duration law, which led us to redesign it.

For the second extension the answer is negative, and arguably more instructive.
The multivariate filter, equipped with Bayesian vector autoregressions and online
hyperparameter learning, is outperformed by two independent univariate filters
whatever the noise specification, and the Diebold--Mariano tests separate none of
the models on this sample. The adaptivity that makes the model effective on
regime-switching data becomes a liability on heavy-tailed innovations with short
regimes: an autoregressive coefficient estimated on a handful of observations,
multiplied by an incoming volume spike, produces forecasts larger than any value
the series ever attains. A globally fitted vector autoregression is immune
precisely because it cannot adapt. This result should nonetheless be read as
provisional, since only a small part of the framework has been exercised here: the
model universe was never populated with competing specifications, which is the
mechanism that motivates it.

Two methodological observations also emerge. A conjugate model must be verified on
synthetic data with known parameters before any implementation of it is trusted,
a precaution that revealed the inconsistency in the shape convention of the
reference implementation discussed above. And a scaling artefact must be
distinguished from a genuine modelling limitation, which here required examining
the correlation, the variance and the error concentration of the forecasts rather
than a single aggregate figure.

Several extensions follow directly. The most immediate is the addition of an
intercept to the Bayesian vector autoregression: prepending a constant to the
regressor vector changes neither the sufficient statistics, nor the predictive
distribution, nor the gradients, preserves conjugacy trivially, and makes the
multivariate model nest the univariate one that outperformed it, since vanishing
autoregressive coefficients leave a constant mean per regime. It requires only a
block-structured coefficient prior, so that the level is free to move while the
dynamics stay regularised, and our implementation already accepts one. Level-only,
autoregression-only and level-plus-autoregression specifications could then be
placed together in the model universe and left to compete, which is the use of
model selection that our single-model runs never exercised. A second extension is
the combination of the run-length-dependent hazard with the multivariate
framework, with which our implementation is already compatible. A third would be
to learn the hazard rate, or the parameters of the duration law, online by
gradient ascent, in the spirit of what is already implemented for the two
Inverse-Gamma hyperparameters; this would remove the last offline step from the
pipeline.

\clearpage
\bibliographystyle{plainnat}
\bibliography{biblio}

\clearpage
\thispagestyle{empty}
\vspace*{\fill}
\noindent {\Huge \textbf{Appendices}}
\vspace*{\fill}

\clearpage
\appendix

\section{Proofs of the results of Section~\ref{sec:litreview}}\label{app:proofs}

\textit{This appendix collects the derivations underlying the results stated in Section~\ref{sec:litreview}. We first establish the Gaussian conjugate update,
then prove the three propositions that rely on it: the Gaussian predictive
posterior, the recursive message-passing decomposition, and the closed-form
Underlying Predictive Model.}

\subsection{Gaussian conjugate update: unknown mean $\mu$, known variance $\sigma^2$}
\label{app:gauss_update}

Let data $\mathcal{D} = \{x_1, \dots, x_N\}$ be drawn independently from $\mathcal{N}(\mu,\sigma^2)$. Our aim is to infer the unknown mean $\mu$ sequentially. The likelihood function is given by:
\begin{equation}\label{eq:app_likelihood}
    p(\mathcal{D}|\mu) = \prod_{n=1}^N p(x_n|\mu) = \frac{1}{(2\pi\sigma^2)^{N/2}}\exp \Big\{-\frac{1}{2\sigma^2}\sum_{n=1}^N(x_n-\mu)^2\Big\}
\end{equation}

To exploit conjugacy, we choose a Gaussian prior for the mean:
\begin{equation*}
    p(\mu) = \mathcal{N}(\mu|\mu_0,\sigma^2_0) = \frac{1}{\sqrt{2\pi\sigma_0^2}}\exp\Big\{-\frac{1}{2\sigma_0^2}(\mu-\mu_0)^2\Big\}
\end{equation*}

Following Bayes' rule, the posterior is the product of the likelihood and the prior: $p(\mu|\mathcal{D})\propto p(\mathcal{D}|\mu)p(\mu)$. We can expand the terms in the exponent and drop any terms that do not depend on $\mu$, absorbing them into the proportionality constant (details in \cite{murphy2007conjugate}):
\begin{align}
    p(\mu|\mathcal{D}) & \propto \exp \Big\{-\frac{1}{2\sigma^2}\sum_{n=1}^N(x_n-\mu)^2-\frac{1}{2\sigma_0^2}(\mu-\mu_0)^2\Big\} \notag \\
    & = \exp\Big\{-\frac{1}{2\sigma^2}\Big(\sum_{n=1}^Nx_n^2+N\mu^2-2\mu\sum_{n=1}^Nx_n\Big)-\frac{1}{2\sigma_0^2}\Big(\mu^2+\mu_0^2-2\mu\mu_0\Big)\Big\} \notag \\
    & \propto \exp\Big\{-\frac{1}{2}\Big(\frac{N\mu^2}{\sigma^2}-\frac{2\mu N\mu_{ML}}{\sigma^2}+\frac{\mu^2}{\sigma_0^2}-\frac{2\mu\mu_0}{\sigma_0^2}\Big)\Big\} \notag \\
    & = \exp\Big\{-\frac{1}{2}\Big(\mu^2\underbrace{\Big(\frac{N}{\sigma^2}+\frac{1}{\sigma_0^2}\Big)}_{a}-2\mu\underbrace{\Big(\frac{N\mu_{ML}}{\sigma^2}+\frac{\mu_0}{\sigma_0^2}\Big)}_{b}\Big)\Big\}\label{eq:app_last_term}
\end{align}
where $\mu_{ML} = \frac{1}{N}\sum_{n=1}^N x_n$ is the empirical mean (which is the Maximum Likelihood solution).

To recognize this as a Gaussian distribution $\mathcal{N}(\mu|\mu_N,\sigma_N^2)$, the exponent must be of the form $-\frac{1}{2\sigma_N^2}(\mu-\mu_N)^2$. We achieve this by "completing the square" and, ignoring the constant term, the posterior becomes: $ p(\mu|\mathcal{D}) \propto \exp \Big\{-\frac{a}{2}\Big(\mu-\frac{b}{a}\Big)^2\Big\} $

Identifying the terms yields the formal update rules stated in Proposition~\ref{prop:gauss_conj_update}: the posterior remains Gaussian, $p(\mu|\mathcal{D}) = \mathcal{N}(\mu|\mu_N,\sigma_N^2)$, with
$$ \frac{1}{\sigma_N^2} = \frac{N}{\sigma^2}+\frac{1}{\sigma_0^2} \hspace{1cm}\Big|\hspace{1cm} \mu_N = \frac{N\sigma_0^2}{N\sigma_0^2+\sigma^2}\mu_{ML}+\frac{\sigma^2}{N\sigma_0^2+\sigma^2}\mu_0 $$

\subsection{Gaussian Predictive Posterior}\label{app:pred_post}

\begin{proposition*}[Gaussian Predictive Posterior]
Given the posterior distribution of the mean $\mathcal{N}(\mu_N, \sigma_N^2)$ and the known observation variance $\sigma^2$, the posterior predictive distribution is:
$$ p(x_{N+1}|\mathcal{D}) = \int \mathcal{N}(x_{N+1}|\mu,\sigma^2)\mathcal{N}(\mu|\mu_N,\sigma_N^2)d\mu \ = \ \pmb{\mathcal{N}(\mu_N,\sigma_N^2+\sigma^2)} $$
\end{proposition*}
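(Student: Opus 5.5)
The cleanest route avoids computing the convolution integral directly and instead uses a latent-variable representation. Conditionally on $\mathcal{D}$, the mean is distributed as $\mu \sim \mathcal{N}(\mu_N,\sigma_N^2)$ by Proposition~\ref{prop:gauss_conj_update}. The new observation can be written as $x_{N+1} = \mu + \varepsilon$ with $\varepsilon \sim \mathcal{N}(0,\sigma^2)$. By the conditional independence $x_{N+1}\indep x_{1:N}\mid\mu$ used to define the predictive, $\varepsilon$ is independent of $\mu$ (given $\mathcal{D}$), since its law $\mathcal{N}(0,\sigma^2)$ does not depend on $\mu$. Hence $x_{N+1}$ is a sum of two independent Gaussians. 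I would then invoke the standard fact that such a sum is Gaussian, most quickly via characteristic functions:
$$\E\big[e^{i s x_{N+1}}\big] = e^{i s\mu_N - \frac12 s^2\sigma_N^2}\, e^{-\frac12 s^2\sigma^2} = e^{i s\mu_N - \frac12 s^2(\sigma_N^2+\sigma^2)},$$
which is the characteristic function of $\mathcal{N}(\mu_N,\sigma_N^2+\sigma^2)$. This identifies the marginal density, namely the integral in the statement.

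As a self-contained alternative in the style of Appendix~\ref{app:gauss_update}, I would compute the integral by completing the square in $\mu$. First, write the product of the two Gaussian densities as $\exp\{-\frac12 Q(\mu,x)\}$, where
$$Q = \frac{(x-\mu)^2}{\sigma^2} + \frac{(\mu-\mu_N)^2}{\sigma_N^2}.$$
Second, collect the $\mu$-terms: the coefficient of $\mu^2$ is $a = \frac{1}{\sigma^2}+\frac{1}{\sigma_N^2}$, and the linear coefficient is $b = \frac{x}{\sigma^2}+\frac{\mu_N}{\sigma_N^2}$. Third, complete the square as $Q = a(\mu - b/a)^2 + R(x)$ with $R(x) = \frac{x^2}{\sigma^2}+\frac{\mu_N^2}{\sigma_N^2} - \frac{b^2}{a}$. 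Fourth, the Gaussian integral over $\mu$ produces the factor $\sqrt{2\pi/a}$, which does not depend on $x$, so only $e^{-R(x)/2}$ remains.

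The main obstacle is the algebra showing $R(x) = \frac{(x-\mu_N)^2}{\sigma^2+\sigma_N^2}$. I would expand $b^2/a$ using $a = \frac{\sigma^2+\sigma_N^2}{\sigma^2\sigma_N^2}$ and check the three coefficients separately. For $x^2$ the coefficient is $\frac{1}{\sigma^2} - \frac{\sigma_N^2}{\sigma^2(\sigma^2+\sigma_N^2)} = \frac{1}{\sigma^2+\sigma_N^2}$, and the $x\mu_N$ and $\mu_N^2$ coefficients work out symmetrically.

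Finally, I would check the normalizing constants, $\frac{1}{2\pi\sigma\sigma_N}\sqrt{2\pi/a} = \frac{1}{\sqrt{2\pi(\sigma^2+\sigma_N^2)}}$, which gives exactly the density of $\mathcal{N}(\mu_N,\sigma_N^2+\sigma^2)$. I would present the characteristic-function argument as the main proof and mention that the completing-the-square computation confirms it.
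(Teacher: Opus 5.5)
Your proof is correct, but your main argument is not the one the paper uses. The paper's proof is exactly the computation you present as a secondary \emph{alternative}. It writes the exponent as $-\frac12(A\mu^2 - 2B\mu + C)$ and integrates out the Gaussian factor in $\mu$, which gives a constant $\sqrt{2\pi/A}$ independent of $x_{N+1}$. It then simplifies $C - B^2/A$ to $(x_{N+1}-\mu_N)^2/(\sigma^2+\sigma_N^2)$ and reads off the Gaussian kernel, working only up to proportionality. Your coefficient check and your explicit verification of the normalizing constant reproduce that computation. The constant check is a good confirmation but not logically required, since the integral is a marginal density and is therefore automatically normalized.

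Your primary route instead writes $x_{N+1} = \mu + \varepsilon$, with $\varepsilon \indep \mu$ given $\mathcal{D}$. This independence is correctly justified: the conditional law of $\varepsilon$ given $(\mu,\mathcal{D})$ is $\mathcal{N}(0,\sigma^2)$ whatever the value of $\mu$. You then recognize the integral in the statement as the convolution density of this sum, whose characteristic function is that of $\mathcal{N}(\mu_N,\sigma_N^2+\sigma^2)$.

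This route buys brevity and generality. It needs no algebra, and the same argument gives the multivariate and linear-Gaussian analogues (means and covariances simply add) with no extra work. Its cost is reliance on two standard external facts: the uniqueness theorem for characteristic functions, and the identification of the convolution integral with the density of $\mu+\varepsilon$. Since both densities are continuous, they then agree everywhere, not just almost everywhere. The paper's completing-the-square proof is fully elementary and self-contained, in keeping with its derivation of the conjugate update.
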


\begin{proof}
To evaluate this predictive distribution, we must compute the integral of the product of two Gaussians. Dropping the normalizing constants that do not depend on $\mu$ or $x_{N+1}$, we have:
\begin{align*}
    p(x_{N+1}|\mathcal{D}) & \propto \int \exp\Big\{-\frac{1}{2\sigma^2}(x_{N+1}-\mu)^2\Big\} \exp\Big\{-\frac{1}{2\sigma_N^2}(\mu-\mu_N)^2\Big\} d\mu
\end{align*}
We expand the terms in the exponent and group them by powers of $\mu$:

\begin{adjustbox}{max width=\linewidth, center}
$\displaystyle
    -\frac{1}{2} \left[ \frac{x_{N+1}^2 - 2x_{N+1}\mu + \mu^2}{\sigma^2} + \frac{\mu^2 - 2\mu\mu_N + \mu_N^2}{\sigma_N^2} \right] = -\frac{1}{2} \left[ \mu^2 \underbrace{\left(\frac{1}{\sigma^2} + \frac{1}{\sigma_N^2}\right)}_{A} - 2\mu \underbrace{\left(\frac{x_{N+1}}{\sigma^2} + \frac{\mu_N}{\sigma_N^2}\right)}_{B} + \underbrace{\left(\frac{x_{N+1}^2}{\sigma^2} + \frac{\mu_N^2}{\sigma_N^2}\right)}_{C} \right]
$
\end{adjustbox}

By completing the square for $\mu$, the expression in the brackets becomes $A\big(\mu - \frac{B}{A}\big)^2 + \big(C - \frac{B^2}{A}\big)$.

The integral over $\mu$ of the first term $\exp\big\{-\frac{A}{2}\big(\mu - \frac{B}{A}\big)^2\big\}$ is a standard Gaussian integral which evaluates to a constant ($\sqrt{2\pi/A}$) that does not depend on $x_{N+1}$. The remaining term, which does depend on $x_{N+1}$, is $\exp\big\{-\frac{1}{2}\big(C - \frac{B^2}{A}\big)\big\}$. Let us algebraically simplify $C - \frac{B^2}{A}$:
\begin{align*}
    C - \frac{B^2}{A} &= \left(\frac{x_{N+1}^2}{\sigma^2} + \frac{\mu_N^2}{\sigma_N^2}\right) - \frac{\left(\frac{x_{N+1}}{\sigma^2} + \frac{\mu_N}{\sigma_N^2}\right)^2}{\frac{1}{\sigma^2} + \frac{1}{\sigma_N^2}}
    = \frac{x_{N+1}^2 \sigma_N^2 + \mu_N^2 \sigma^2}{\sigma^2 \sigma_N^2} - \frac{(\sigma_N^2 x_{N+1} + \sigma^2 \mu_N)^2}{\sigma^2 \sigma_N^2 (\sigma^2 + \sigma_N^2)} \\
    &= \frac{(x_{N+1}^2 \sigma_N^2 + \mu_N^2 \sigma^2)(\sigma^2 + \sigma_N^2) - (\sigma_N^4 x_{N+1}^2 + \sigma^4 \mu_N^2 + 2\sigma^2 \sigma_N^2 x_{N+1} \mu_N)}{\sigma^2 \sigma_N^2 (\sigma^2 + \sigma_N^2)} \\
    &= \frac{\sigma^2 \sigma_N^2 (x_{N+1}^2 + \mu_N^2 - 2x_{N+1}\mu_N)}{\sigma^2 \sigma_N^2 (\sigma^2 + \sigma_N^2)}
    = \pmb{\frac{(x_{N+1} - \mu_N)^2}{\sigma^2 + \sigma_N^2}}
\end{align*}

This is exactly the kernel of a Gaussian distribution with mean $\mu_N$ and variance $\sigma^2 + \sigma_N^2$:
\begin{equation*}
    p(x_{N+1}|\mathcal{D}) \propto \exp\Big\{-\frac{1}{2(\sigma^2 + \sigma_N^2)}(x_{N+1}-\mu_N)^2\Big\}
\end{equation*}
\end{proof}

\subsection{Recursive Message Passing}

\begin{proposition*}[Recursive Message Passing]
By applying Bayes' chain rule and the conditional independence of the run length, the joint distribution decomposes recursively into three distinct computational blocks:
\begin{align*}
    P(r_t, x_{1:t}) &= \sum_{r_{t-1}} P(r_t, x_t | r_{t-1}, x_{1:t-1}) P(r_{t-1}, x_{1:t-1}) \\
    &= \sum_{r_{t-1}} \underbrace{P(r_t | r_{t-1})}_{\text{Hazard Rate}} \underbrace{P(x_t | r_{t-1}, x_{t-1}^{(r)})}_{\text{UPM}} \underbrace{P(r_{t-1}, x_{1:t-1})}_{\text{Previous Message}}
\end{align*}
\end{proposition*}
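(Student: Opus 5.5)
The plan is to derive the recursion in two moves: a marginalisation over $r_{t-1}$ with the chain rule, which gives the first line, and then a factorisation of the one-step term $P(r_t, x_t \mid r_{t-1}, x_{1:t-1})$ using the two BOCPD assumptions (a) and (b), which gives the second line.

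\textbf{First line.} We start from the law of total probability, $P(r_t, x_{1:t}) = \sum_{r_{t-1}} P(r_t, r_{t-1}, x_{1:t})$. We then split $x_{1:t}$ into $(x_t, x_{1:t-1})$ and apply the chain rule:
\[
P(r_t, r_{t-1}, x_t, x_{1:t-1}) = P(r_t, x_t \mid r_{t-1}, x_{1:t-1})\, P(r_{t-1}, x_{1:t-1}).
\]
This is exactly the first equality.

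\textbf{Factorising the one-step term.} Apply the chain rule once more:
\[
P(r_t, x_t \mid r_{t-1}, x_{1:t-1}) = P(r_t \mid r_{t-1}, x_{1:t-1})\, P(x_t \mid r_t, r_{t-1}, x_{1:t-1}).
\]
For the first factor, we invoke Definition~\ref{def:runlength}. In the generative model the run length is a Markov chain driven only by the hazard (a Bernoulli trial independent of the observations), so $P(r_t \mid r_{t-1}, x_{1:t-1}) = P(r_t \mid r_{t-1})$. For the second factor, we split into the two cases allowed by the transition.
\begin{itemize}
\item If $r_t = r_{t-1}+1$, the regime continues. By assumption (a), $x_t$ depends on the past only through the current regime's parameter $\eta_\rho$. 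The posterior of $\eta_\rho$ given $x_{1:t-1}$ and $r_{t-1}$ depends only on the last $r_{t-1}$ observations $x^{(r)}_{t-1}$. This uses assumption (b): data before the changepoint carry no information about $\eta_\rho$, because the parameters are drawn independently across regimes.
\item If $r_t = 0$, a fresh parameter is drawn from the prior, so $x_t$ is predicted by the prior predictive. This is the UPM evaluated with empty sufficient statistics.
\end{itemize}
In both cases the predictive of $x_t$ can be written as $P(x_t \mid r_{t-1}, x^{(r)}_{t-1})$ in the paper's notation. The index is understood as "the statistics carried along the branch", which are empty on the changepoint branch. Substituting the two factors yields the second line.

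\textbf{Main obstacle.} The delicate step is the conditional-independence claim for the UPM. It must be shown precisely that
\[
p(\eta_\rho \mid r_{t-1}, x_{1:t-1}) = p(\eta_\rho \mid x^{(r)}_{t-1}).
\]
I would argue this by writing the joint density of the data and the regime parameters given the segmentation. Under (a) and (b), this joint density factorises into independent blocks, one per regime. Conditioning on $r_{t-1}$ fixes the boundary of the last block. Integrating out the earlier blocks' parameters therefore leaves a factor that does not involve $\eta_\rho$, and Bayes' theorem (Proposition~\ref{prop:bayes}) restricted to the last block gives the result.

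I would also state the notational convention for the $r_t = 0$ branch explicitly. There the UPM is the prior predictive, so the dependence on $r_{t-1}$ is nominal and is kept only to match the paper's indexing.
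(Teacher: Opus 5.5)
Your proposal follows the paper's proof step for step: total probability and the chain rule give the first line, then the same split into $P(r_t \mid r_{t-1}, x_{1:t-1})$ and $P(x_t \mid r_t, r_{t-1}, x_{1:t-1})$ is reduced by the hazard and UPM conditional independences, and your block-factorisation argument under (a) and (b) supplies the justification the paper merely asserts. The only soft spot is the final relabelling, a bookkeeping issue the paper shares (its proof also ends with $p(x_t \mid r_t, x^{(r)}_{t-1})$): under your convention that $r_t = 0$ means $x_t$ opens a regime, the growth branch should condition on the $r_t = r_{t-1}+1$ in-regime points $x_{t-r_t:t-1}$, not on $r_{t-1}$ of them, so it is cleaner to index the UPM by $r_t$ than to stretch the $r_{t-1}$ subscript; the literal $r_{t-1}$-indexed form holds in the Adams--MacKay convention, where $x_t$ always belongs to the run active at $t-1$ and one factors $P(x_t \mid r_{t-1}, x_{1:t-1})\,P(r_t \mid r_{t-1})$ directly.
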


\begin{proof}
Starting with the term inside the sum, we can separate the newly arriving variables ($r_t, x_t$) from the past variables ($r_{t-1}, x_{1:t-1}$) using Bayes' chain rule:
$$ p(r_t, x_{1:t}, r_{t-1}) = p(r_t, x_t, r_{t-1}, x_{1:t-1}) = p(r_t, x_t | r_{t-1}, x_{1:t-1}) \ \pmb{p(r_{t-1}, x_{1:t-1})} $$

\vspace{-10pt}
The second term here is exactly the \textbf{message} passed from the previous time step. We apply the chain rule to the first term:
$$p(r_t, x_t \mid r_{t-1}, x_{1:t-1}) = p(x_t \mid r_t, r_{t-1}, x_{1:t-1}) \times p(r_t \mid r_{t-1}, x_{1:t-1})$$

\vspace{-10pt}
Now we apply our two conditional independence assumptions.
\begin{itemize}[itemsep = -2pt]
    \item \textbf{Underlying Predictive Model (UPM):} The new observation $x_t$ only depends on the current run length $r_t$ and the data within this run, so $p(x_t \mid r_t, r_{t-1}, x_{1:t-1}) = \pmb{p(x_t \mid r_t, x_{t-1}^{(r)})}$
    \item \textbf{Hazard Rate:} The new run length $r_t$ depends only on the previous run length $r_{t-1}$, so $p(r_t \mid r_{t-1}, x_{1:t-1}) = \pmb{p(r_t \mid r_{t-1})}$.
\end{itemize}
\vspace{-16pt}
\end{proof}

\subsection{Closed-form UPM for the Gaussian case}

\begin{proposition*}[Closed-form UPM for the Gaussian Case]
If the data within a regime is modeled as a Gaussian with an unknown mean and a known variance $\sigma^2$, and we use a conjugate Gaussian prior, the UPM evaluates to a closed-form Gaussian distribution:
$$p(x_t \mid r_{t-1}, x_{t-1}^{(r)}) = \pmb{\mathcal{N}(x_t \mid \mu_{r_{t-1}}, \sigma^2 + \sigma^2_{r_{t-1}})}$$
where $\mu_{r_{t-1}}$ and $\sigma^2_{r_{t-1}}$ are the updated sufficient statistics using strictly the $r_{t-1}$ observations of the current regime, using the Gaussian conjugate update of Appendix~\ref{app:gauss_update}.
\end{proposition*}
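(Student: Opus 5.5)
The plan is to reduce the statement to the two results already established in this appendix: the Gaussian conjugate update of Appendix~\ref{app:gauss_update} and the Gaussian Predictive Posterior of Appendix~\ref{app:pred_post}. The point is that conditioning on the run length $r_{t-1}$ turns the UPM into the ordinary predictive posterior of a single regime. The only work is to justify that reduction from the BOCPD assumptions (a) and (b).

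\textbf{Step 1: the run length selects the regime's data.} Fix a hypothesis $r_{t-1}=r$. Under this hypothesis the current regime started at time $t-r$. By assumption (b), its mean $\mu$ was drawn afresh from the prior $\mathcal{N}(\mu_0,\sigma_0^2)$ at the changepoint and is independent of all earlier observations. Hence the observations before $t-r$ carry no information about $\mu$. The relevant data set is therefore $x^{(r)}_{t-1}=\{x_{t-r},\dots,x_{t-1}\}$, with $N=r$ points (with $N=0$ when $r=0$, so that the prior is used directly). By assumption (a), these points are i.i.d.\ $\mathcal{N}(\mu,\sigma^2)$ given $\mu$.

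\textbf{Step 2: posterior of the regime mean.} Apply Proposition~\ref{prop:gauss_conj_update} to this data set with $N=r$. This gives
\[
\mu\mid x^{(r)}_{t-1}\sim\mathcal{N}(\mu_{r},\sigma^2_{r}),\qquad \frac{1}{\sigma_r^2}=\frac{r}{\sigma^2}+\frac{1}{\sigma_0^2},
\]
with $\mu_r$ the corresponding convex combination of the prior mean $\mu_0$ and the regime's empirical mean. Because the update is additive in the sufficient statistics, these quantities can be maintained recursively per run length: the growth branch adds one observation, and the changepoint branch resets to $(\mu_0,\sigma_0^2)$.

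\textbf{Step 3: marginalise the mean.} Write
\[
p(x_t\mid r_{t-1},x^{(r)}_{t-1})=\int p(x_t\mid\mu)\,p(\mu\mid x^{(r)}_{t-1})\,d\mu .
\]
This uses the conditional independence $x_t\indep x^{(r)}_{t-1}\mid\mu$ within the regime, which follows from (a). Then invoke the Gaussian Predictive Posterior of Appendix~\ref{app:pred_post} with $(\mu_N,\sigma_N^2)=(\mu_r,\sigma_r^2)$. This yields $\mathcal{N}(x_t\mid\mu_{r_{t-1}},\sigma^2+\sigma^2_{r_{t-1}})$, as claimed.

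\textbf{Main obstacle.} The only delicate point is Step 1, namely arguing rigorously that conditioning on $r_{t-1}$ and on all of $x_{1:t-1}$ is the same as conditioning on $x^{(r)}_{t-1}$ alone. Formally, given $r_{t-1}$, the joint law factorises as the law of the past regimes times the law of the current regime. This holds because the parameters are independent across regimes by (b). It then follows that the posterior of the current $\mu$ involves only the current-run likelihood. The remaining steps are direct substitutions into earlier results.
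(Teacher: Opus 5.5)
Your proposal is correct and follows the paper's route: marginalise the regime mean against its conjugate Gaussian posterior, built from the $N=r_{t-1}$ in-regime observations, and reuse the predictive-posterior integral of Appendix~\ref{app:pred_post}. Your Step~1 spells out why conditioning on $(r_{t-1}, x_{1:t-1})$ reduces to conditioning on $x^{(r)}_{t-1}$ via assumptions (a)--(b), a reduction the paper's proof takes for granted.
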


\begin{proof}
To evaluate the UPM, we marginalize over the unknown mean of the current regime:
$$p(x_t \mid r_{t-1}, x_{t-1}^{(r)}) = \int p(x_t \mid \mu_{\rho}) p(\mu_{\rho} \mid x_{t-1}^{(r)}) d\mu_{\rho}$$

\vspace{-10pt}
This is identical to the integral solved in Appendix~\ref{app:pred_post}. Because we restricted our model to the exponential family with a conjugate prior, the posterior distribution of the mean $p(\mu_{\rho} \mid x_{t-1}^{(r)})$ is natively a Gaussian $\mathcal{N}(\mu_{r_{t-1}}, \sigma^2_{r_{t-1}})$. The number of valid observations is simply the run length $N = r_{t-1}$. This yields exactly the distribution $\mathcal{N}(x_t \mid \mu_{r_{t-1}}, \sigma^2 + \sigma^2_{r_{t-1}})$.
\end{proof}

\newpage
\section{Regime duration distributions for BOSD ($K=1$)}\label{app:durations}

In the HSMM framework, the distribution of regime durations must be explicitly defined. Because the inference algorithm operates in discrete time steps, we must map probability distributions to a discrete hazard rate. For a duration random variable $d$:
$$ H(r) = P(d = r+1 \mid d > r) = \frac{S(r) - S(r+1)}{S(r)} = 1 - \frac{S(r+1)}{S(r)} $$

\noindent We formally define four distributions and derive their corresponding discrete hazard functions.

\subsection{The Gaussian Model}

\begin{definition}[Normal Distribution]
A random variable $d \sim \mathcal{N}(\mu_d, \sigma_d^2)$ has density $f(x)$ and survival function $S(x)$ given by:
\begin{equation}
    f(x) = \frac{1}{\sigma_d \sqrt{2\pi}} \exp\!\left(-\frac{(x - \mu_d)^2}{2\sigma_d^2}\right), \qquad S(x) = \frac{1}{2} \text{erfc}\left( \frac{x - \mu_d}{\sigma_d \sqrt{2}} \right)
\end{equation}
where $\text{erfc}$ is the complementary error function.
\end{definition}

\vspace{-20pt}
\begin{proposition}[Gaussian Hazard Function]
\begin{equation}
    H_{\text{Gauss}}(r) = 1 - \frac{\text{erfc}\left( (r + 1 - \mu_d)/\sigma_d \sqrt{2} \right)}{\text{erfc}\left( (r - \mu_d)/{\sigma_d \sqrt{2}} \right)}
\end{equation}
\end{proposition}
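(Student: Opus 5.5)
The plan is to substitute the Gaussian survival function $S(x) = \tfrac12\,\text{erfc}\big((x-\mu_d)/(\sigma_d\sqrt2)\big)$ from the definition above into the general discrete hazard identity stated at the opening of this appendix,
$$H(r) = P(d=r+1\mid d>r) = 1 - \frac{S(r+1)}{S(r)}.$$
That identity is a direct consequence of Definition~\ref{def:hazard}. Its numerator $P(d=r+1)$ is rewritten as the survival difference $S(r)-S(r+1)$, and its denominator is $P(d>r)=S(r)$.

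The proof then proceeds in three steps.

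\textbf{Step 1: discretisation convention.} I would fix how the continuous law $\mathcal{N}(\mu_d,\sigma_d^2)$ is mapped to an integer duration. The natural choice, and the one implicit in the statement, is to assign to the integer $r+1$ the mass of the interval $(r, r+1]$. This gives $P(d=r+1) = S(r)-S(r+1)$ and $P(d>r)=S(r)$ exactly. This convention is what makes the identity above an equality rather than an approximation.

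\textbf{Step 2: substitution.} I would insert $S(r+1)$ and $S(r)$ into the ratio. The common prefactor $\tfrac12$ cancels, leaving
$$\frac{\text{erfc}\big((r+1-\mu_d)/(\sigma_d\sqrt2)\big)}{\text{erfc}\big((r-\mu_d)/(\sigma_d\sqrt2)\big)}.$$
Subtracting this from $1$ gives the stated $H_{\text{Gauss}}(r)$.

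\textbf{Step 3: well-definedness.} The only real obstacle is checking that the expression is well defined and lies in $[0,1]$. Since $\text{erfc}>0$ everywhere, the denominator never vanishes. Since $\text{erfc}$ is strictly decreasing, the ratio lies in $(0,1)$, so $H_{\text{Gauss}}(r)\in(0,1)$ for every $r$.

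I would also remark on two limitations of this discretisation. First, the Gaussian puts mass on nonpositive durations, so the law is implicitly truncated or renormalised on $d\geq1$; renormalising rescales $S$ by a constant, which cancels in the ratio. Second, for large $r$ both $\text{erfc}$ terms underflow numerically. This is an implementation concern, handled via log-survival functions, not a gap in the proof.
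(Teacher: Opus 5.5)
Your proposal is correct and follows the paper's route: the appendix states the discrete identity $H(r) = 1 - S(r+1)/S(r)$, with $P(d=r+1)=S(r)-S(r+1)$ obtained by integrating the density over $(r,r+1]$, and gets the Gaussian hazard by substituting the erfc survival function so that the factor $\tfrac12$ cancels. Your positivity and monotonicity check, and your remark that renormalising to $d\ge 1$ leaves the ratio unchanged, are correct additions that the paper does not spell out.
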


\subsection{The Pareto Model}

\begin{definition}[Pareto Distribution]
Let $d \sim \mathrm{Pareto}(\alpha, d_{\min})$ with shape parameter $\alpha > 0$ and scale parameter $d_{\min} \geq 1$:
\begin{equation}
    f(x) = \frac{\alpha\, d_{\min}^\alpha}{x^{\alpha+1}}, \qquad S(x) = \left(\frac{d_{\min}}{x}\right)^{\alpha}, \quad x \geq d_{\min}
\end{equation}
\end{definition}

\vspace{-20pt}
\begin{proposition}[Pareto Hazard Function]
\begin{equation}
    H_{\text{Pareto}}(r) = 1 - \frac{S(r+1)}{S(r)} = 1 - \left(\frac{r}{r+1}\right)^\alpha
\end{equation}
\end{proposition}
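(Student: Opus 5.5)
The statement is the Pareto hazard function, and the plan is a direct substitution into the discrete hazard identity recalled at the start of this appendix, $H(r) = 1 - S(r+1)/S(r)$. That identity follows from Definition~\ref{def:hazard}: the event $\{d = r+1\}$ is contained in $\{d > r\}$, so
$$P(d=r+1 \mid d>r) = \frac{P(d>r) - P(d>r+1)}{P(d>r)} = \frac{S(r)-S(r+1)}{S(r)}.$$
Here $S$ denotes the survival function evaluated at integer run lengths. Following the convention of the appendix, I will use the continuous Pareto survival function $S(x) = (d_{\min}/x)^{\alpha}$ at integer arguments as the discrete survival function.

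The key step is the cancellation of the scale parameter in the ratio. For $r \ge d_{\min}$, both $r$ and $r+1$ lie in the support, so
$$\frac{S(r+1)}{S(r)} = \frac{d_{\min}^{\alpha}(r+1)^{-\alpha}}{d_{\min}^{\alpha} r^{-\alpha}} = \Big(\frac{r}{r+1}\Big)^{\alpha}.$$
This gives $H_{\text{Pareto}}(r) = 1 - (r/(r+1))^{\alpha}$ immediately. I will also note two consequences. First, $d_{\min}$ drops out entirely, so the hazard depends only on $\alpha$. Second, for large $r$, $H(r) \approx \alpha/(r+1)$, which decays like $\alpha/r$. This decreasing hazard is the absence of a characteristic timescale that motivated the choice of law, in contrast with the constant hazard of Proposition~\ref{prop:geometric}.

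The main obstacle is the boundary of the support, since the formula is only literally valid for $r \ge d_{\min}$. For $r+1 \le d_{\min}$ one has $S(r) = S(r+1) = 1$, so $H(r) = 0$. At the transition $r < d_{\min} \le r+1$, $S(r) = 1$ and $H(r) = 1 - (d_{\min}/(r+1))^{\alpha}$. At $r = 0$ with $d_{\min} = 1$ the formula gives $H(0) = 1 - 0 = 1$, a certain changepoint. I would state the proposition for $r \ge d_{\min}$ and add a remark giving the piecewise completion for $r < d_{\min}$.

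That piecewise form is also what produces the sharp hazard peak just after $d_{\min}$ mentioned in the implementation notes, so the remark should tie the two together. In practice, the precomputed hazard vector should enforce it, for instance by flooring $r$ at $d_{\min}$ or clipping $H$ to $[0,1]$.
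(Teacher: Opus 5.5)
Your proposal is correct and follows the paper's route: substitute the Pareto survival function $S(x)=(d_{\min}/x)^{\alpha}$ into the discrete identity $H(r)=1-S(r+1)/S(r)$ and let $d_{\min}^{\alpha}$ cancel. Restricting to $r\ge d_{\min}$ and completing piecewise ($H(r)=0$ for $r+1\le d_{\min}$) is a sound refinement the paper leaves implicit, with one caveat on your implementation remark: flooring must be applied to the argument of $S$, i.e.\ evaluating $S(\max(x,d_{\min}))$, not to $r$ in $1-(r/(r+1))^{\alpha}$, which would give a nonzero hazard before $d_{\min}$, and clipping $H$ to $[0,1]$ does nothing since the formula already lies in $(0,1]$.
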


\subsection{The Log-Normal Model}

The Log-Normal distribution provides a middle ground: it has a characteristic peak (mode) but exhibits a heavier right tail than the Gaussian distribution.

\begin{definition}[Log-Normal Distribution]
Let $d \sim \mathrm{LogNormal}(\mu_\ell, s^2)$.
\begin{equation}
    f(x) = \frac{1}{x\, s\sqrt{2\pi}} \exp\!\left(-\frac{(\ln x - \mu_\ell)^2}{2s^2}\right), \qquad S(x) = \frac{1}{2} \text{erfc}\left( \frac{\ln x - \mu_\ell}{s \sqrt{2}} \right)
\end{equation}
\end{definition}

\vspace{-20pt}
\begin{proposition}[Log-Normal Hazard Function]
\begin{equation}
    H_{\text{LogNorm}}(r) = 1 - \frac{\text{erfc}\left( (\ln(r + 1) - \mu_\ell)/{s \sqrt{2}} \right)}{\text{erfc}\left( (\ln(r) - \mu_\ell)/{s \sqrt{2}} \right)}
\end{equation}
\end{proposition}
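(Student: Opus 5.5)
The plan is to substitute the log-normal survival function into the general discrete hazard identity from the start of this appendix,
$$H(r) = 1 - \frac{S(r+1)}{S(r)}.$$
That identity follows from Definition~\ref{def:hazard}: the events $\{d > r\}$ and $\{d > r+1\}$ are nested, so $P(d = r+1) = S(r) - S(r+1)$ once the continuous law is discretised through its survival function. Dividing by $S(r)$ gives the expression above.

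Next I record the survival function from the definition. Since $\ln d \sim \mathcal{N}(\mu_\ell, s^2)$, we have $P(d > x) = P(\ln d > \ln x)$. Reducing to a standard normal gives
$$S(x) = 1 - \Phi\!\left(\frac{\ln x - \mu_\ell}{s}\right) = \tfrac{1}{2}\,\mathrm{erfc}\!\left(\frac{\ln x - \mu_\ell}{s\sqrt{2}}\right),$$
using $1-\Phi(z) = \tfrac12\mathrm{erfc}(z/\sqrt2)$. This fact is stated in the definition, so I would only verify it with this one-line change of variables.

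Plugging $x=r+1$ and $x=r$ into the ratio, the factors $\tfrac12$ cancel, which yields exactly the displayed formula for $H_{\text{LogNorm}}(r)$.

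The only delicate point is the domain, since $\ln r$ requires $r \ge 1$. At $r=0$ we have $S(0)=1$, taking $\ln 0 = -\infty$ and $\mathrm{erfc}(-\infty)=2$, so $H(0) = 1 - S(1) = \tfrac12\mathrm{erfc}(-\mu_\ell/(s\sqrt2))$. I would state this as a boundary convention. I would also note that $S(r) > 0$ for every finite $r$, so the ratio is always well defined. Finally, in the paper's parametrisation ``$\text{scale}$'' corresponds to $e^{\mu_\ell}$; this is a remark, not part of the derivation.
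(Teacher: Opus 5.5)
Your derivation is correct and follows the paper's route: the paper gives no separate proof, and the result comes from plugging $S(x)=\tfrac12\,\mathrm{erfc}\bigl((\ln x-\mu_\ell)/(s\sqrt2)\bigr)$ into $H(r)=1-S(r+1)/S(r)$ and cancelling the factors $\tfrac12$, exactly as you do. One slip in your boundary remark: $H(0)=1-S(1)=\Phi(-\mu_\ell/s)=\tfrac12\,\mathrm{erfc}\bigl(\mu_\ell/(s\sqrt2)\bigr)$, whereas the value you wrote, $\tfrac12\,\mathrm{erfc}\bigl(-\mu_\ell/(s\sqrt2)\bigr)$, is $S(1)$ itself (use $\mathrm{erfc}(-z)=2-\mathrm{erfc}(z)$).
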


\subsection{The Poisson Model}

\begin{definition}[Poisson Distribution]
$d \sim \mathrm{Poisson}(\lambda)$ has probability mass function and survival function:
\begin{equation}
    P(d = k) = \frac{\lambda^k e^{-\lambda}}{k!}, \qquad S(k) = P(d > k) = \sum_{j=k+1}^{\infty} \frac{\lambda^j e^{-\lambda}}{j!}
\end{equation}
\end{definition}

\vspace{-30pt}
\begin{proposition}[Poisson Hazard Function]
Because the Poisson distribution is natively discrete, its hazard function can be evaluated directly from its PMF and SF: $H_{\text{Poisson}}(r) = P(d = r+1)/S(r)$.
\end{proposition}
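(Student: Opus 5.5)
The Poisson hazard proposition is a direct specialisation of Definition~\ref{def:hazard}, so the plan is to substitute into it. For a discrete duration $d$ with survival function $S(r) = P(d > r)$, Definition~\ref{def:hazard} already gives
\[
H(r) = P(d = r+1 \mid d > r) = \frac{P(d = r+1)}{P(d > r)}.
\]
The only thing to justify is the passage from the conditional probability to the ratio. Since $d$ is integer-valued, the event $\{d = r+1\}$ is contained in $\{d > r\}$, so $P(d = r+1,\, d > r) = P(d = r+1)$. Dividing by $P(d>r)$ is the elementary definition of conditional probability.

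Next I will substitute the Poisson probability mass function and survival function from the definition just stated. Writing $S(r) = \sum_{j \ge r+1} \lambda^j e^{-\lambda}/j!$, this yields
\[
H_{\text{Poisson}}(r) = \frac{\lambda^{r+1} e^{-\lambda}/(r+1)!}{S(r)}.
\]
I will also check that this agrees with the appendix's general identity $H(r) = 1 - S(r+1)/S(r)$. It does, because $S(r) - S(r+1) = P(d = r+1)$ term by term in the tail sums.

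The only point needing care, which I regard as the main (mild) obstacle, is well-definedness. I need $S(r) > 0$ for every $r$ at which the hazard is evaluated. For $\lambda > 0$ every tail sum of the Poisson law is strictly positive, since all terms are positive and the series converges. Hence the ratio is defined for all $r \ge 0$ and lies in $(0,1]$.

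I will remark briefly that, unlike the Gaussian and log-normal cases, no discretisation step is needed here: the law is natively supported on $\mathbb{N}$, so the PMF and SF are used directly. The mass at $d = 0$ only affects the normalisation of durations $\ge 1$ and does not enter $H(r)$ for $r \ge 0$, because both numerator and denominator refer to events with $d \ge 1$.
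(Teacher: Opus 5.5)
Your proposal is correct and follows the paper's route: the proposition is Definition~\ref{def:hazard}, equivalently the appendix identity $H(r) = 1 - S(r+1)/S(r)$ with $S(r)-S(r+1) = P(d=r+1)$ holding exactly for an integer-valued law, specialised to the Poisson PMF and survival function. Your extra checks that $S(r)>0$ and that the mass at $d=0$ does not enter $H(r)$ are sound additions the paper omits; the ratio in fact lies in $(0,1)$, since $S(r+1)>0$.
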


\subsection{Summary}

A wrong approach to computing the discrete hazard rate from a continuous distribution is to approximate the numerator $P(d = r+1)$ using the pointwise continuous probability density function. The exact discrete probability mass must be computed by integrating the continuous density over the interval. Figure~\ref{fig:hazards} illustrates this discrepancy by plotting both the biased approximation (PDF-based) and the exact formulation.

\begin{figure}[H]
    \centering
    \begin{subfigure}[b]{0.49\linewidth}
        \includegraphics[width=\linewidth]{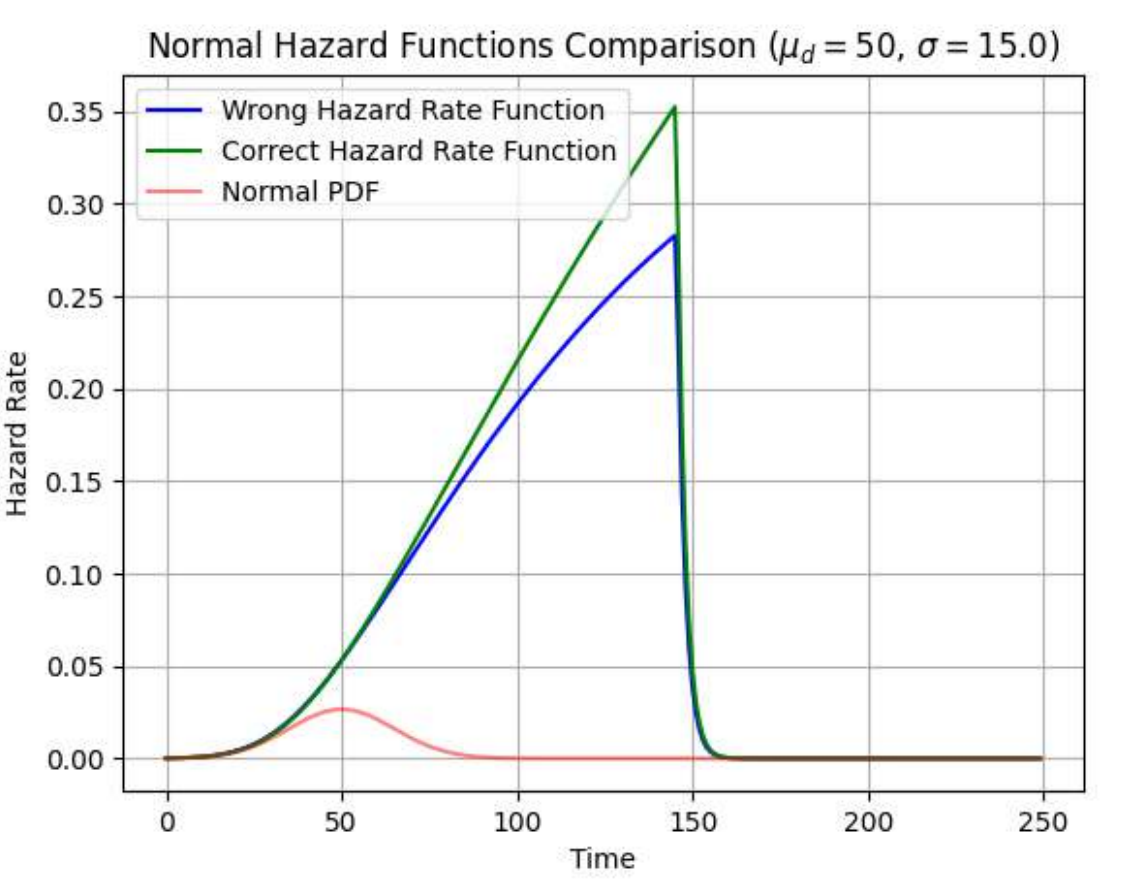}
        \caption{Normal}
    \end{subfigure}\hfill
    \begin{subfigure}[b]{0.49\linewidth}
        \includegraphics[width=\linewidth]{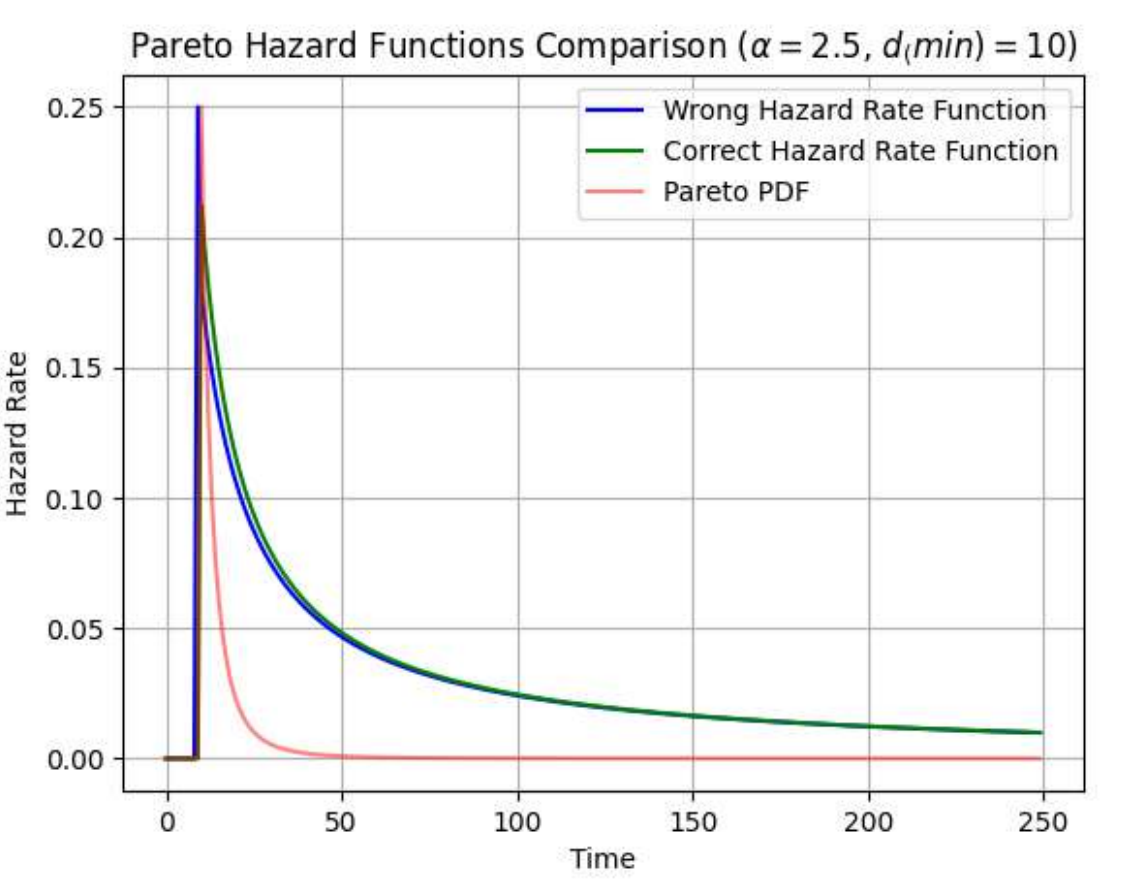}
        \caption{Pareto}
    \end{subfigure}

    \vspace{3pt}
    \begin{subfigure}[b]{0.49\linewidth}
        \includegraphics[width=\linewidth]{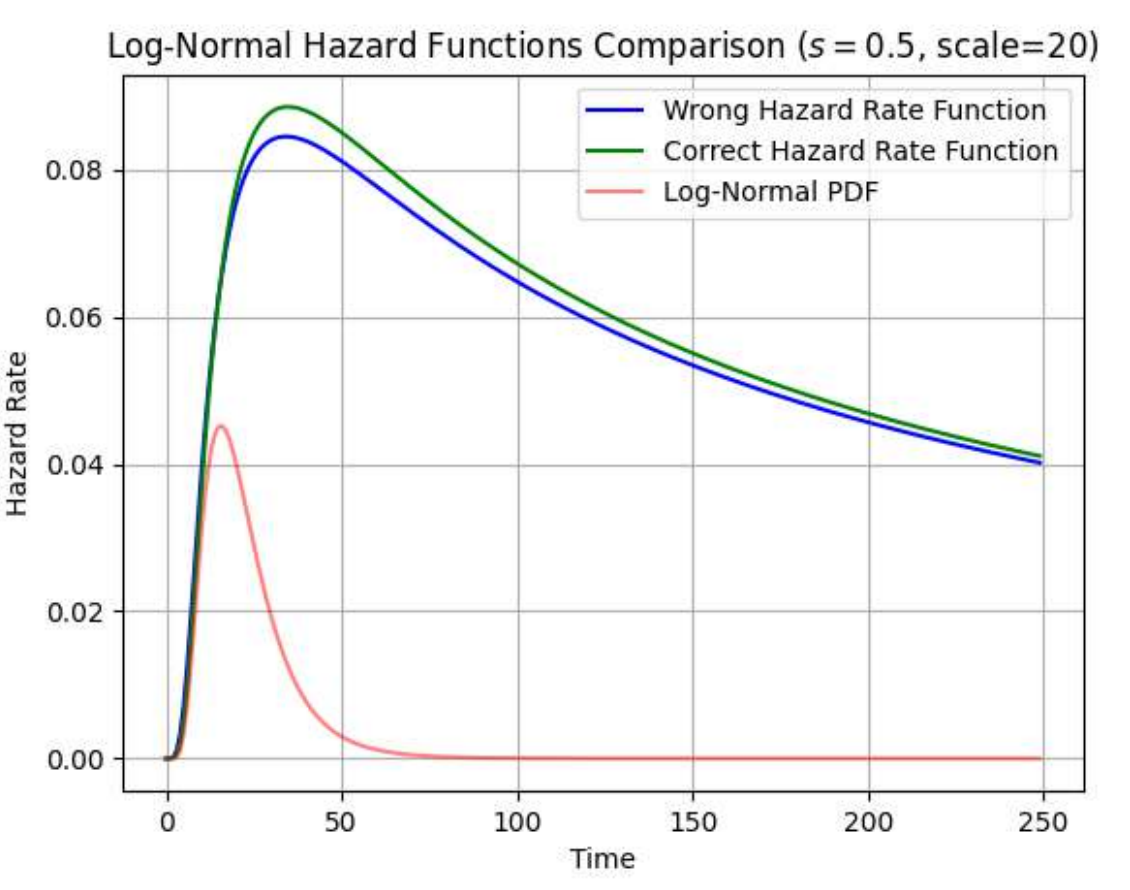}
        \caption{Log-Normal}
    \end{subfigure}\hfill
    \begin{subfigure}[b]{0.49\linewidth}
        \includegraphics[width=\linewidth]{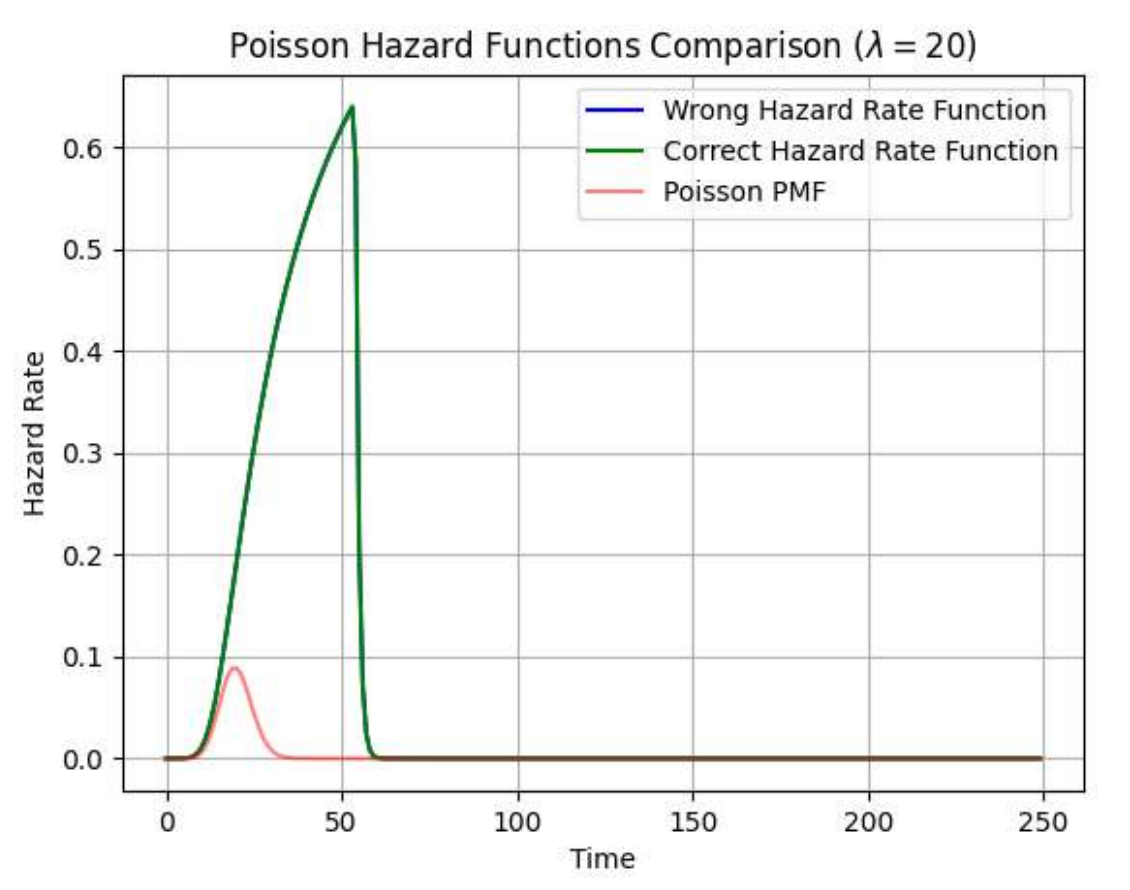}
        \caption{Poisson}
    \end{subfigure}
    \caption{Discrete hazard rate functions $H(r)$ derived from the four duration distributions.}
    \label{fig:hazards}
\end{figure}

\section{Outputs of the univariate out-of-sample runs}
\label{app:rlpost}

\begin{figure}[H]
    \centering
    \begin{subfigure}[b]{0.48\linewidth}
        \includegraphics[width=\linewidth]{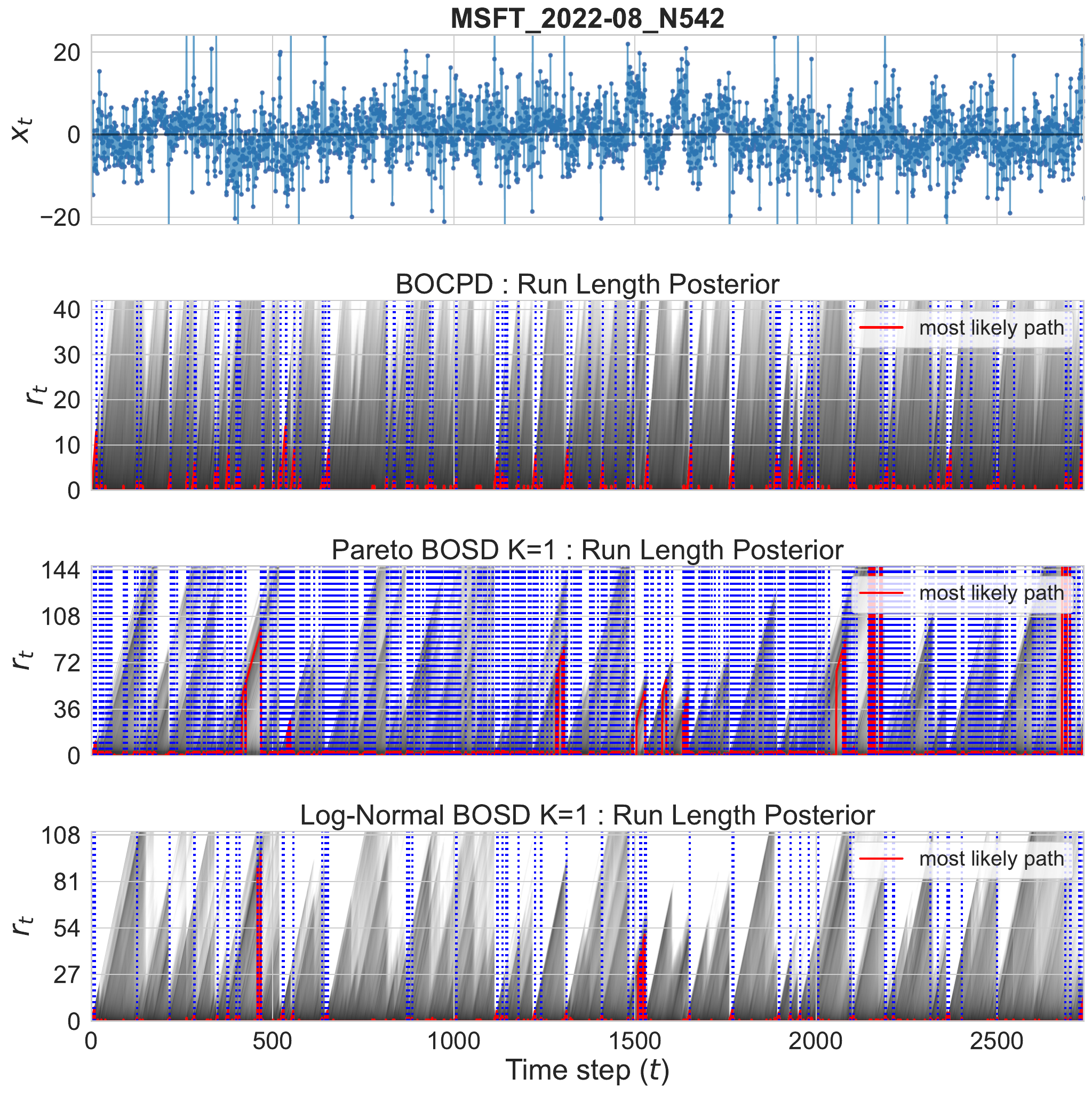}
        \caption{MSFT --- MSE-calibrated}
    \end{subfigure}\hfill
    \begin{subfigure}[b]{0.48\linewidth}
        \includegraphics[width=\linewidth]{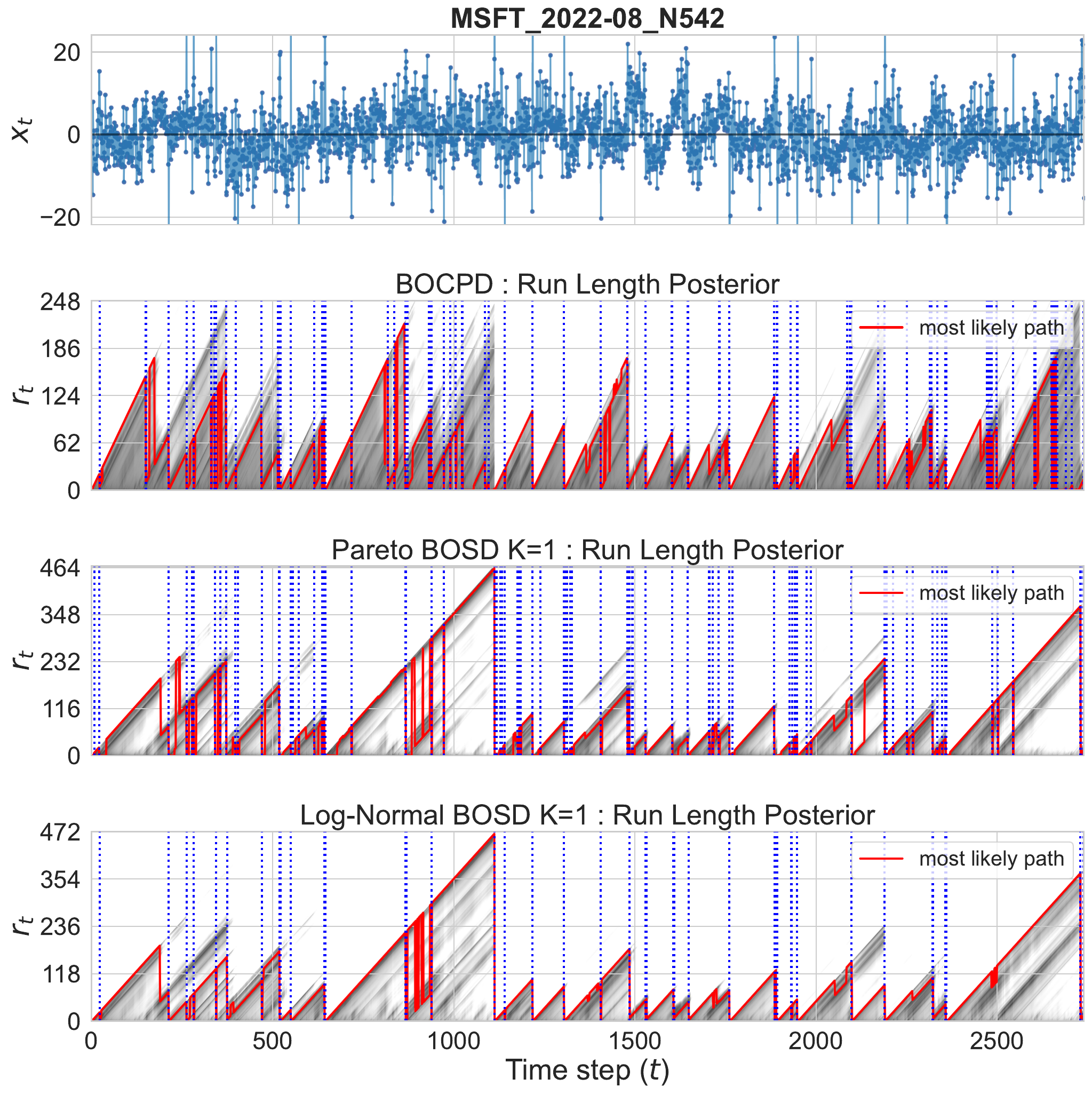}
        \caption{MSFT --- Log-Lik-calibrated}
    \end{subfigure}

    \vspace{4pt}
    \begin{subfigure}[b]{0.48\linewidth}
        \includegraphics[width=\linewidth]{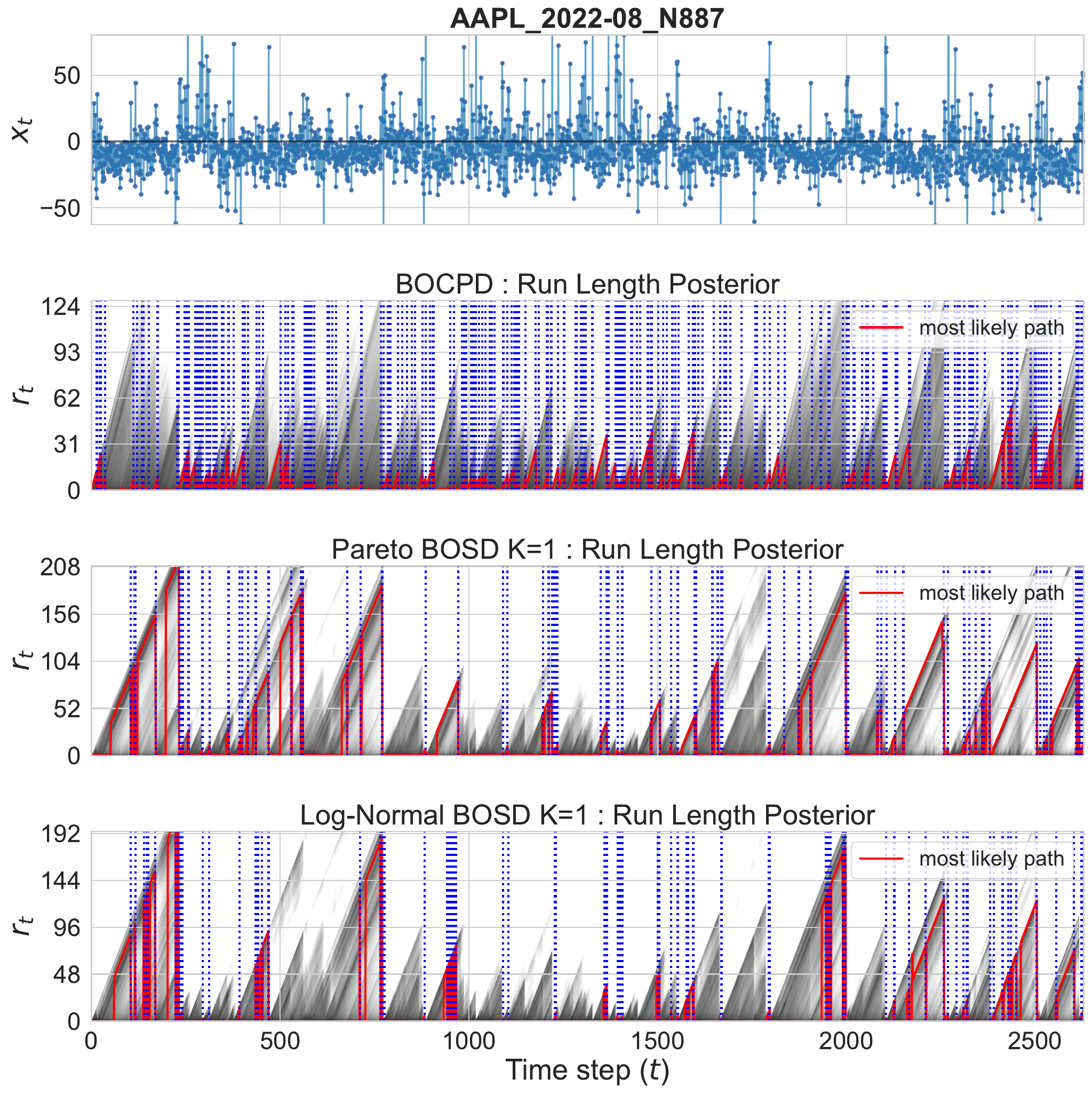}
        \caption{AAPL --- MSE-calibrated}
    \end{subfigure}\hfill
    \begin{subfigure}[b]{0.48\linewidth}
        \includegraphics[width=\linewidth]{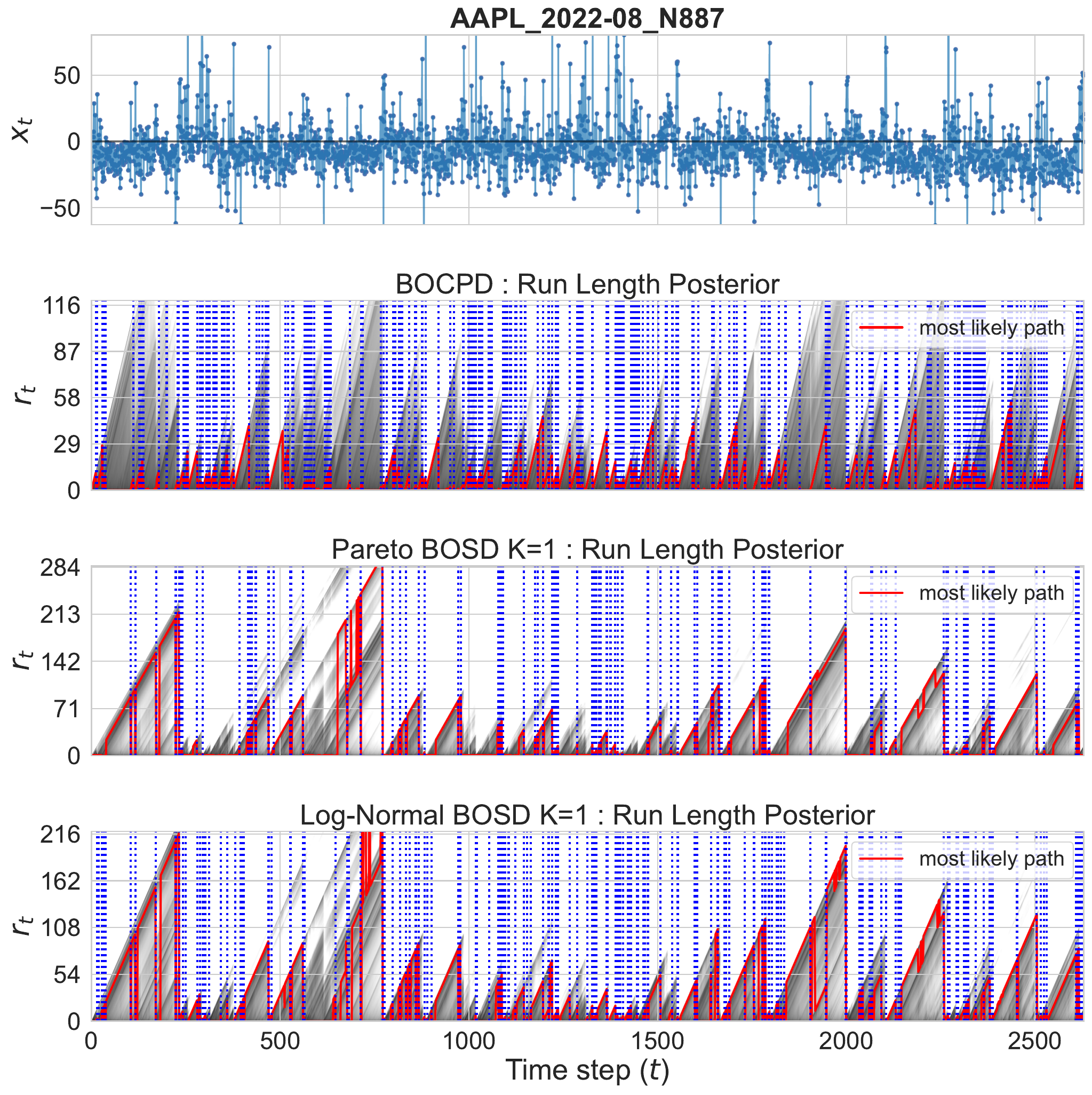}
        \caption{AAPL --- Log-Lik-calibrated}
    \end{subfigure}
    \caption{Run Length Posteriors on the 2022-08 datasets, under both calibration criteria.}
    \label{fig:rlpost_2022_08}
\end{figure}

\begin{figure}[H]
    \centering
    \begin{subfigure}[b]{0.48\linewidth}
        \includegraphics[width=\linewidth]{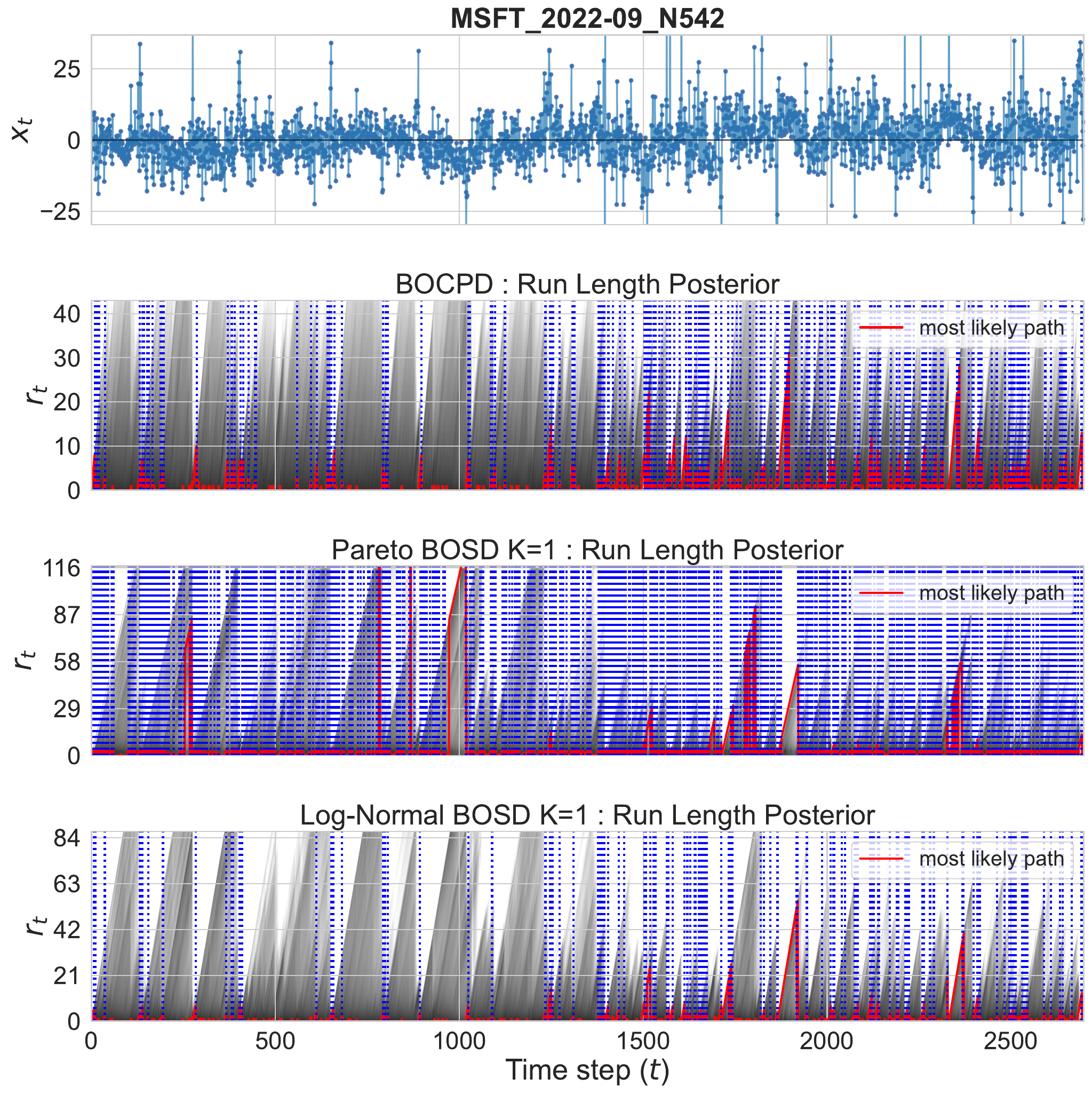}
        \caption{MSFT --- MSE-calibrated}
    \end{subfigure}\hfill
    \begin{subfigure}[b]{0.48\linewidth}
        \includegraphics[width=\linewidth]{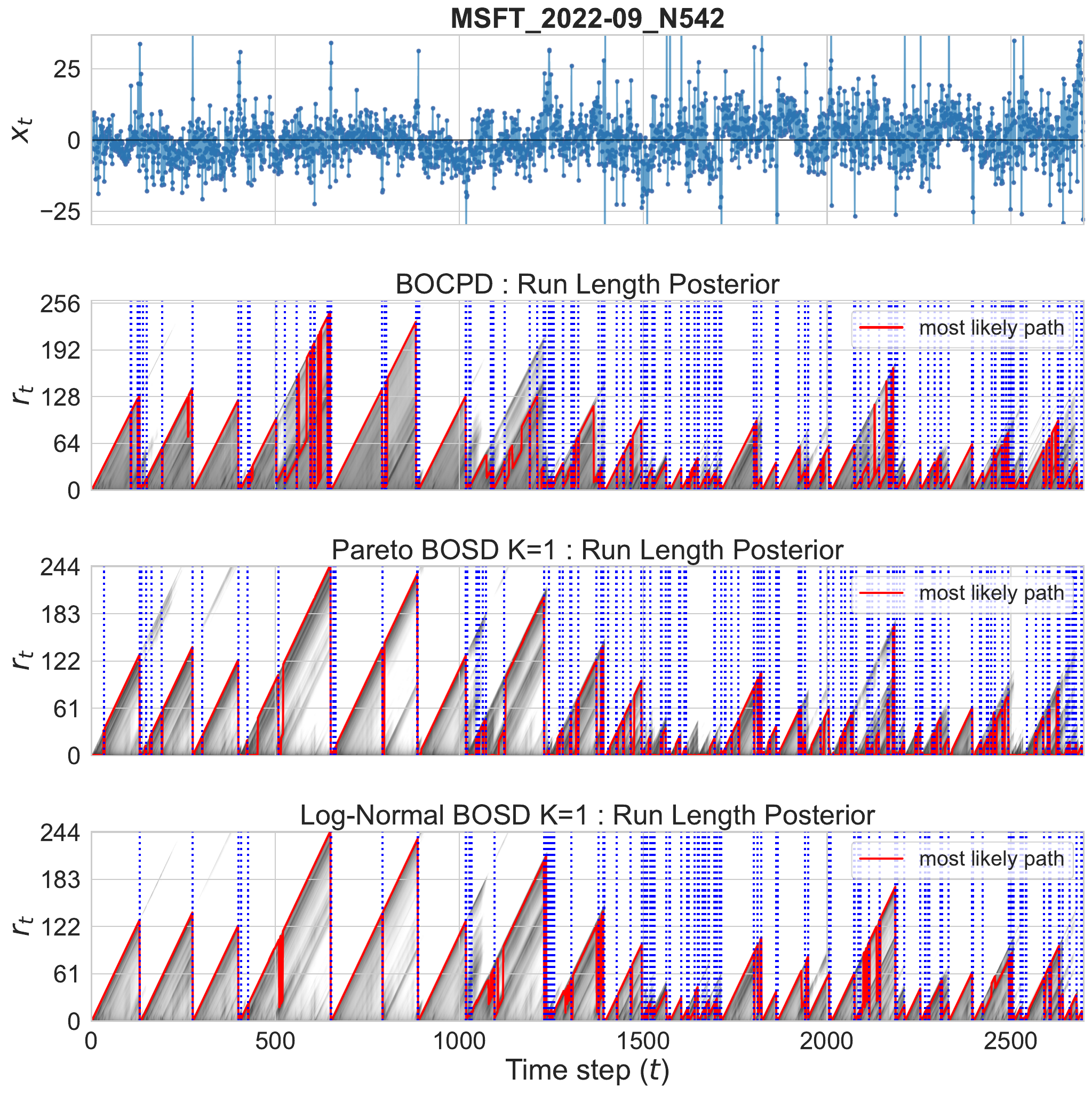}
        \caption{MSFT --- Log-Lik-calibrated}
    \end{subfigure}

    \vspace{4pt}
    \begin{subfigure}[b]{0.48\linewidth}
        \includegraphics[width=\linewidth]{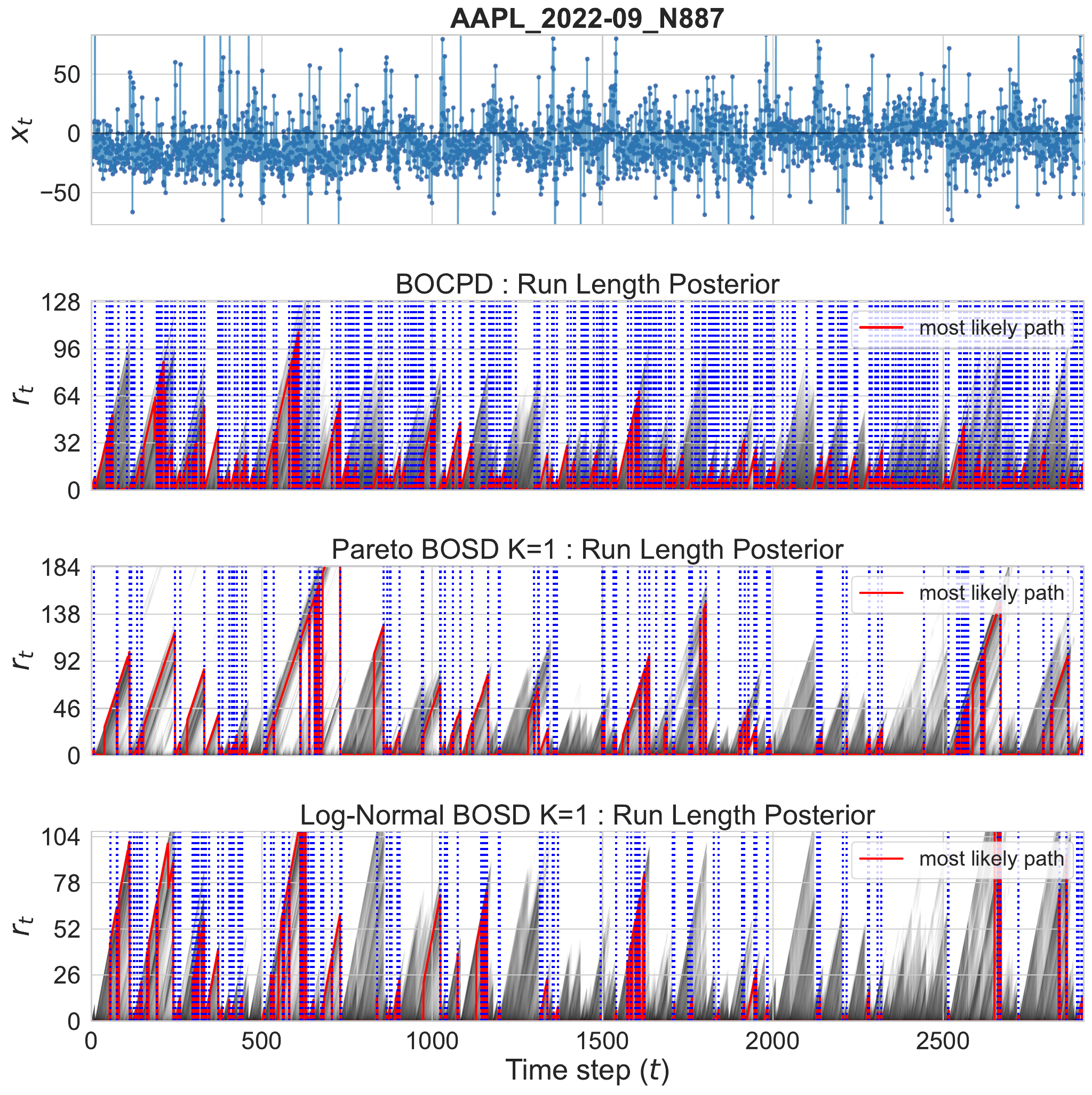}
        \caption{AAPL --- MSE-calibrated}
    \end{subfigure}\hfill
    \begin{subfigure}[b]{0.48\linewidth}
        \includegraphics[width=\linewidth]{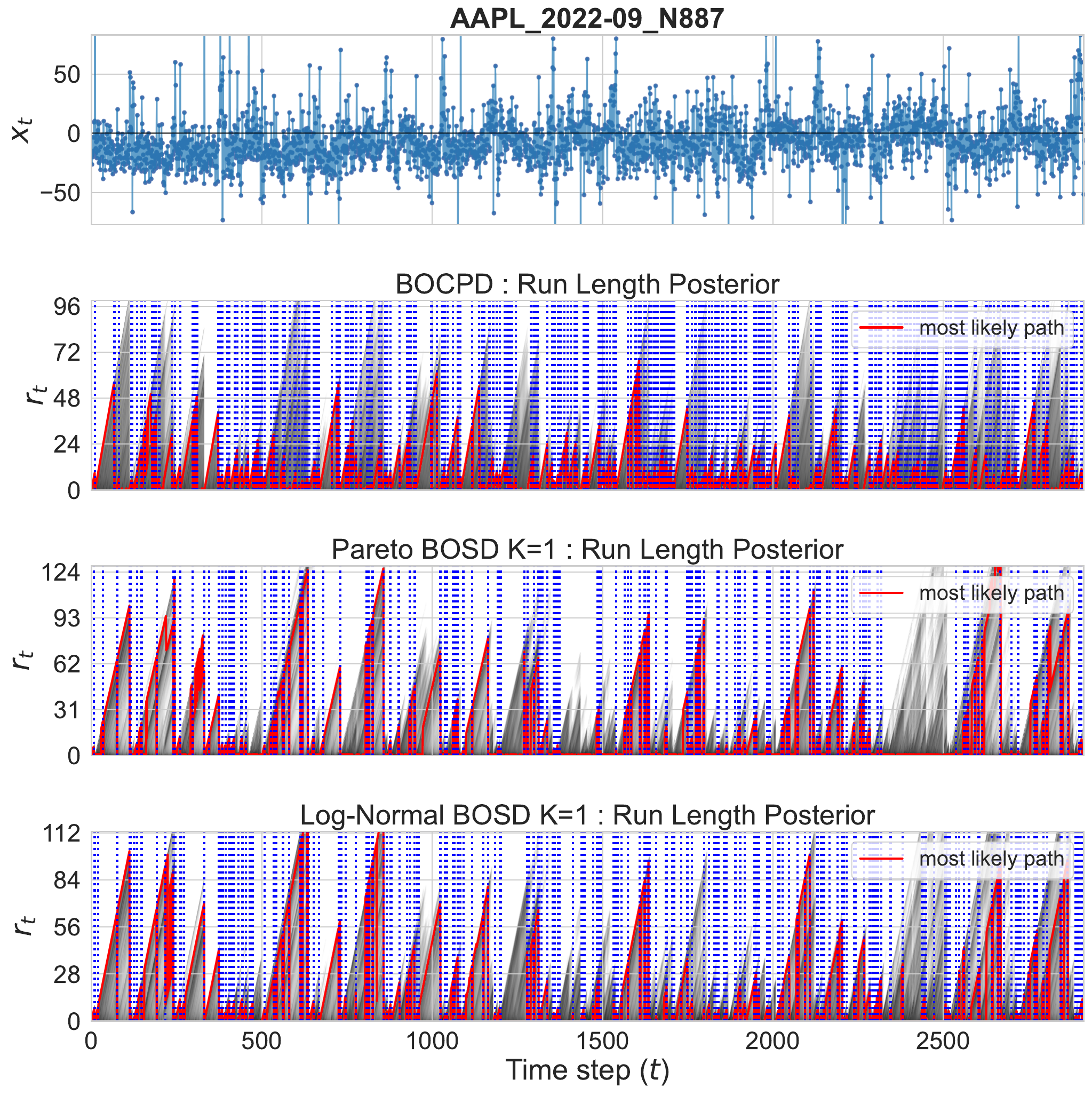}
        \caption{AAPL --- Log-Lik-calibrated}
    \end{subfigure}
    \caption{Run Length Posteriors on the 2022-09 datasets, under both calibration criteria.}
    \label{fig:rlpost_2022_09}
\end{figure}

\begin{figure}[H]
    \centering
    \begin{subfigure}[b]{0.48\linewidth}
        \includegraphics[width=\linewidth]{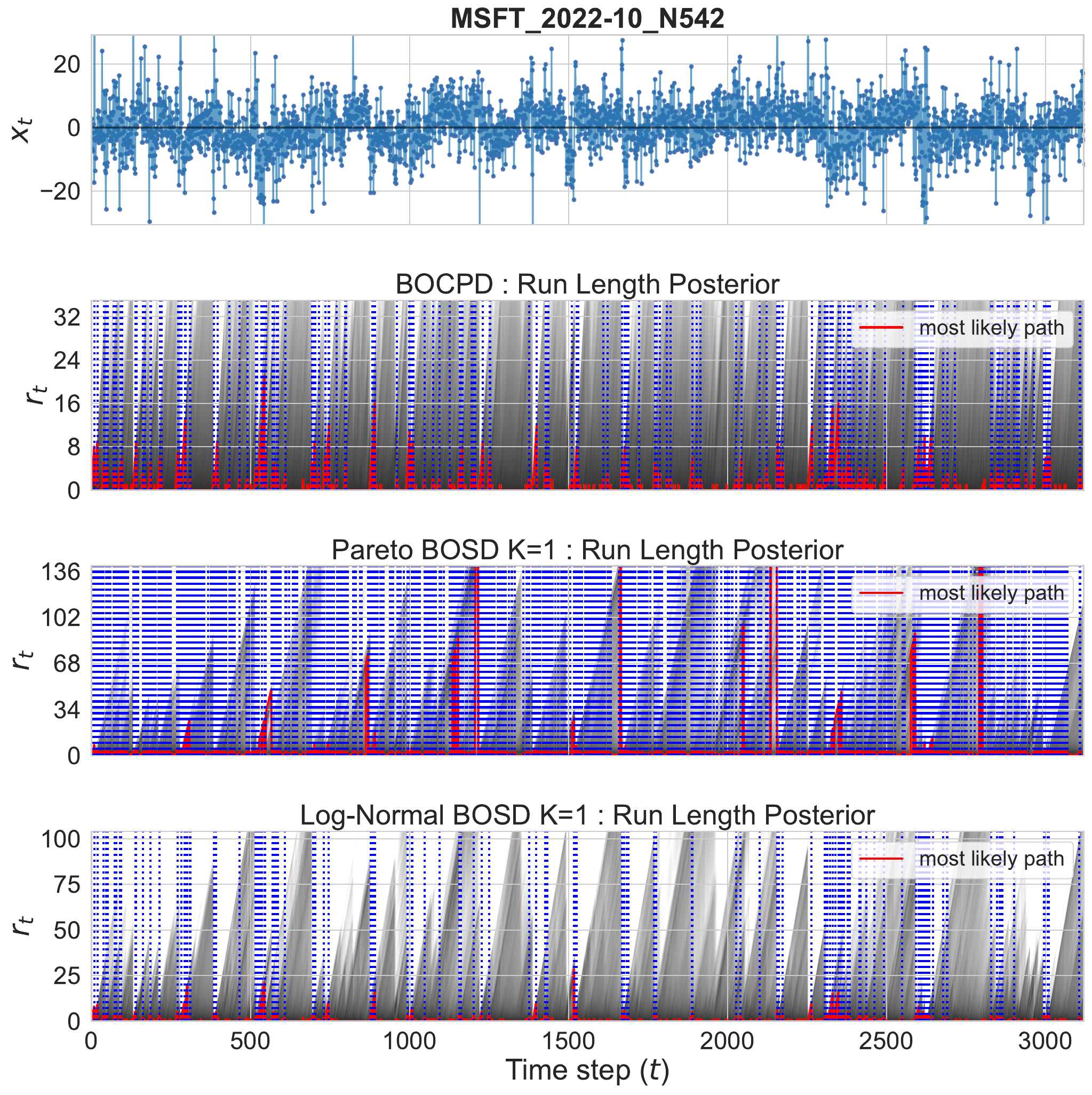}
        \caption{MSFT --- MSE-calibrated}
    \end{subfigure}\hfill
    \begin{subfigure}[b]{0.48\linewidth}
        \includegraphics[width=\linewidth]{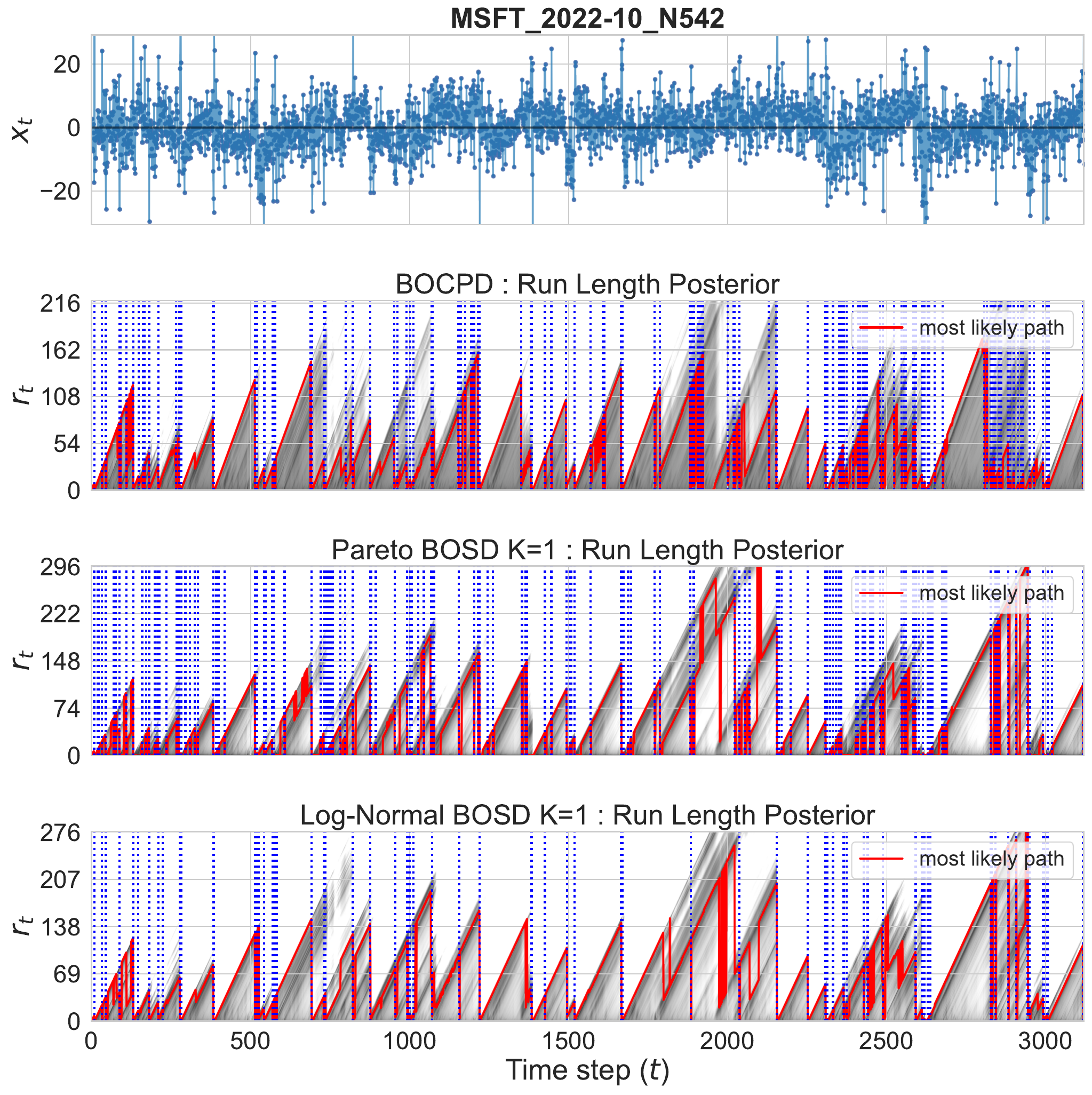}
        \caption{MSFT --- Log-Lik-calibrated}
    \end{subfigure}

    \vspace{4pt}
    \begin{subfigure}[b]{0.48\linewidth}
        \includegraphics[width=\linewidth]{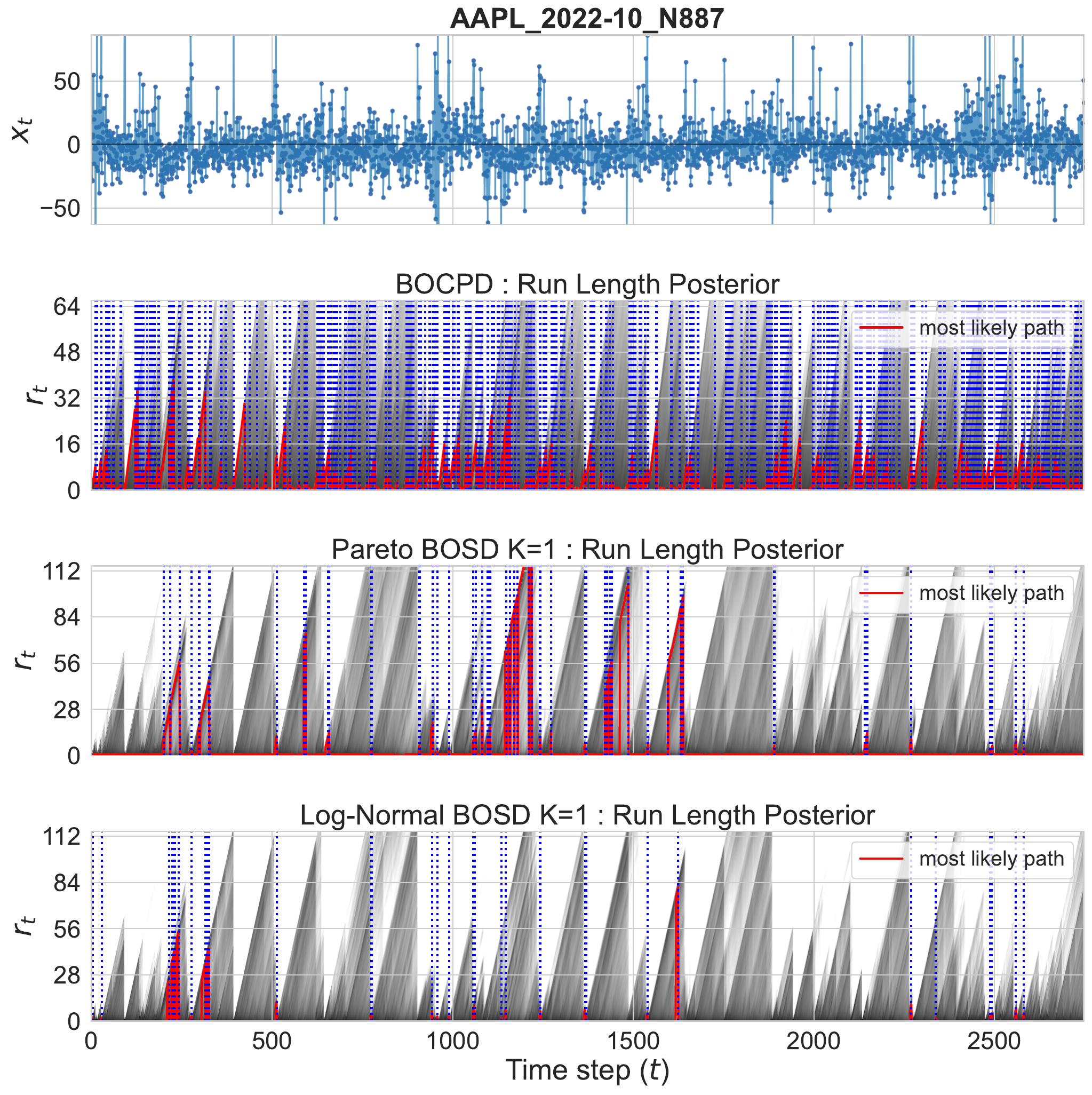}
        \caption{AAPL --- MSE-calibrated}
    \end{subfigure}\hfill
    \begin{subfigure}[b]{0.48\linewidth}
        \includegraphics[width=\linewidth]{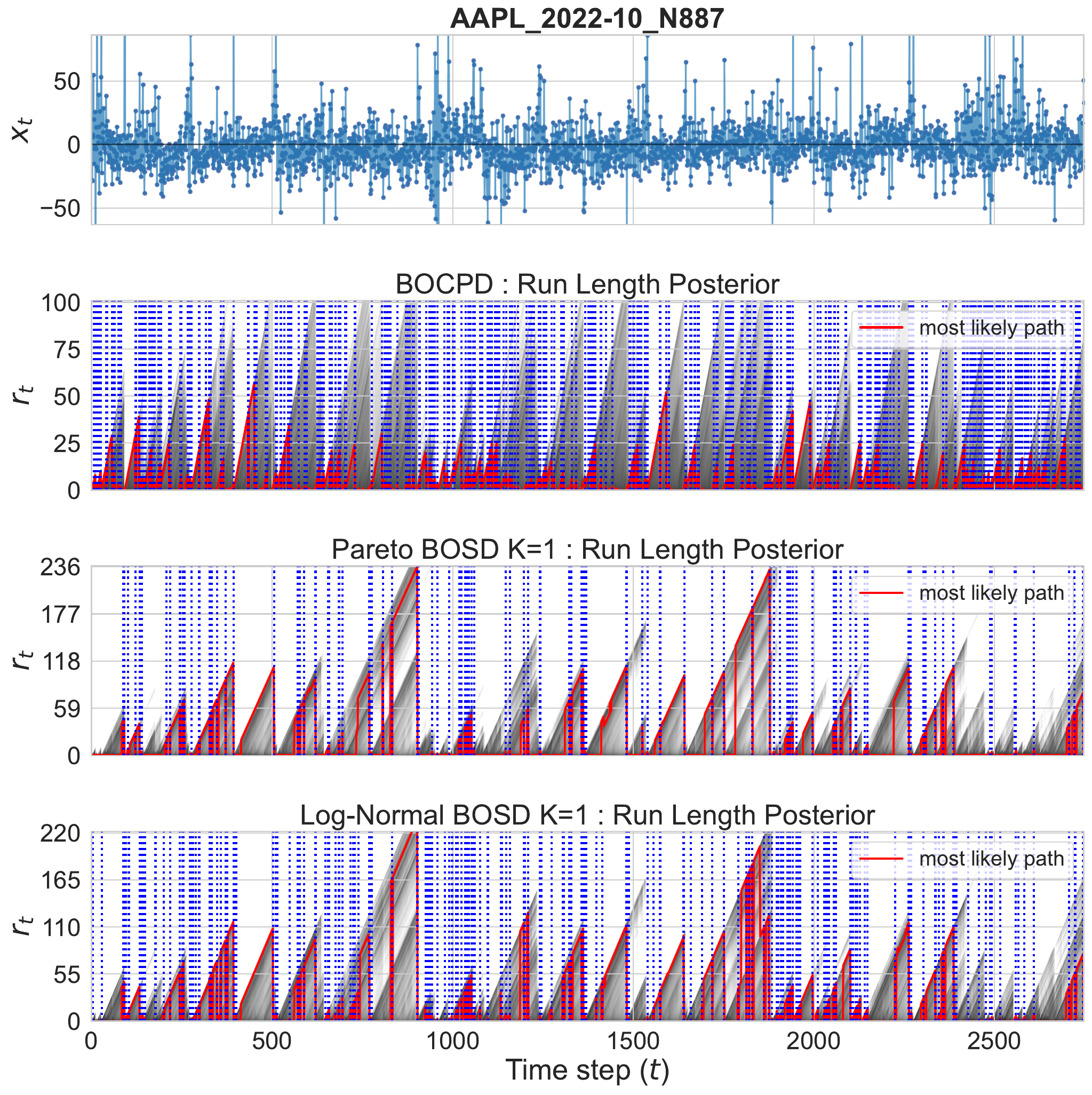}
        \caption{AAPL --- Log-Lik-calibrated}
    \end{subfigure}
    \caption{Run Length Posteriors on the 2022-10 datasets, under both calibration criteria.}
    \label{fig:rlpost_2022_10}
\end{figure}

\begin{figure}[H]
    \centering
    \begin{subfigure}[b]{0.48\linewidth}
        \includegraphics[width=\linewidth]{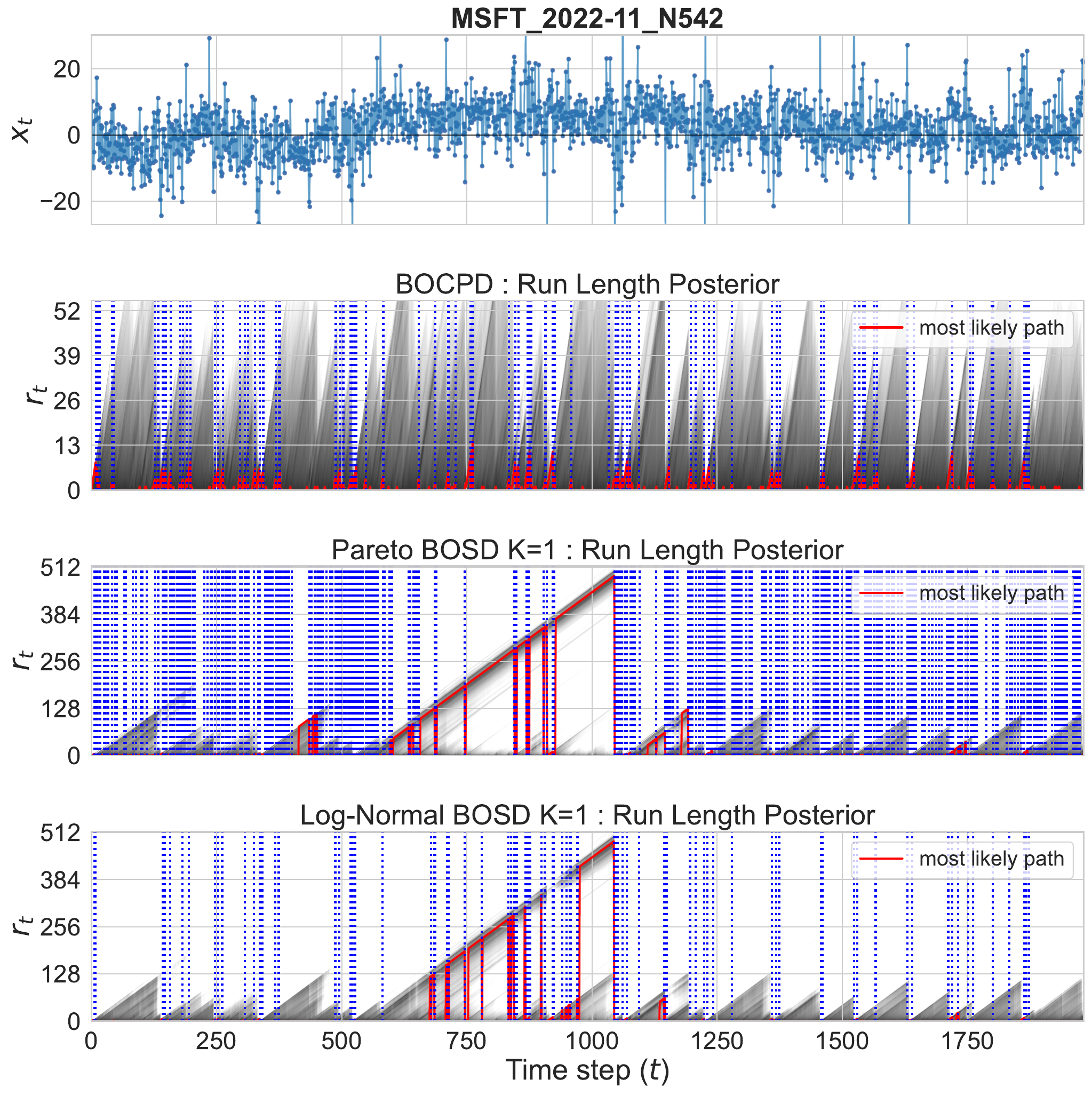}
        \caption{MSFT --- MSE-calibrated}
    \end{subfigure}\hfill
    \begin{subfigure}[b]{0.48\linewidth}
        \includegraphics[width=\linewidth]{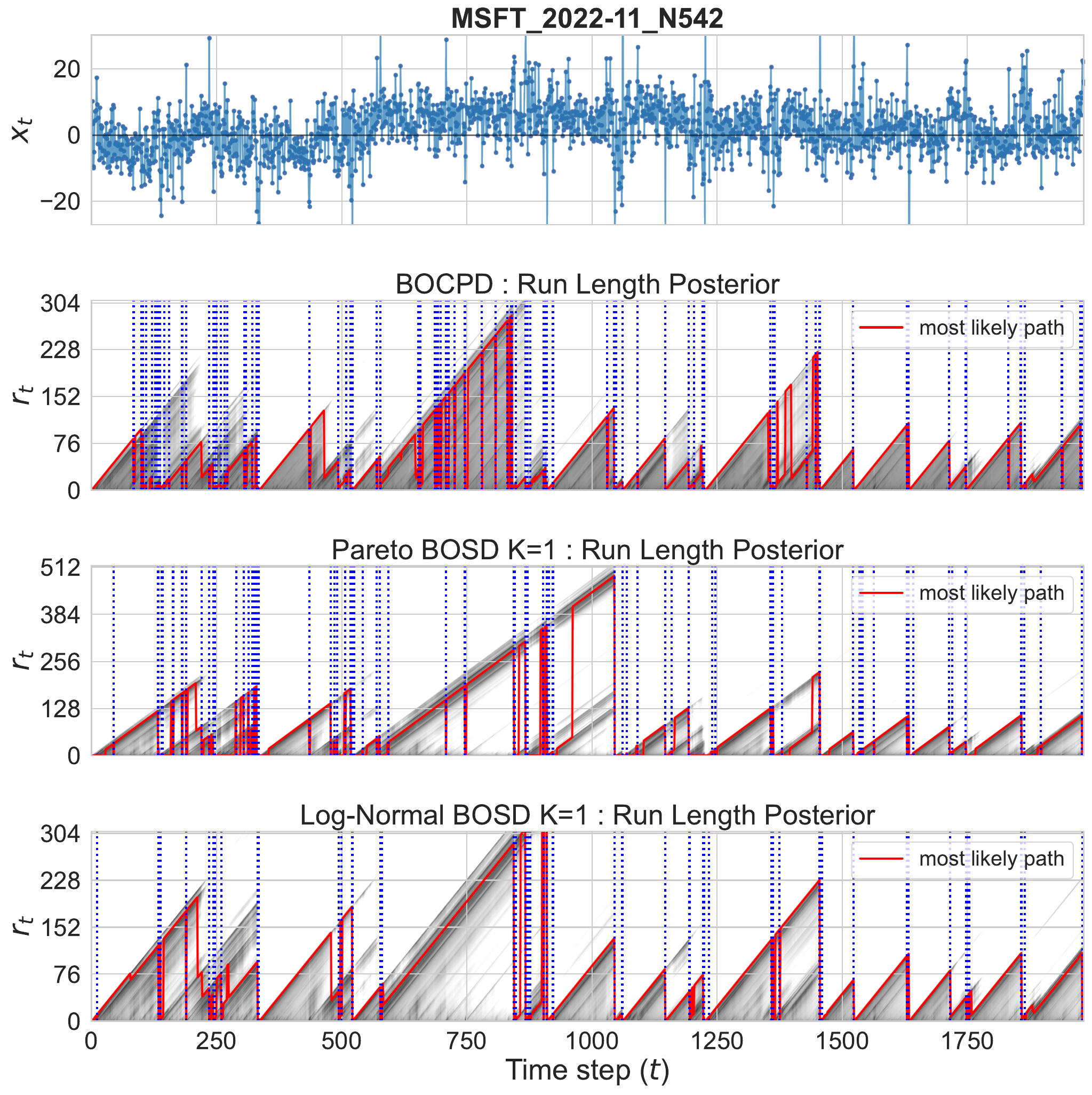}
        \caption{MSFT --- Log-Lik-calibrated}
    \end{subfigure}

    \vspace{4pt}
    \begin{subfigure}[b]{0.48\linewidth}
        \includegraphics[width=\linewidth]{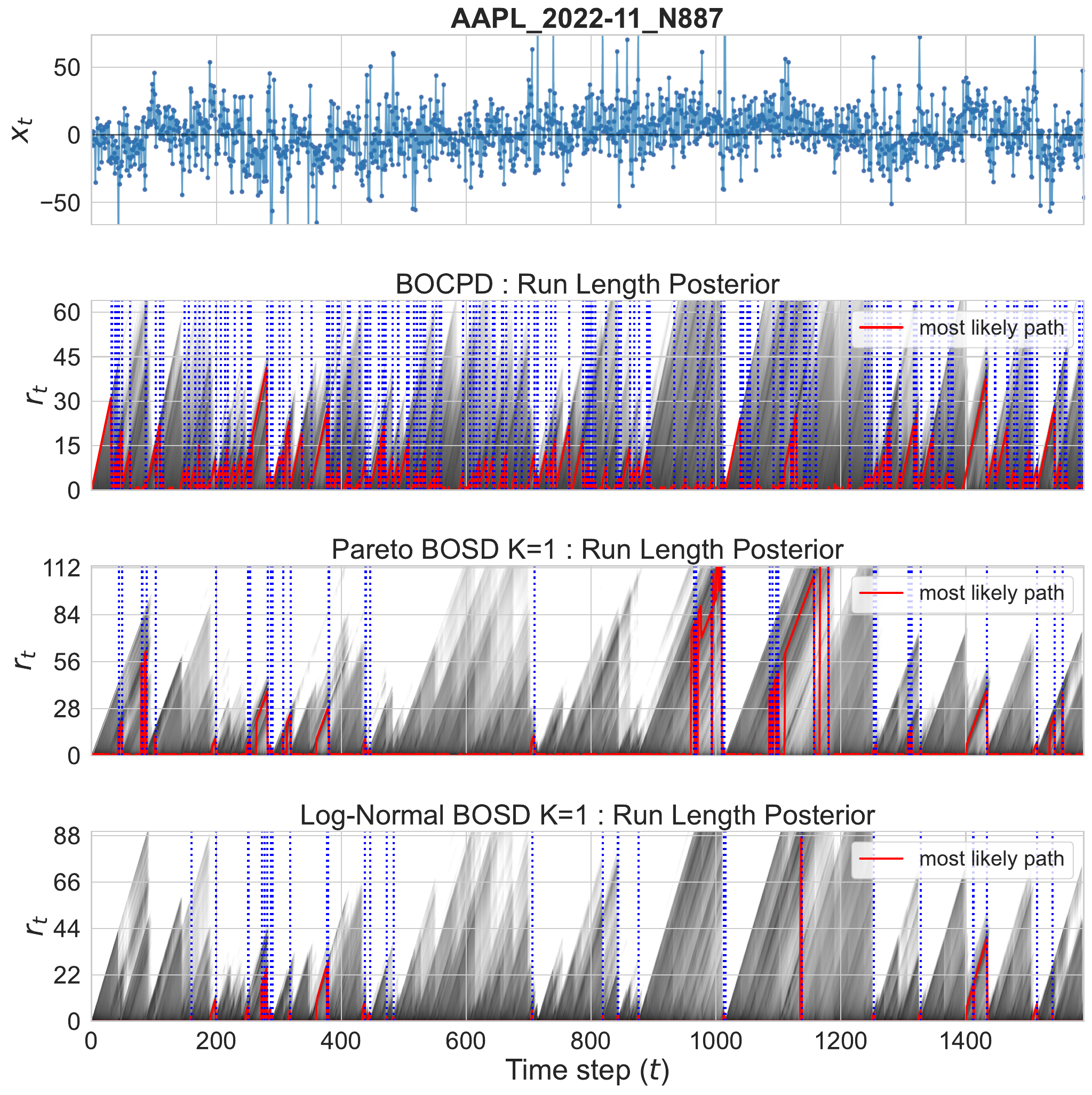}
        \caption{AAPL --- MSE-calibrated}
    \end{subfigure}\hfill
    \begin{subfigure}[b]{0.48\linewidth}
        \includegraphics[width=\linewidth]{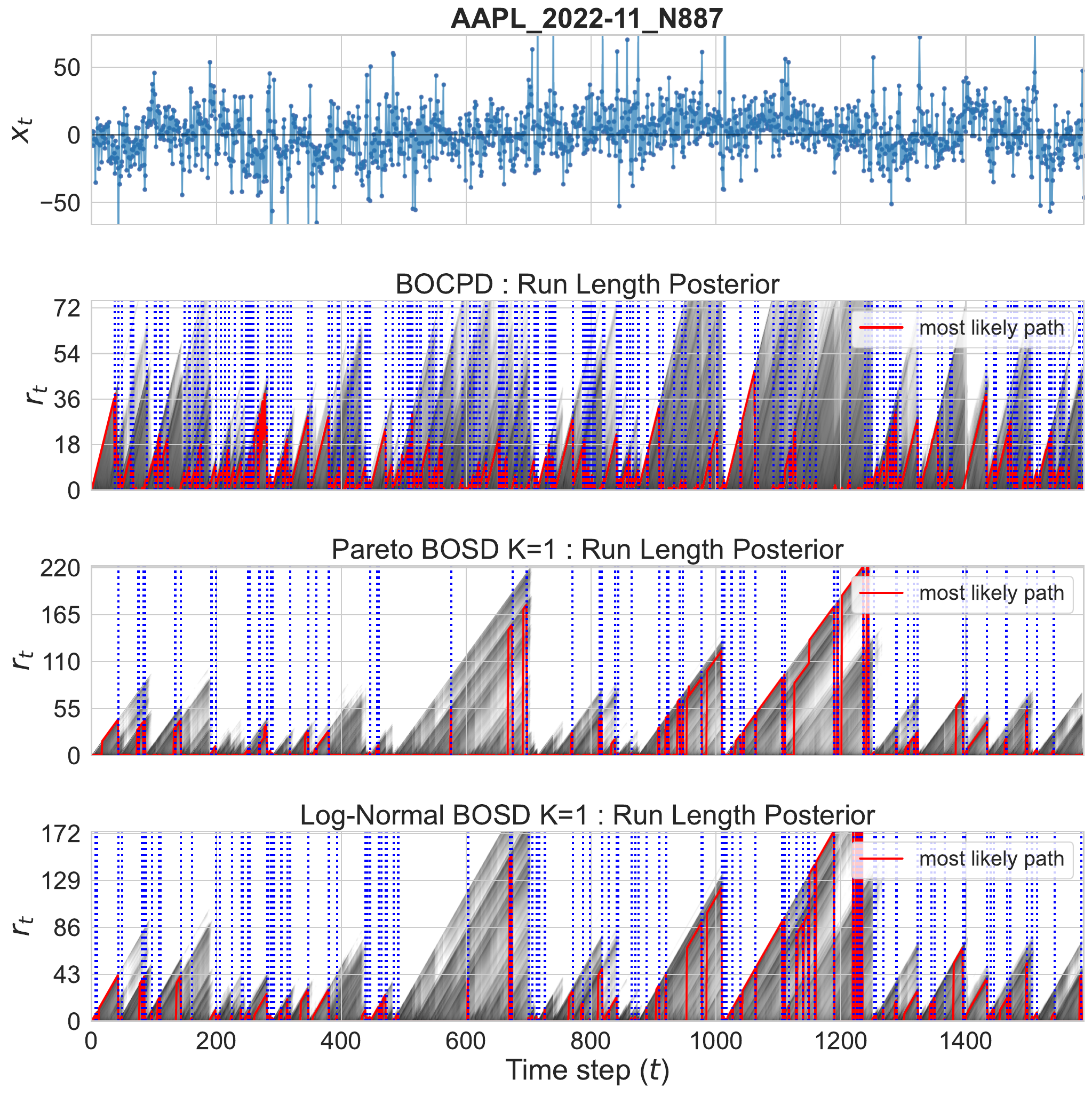}
        \caption{AAPL --- Log-Lik-calibrated}
    \end{subfigure}
    \caption{Run Length Posteriors on the 2022-11 datasets, under both calibration criteria.}
    \label{fig:rlpost_2022_11}
\end{figure}

\newpage
\section{BOCPD regime duration distributions}\label{app:valid_bocpd}

A theoretical property of the standard BOCPD framework is that if the hazard function is a constant $H = 1/h$, the distribution of regime durations should be a Geometric distribution.

\subsection{Data Generating Process and Experimental Setup}

We perform a loop of 1000 runs with long synthetic datasets of $T = 1{,}000$ data points to ensure a statistically significant number of regimes. Within each regime, the data follows $x_t \sim \mathcal{N}(\mu, \sigma^2)$ where the variance is fixed at $\sigma^2 = 10^2$. The data hazard rate is fixed at $H_{\text{data}} = 1/h_{\text{data}} = 1/100$. When a changepoint is triggered, the new regime mean $\mu$ is redrawn from a Gaussian prior distribution: $\mu \sim \mathcal{N}(\mu_0 = 0, \ \sigma_0^2)$.

The standard BOCPD algorithm is run using an identical hazard rate parameter $h_{\text{algo}} = 100$, perfectly matching the hazard rate of the data-generating process.

\subsection{Theoretical Distribution vs. Empirical Histogram}

Mathematically, given a constant hazard rate $H$, the probability that a regime survives for exactly $n-1$ steps and terminates at the $n$-th step is governed by the geometric probability distribution. Under this distribution, the theoretical expected length of a regime is given by $\mathbb{E}[d] = 1/H = h_{\text{data}}$.

By extracting the durations of all the regimes successfully detected by the BOCPD algorithm, we construct an empirical frequency histogram (Figure~\ref{fig:bocpd_geom_synth}).

\begin{figure}[H]
    \centering
    \begin{subfigure}[b]{0.42\linewidth}
        \includegraphics[width=\linewidth]{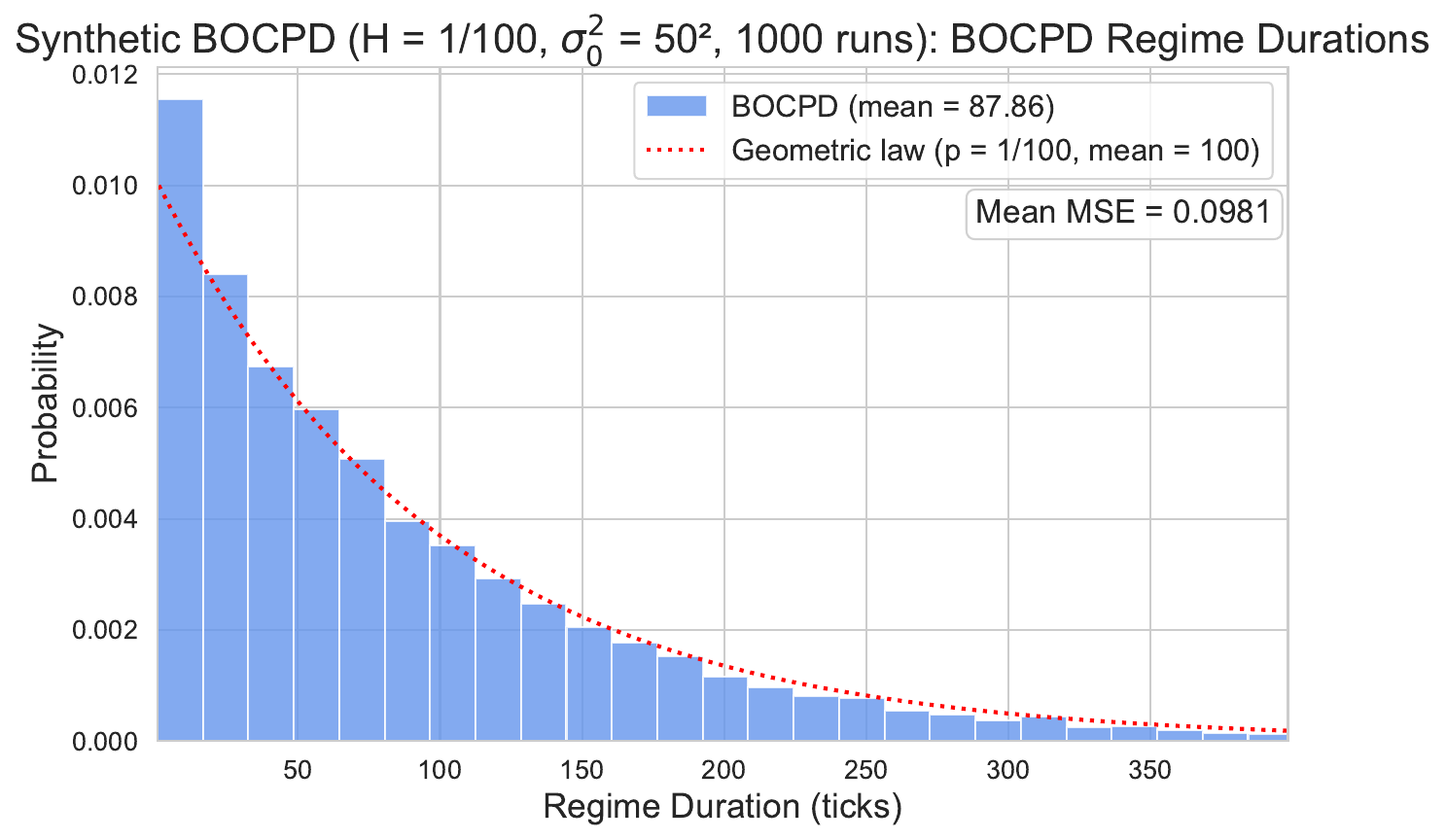}
        \caption{$\sigma_0 = 50$}
    \end{subfigure}\hfill
    \begin{subfigure}[b]{0.42\linewidth}
        \includegraphics[width=\linewidth]{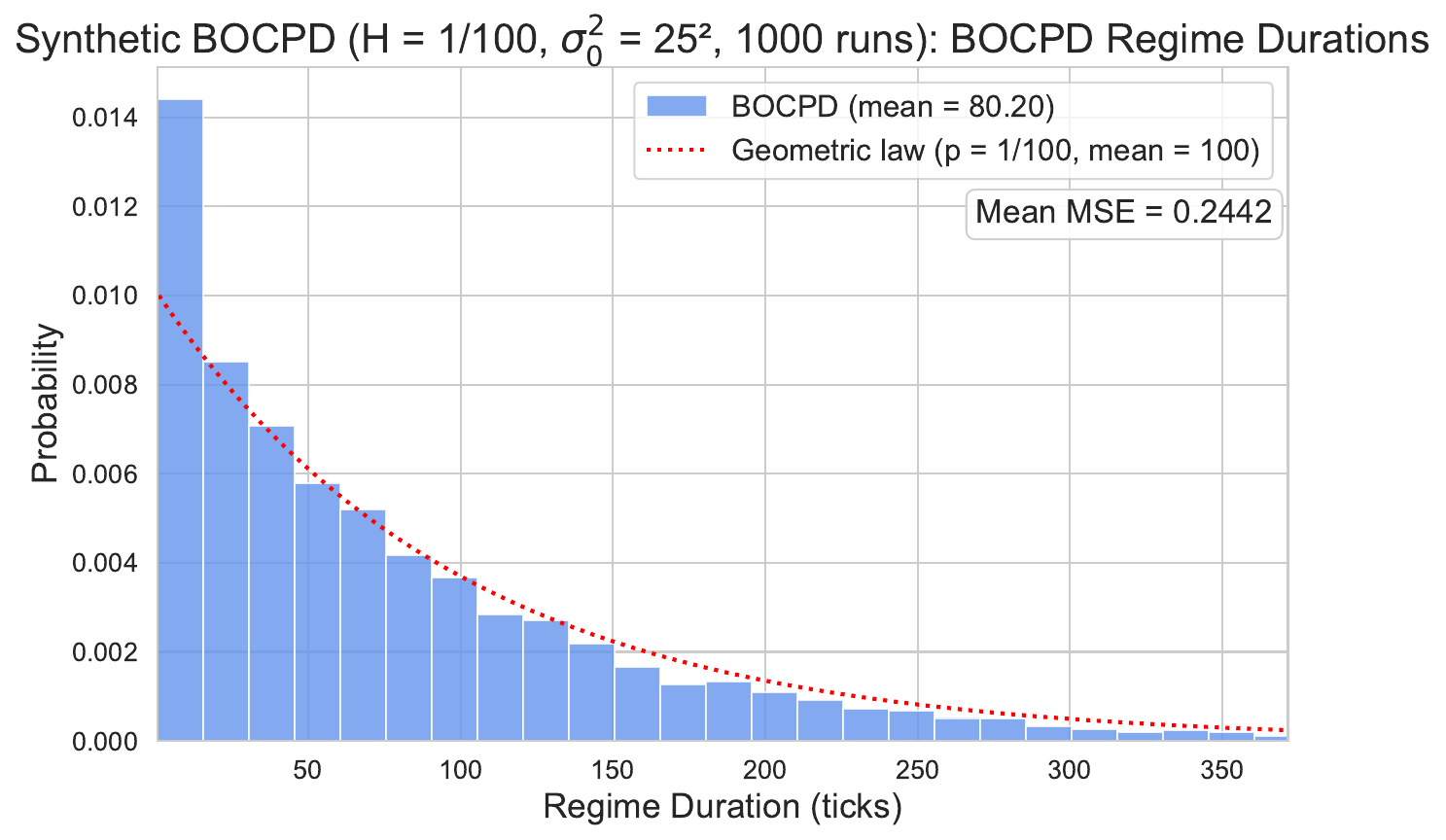}
        \caption{$\sigma_0 = 25$}
    \end{subfigure}

    \vspace{2pt}
    \begin{subfigure}[b]{0.42\linewidth}
        \includegraphics[width=\linewidth]{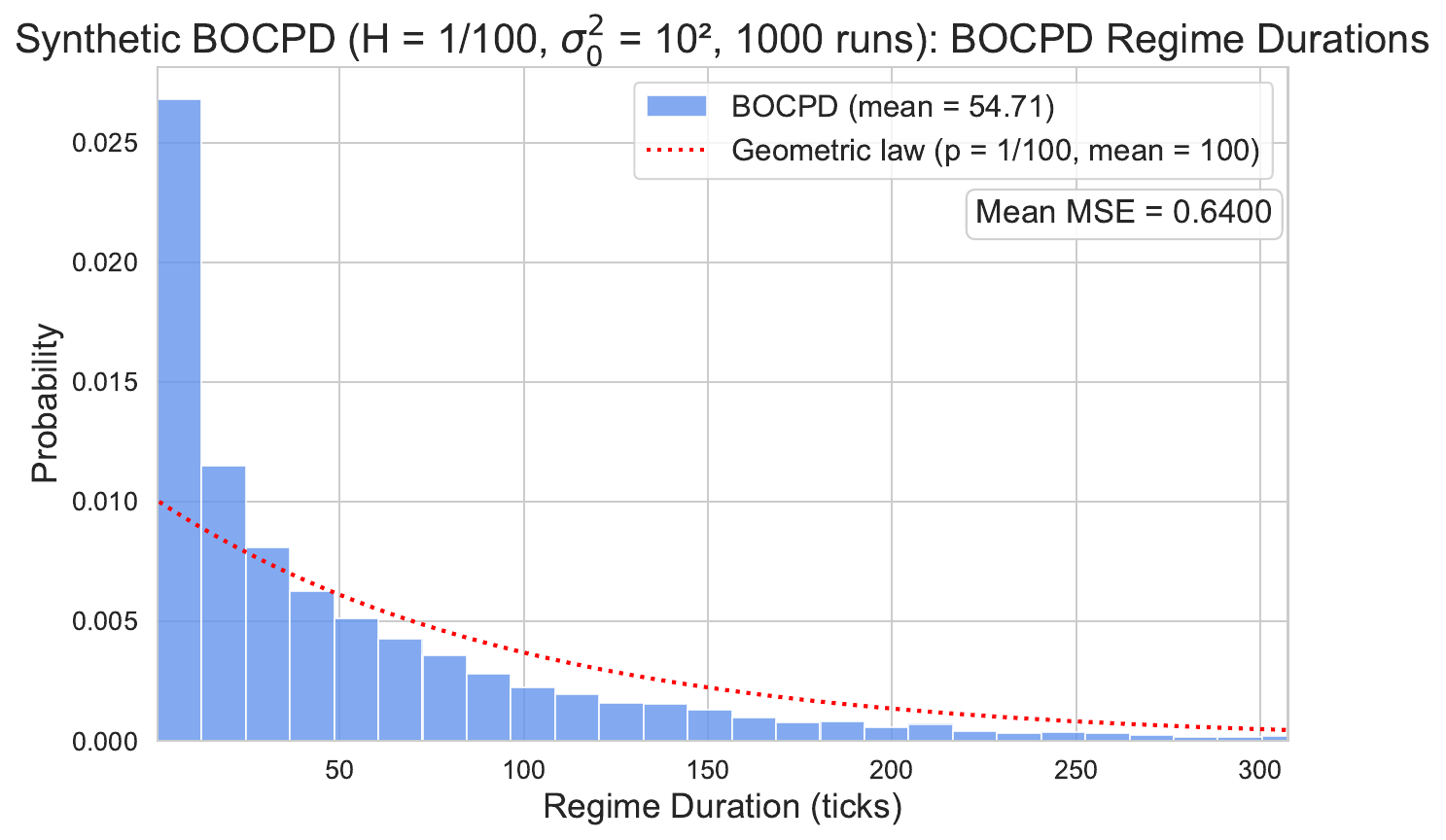}
        \caption{$\sigma_0 = 10$}
    \end{subfigure}\hfill
    \begin{subfigure}[b]{0.42\linewidth}
        \includegraphics[width=\linewidth]{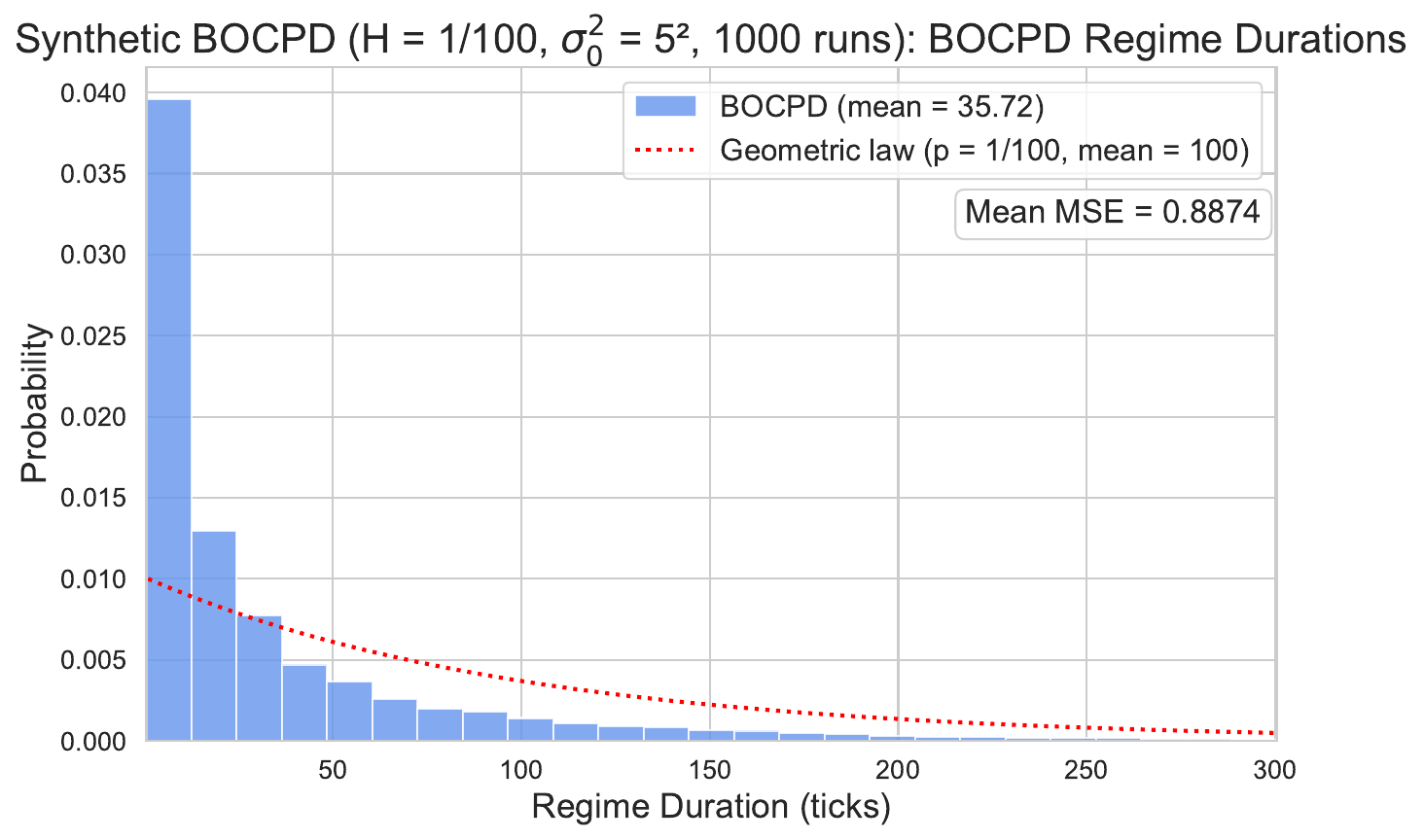}
        \caption{$\sigma_0 = 5$}
    \end{subfigure}
    \caption{Empirical histograms of detected regime lengths against the theoretical Geometric distribution curve. Averaged over 1000 runs on synthetic data.}
    \label{fig:bocpd_geom_synth}
\end{figure}

\newpage
\subsection{Empirical Validation on real data}

We now perform an empirical validation to verify this property on real data. For each asset, we pool the detected regime durations across the four available 2022 months and build a single averaged empirical histogram, compared against the theoretical Geometric distribution implied by the calibrated hazard rate (Figure~\ref{fig:geom_real}).

\begin{figure}[H]
    \centering
    \begin{subfigure}[b]{0.42\linewidth}
        \includegraphics[width=\linewidth]{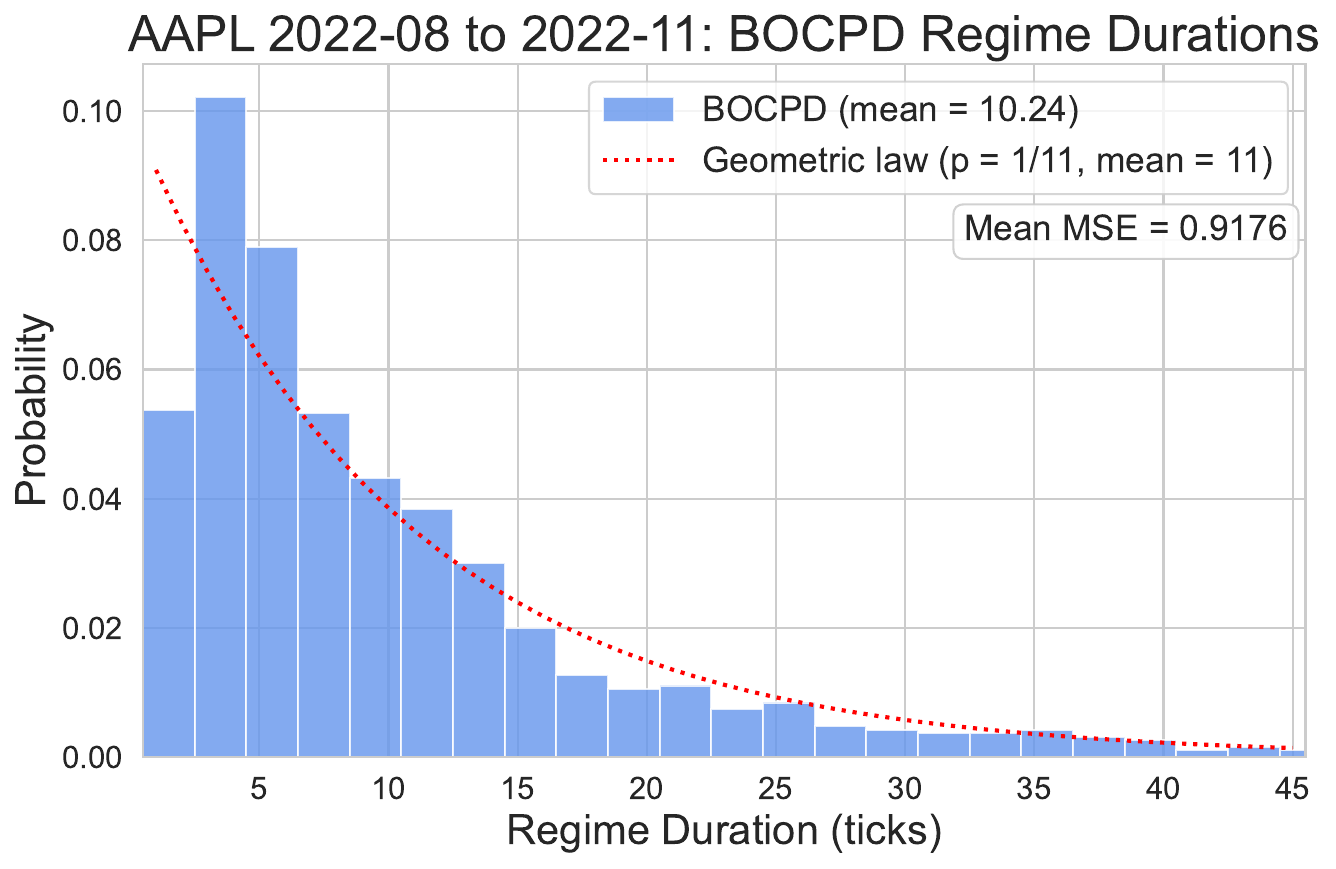}
        \caption{AAPL --- MSE-calibrated}
    \end{subfigure}\hfill
    \begin{subfigure}[b]{0.42\linewidth}
        \includegraphics[width=\linewidth]{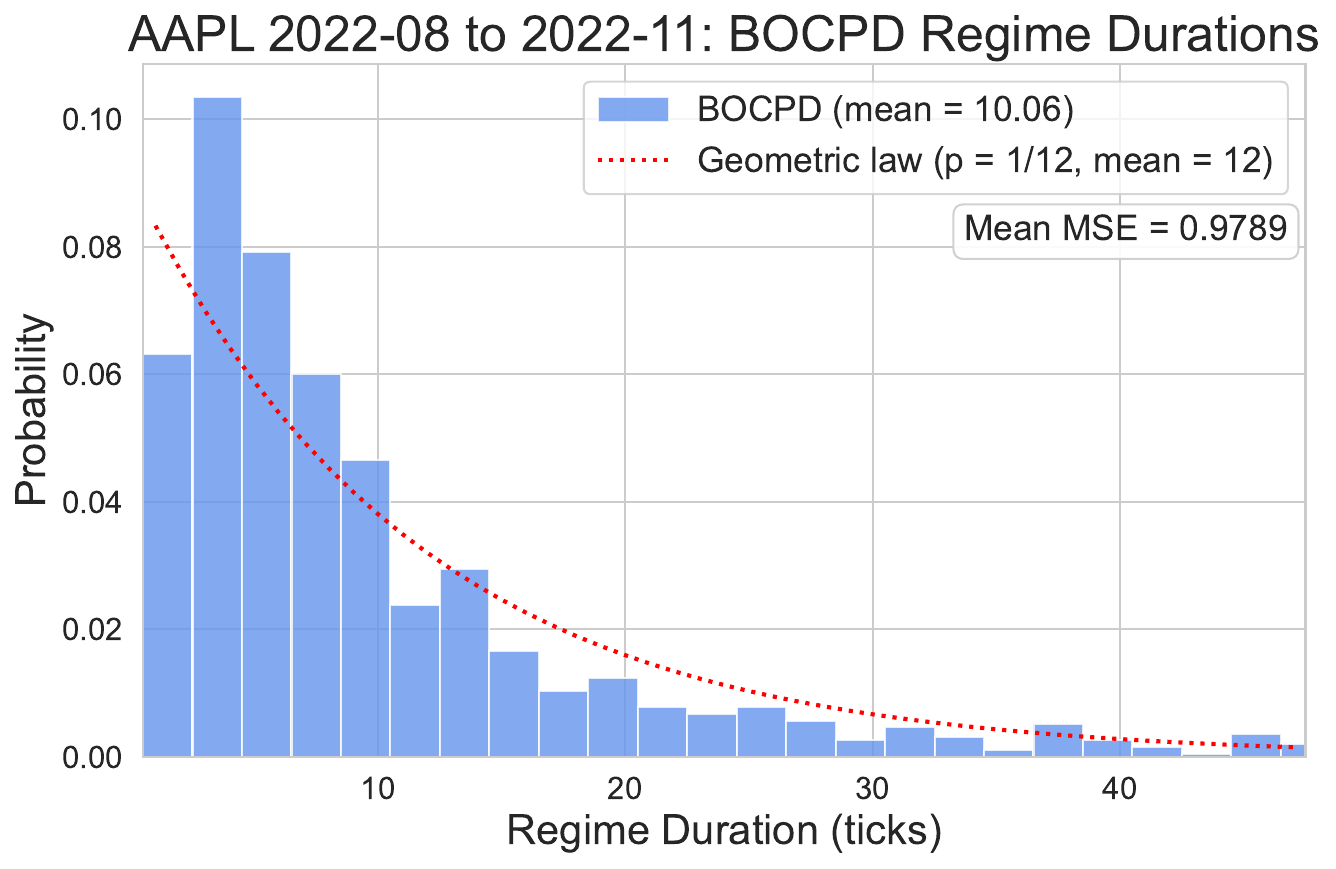}
        \caption{AAPL --- Log-Lik-calibrated}
    \end{subfigure}

    \vspace{2pt}
    \begin{subfigure}[b]{0.42\linewidth}
        \includegraphics[width=\linewidth]{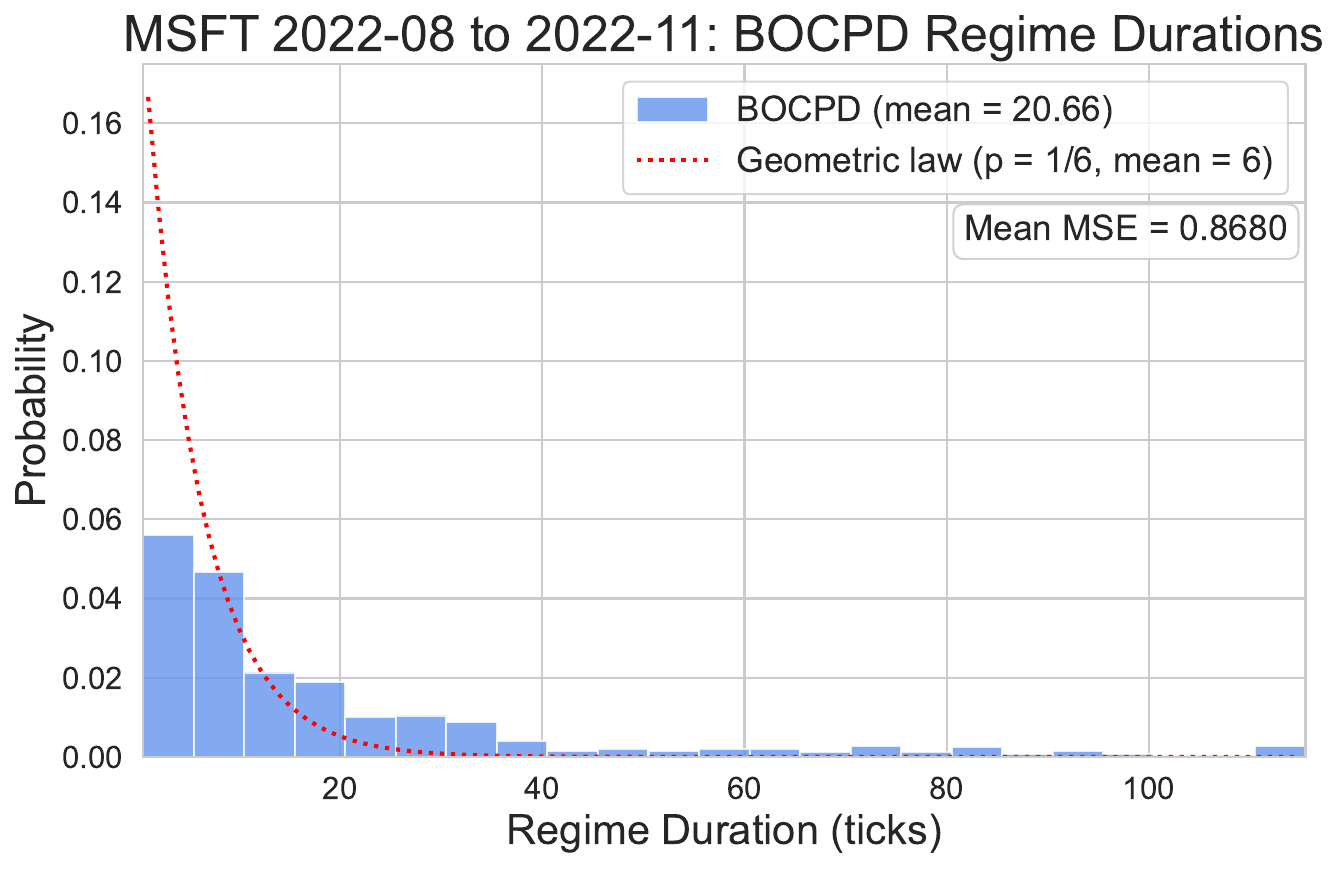}
        \caption{MSFT --- MSE-calibrated}
    \end{subfigure}\hfill
    \begin{subfigure}[b]{0.42\linewidth}
        \includegraphics[width=\linewidth]{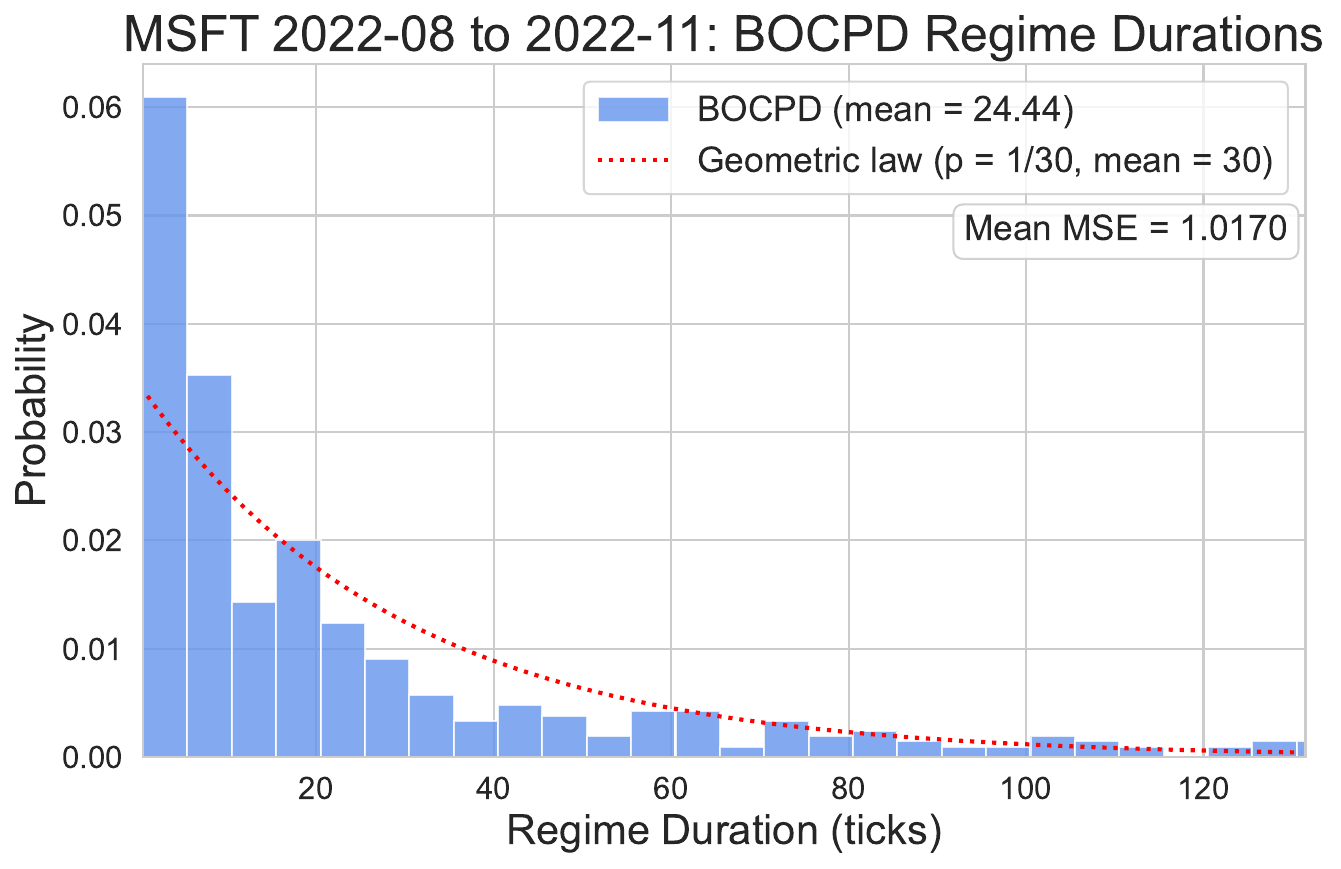}
        \caption{MSFT --- Log-Lik-calibrated}
    \end{subfigure}
    \caption{Detected regime durations pooled over 2022-08 to 2022-11, against the theoretical Geometric law implied by the calibrated hazard rate.}
    \label{fig:geom_real}
\end{figure}

\newpage
\section{BOSD ($K=1$) regime duration distributions}\label{app:valid_bosd}

The BOSD framework was introduced precisely to lift the constraint of Geometric distribution for the regime durations. We verify that when the data is generated by an HSMM whose regime durations follow a chosen distribution, and the matching hazard vector is supplied to the algorithm, the durations of the detected regimes recover that same distribution.

\subsection{Data Generating Process and Experimental Setup}

We reproduce the Monte Carlo protocol of the BOCPD validation, replacing the memoryless Bernoulli generator by the HSMM-consistent generative process. The total duration of each regime is now drawn directly from an explicit duration distribution. The BOSD ($K=1$) algorithm is then run with the hazard vector $H(r)$ derived from the same duration distribution, so that the inference model is perfectly matched to the generative process. This is the exact analogue of setting $h_{\text{algo}} = h_{\text{data}}$ in the BOCPD experiment.

\subsection{Theoretical Distributions vs. Empirical Histograms}

\textbf{Pareto duration distribution.} The Pareto distribution with shape $\alpha$ and minimum $d_{\text{min}}$ yields a continuous hazard, which decreases: the longer a regime has already survived, the less likely it is to terminate at the next step. For $\alpha = 1.5$ and $d_{\text{min}} = 2$, the mean duration is $\mathbb{E}[d] = \alpha d_{\min}/(\alpha-1) = 6$, while the variance is infinite (a finite variance would require $\alpha > 2$).

\textbf{Log-Normal duration distribution.} The Log-Normal distribution with shape $s$ and scale $e^{\mu}$ has the mean $\mathbb{E}[d] = \exp(\mu + s^2/2)$. For $s = 1$ and scale $= 5$ (i.e. $\mu = \ln 5$), this yields $\mathbb{E}[d] = 5 e^{1/2} \approx 8.24$. The Log-Normal is lighter-tailed than the Pareto, and its hazard first rises and then decays.

\begin{figure}[H]
    \centering
    \begin{subfigure}[b]{0.42\linewidth}
        \includegraphics[width=\linewidth]{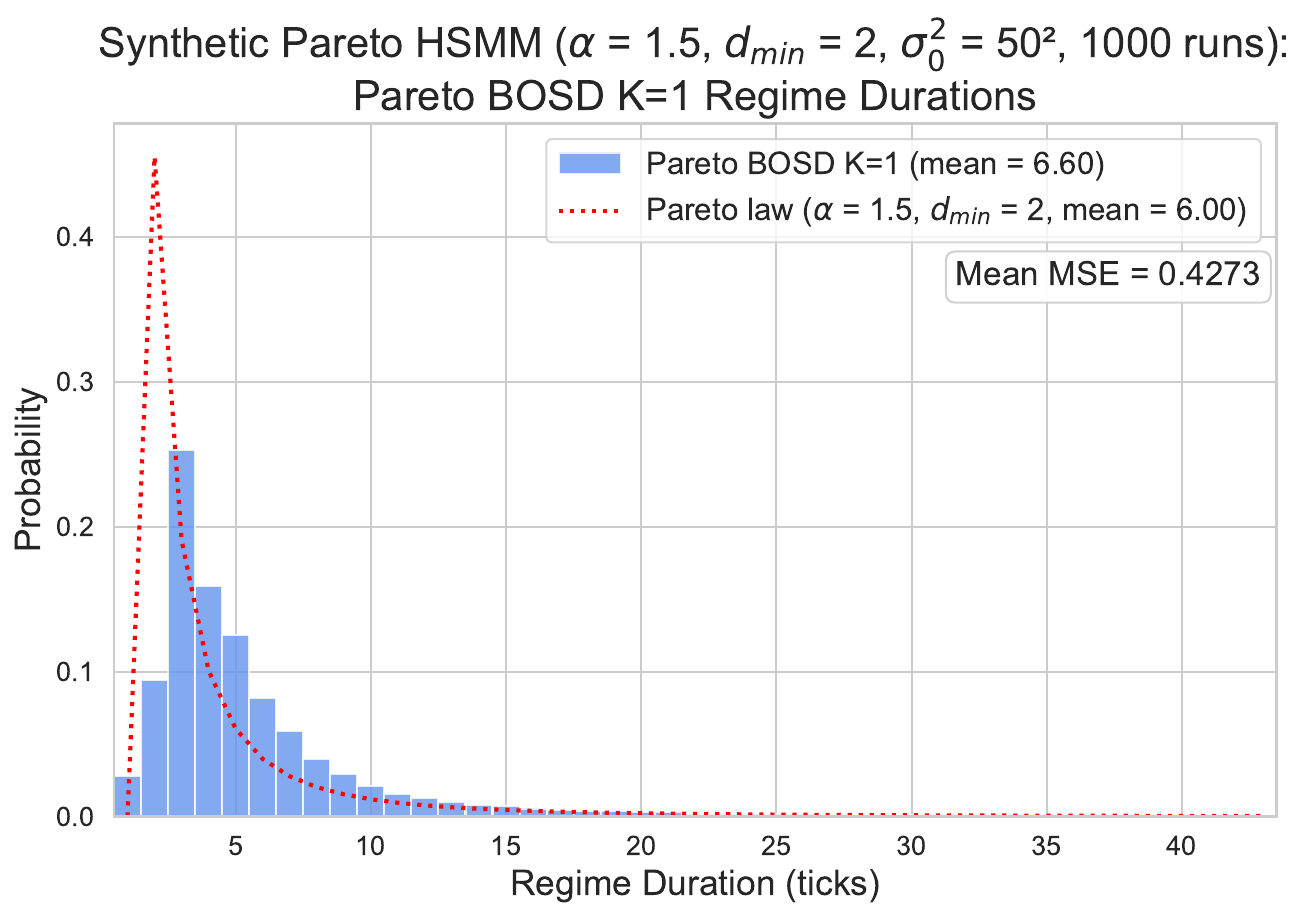}
        \caption{$\sigma_0 = 50$}
    \end{subfigure}\hfill
    \begin{subfigure}[b]{0.42\linewidth}
        \includegraphics[width=\linewidth]{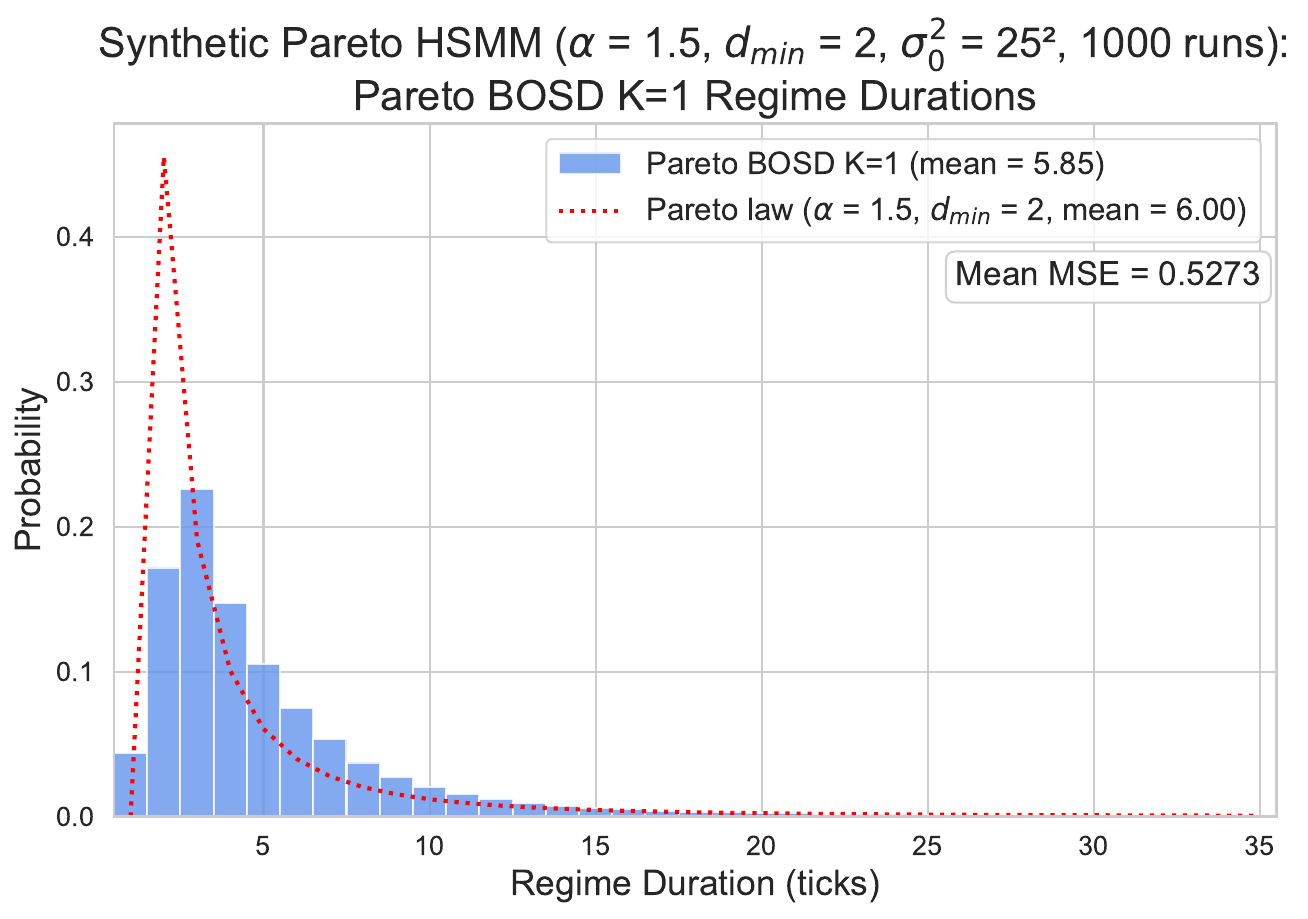}
        \caption{$\sigma_0 = 25$}
    \end{subfigure}

    \vspace{2pt}
    \begin{subfigure}[b]{0.42\linewidth}
        \includegraphics[width=\linewidth]{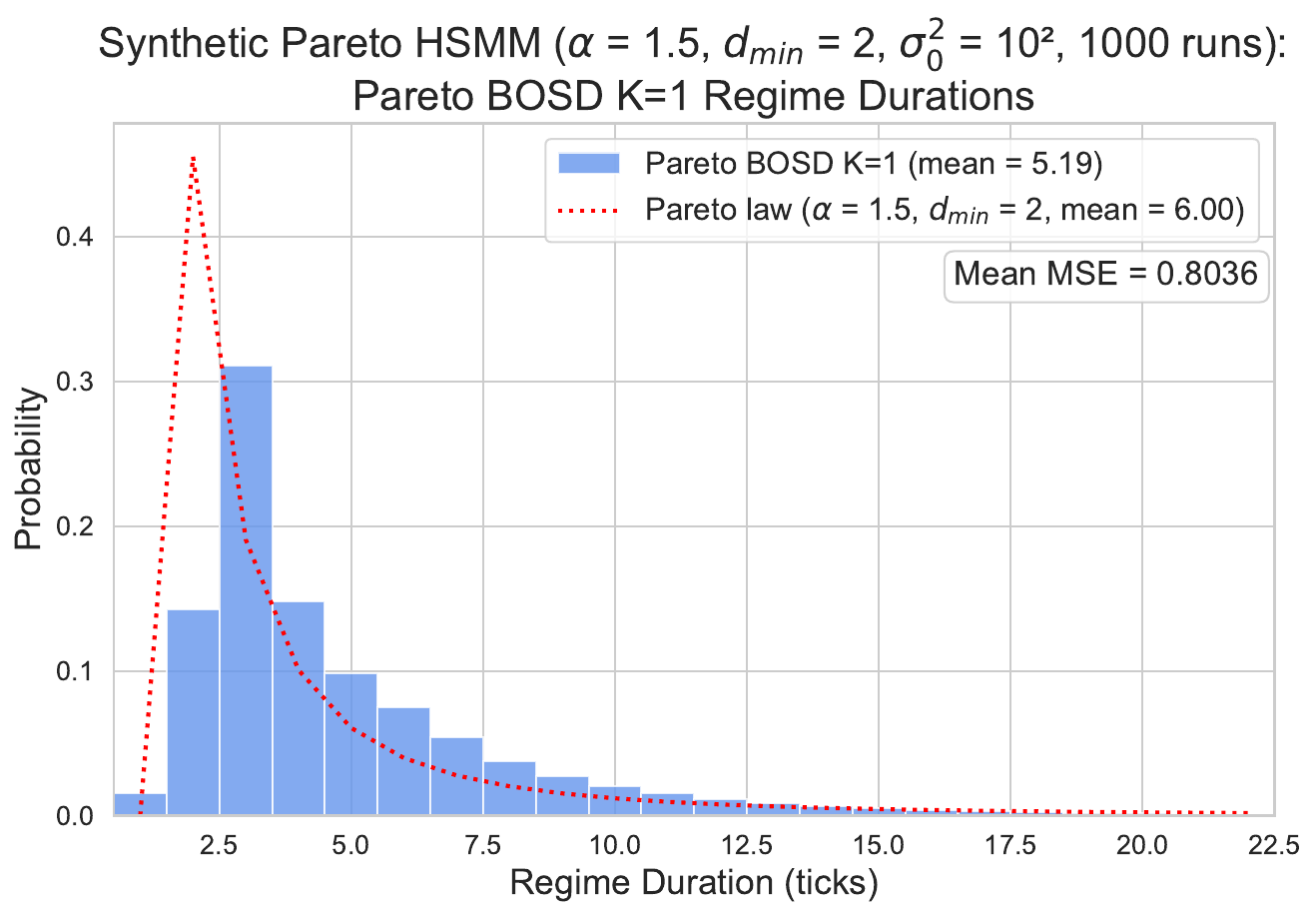}
        \caption{$\sigma_0 = 10$}
    \end{subfigure}\hfill
    \begin{subfigure}[b]{0.42\linewidth}
        \includegraphics[width=\linewidth]{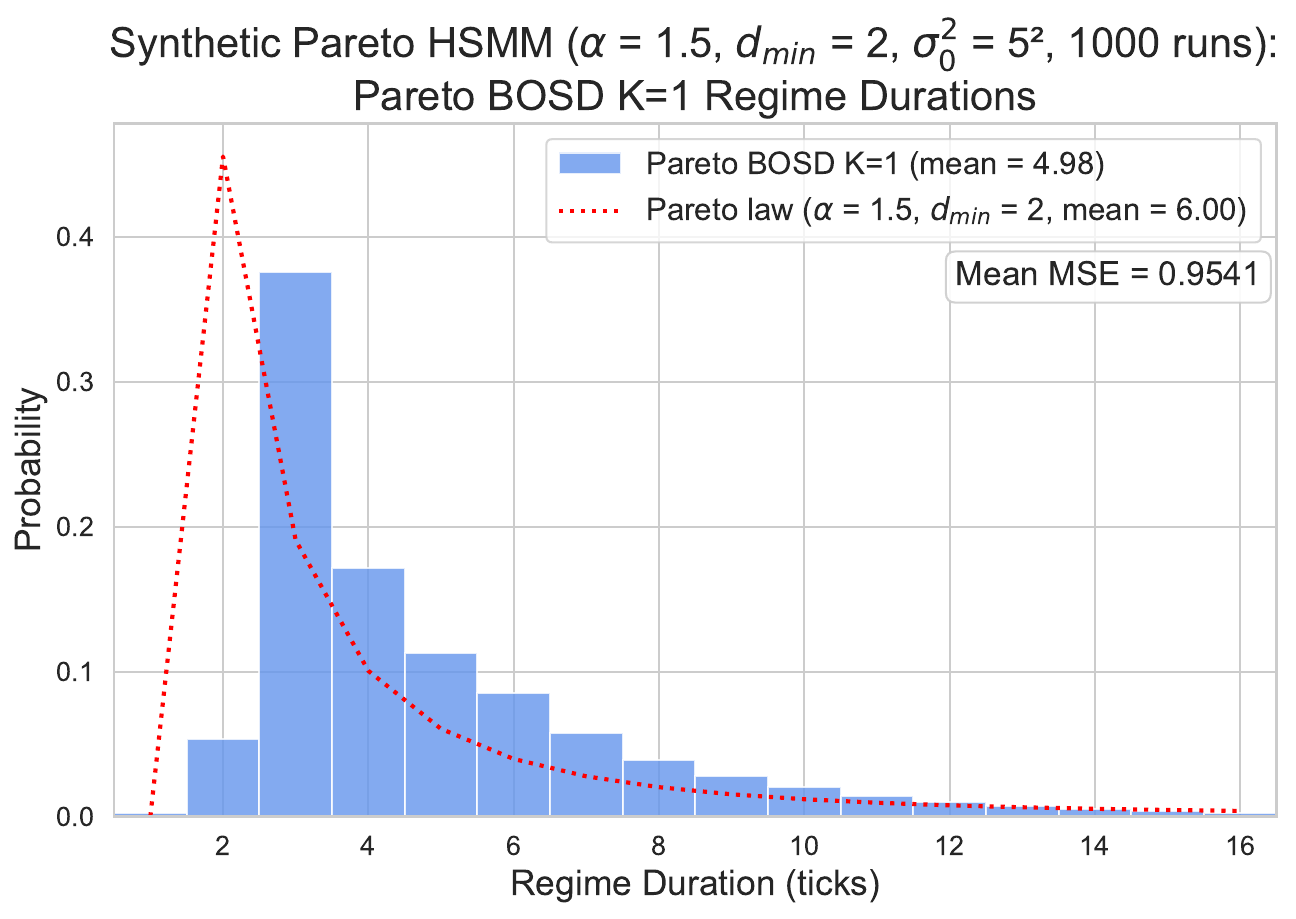}
        \caption{$\sigma_0 = 5$}
    \end{subfigure}
    \caption{Empirical histograms of BOSD-detected regime lengths against the theoretical Pareto distribution ($\alpha = 1.5$, $d_{\text{min}} = 2$). Averaged over $N = 1000$ runs.}
    \label{fig:pareto_durations_vs_theory}
\end{figure}

\begin{figure}[H]
    \centering
    \begin{subfigure}[b]{0.42\linewidth}
        \includegraphics[width=\linewidth]{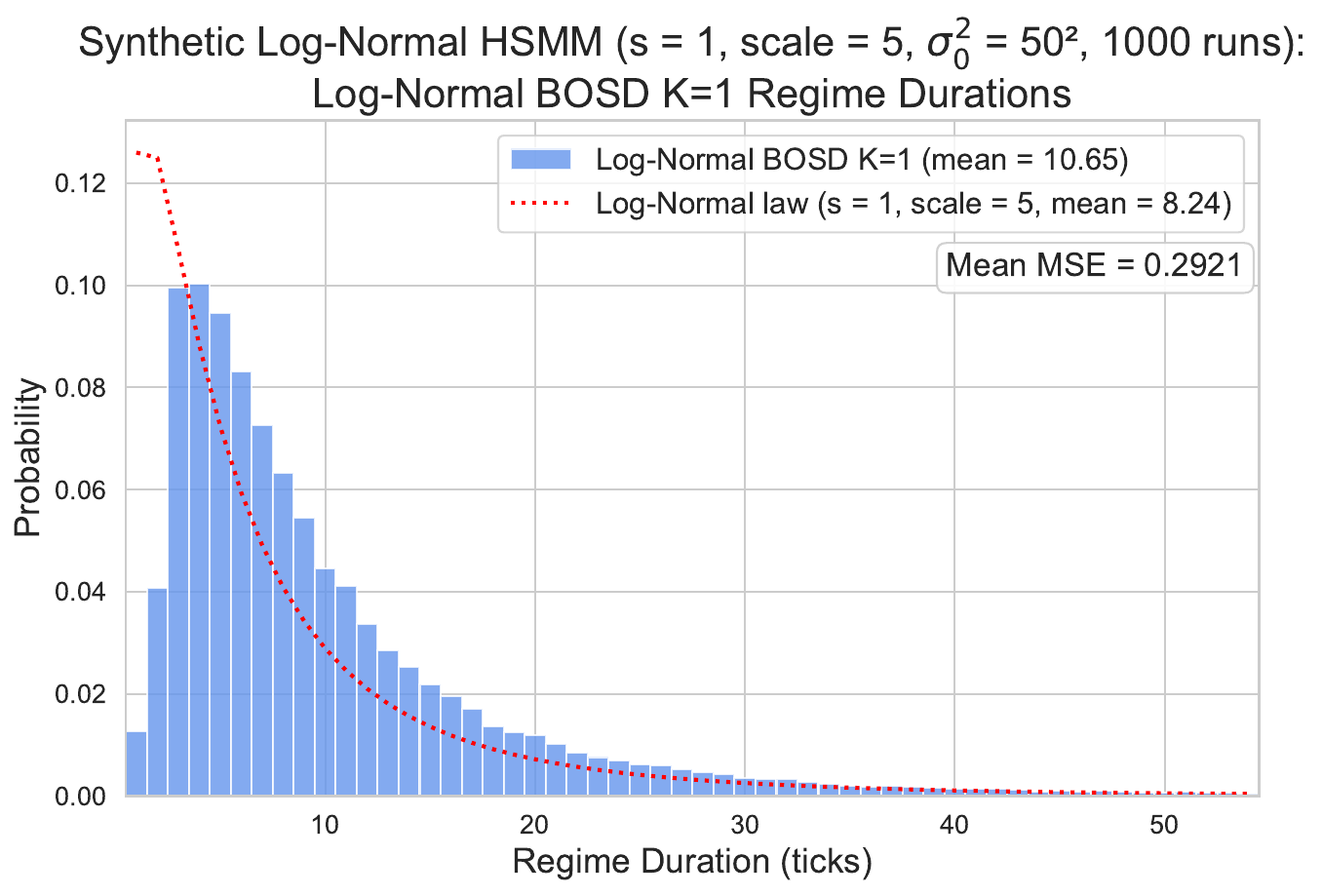}
        \caption{$\sigma_0 = 50$}
    \end{subfigure}\hfill
    \begin{subfigure}[b]{0.42\linewidth}
        \includegraphics[width=\linewidth]{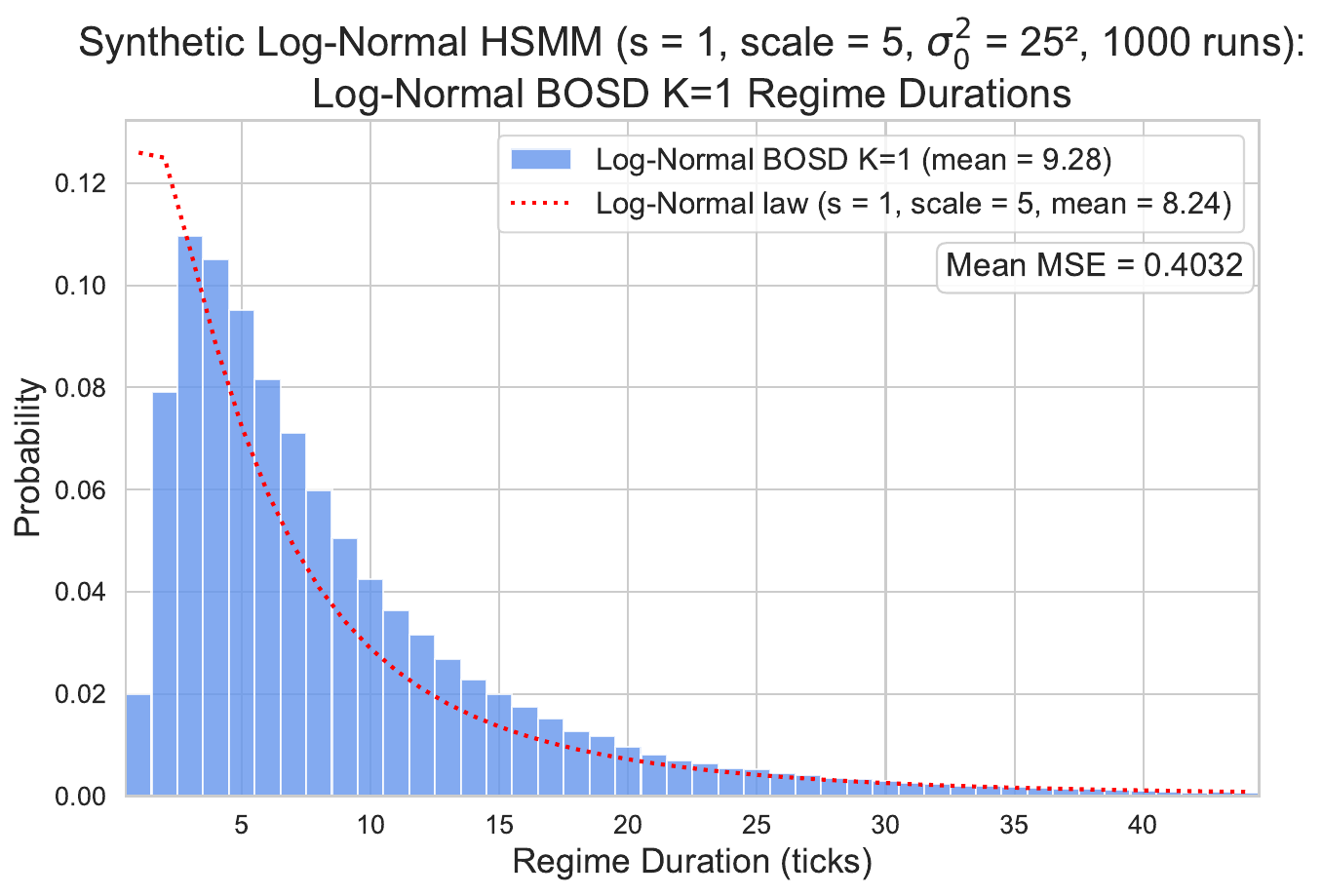}
        \caption{$\sigma_0 = 25$}
    \end{subfigure}

    \vspace{2pt}
    \begin{subfigure}[b]{0.42\linewidth}
        \includegraphics[width=\linewidth]{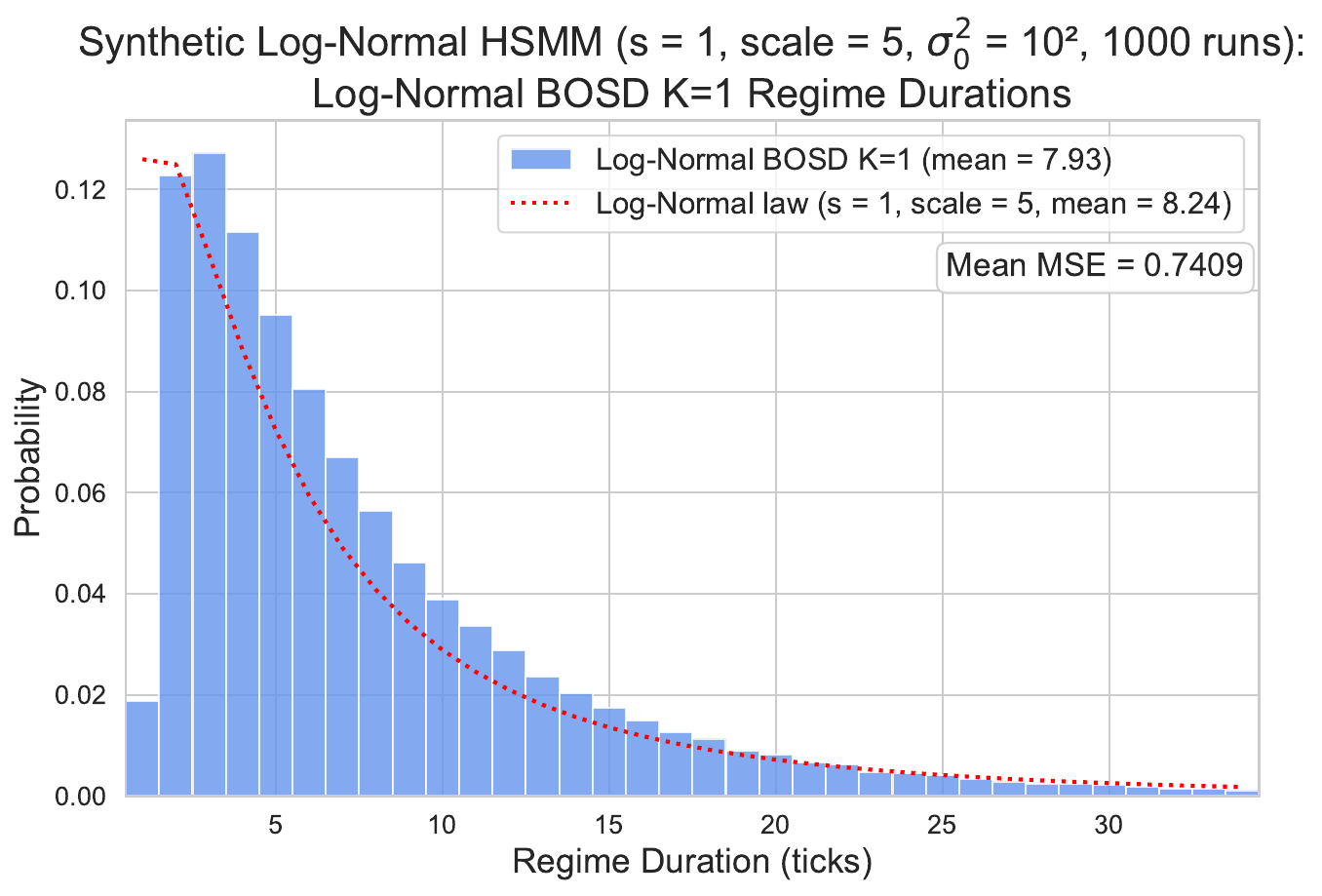}
        \caption{$\sigma_0 = 10$}
    \end{subfigure}\hfill
    \begin{subfigure}[b]{0.42\linewidth}
        \includegraphics[width=\linewidth]{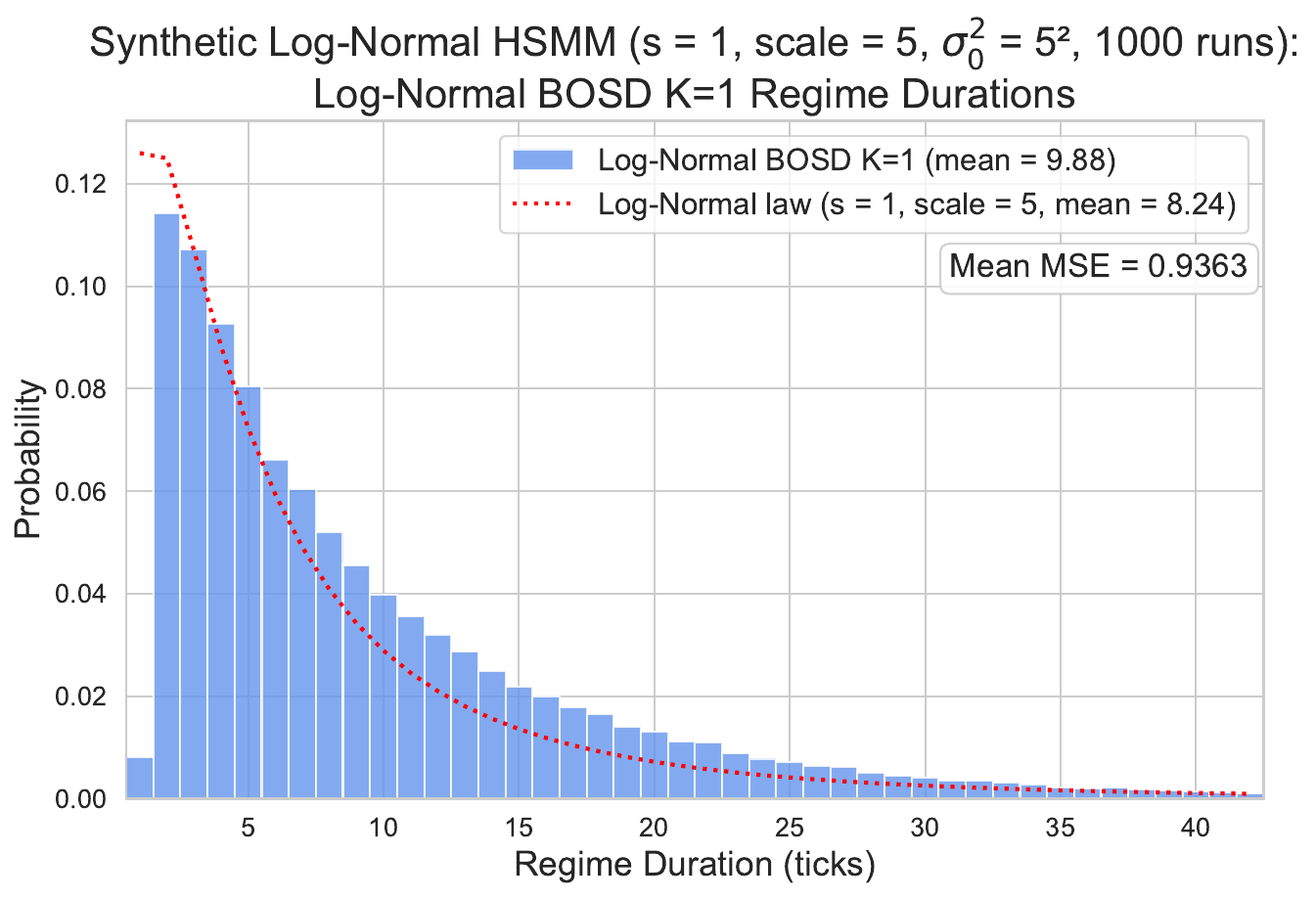}
        \caption{$\sigma_0 = 5$}
    \end{subfigure}
    \caption{Empirical histograms of BOSD-detected regime lengths against the theoretical (discretized) Log-Normal distribution ($s = 1$, scale $= 5$). Averaged over $N = 1000$ runs.}
    \label{fig:lognormal_durations_vs_theory}
\end{figure}

\newpage
\section{Posterior for a general known SPD noise matrix}\label{app:spd}

\textit{This appendix gives in full the derivation sketched in Section~\ref{sec:omega}, where we
argue that the Normal-Inverse-Gamma conjugacy of the BVAR holds for any known
symmetric positive-definite noise matrix $\boldsymbol{\Omega}$, and not only for a
diagonal one.}

\subsection*{Setting and notation}

Over a regime of run length $r$, stack the responses and regressors as in Section~\ref{sec:paramlearning}: $\mathbf{Y} \in \R^{(r+1)\times S}$, $\mathbf{X} \in \R^{(r+1)\times k}$, with $n := r+1$ observations. The model is
\begin{align*}
    \mathbf{Y} \mid \mathbf{c}, \sigma^2 &\sim \mathcal{MN}\big(\mathbf{X}\mathbf{c},\; \sigma^2\mathbf{I}_n,\; \boldsymbol{\Omega}\big), \\
    \mathbf{c} \mid \sigma^2 &\sim \mathcal{MN}\big(\mathbf{0},\; \sigma^2\mathbf{V}_c,\; \boldsymbol{\Omega}\big), \qquad
    \sigma^2 \sim \mathrm{InvGamma}(a_0, b_0),
\end{align*}
where $\mathcal{MN}(\mathbf{M}, \mathbf{U}, \mathbf{V})$ denotes the matrix-normal law with mean $\mathbf{M}$, row covariance $\mathbf{U}$ and column covariance $\mathbf{V}$. In vectorised form the coefficient prior reads $\mathrm{vec}(\mathbf{c}) \mid \sigma^2 \sim \mathcal{N}(\mathbf{0}, \sigma^2(\boldsymbol{\Omega}\otimes\mathbf{V}_c))$: this is the \textbf{coupled} prior, and it is what makes $\boldsymbol{\Omega}$ factor out below.

\subsection*{Posterior of the coefficients}

The joint density of $(\mathbf{Y}, \mathbf{c})$ given $\sigma^2$ is proportional to
$$\exp\Big\{-\tfrac{1}{2\sigma^2}\tr\big[\boldsymbol{\Omega}^{-1}(\mathbf{Y}-\mathbf{X}\mathbf{c})^T(\mathbf{Y}-\mathbf{X}\mathbf{c})\big]\Big\}
\exp\Big\{-\tfrac{1}{2\sigma^2}\tr\big[\boldsymbol{\Omega}^{-1}\mathbf{c}^T\mathbf{V}_c^{-1}\mathbf{c}\big]\Big\}.$$
Both exponents carry the same factor $\boldsymbol{\Omega}^{-1}$ inside the trace, so it can be collected. Expanding and completing the square in $\mathbf{c}$ gives
$$(\mathbf{c}-\hat{\mathbf{c}})^T\mathbf{P}(\mathbf{c}-\hat{\mathbf{c}}) + \mathbf{Y}^T\mathbf{Y} - \hat{\mathbf{c}}^T\mathbf{P}\hat{\mathbf{c}},
\qquad \mathbf{P} := \mathbf{V}_c^{-1} + \mathbf{X}^T\mathbf{X}, \quad \hat{\mathbf{c}} := \mathbf{P}^{-1}\mathbf{X}^T\mathbf{Y}.$$

\begin{proposition}[Posterior of $\mathbf{c}$, general SPD $\boldsymbol{\Omega}$]
Under the coupled prior, the conditional posterior of the coefficient matrix is matrix-normal, $\mathbf{c} \mid \mathbf{Y}, \sigma^2 \sim \mathcal{MN}(\hat{\mathbf{c}},\, \sigma^2\mathbf{P}^{-1},\, \boldsymbol{\Omega})$. Neither the posterior precision $\mathbf{P}$ nor the MAP estimate $\hat{\mathbf{c}}$ depends on $\boldsymbol{\Omega}$: it appears only as the column covariance.
\end{proposition}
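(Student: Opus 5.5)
The plan is to compute the conditional posterior of $\mathbf{c}$ up to a constant, by multiplying the matrix-normal likelihood and the coupled matrix-normal prior, and then to read off the matrix-normal kernel.

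\textbf{Step 1: a single trace.} I start from the joint density displayed just above. Since $\mathbf{Y}\mid\mathbf{c},\sigma^2 \sim \mathcal{MN}(\mathbf{X}\mathbf{c}, \sigma^2\mathbf{I}_n, \boldsymbol{\Omega})$, its kernel is $\exp\{-\tfrac{1}{2\sigma^2}\tr[\boldsymbol{\Omega}^{-1}(\mathbf{Y}-\mathbf{X}\mathbf{c})^T(\mathbf{Y}-\mathbf{X}\mathbf{c})]\}$, and the prior kernel has the same form with $\mathbf{c}^T\mathbf{V}_c^{-1}\mathbf{c}$. The normalising constants involve only $\sigma^2$, $\boldsymbol{\Omega}$, $\mathbf{V}_c$ and $n$, not $\mathbf{c}$, so they can be dropped. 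By linearity of the trace, the two exponents merge into
$$-\tfrac{1}{2\sigma^2}\tr\big[\boldsymbol{\Omega}^{-1} G(\mathbf{c})\big], \qquad G(\mathbf{c}) := (\mathbf{Y}-\mathbf{X}\mathbf{c})^T(\mathbf{Y}-\mathbf{X}\mathbf{c}) + \mathbf{c}^T\mathbf{V}_c^{-1}\mathbf{c}.$$
The key observation is that $G(\mathbf{c})$ is an $S\times S$ matrix in which $\boldsymbol{\Omega}$ does not appear. This is exactly where the coupled prior $\mathrm{vec}(\mathbf{c})\mid\sigma^2 \sim \mathcal{N}(\mathbf{0},\sigma^2(\boldsymbol{\Omega}\otimes\mathbf{V}_c))$ is used. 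A prior whose column covariance differed from $\boldsymbol{\Omega}$ would prevent the common factor $\boldsymbol{\Omega}^{-1}$ from being pulled out.

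\textbf{Step 2: complete the square in matrix form.} Expanding $G$ gives $\mathbf{c}^T\mathbf{P}\mathbf{c} - \mathbf{c}^T\mathbf{X}^T\mathbf{Y} - \mathbf{Y}^T\mathbf{X}\mathbf{c} + \mathbf{Y}^T\mathbf{Y}$, with $\mathbf{P} = \mathbf{V}_c^{-1}+\mathbf{X}^T\mathbf{X}$. This $\mathbf{P}$ is symmetric positive definite because $\mathbf{V}_c$ is SPD and $\mathbf{X}^T\mathbf{X}$ is positive semidefinite, so it is invertible. Setting $\hat{\mathbf{c}} = \mathbf{P}^{-1}\mathbf{X}^T\mathbf{Y}$ and expanding $(\mathbf{c}-\hat{\mathbf{c}})^T\mathbf{P}(\mathbf{c}-\hat{\mathbf{c}})$, one checks directly that
$$G(\mathbf{c}) = (\mathbf{c}-\hat{\mathbf{c}})^T\mathbf{P}(\mathbf{c}-\hat{\mathbf{c}}) + \mathbf{Y}^T\mathbf{Y} - \hat{\mathbf{c}}^T\mathbf{P}\hat{\mathbf{c}}.$$
The check uses $\hat{\mathbf{c}}^T\mathbf{P} = \mathbf{Y}^T\mathbf{X}$, which follows from the symmetry of $\mathbf{P}$. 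This is the identity quoted just above the proposition; it is the matrix analogue of the scalar completion of the square in Appendix~\ref{app:gauss_update}.

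\textbf{Step 3: identify the law.} The terms $\mathbf{Y}^T\mathbf{Y}-\hat{\mathbf{c}}^T\mathbf{P}\hat{\mathbf{c}}$ do not involve $\mathbf{c}$. They therefore contribute a multiplicative constant $\exp\{-\tfrac{1}{2\sigma^2}\tr[\boldsymbol{\Omega}^{-1}(\mathbf{Y}^T\mathbf{Y}-\hat{\mathbf{c}}^T\mathbf{P}\hat{\mathbf{c}})]\}$, which is absorbed into the normaliser; the same term will later feed $b_r$. What remains is
$$p(\mathbf{c}\mid\mathbf{Y},\sigma^2) \propto \exp\Big\{-\tfrac12\tr\big[\boldsymbol{\Omega}^{-1}(\mathbf{c}-\hat{\mathbf{c}})^T(\sigma^2\mathbf{P}^{-1})^{-1}(\mathbf{c}-\hat{\mathbf{c}})\big]\Big\}.$$
This is exactly the kernel of $\mathcal{MN}(\hat{\mathbf{c}}, \sigma^2\mathbf{P}^{-1}, \boldsymbol{\Omega})$. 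Since a density is determined by its kernel, the proposition follows. The independence claim is then immediate by inspection: $\mathbf{P}$ and $\hat{\mathbf{c}}$ are built from $\mathbf{X}$, $\mathbf{Y}$ and $\mathbf{V}_c$ only.

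The only delicate point is Step 1, namely justifying that both kernels carry the same $\boldsymbol{\Omega}^{-1}$. For this I would write out the matrix-normal density explicitly, via $\mathrm{vec}$ and $(\mathbf{B}\otimes\mathbf{A})^{-1}=\mathbf{B}^{-1}\otimes\mathbf{A}^{-1}$ together with $\mathrm{vec}(\mathbf{M})^T(\mathbf{B}\otimes\mathbf{A})\mathrm{vec}(\mathbf{M}) = \tr(\mathbf{B}\mathbf{M}^T\mathbf{A}\mathbf{M})$ for symmetric $\mathbf{B}$. The rest of the argument is routine algebra.
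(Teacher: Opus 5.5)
Your proposal is correct and follows the paper's argument: you collect the common factor $\boldsymbol{\Omega}^{-1}$ under a single trace (which is exactly what the coupled prior buys), complete the square with $\mathbf{P} = \mathbf{V}_c^{-1} + \mathbf{X}^T\mathbf{X}$ and $\hat{\mathbf{c}} = \mathbf{P}^{-1}\mathbf{X}^T\mathbf{Y}$, and read off the matrix-normal kernel with row covariance $\sigma^2\mathbf{P}^{-1}$ and column covariance $\boldsymbol{\Omega}$. Your extra checks fill in details the paper leaves implicit: the invertibility of $\mathbf{P}$, the identity $\hat{\mathbf{c}}^T\mathbf{P} = \mathbf{Y}^T\mathbf{X}$, and the Kronecker--trace identity behind the two kernels.
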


With the \emph{plain} prior the two exponents would carry $\boldsymbol{\Omega}^{-1}$ and $\mathbf{I}_S$ respectively, the trace would not collapse, and $\hat{\mathbf{c}}$ would depend on $\boldsymbol{\Omega}$ through an unwieldy Kronecker expression --- still conjugate, but no longer reducible to a single $k\times k$ precision matrix.

\subsection*{Posterior of the variance and predictive}

Marginalising $\mathbf{c}$ out leaves an Inverse-Gamma kernel in $\sigma^2$ with
$$a_r = a_0 + \frac{S(r+1)}{2}, \qquad b_r = b_0 + \tfrac{1}{2}\Big(\mathrm{YY}_r - \tr\big[\boldsymbol{\Omega}^{-1}\mathbf{W}^T\mathbf{P}^{-1}\mathbf{W}\big]\Big),$$
with $\mathrm{YY}_r := \tr[\boldsymbol{\Omega}^{-1}\mathbf{Y}^T\mathbf{Y}]$, which is exactly the expression used in Section~\ref{sec:paramlearning}. The shape increment $S/2$ per observation is the one discussed in Section~\ref{sec:hyperlearning} in connection with the online learning of the hyperparameters. Marginalising both $\mathbf{c}$ and $\sigma^2$ against the new regressor $\boldsymbol{\phi}_t$, with $h_r = \boldsymbol{\phi}_t^T\mathbf{P}^{-1}\boldsymbol{\phi}_t$:

\begin{proposition}[UPM, general SPD $\boldsymbol{\Omega}$]
$$f(\mathbf{x}_t \mid \mathbf{x}_{1:(t-1)}, r_t) = \mathrm{Student}\text{-}t_{2a_r}\!\Big(\mathbf{x}_t \;\Big|\; \hat{\mathbf{c}}_r^T\boldsymbol{\phi}_t,\; \tfrac{b_r}{a_r}(1+h_r)\,\boldsymbol{\Omega}\Big),$$
identical in form to the Student-$t$ UPM \eqref{eq:student_t_upm}, with $\boldsymbol{\Omega}$ now arbitrary SPD.
\end{proposition}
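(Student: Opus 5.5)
The predictive is obtained by marginalising out first $\mathbf{c}$, then $\sigma^2$, using the matrix-normal posterior of the previous proposition and the Inverse-Gamma posterior $(a_r,b_r)$ just derived.

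\textbf{Step 1: integrate out $\mathbf{c}$ given $\sigma^2$.} Write the new observation as
$\mathbf{x}_t = \mathbf{c}^T\boldsymbol{\phi}_t + \boldsymbol{\varepsilon}_t$ with $\boldsymbol{\varepsilon}_t \mid \sigma^2 \sim \mathcal{N}(\mathbf{0},\sigma^2\boldsymbol{\Omega})$, independent of $\mathbf{c}$. Under the posterior $\mathcal{MN}(\hat{\mathbf{c}},\sigma^2\mathbf{P}^{-1},\boldsymbol{\Omega})$, the linear map $\mathbf{c}\mapsto \mathbf{c}^T\boldsymbol{\phi}_t$ has law $\mathcal{N}(\hat{\mathbf{c}}^T\boldsymbol{\phi}_t,\ \sigma^2(\boldsymbol{\phi}_t^T\mathbf{P}^{-1}\boldsymbol{\phi}_t)\boldsymbol{\Omega})$. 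This uses the standard fact that if $\mathbf{c}\sim\mathcal{MN}(\mathbf{M},\mathbf{U},\mathbf{V})$, then $\mathbf{a}^T\mathbf{c}\sim\mathcal{N}(\mathbf{a}^T\mathbf{M},(\mathbf{a}^T\mathbf{U}\mathbf{a})\mathbf{V})$, which can be read off from the Kronecker form $\mathrm{vec}$-covariance $\mathbf{V}\otimes\mathbf{U}$.

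Adding the independent Gaussian noise gives
\[
\mathbf{x}_t\mid\sigma^2 \sim \mathcal{N}\big(\hat{\mathbf{c}}^T\boldsymbol{\phi}_t,\ \sigma^2(1+h_r)\boldsymbol{\Omega}\big).
\]
This is the precise point where the coupled prior matters. Both covariance contributions are proportional to the same $\boldsymbol{\Omega}$, so they sum to a scalar multiple of $\boldsymbol{\Omega}$. Diagonality of $\boldsymbol{\Omega}$ is never used.

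\textbf{Step 2: integrate out $\sigma^2$.} Set $\boldsymbol{\Sigma}_0=(1+h_r)\boldsymbol{\Omega}$, which is SPD. Integrating $\mathcal{N}(\mathbf{x}\mid\boldsymbol{\mu},\sigma^2\boldsymbol{\Sigma}_0)$ against $\mathrm{InvGamma}(\sigma^2\mid a_r,b_r)$ means computing
\[
\int (\sigma^2)^{-S/2-a_r-1}\exp\{-(b_r+\tfrac12\Delta)/\sigma^2\}\,d\sigma^2,
\qquad \Delta=(\mathbf{x}-\boldsymbol{\mu})^T\boldsymbol{\Sigma}_0^{-1}(\mathbf{x}-\boldsymbol{\mu}).
\]
The integral is a Gamma-type integral and evaluates to $\Gamma(a_r+S/2)\,(b_r+\Delta/2)^{-(a_r+S/2)}$. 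Factoring $b_r$ out rewrites the kernel as $\big(1+\tfrac{1}{2a_r}\Delta\,\tfrac{a_r}{b_r}\big)^{-(2a_r+S)/2}$.

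Matching with the log-density \eqref{eq:student_t_logdensity} identifies the result as a Student-$t$ with $\nu=2a_r$ and scale $\frac{b_r}{a_r}\boldsymbol{\Sigma}_0$. The normalising constants (the $\Gamma$ ratio, $(\nu\pi)^{-S/2}$ and $|\boldsymbol{\Sigma}|^{-1/2}$) also match, which should be checked once for completeness. Hence the claimed form follows.

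\textbf{Main obstacle.} The only delicate point is Step 1, justifying the matrix-normal linear-functional law. I would prove it by vectorising: $\mathbf{c}^T\boldsymbol{\phi}_t = (\mathbf{I}_S\otimes\boldsymbol{\phi}_t^T)\mathrm{vec}(\mathbf{c})$, with $\mathrm{Cov}=\sigma^2(\mathbf{I}_S\otimes\boldsymbol{\phi}_t^T)(\boldsymbol{\Omega}\otimes\mathbf{P}^{-1})(\mathbf{I}_S\otimes\boldsymbol{\phi}_t)=\sigma^2 h_r\boldsymbol{\Omega}$ by the mixed-product rule. A secondary check is the convention that the $\mathbf{c}$ posterior is conditional on the data of run length $r$ while $\boldsymbol{\phi}_t$ is the fresh regressor, so that $h_r$ uses $\mathbf{P}_r$. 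This is consistent with the notation of Section~\ref{sec:paramlearning}.
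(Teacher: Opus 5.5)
Your proposal is correct and follows the paper's route. The paper only states that one marginalises $\mathbf{c}$ and then $\sigma^2$ out of the coupled matrix-normal/Inverse-Gamma posterior against the new regressor $\boldsymbol{\phi}_t$; your Kronecker computation giving $\mathbf{x}_t \mid \sigma^2 \sim \mathcal{N}(\hat{\mathbf{c}}^T\boldsymbol{\phi}_t, \sigma^2(1+h_r)\boldsymbol{\Omega})$, followed by the Gamma integral, supplies exactly the details it leaves implicit, and the normalising constants you deferred do check out, since $(2a_r\pi)^{-S/2}(b_r/a_r)^{-S/2} = (2\pi b_r)^{-S/2}$ and $\Gamma(a_r+S/2)/\Gamma(a_r)$ is the Student-$t$ ratio for $\nu = 2a_r$.
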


\textbf{Consequence for the results on real data.} The location of the predictive is $\boldsymbol{\Omega}$-free, so the point forecast --- and hence the MSE --- is unaffected by the choice of $\boldsymbol{\Omega}$ except through the sufficient statistics; only the predictive \emph{density}, and therefore the log-likelihood and the learned $b_0$, are reweighted. This is precisely why the three specifications of $\boldsymbol{\Omega}$ tested on the bivariate order-flow data (Table~\ref{tab:exp23}) produce MSE values so close to one another.

\end{document}